\title{\"Uber die Pr\"azision interprozeduraler Analysen}
\author{Dorothea Jansen}
\date{Dezember 2010}
  \renewcommand{\chaptermark}[1]{\markboth{\thechapter \ #1}{}}
  \renewcommand{\sectionmark}[1]{\markright{\thesection  \ #1}}
   \renewcommand{\labelenumi}{\alph{enumi})}
  \renewcommand{\thesection}{\arabic{chapter}.\arabic{section}}
\newcommand{\quotes}[1]{\glqq#1\grqq\xspace}
\def\obda{ohne Beschränkung der Allgemeinheit\xspace}
\def\widening{\triangledown}
\def\node{\nu}
	\newcounter{mysection} 
	\newtheorem{satz}{Satz}[chapter]
	\newtheorem{lemma}[satz]{Lemma}
	\newtheorem{kor}[satz]{Korollar}
	\theoremstyle{definition}
	\newtheorem{dfn}[satz]{Definition}
	\newtheorem{bsphelp}[satz]{Beispiel}	\newenvironment{bsp}{\begin{bsphelp}}{\hfill$\boxbox$\end{bsphelp}}
	\newtheorem{bemhelp}[satz]{Bemerkung}	\newenvironment{bem}{\begin{bemhelp}}{\hfill$\boxbox$\end{bemhelp}}
\renewcommand{\labelenumi}{\alph{enumi})}
\DeclareMathOperator{\nn}{\mathbb{N}}
\DeclareMathOperator{\zz}{\mathbb{Z}}
\DeclareMathOperator{\rr}{\mathbb{R}}
\DeclareMathOperator{\eps}{\varepsilon}
\renewcommand{\phi}{\varphi}
\DeclareMathOperator{\Mat}{Mat}	
\DeclareMathOperator{\pr}{pr}
\newcommand{\mymatrix}[1]{\left(\begin{smallmatrix}#1\end{smallmatrix}\right)}
\newcommand{\wminus}{{\color{white}-}} 
\DeclareMathOperator{\Proc}{\mathtt{Proc}}
\DeclareMathOperator{\lab}{\mathtt{Label}}
\newcommand{\var}[1]{\texttt{x}_{\texttt{#1}}}
\newcommand{\sqleq}{\sqsubseteq}
\newcommand{\sqgeq}{\sqsupseteq}
\DeclareMathOperator{\sqleqmap}{\overline{\sqleq}}
\DeclareMathOperator{\sqgeqmap}{\overline{\sqgeq}}
\DeclareMathOperator{\sqleqsharpmap}{\overline{\sqleq^\sharp}}
\DeclareMathOperator{\sqgeqsharpmap}{\overline{\sqgeq^\sharp}}
\DeclareMathOperator{\alphamap}{\overline{\alpha}}
\DeclareMathOperator{\gammamap}{\overline{\gamma}}
\DeclareMathOperator{\alphacs}{\alpha}
\DeclareMathOperator{\gammacs}{\gamma}
\DeclareMathOperator{\lfp}{lfp}	
\DeclareMathOperator{\gfp}{gfp}	
\newcommand{\Var}{\mathtt{Var}}	
\DeclareMathOperator{\init}{\ensuremath{\mathtt{init}}}	
\DeclareMathOperator{\call}{\mathtt{call}}	
\newcommand{\p}{\ensuremath{\mathtt{p}}}
\newcommand{\q}{\ensuremath{\mathtt{q}}}
\DeclareMathOperator{\ret}{\mathtt{ret}}
\DeclareMathOperator{\enter}{\mathtt{enter}}
\DeclareMathOperator{\main}{\mathtt{main}}
\DeclareMathOperator{\CS}{\ensuremath{CS}}
\DeclareMathOperator{\id}{id}
\DeclareMathOperator{\SLP}{SLP}
\DeclareMathOperator{\Path}{P}
\DeclareMathOperator{\GP}{GP}
\DeclareMathOperator{\mats}{\Mat(\Sigma)}		\DeclareMathOperator{\matspot}{2^{\mats}}
\DeclareMathOperator{\rels}{\rr^n\times\rr^n}	\DeclareMathOperator{\relspot}{2^{\rels}}
\renewcommand{\subset}{\subseteq}
\DeclareMathOperator{\VerbInt}{(L_\text{Int},\sqleq_\text{Int})}
\DeclareMathOperator{\VerbIA}{(L_\text{IA},\sqleq_\text{IA})}
\DeclareMathOperator{\MOP}{MOP}
\DeclareMathOperator{\MFP}{MFP}
\DeclareMathOperator{\ugs}{\mathcal{U}}
\DeclareMathOperator{\ugssharp}{\mathcal{U}^\sharp}
\DeclareMathOperator{\abbugs}{\mathcal{F}_{\ugs}}
  \definecolor{javaGreen}{rgb}{0.25,0.5,0.375}
  \definecolor{javaBlue}{rgb}{0.1,0.1,0.9}
  \definecolor{darkRed}{rgb}{0.6,0,0} 
  \definecolor{darkGreen}{rgb}{0,0.6,0} 
  \definecolor{darkBlue}{rgb}{0,0,0.6} 
  \definecolor{middleRed}{rgb}{0.75,0,0} 
  \definecolor{middleGreen}{rgb}{0,0.75,0} 
  \definecolor{middleBlue}{rgb}{0,0,0.75} 
  \colorlet{stdcolor}{black} 		
  \colorlet{fadecolor}{stdcolor!50}	
  \colorlet{annocolor}{middleBlue} 		
  \colorlet{callstringcolor}{middleGreen}	
  \colorlet{edgenamecolor}{fadecolor} 	
\newenvironment{flowgraph}
 {\begin{center}
  \begin{tikzpicture}
   [auto,
	stdnode/.append style={/nodestyle=stdcolor}, 					
	startnode/.append style={stdnode,initial above, initial text=}, 
	endnode/.append style={stdnode,accepting}, 						
	fadenode/.append style={/nodestyle=fadecolor,text=fadecolor}, 	
	fadearc/.style={fadecolor}, 									
	snakearc/.style={decorate,decoration={snake,amplitude=.4mm,segment length=2mm}}, 
	callstring/.style={text=callstringcolor},						
	annnode/.style={text=annocolor}, 								
	annarc/.style={annocolor, dashed}, 								
   ]
 }{
  \end{tikzpicture}
  \end{center}
 }
\definecolor{ourGreen}{rgb}{0.25,0.5,0.375} 	
\definecolor{ourBlue}{rgb}{0.1,0.1,0.9} 	
\definecolor{ourPurple}{rgb}{0.5,0.0,0.33} 	
\definecolor{ourGrey}{rgb}{0.97,0.97,0.97} 	
\definecolor{l1}{rgb}{1.0,0,0}			
\definecolor{l2}{rgb}{1.0,0.5,0.0}		
\definecolor{l3}{rgb}{1.0,0.9,0}		
\definecolor{l4}{rgb}{0.25,0.5,0.375}		
\definecolor{l5}{rgb}{0.1,0.1,0.9}		
\definecolor{l6}{rgb}{0.54,0.17,0.88}		
\newlength{\tabwidth}
\newlength{\lastcol}
\colorlet{commentcolor}{gray}
\bfseries\color{ourPurple},
\bfseries\color{ourPurple},
\bfseries\color{ourPurple},
\begin{document}


\begin{titlepage}
\centering
{\includegraphics[width=10.5cm,height=2.25cm]{./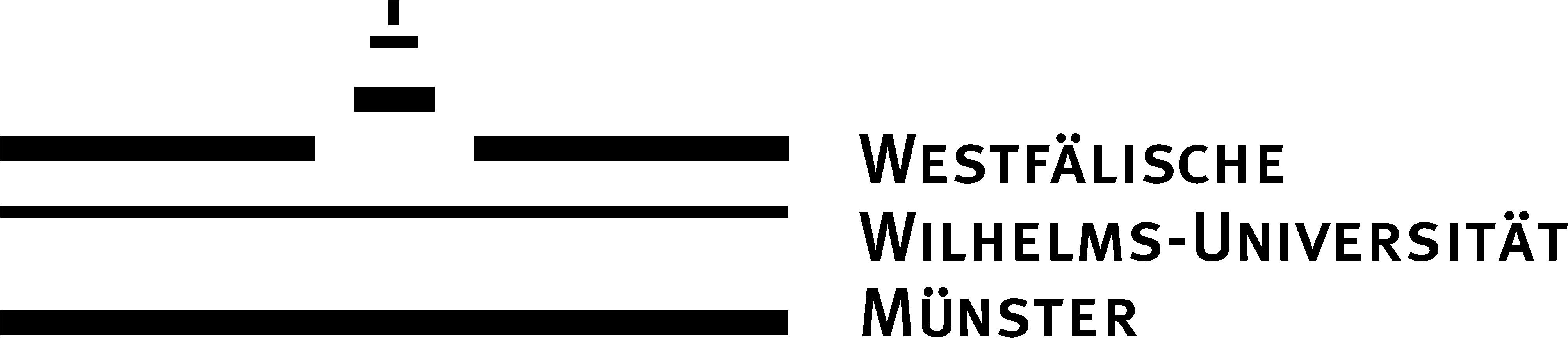}}
\vspace{9em}

{
  \LARGE 
  Diplomarbeit

  im Fach Informatik
}

\vspace{5em}

\textbf{\Huge
  Über die Präzision \\ \vspace{0.25em}			
  interprozeduraler Analysen
}

\vspace{12em}
{
  \LARGE 
  Dorothea Jansen
}

\vspace{3em}

\end{titlepage}		

\pagestyle{empty}
\cleardoublepage

\begin{titlepage}
\enlargethispage{1cm}
\centering
{\includegraphics[width=10.5cm,height=2.25cm]{./schloss.jpg}}

\vspace{7em}

{
  \LARGE 
  Diplomarbeit

  im Fach Informatik
}

\vspace{5em}

\textbf{\Huge
  Über die Präzision \\ \vspace{0.25em}			
  interprozeduraler Analysen
}

\vspace{5em}
{
  \LARGE 
  Dorothea Jansen
}

\vspace{3em}

{
  \large 
  Dezember 2010
}

\vfill
\begin{flushleft}
{\large
  Betreuer: \\
  Prof. Dr. Markus Müller-Olm \\
  Institut für Informatik \\
  Westfälische Wilhelms-Universität Münster \\
  \
}
\end{flushleft}
\end{titlepage}

\pagestyle{empty}\cleardoublepage

\setcounter{page}{0} 
\pagenumbering{roman}


\frontmatter  
\fancyhf{} \fancyhead[LE,RO]{\bfseries \thepage}
\pagestyle{fancy}
\renewcommand{\chaptermark}[1]{\markboth{\thechapter \ #1}{}}
\renewcommand{\sectionmark}[1]{\markright{\thesection  \ #1}}


\setcounter{tocdepth}{3}
\fancyhead[RE]{\bfseries Inhaltsverzeichnis} 
\fancyhead[LO]{\bfseries Inhaltsverzeichnis}
\enlargethispage{2cm} 
\small \tableofcontents \normalsize


\pagestyle{fancy}
\renewcommand{\chaptermark}[1]{\markboth{\thechapter \ #1}{}}
\renewcommand{\sectionmark}[1]{\markright{\thesection  \ #1}}

\fancyhf{} \fancyhead[LE,RO]{ \bfseries \thepage}
\fancyhead[RE]{\bfseries \leftmark} 
\fancyhead[LO]{\bfseries \rightmark}

\mainmatter


\chapter{Einleitung}\label{chapt-einf}
In der Programmanalyse wird ein Programm im Hinblick auf bestimmte Eigenschaften untersucht. 
Untersuchungsgegenstand kann dabei beispielsweise die Frage sein, 
ob es Ausdrücke im Programm gibt, die an einer Stelle stets den gleichen Wert haben, 
oder ob das Programm überflüssige Zuweisungen enthält. 
Es können aber aber auch die Werte berechnet werden, welche die Programmvariablen annehmen können, 
oder Gleichungen oder Ungleichungen über diese Werte aufgestellt werden.
Derartige Programmanalysen ermöglichen zum einen \emph{Programmoptimierung}, 
indem überflüssige Zuweisungen eliminiert 
oder konstante Ausdrücke durch diese Konstanten ersetzt werden. 
Andererseits kann Programmanalyse auch für \emph{Validierung} benutzt werden, 
indem beispielsweise überprüft wird, ob Variablen nur \quotes{erlaubte} Werte annehmen. 
Programmanalyse geschieht zur Compile-Zeit, es ist also keine Ausführung des Programmes notwendig.

Ein Bereich der \emph{Programmanalyse} ist die sogenannte 
\emph{Constraint-basierte Datenflussanalyse}, welche auch Gegenstand dieser Arbeit ist. 
Bei der Datenflussanalyse fasst man ein Programm als ein System von Graphen auf, 
deren Kanten mit den Anweisungen des Programms beschriftet sind, 
wobei jede Prozedur des Programms einem Graphen entspricht. 
Besitzt das Programm nur eine Prozedur, so wählt man zu jeder Kante eine \emph{Transferfunktion}, 
die auf dem zugrundeliegenden Datenraum operiert. 
Durch die Kanten und diese Abbildungen wird also beschrieben, 
wie sich Daten entlang verschiedener Ausführungspfade verändern. 
Diese Veränderungen werden dann durch \emph{Ungleichungen}, also \emph{Constraints}, ausgedrückt. 
Anhand dieser wiederum werden Daten für die einzelnen Programmpunkte bestimmt. 

Eine solche \emph{intraprozedurale} Analyse, 
also die Analyse eines Programmes mit nur einer Prozedur, 
benötigt für jede Kante eine Transferfunktion. 
Besitzt ein Programm dagegen mehrere Prozeduren, 
so gibt es neben solchen \emph{Basisanweisungen} auch noch \emph{Prozeduraufrufe}. 
Solche Programme sind viel mächtiger als diejenigen mit nur einer Prozedur, 
da sie Rekursion beinhalten. Um diese Programme zu analysieren, 
müssen also auch Prozeduraufrufe behandelt werden. 
Zwei Ansätze, wie dies geschehen kann, 
wurden von Sharir und Pnueli in \cite{sharir-pnueli} vorgestellt: 
Der \emph{funktionale Ansatz} berechnet dabei zunächst für jede Prozedur eine Transferfunktion, 
die den Effekt der Prozedur beschreibt, 
also die Veränderung der Daten vom Start- bis zum Endknoten der Prozedur. 
Der \emph{Call-String-Ansatz} dagegen behandelt die verschiedenen Prozeduren wie eine einzige, 
indem konzeptionell jede Aufrufkante durch den Graphen der aufgerufenen Prozedur ersetzt wird. 

Gegenstand dieser Arbeit ist die Untersuchung, wie genau die Informationen sind, 
die diese beiden Ansätze in verschiedenen Situationen liefern. 
In der Praxis können die theoretisch besten Lösungen einer gegebenen Analyse oft 
nicht endlich dargestellt oder berechnet werden 
oder es ist keine geeignete Analyse zur Betrachtung einer gegebenen Fragestellung bekannt. 
Deshalb betrachten wir nicht nur die jeweils theoretisch besten Lösungen, 
sondern untersuchen auch die Lösungen, wenn zusätzlich \emph{abstrakte Interpretationen} 
oder \emph{Widening-Operatoren} verwendet werden. 
Durch \emph{abstrakte Interpretation} wird anstelle der konkreten Information 
eines gegebenen Ungleichungssystems eine abstrakte Information berechnet. 
Damit können neue Analysen konstruiert werden, 
die die gestellte Frage beantworten oder deren Lösung endlich darstellbar und berechenbar ist. 
Mithilfe von \emph{Widening-Operatoren} 
kann Terminierung des \emph{Workset-Algorithmus} erzwungen werden. 
Der \emph{Workset-Algorithmus} ist einer Mittel zur Berechnung 
der Lösung des Ungleichungssystems einer constraint-basierten Datenflussanalyse.

Nach der Einführung einiger grundlegender Begriffe und Methoden in Kapitel \ref{chap:grund} 
werden wir in Kapitel \ref{chap:PA} zunächst den funktionalen und den Call-String-Ansatz vorstellen 
und ihre jeweils \quotes{besten} Lösungen miteinander verglichen. 
Weiter wird die Analyse interprozeduraler Programme am Beispiel der 
\emph{Polyederanalyse} von Cousot und Halbwachs \cite{CH78-POPL} veranschaulicht, 
die den funktionalen Ansatz variiert. 
Wir zeigen, dass die Berechnung der Lösungen des funktionalen und des Call-String-Ansatzes 
mithilfe des {Workset-Algorithmus} nicht immer terminiert. 
Darum stellen wir in Kapitel \ref{chap:abstr} und \ref{chap:widening} 
zwei von Cousot und Cousot \cite{CousotCousot76-1,CousotCousot77-1} bekannte Ansätze vor, 
mit denen eine solche Terminierung erreicht werden kann: 
In Kapitel \ref{chap:abstr} betrachten wir abstrakte Interpretationen und 
vergleichen die Lösungen des funktionalen und des Call-String-Ansatzes 
unter Verwendung von abstrakter Interpretation. 
Außerdem werden wir das Beispiel der Polyederanalyse fortsetzen. 
In Kapitel \ref{chap:widening} untersuchen wir anschließend {Widening-Operatoren} 
und stellen wiederum die Frage nach dem Vergleich der Lösungen 
des funktionalen und des Call-String-Ansatzes. 
Zuletzt folgen in Kapitel \ref{chap:fazit} eine Zusammenfassung der Ergebnisse 
sowie ein Ausblick auf weitere Untersuchungen, die auf diese Arbeit aufbauen könnten.


\chapter{Grundlagen}\label{chap:grund}
In diesem Kapitel werden Grundlagen zur Datenflussanalyse vorgestellt. 
Bei einer Datenflussanalyse wird zu jedem Punkt eines Programms eine gesuchte Information berechnet.
Eine solche Information kann die Menge der möglichen Werte der Programmvariablen sein 
oder ob eine Variable in einem Punkt stets denselben Wert hat. 
Wie sich eine Information durch eine Programmanweisung ändert, 
wird mithilfe von Ungleichungen beschrieben. 
Zur Berechnung der gesuchten Informationen wird ein \emph{Ungleichungssystem} aufgestellt, 
das die Informationen an den verschiedenen Punkten miteinander in Verbindung setzt. 
Für einfache Anweisungen von Programmen stellen wir bereits in diesem Kapitel Ungleichungen auf, 
während wir komplexere Anweisungen in \autoref{chap:PA} behandelt werden. 
Weiter werden wir in diesem Kapitel eine Methode vorstellen, 
mit denen ein Ungleichungssystem gelöst werden kann, den sogenannten \emph{Workset-Algorithmus}. 
Wir werden im Folgenden meist \quotes{Analyse} anstelle von 
\quotes{(constraint-basierter) Datenflussanalyse} sagen.

Wir beginnen damit, einige Begriffe und theoretische Grundlagen einzuführen. 
Diese basieren, sofern nicht anders gekennzeichnet, auf Nielson et al.~\cite{Niel}.

\section{Vollständige Verbände und Fixpunkte}\label{sec:fixpkte}
In der Datenflussanalyse wird jedem Programmpunkt eine \emph{Information} zugeordnet. 
Um mit dem Begriff der \quotes{Information} formal arbeiten zu können, 
fassen wir Informationen als Elemente einer geeigneten Menge $L$ auf. 
Diese Informationen wollen wir außerden im Hinblick darauf untersuchen, 
welche korrekte Information präziser ist als eine andere. 
Ist zum Beispiel $x=2$, so ist die Information \quotes{Der Wert von $x$ ist eine ganze Zahl} 
weniger präzise als die Information \quotes{Der Wert von $x$ ist eine gerade positive Zahl}. 
Beide Informationen sind korrekt. 
Die Aussage \quotes{Der Wert von $x$ ist $4$.} ist wiederum präziser, aber nicht korrekt. 
Die präziseste korrekte Information ist \quotes{Der Wert von $x$ ist $2$.}. 
Von Interesse sind also nur korrekte Informationen, die dabei möglichst präzise sind. 
Formal wird dieser Präzisionsbegriff durch eine \emph{Halbordnung} beschrieben. 
Eine Information heißt dann präziser, je kleiner sie bezüglich dieser Halbordnung ist. 

\begin{dfn}[Halbordnung]
Eine \emph{Halbordnung}\index{Halbordnung} $(L,\sqleq)$ ist eine Menge $L$ 
zusammen mit einer Relation $\sqleq\ \subseteq L \times L$ mit folgenden Eigenschaften:
\begin{itemize}
 \item Die Relation $\sqleq$ ist reflexiv, 
 d.h.~für alle $l \in L$ ist $l \sqleq l$.
 \item Die Relation $\sqleq$ ist transitiv, 
 d.h.~für alle $l,l',l'' \in L$ mit $l \sqleq l'$ und $l' \sqleq l''$ ist $l \sqleq l''$.
 \item Die Relation $\sqleq$ ist antisymmetrisch, 
 d.h.~für alle $l,l' \in L$ mit $l \sqleq l'$ und $l' \sqleq l$ ist $l = l'$.
\end{itemize}
Dabei schreiben wir $l \sqleq l'$ für $(l,l')\in~\sqleq$. 
Ist $l \sqleq l'$ mit $l \ne l'$, so schreiben wir auch $l \sqsubset l'$. 
Entsprechend schreiben wir $l' \sqgeq l$ für $l \sqleq l'$ und $l' \sqsupset l$ für $l \sqsubset l'$.
\end{dfn}

\begin{bsp}
Sei $X$ eine Menge und $L:=2^X$ die Potenzmenge von $X$. 
Dann ist $(L,\subseteq)$ eine Halbordnung. 
Dies folgt direkt aus der Definition der Teilmenge.
\end{bsp}

Ein auch im Folgenden wichtiges Beispiel ist das der \emph{Intervallanalyse} aus \cite{CousotCousot76-1}, für das wir nun zunächst den Verband der Intervalle einführen.
\begin{bsp}
Wir definieren 
\[\VerbInt := 
(\{[l,u] \mid l \in \zz \cup \{-\infty\}, u \in \zz \cup \{+\infty\}, l \le u\}
\cup
\{\emptyset\},\subseteq)\] 
als die Menge der beschränkten und unbeschänkten Intervalle in den ganzen Zahlen. 
Dies ist offenbar eine Halbordnung.
Wir nennen $\VerbInt$ \emph{Intervallverband}. 

Weiter betrachten wir $\VerbIA$ mit $\VerbIA := (\Var \to \VerbInt, \sqleq_\text{IA})$ 
für eine endliche Menge $\Var$ von Variablen. 
Dabei ist $\rho \sqleq_\text{IA} \rho'$ für $\rho,\rho' \in \VerbIA$, 
wenn $\rho(x) \sqleq_\text{Int} \rho'(x)$ für jede Variable $x \in \Var$ gilt. 
Man sieht leicht, dass dies eine Halbordnung definiert. 
Diese nennen wir \emph{Intervallanalyseverband}.
\end{bsp}
Den Begriff \quotes{Verband} werden wir in Kürze einführen und auch zeigen, 
dass der Intervall- und der Intervallanalyseverband tatsächlich Verbände sind.

Im Rahmen der Datenflussanalyse werden für einen Programmpunkt oft 
verschiedene Informationen berechnet. 
Da einem jeden Punkt aber eine einzige Information 
und nicht eine Menge von Informationen zugeordnet werden soll, 
muss eine neue Information bestimmt werden. 
Diese muss alle Informationen der berechneten Menge enthalten, dabei aber möglichst präzise sein. 
Der Vorgang, diese neue Information zu berechnen, 
entspricht der Bildung der \emph{kleinsten oberen Schranke} der Informationsmenge:

\begin{dfn}
Seien $(L,\sqleq)$ eine Halb\-ordnung und $A \subseteq L$. Dann heißt $l\in L$ 
\begin{itemize}
\item \emph{obere Schranke}\index{obere Schranke} von $A$, 
 falls $a \sqleq l$ für alle $a \in A$ gilt. 
\item \emph{untere Schranke}\index{untere Schranke} von $A$, 
 falls $l \sqleq a$ für alle $a \in A$ gilt. 
\item \emph{kleinste obere Schranke}\index{obere Schranke!kleinste} von $A$, 
 falls $l$ eine obere Schranke von $A$ ist 
 und für jede weitere obere Schranke $l'$ von $A$ bereits $l \sqleq l'$ gilt.
\item \emph{größte untere Schranke}\index{untere Schranke!größte} von $A$, 
 falls $l$ eine untere Schranke von $A$ ist 
 und für jede weitere untere Schranke $l'$ von $A$ bereits $l' \sqleq l$ gilt.
\end{itemize}
\end{dfn} 
Sofern sie existiert, ist die kleinste obere Schranke von $A$ eindeutig 
und wird mit $\bigsqcup A$ bezeichnet. 
Für $\bigsqcup\{a,b\}$ schreiben wir auch $a \sqcup b$.

\begin{bsp}
\label{bsp-schranken-potenzverband}
Sei $X$ eine Menge und $L:=2^X$. Seien $A,B \subseteq X$. 
Dann ist $A \cup B$ die kleinste obere Schranke von $\{A,B\}$: 
Offenbar gilt $A,B \subseteq A \cup B$. 
Somit ist $A \cup B$ eine obere Schranke. 
Ist $S$ eine weitere obere Schranke, so ist $A,B \subseteq S$ 
und damit auch $A \cup B \subseteq S$ und damit ist $A \cup B$ schon die kleinste obere Schranke.

Ebenso bestimmt man die kleinste obere Schranke einer beliebigen Teilmenge von $L$: 
Ist $\mathcal{A} \subseteq 2^X$, so ist $\bigcup\big\{A \mid A \in \mathcal{A}\big\}$ 
die kleinste obere Schranke von $\mathcal{A}$.

Betrachte nun den Intervallverband $\VerbInt$. 
Seien $I_1 = [l_1,u_1]$ und $I_2 = [l_2,u_2]$ zwei Intervalle in $\VerbInt$. 
Setze $l:=\min\{l_1,l_2\}$ und $u:=\max\{u_1,u_2\}$. 
Dann ist $I:=[l,u]$ die kleinste obere Schranke von $\{I_1,I_2\}$: 
Nach Definition ist $l \le l_i$ und $u \ge u_i$ für $i=1,2$. 
Damit ist auch $[l_i,u_i] \subseteq [l,u]$ und $I$ ist eine obere Schranke. 
Ist $I'=[l',u']$ eine weitere obere Schranke, so gilt $l' \le l_i$ und $u' \ge u_i$ für $i=1,2$. 
Damit ist auch $l' \le \min\{l_1,l_2\}=l$ und $u' \ge \max\{u_1,u_2\}=u$ und so $I' \supseteq I$. 
Also ist $I$ sogar die kleinste obere Schranke von $\{I_1,I_2\}$.

Für eine beliebige Teilmenge $\mathcal{A} \subseteq \VerbInt$ setze
\begin{align*}
l &:= \inf\{l' \mid \exists u' \in \zz\cup\{+\infty\}: [l',u'] \in \VerbInt\} \\
u &:= \sup\{u' \mid \exists l' \in \zz\cup\{-\infty\}: [l',u'] \in \VerbInt\}
\end{align*}
als das Infimum der unteren und das Supremum der oberen Intervallgrenzen.
Man zeigt nun ebenso wie für zweielementige Mengen, 
dass $[l,u]$ die kleinste obere Schranke von $\mathcal{A}$ ist.

Damit existieren auch obere Schranken in $\VerbIA$: Sei $P \subseteq \VerbIA$. 
Definiere $\rho \in \VerbIA$ durch $\rho(x) := \bigsqcup\{\rho'(x) \mid \rho' \in P\}$ 
für jedes $x \in \Var$. Dann ist $\rho$ die kleinste obere Schranke von $P$.
\end{bsp}

Im Allgemeinen existiert eine solche kleinste obere Schranke 
nicht für jede beliebige Teilmenge einer Halbordnung. 
Um also die kleinste obere Schranke gewisser oder aller Teilmengen bilden zu können, müssen wir weitere Forderungen an die Halbordnung $(L,\sqleq)$ stellen.
\begin{dfn}
Sei $(L,\sqleq)$ eine Halbordnung. Eine Teilmenge $K \subseteq L$ heißt \emph{Kette}\index{Kette}, falls $K$ total geordnet ist, d.h.~wenn $x\sqleq y$ oder $y \sqleq x$ für je zwei Elemente $x,y\in L$ gilt.

Eine Halbordnung $(L,\sqleq)$ heißt \emph{ket\-ten\-voll\-stän\-dig}\index{kettenvollständig}, falls jede Kette $K\subseteq L$ eine kleinste obere Schranke besitzt.
\end{dfn}

\begin{bsp}
Die Menge $K := \{[0,n]\mid n \in \nn\}$ ist eine Kette in $\VerbInt$: Seien $n,m \in \nn$ und \obda $n \le m$. Dann ist $[0,n] \subseteq [0,m]$. Damit sind je zwei solche Intervalle vergleichbar.

Die Menge $N:= \{[n,n+1] \mid n \in \nn\}$ ist dagegen keine Kette in $\VerbInt$: Seien wieder $n<m$ natürliche Zahlen. 
Dann sind $[n,n+1]$ und $[m,m+1]$ unvergleichbar: Angenommen, es wäre $[n,n+1] \subseteq [m,m+1]$. Dann gilt $n \ge m$ und $n+1 \le m+1$, was $n=m$ impliziert. 
Dies ist ein Widerspruch zu $n<m$. Ebenso widerlegt man $[n,n+1] \supseteq [m,m+1]$. Demnach ist $N$ keine Kette.
\end{bsp}

\begin{bem} 
In einer kettenvollständigen Halbordnung gibt es ein eindeutig bestimmtes kleinstes Element $\bot = \bigsqcup \emptyset$. 
\end{bem}

Im Folgenden werden wir meist die stärkere Forderung stellen, dass die betrachtete Halbordnung bereits ein \emph{vollständiger Verband} ist.
\begin{dfn}
Eine Halbordnung $(L,\sqleq)$ heißt \emph{voll\-stän\-di\-ger Verband}\index{vollständiger Verband}, falls jede Teilmenge $A \subseteq L$ eine kleinste obere Schranke besitzt.
\end{dfn}
\begin{bem}
Jeder vollständige Verband ist insbesondere kettenvollständig.
\end{bem}

\begin{bsp}\label{bsp:potenzverband-ist-vollst}
Sei $X$ eine Menge und $L := 2^X$. Dann sind $(L,\subseteq)$ sowie $\VerbInt$ und $\VerbIA$ vollständige Verbände. Dies folgt direkt aus den Beobachtungen in Beispiel \ref{bsp-schranken-potenzverband}.
\end{bsp}

Im funktionalen Ansatz zur Analyse interprozeduraler Programme werden wir neben einem zugrundeliegenden vollständigen Verband $L$ auch die Menge der Abbildungen $L \to L$ betrachten 
und benötigen dabei, dass auch diese wieder einen vollständigen Verband bilden. Das dies immer der Fall ist, zeigt das folgende Beispiel.
\begin{bsp}\label{bsp:Abbildungsverband}
Sei $(L,\sqleq)$ ein vollständiger Verband und $(L \to L)$ die Menge aller Abbildungen von $L$ nach $L$. Es sei nun $f \sqleqmap g$ genau dann, wenn $f(l) \sqleq g(l)$ für jedes $l \in L$ gilt. 
Dies ist offensichtlich eine Halbordnung.

Sei weiter $\mathcal{F}$ eine beliebige Teilmenge von $(L \to L)$. Wir definieren eine Abbildung $g: L \to L$ durch
\[g(l) := \bigsqcup\big\{f(l) \mid f \in \mathcal{F}\big\}.\]
Da $(L, \sqleq)$ ein vollständiger Verband ist, existieren diese kleinsten oberen Schranken und $g$ ist wohldefiniert. Weiter gilt nach Konstruktion offenbar $f \sqleqmap g$ für jedes $f \in \mathcal{F}$. 
Also ist $g$ eine obere Schranke von $\mathcal{F}$

Sei $h$ eine weitere obere Schranke. Dann gilt für jedes $f \in \mathcal{F}$ und jedes $l \in L$ bereits $f(l) \sqleq h(l)$. 
Also ist auch $g(l) = \bigsqcup\big\{f(l) \mid f \in \mathcal{F}\big\}$ für jedes $l \in L$. Damit folgt aber schon $g \sqleqmap h$ und $g$ ist die kleinste obere Schranke von $\mathcal{F}$. 
Also ist $((L\to L), \sqleqmap)$ ein vollständiger Verband.
\end{bsp}

Sind die Informationen nun Elemente einer Halbordnung $(L,\sqleq)$, so müssen wir die Programmanweisungen $\mathtt{b}$ mit Transferfunktionen $f_\mathtt{b}: L \to L$ assoziieren, 
die den Datenfluss gemäß dieser Anweisung beschreiben. Dabei erwarten wir, dass eine solche Abbildung $f_\mathtt{b}$ stets präzisionserhaltend ist. Darunter verstehen wir Folgendes: 
Wenn eine Information präziser ist als eine andere Information, dann ist auch ihr Bild unter $f_\mathtt{b}$ präziser als das der anderen. Dies entspricht dem Begriff der \emph{Monotonie}.
\begin{dfn}
Eine Abbildung $f: L \to L'$ zwischen zwei Halbordnungen $(L,\sqleq_L)$ und $(L',\sqleq_{L'})$ heißt \emph{monoton}\index{monoton}, 
falls für alle $l_1,l_2 \in L$ aus $l_1 \sqleq_L l_2$ schon $f(l_1) \sqleq_{L'} f(l_2)$ folgt.
\end{dfn}

\begin{bsp}
Auch die Menge der monotonen Abbildungen $(L \to_{\text{mon}} L)$ bildet einen vollständigen Verband: 
Sei wieder $\mathcal{F}$ eine beliebige Teilmenge und $g$ wie im vorigen Beispiel \ref{bsp:Abbildungsverband} definiert. Es reicht nun zu zeigen, dass $g$ wieder monoton ist. 
Seien also $l,l' \in L$ mit $l \sqleq l'$. Dann ist für jedes $f \in \mathcal{F}$
\begin{align*}
f(l) &\sqleq f(l') = \bigsqcup\big\{f(l') \mid f \in \mathcal{F} \big\} = g(l')
\intertext{und damit}
g(l) &= \bigsqcup\big\{f(l) \mid f \in \mathcal{F} \big\} \sqleq g(l').
\end{align*}
Also ist auch $((L \to_{\text{mon}} L), \sqleqmap)$ ein vollständiger Verband.
\end{bsp}

Wir werden später die Informationen, die an den Programmpunkten gültig sind, mithilfe von Mengen von Ungleichungen charakterisieren.   
Außerdem werden wir zeigen, dass sich die Lösungen dieser Ungleichungssystemen als \emph{Präfixpunkte} monotoner Abbildungen schreiben lassen.
\begin{dfn}
Sei $L$ eine Menge und $f: L \to L$ eine Abbildung. Dann heißt $l \in L$ \emph{Fixpunkt von $f$}\index{Fixpunkt}, falls $f(l) = l$ ist. 
Ist $(L,\sqleq)$ eine Halbordnung, definieren wir weiter, dass $l$ \emph{Präfixpunkt}\index{Präfixpunkt} heißt, falls $f(l)\sqleq l$ ist, 
und \emph{Postfixpunkt}\index{Postfixpunkt}, falls $f(l) \sqgeq l$ gilt.
\end{dfn}

Von besonderem Interesse wird dabei stets der kleinste aller Fixpunkte einer monotonen Abbildung sein, da dieser der präzisesten gültigen Information entspricht, wie wir im nächsten Abschnitt sehen werden. 
Die Existenz des kleinsten Fixpunktes liefert der Fixpunktsatz von Tarski, dessen Beweis in Nielson et al.~\cite[S.~400f.]{Niel} zu finden ist.
\begin{satz}[Fixpunktsatz von Tarski]
Seien $(L,\sqleq)$ ein vollständiger Verband und $f : L \to L$ monoton. Dann hat $f$ einen kleinsten Fixpunkt $\lfp(f)$ und einen größten Fixpunkt $\gfp(f)$ und es gilt
\begin{align*}
\lfp(f) = \bigsqcap\big\{x \in L \mid f(x) \sqleq x\big\} \text{ und } \gfp(f) = \bigsqcup\big\{x \in L \mid s \sqleq f(x)\big\}.
\end{align*}
\end{satz}
Der kleinste Fixpunkt entspricht also dem kleinsten Präfixpunkt und der größte Fixpunkt dem größten Postfixpunkt. Diese Tatsache werden wir später benötigen. 
Einen weiteren nützlichen Ansatz zur Bestimmung des kleinsten Fixpunktes einer monotonen Abbildung liefert der folgende Satz, der eine Verallgemeinerung eines Satzes ist, 
der als Fixpunktsatzes von Kleene bekannt ist. Vergleiche dazu auch \cite{folk-tale}.
\begin{satz}\label{genKleene}
Sei $(L,\sqleq)$ eine kettenvollständige Halbordnung und $f : L \to L$ monoton. Dann hat $f$ einen kleinsten Fixpunkt $\lfp(f)$ und es gilt
\[
 \lfp(f) = \bigsqcup\big\{l_\alpha \mid \alpha \text{ Ordinalzahl} \big\}
\]
mit
\begin{align*}
l_0 			&:= \bot \\
l_{\alpha+1} 	&:= f(l_\alpha) \text{ für eine Ordinalzahl } \alpha \\
l_{\alpha}		&:= \bigsqcup\big\{l_\gamma \mid \gamma < \alpha\big\} \text{ für eine Limesordinalzahl } \alpha.
\end{align*}
\end{satz}
\begin{proof}
Wir prüfen zunächst, dass diese $l_\alpha$ wohldefiniert sind. Dazu reicht es nach zu weisen, dass für alle Ordinalzahlen $\alpha$ die Menge $\{l_\gamma \mid \gamma < \alpha\}$ eine Kette ist. 
Dies wiederum folgt daraus, dass für je zwei Ordinalzahlen $\alpha$ und $\beta$ aus $\alpha < \beta$ auch $l_{\alpha} \sqleq l_{\beta}$ folgt: 
Wir zeigen dies per transfiniter Induktion für jede Ordinalzahl $\beta$: 
Für $\beta=0$ ist nichts zu zeigen. 

Sei nun $\beta = \beta' + 1$ für eine Ordinalzahl $\beta'$ und gelte die Behauptung für kleinere Ordinalzahlen. Sei $\alpha < \beta' + 1$ beliebig.
Da nach Induktionsvoraussetzung schon $l_\alpha \sqleq l_{\beta'}$ gilt, genügt es, $l_{\beta'} \sqleq l_{\beta'+1}$ nach zu weisen. 
Ist dabei $\beta'$ eine Limesordinalzahl, so gilt
\begin{align*}
l_{\beta' + 1} 
&= f\Big(\bigsqcup\big\{ l_\gamma \mid \gamma < \beta' \big\}\Big) 
\intertext{und wegen der Monotonie von $f$}
&\sqgeq \bigsqcup\big\{ f(l_\gamma) \mid \gamma < \beta' \big\} \\
&= \bigsqcup\big\{ l_{\gamma+1} \mid \gamma < \beta' \big\} \\
&= \bigsqcup \big(\{ l_{\gamma+1} \mid \gamma < \beta' \} \cup \{\bot\}\big).
\intertext{Da $\gamma$ eine Limesordinalzahl ist, folgt aus $\gamma <\beta'$ auch $\gamma+1 < \beta'$ und wir erhalten}
&= \bigsqcup\big\{ l_{\gamma} \mid \gamma < \beta' \big\} \\
&= l_{\beta'}.
\end{align*}
Andernfalls existiert $l_{\beta'-1}$ und nach Induktionsvoraussetzung ist $l_{\beta'-1} \sqleq l_{\beta'}$ 
und mit der Monotonie von $f$ dann $l_{\beta'} = f(l_{\beta'-1}) \sqleq f(l_{\beta'}) = l_{\beta'+1}$.

Sei zuletzt $\beta$ eine Limesordinalzahl und gelte die Behauptung schon für alle kleineren Ordinalzahlen. 
Dann ist für $\alpha < \beta$ bereits $l_\alpha \in \{l_\gamma \mid \gamma < \beta\}$ und damit \[l_\alpha \sqleq \bigsqcup \big\{l_\gamma \mid \gamma < \beta\big\} = l_\beta.\]

Hieraus folgt insbesondere, dass $K:=\{l_{\alpha} \mid \alpha \text{ Ordinalzahl}\}$ eine Kette ist.

Wir zeigen nun, dass, sofern $f$ Fixpunkte besitzt und $x$ ein beliebiger Fixpunkt von $f$ ist, $l_\alpha \sqleq x$ für jede Ordinalzahl $\alpha$ gilt. 
Für $\alpha=0$ ist dies  trivialerweise erfüllt, da $l_0=\bot \sqleq x$ schon für jedes Element $x \in L$ gilt.

Gilt die Behauptung für eine Ordinalzahl $\alpha$, so ist mit der Monotonie von $f$ 
\[l_{\alpha+1} = f(l_\alpha) \sqleq f(x) = x.\]
Ist $\alpha$ hingegen eine Limesordinalzahl und gilt die Behauptung für alle kleineren Ordinalzahlen, so erhalten wir
\[l_\alpha = \bigsqcup\{l_\gamma \mid \gamma < \alpha\} \sqleq \bigsqcup\{x \mid \gamma < \alpha\} = x.\]

Sei nun $\kappa$ die Kardinalität von $L$. Angenommen, die Folge $(l_{\alpha})_{\alpha \in \text{Ord}}$ wird nie stabil. 
Dann besitzt $(l_{\alpha})_{\alpha \in \text{Ord}}$ eine echt aufsteigende Teilfolge. 
Da diese aber nur aus Elementen von $L$ besteht, wird sie spätestens ab dem $\kappa$-ten Folgenelement stabil. 
Dies ist ein Widerspruch. Also wird auch $(l_\alpha)$ stabil. Es existiert also eine Ordinalzahl $\beta$ mit 
\[l_\beta = l_{\beta+1} = f(l_\beta)\]
und $l_\beta$ ist ein Fixpunkt von $f$. Wie wir bereits gesehen haben, ist $l_\beta$ aber auch kleiner oder gleich jedem anderen Fixpunkt von $f$ und damit der kleinste Fixpunkt von $f$. Es folgt also
\[\lfp(f) = l_{\beta} = \bigsqcup\big\{l_\alpha \mid \alpha \text{ Ordinalzahl}\big\}\]
und das war die Behauptung.
\end{proof}

Um später die kleinsten Fixpunkte zweier monotoner Funktionen zu vergleichen, werden wir folgende Verallgemeinerung des Transferlemmas, welches in \cite{folk-tale} zu finden ist, benutzen.
\begin{satz}\label{relFixpkte}
Seien $(L,\sqleq), (L',\sqleq')$ vollständige Verbände, $f:L\to L$, $g:L' \to L'$ monoton und $R \subseteq L \times L'$ eine Relation mit folgenden Eigenschaften:
\begin{enumerate}
 \item Für alle $l \in L, l'\in L'$ mit $(l,l') \in R$ ist auch $(f(l),g(l')) \in R$.
 \item Für alle $X \subset R$ ist $(\bigsqcup X_1, \bigsqcup X_2)\in R$. Dabei ist \[X_1 := \{x \in L \mid \exists y \in L': (x,y)\in X\}\] und \[X_2 := \{y \in L' \mid \exists x \in L: (x,y)\in X\}.\]
\end{enumerate}
Dann gilt $(\lfp(f),\lfp(g)) \in R$.
\end{satz}
\begin{proof}
Definiere induktiv
\[\begin{array}{lll}
l_0 := \bot_L & l_0' := \bot_{L'} &\\
l_{\alpha+1} := f(l_\alpha) & l'_{\alpha+1} := g(l'_\alpha) & \text{für eine Ordinalzahl } \alpha \\
l_{\alpha} := \bigsqcup\big\{l_\beta \mid \beta < \alpha\big\} & l'_{\alpha} := \bigsqcup\nolimits'\big\{l'_\beta \mid \beta < \alpha\big\} & \text{für eine Limesordinalzahl } \alpha
\end{array}\]
Dann gilt $(l_\alpha, l'_\alpha) \in R$ für jede Ordinalzahl $\alpha$:
Für $\alpha=0$ ist \[(\bot_L,\bot_{L'}) = \big(\bigsqcup \emptyset, \bigsqcup\nolimits' \emptyset\big) \in R\] nach Bedingung \textit{b)}. 
Ist $\alpha$ eine Ordinalzahl und $(l_\alpha, l'_\alpha) \in R$, so gilt \[(l_{\alpha+1},l'_{\alpha+1}) = (f(l_\alpha),g(l'_\alpha))\in R\] aufgrund von Bedingung \textit{a)}.
Für eine Limesordinalzahl $\alpha$ ist mit Induktionsvoraussetzung $(l_\beta, l'_\beta) \in R$ für jedes $\beta < \alpha$. 
Wiederum mit Bedingung \textit{b)} erhalten wir nun \[(l_\alpha, l'_\alpha) = \big(\bigsqcup\big\{l_\beta \mid \beta < \alpha\big\}, \bigsqcup\nolimits'\big\{l'_\beta \mid \beta < \alpha\big\}\big) \in R.\]

Nun existieren nach Satz \ref{genKleene} Ordinalzahlen $\beta_f, \beta_g$ mit $l_{\beta_f} = \lfp(f)$ und $l'_{\beta_g} = \lfp(g)$. 
Definiere $\beta := \max\{\beta_f, \beta_g\}$. Dann gilt \[(\lfp(f),\lfp(g)) = (l_\beta, l'_\beta) \in R,\] was den Satz beweist.
\end{proof}

Für Datenflussanalysen benötigen wir sowohl eine Halbordnung $(L,\sqleq)$ der Informationen als auch eine Menge monotoner Funktionen $L \to L$, die den Datentransfer beschreiben. 
Zusammen bezeichnen wir dies als \emph{monotones Framework}:
\begin{dfn}\label{dfn:monFW}
Ein \emph{monotones Framework}\index{monotones Framework} $(L,\sqleq,\mathcal{F})$ besteht aus einem voll\-stän\-di\-gen Verband $(L,\sqleq)$ 
und einer unter Komposition abgeschlossenen Menge $\mathcal{F} \subseteq (L\to_{\text{mon}} L)$ monotoner Abbildungen, welche die Identität $\id_L$ enthält. 
Die Abbildungen der Menge $\mathcal{F}$ heißen \emph{Transferfunktionen}.

Ein monotones Framework $(L,\sqleq,\mathcal{F})$ heißt \emph{(positiv-/universell-)distributiv}, falls jede Abbildung $f \in \mathcal{F}$ (positiv-/universell-)distributiv ist. Dabei heißt $f$
\begin{itemize}
 \item \emph{distributiv}\index{distributiv}, falls $f (l\sqcup l') = f(l) \sqcup f(l')$ für alle $l,l' \in L$ gilt.
 \item \emph{positiv-distributiv}\index{distributiv!-positiv}, falls $f(\bigsqcup A) = \bigsqcup f(A)$ für jede nichtleere Teilmenge $A\subseteq L$ gilt.
 \item \emph{universell-distributiv}\index{distributiv!-universell}, falls $f(\bigsqcup A) = \bigsqcup f(A)$ schon für jede Teilmenge $A \subseteq L$ gilt.
\end{itemize}
\end{dfn}
Dabei gelten folgende Zusammenhänge: Jede universell-distributive Abbildung ist nach Definition bereits positiv-distributiv und jede positiv-distributive Abbildung schon distributiv. 
Distributive Abbildungen sind wiederum monoton: Seien dazu $f: L \to L$ distributiv und $l,l' \in L$ mit $l \sqleq l'$. Dann ist $l' = l \sqcup l'$ und damit
\[f(l') = f(l \sqcup l') = f(l) \sqcup f(l') \sqgeq f(l)\]
und $f$ ist monoton.

Der Unterschied zwischen einer positiv-distributiven und einer universell-distri\-bu\-ti\-ven Abbildung liegt nun darin, dass letztere das kleinste Element $\bot = \bigsqcup \emptyset$ 
stets auf das kleinste Element $\bot = \bigsqcup f(\emptyset)$ abbildet, was für positiv-distributive Abbildungen nicht gefordert wird.

\begin{bsp}
Sei wieder $L := 2^X$ für eine Menge $X$ und $\mathcal{F}$ die Menge aller Abbildungen $f: L \to L$, die universell-distributiv sind. 
Dann bildet $(L,\subseteq,\mathcal{F})$ ein universell-distributives Framework:

In Beispiel \ref{bsp:potenzverband-ist-vollst} haben wir bereits gesehen, dass $(L,\subseteq)$ ein vollständiger Verband ist. 
Nach Definition von $\mathcal{F}$ ist jede Funktion $f \in \mathcal{F}$ universell-distributiv und damit monoton. 
Die Identität $\id_L$ ist offenbar universell-distributiv, also gilt $\id_L \in \mathcal{F}$. 
Man rechnet leicht nach, dass die Komposition universell-distributiver Funktionen wieder universell-distributiv ist. Also ist $\mathcal{F}$ abgeschlossen unter Komposition. 
Damit ist $(L,\subseteq,\mathcal{F})$ ein universell-distributives Framework.
\end{bsp}

\section{Ungleichungssysteme}\label{sec:UGS}
Ungleichungssysteme beschreiben Zusammenhänge von Informationen an verschiedenen Programmpunkten. Die präziseste Information, die diese Zusammenhänge erfüllt, ist die kleinste Lösung des Ungleichungssystems. 
In diesem Abschnitt geben wir eine allgemeine Beschreibung von Ungleichungssystemen und beschäftigen uns mit deren kleinsten Lösungen.

Seien $(L,\sqleq)$ ein vollständiger Verband und $\Var = \{x_i \mid i \in I\}$ eine abzählbare Menge von Variablen mit Indexmenge $I$, die Werte in $L$ annehmen. 
Die Menge der \quotes{Vektoren} über $L$ bildet zusammen mit komponentenweisem $\sqleq$ eine Halbordnung $(L^I,\sqleq^I)$. Es ist $x \sqleq^I y$ genau dann, wenn $x_i \sqleq y_i$ für jedes $i \in I$ gilt.

\begin{dfn}
Ein \emph{Ungleichungssystem}\index{Ungleichungssystem} ist eine Menge $\ugs$ von Ungleichungen der Gestalt
\begin{equation}\label{soi1}
x_j \sqgeq f(x)
\end{equation}
für Variablen $x_j \in \Var$ und monotone Abbildungen $f: L^I \to L$. Dabei schreiben wir $x$ abkürzend für $(x_i)_{i \in I}$. 
Für jede Variable $x_j$ enthalte $\ugs$ dabei nur eindliche viele Ungleichungen der Gestalt $x_y \sqgeq f(x)$.
\end{dfn}
Die Menge der verwendeten Abbildungen bezeichnen wir mit $\abbugs$. Solche Abbildungen nennen wir \emph{verallgemeinerte Transferfunktionen}. 
Für einelementige Indexmengen sind verallgemeinerte Transferfunktion bereits Transferfunktionen.

Um ein gegebenes Ungleichungssystem zu lösen, formen wir es zunächst, wie im Folgenden beschrieben, in ein äquivalentes Ungleichungssystem um, 
in welchem jede Variable auf der linken Seite von genau einer Ungleichung auftaucht:
Sind dabei $x_j \sqgeq f_1(x)$ und $x_j \sqgeq f_2(x)$ zwei Ungleichungen mit gleicher linker Seite, so werden diese ersetzt durch $x_j \sqgeq f_1(x) \sqcup f_2(x)$. 
Dieser Vorgang wird wiederholt, bis keine zwei Ungleichungen mit gleicher linker Seite mehr übrig sind. 
Für jede Variable $x_j \in \Var$, für die keine Ungleichung mit linker Seite $x_j$ existiert, wird weiter $x_j \sqgeq \bot$ hinzugefügt. 
Dazu definieren wir 
\[\mathcal{F}_{\ugs,j} := \{f \in \abbugs \mid \exists \text{ Ungleichung } x_j \sqgeq f(x) \in \mathcal{U}\}\]
als die Menge derjenigen Abbildungen, die in einer Ungleichung mit linker Seite $x_j$ benutzt werden.
Weiter sei 
\[F_j := \bigsqcup \mathcal{F}_j : L^I \to L \]
definiert durch
\[F_j(x) := \bigsqcup \big\{f(x) \mid f \in \mathcal{F}_{\ugs,j} \big\}.\]
Dies ist diejenige Abbildung, die für jedes Element die Informationen, die durch die einzelnen Abbildungen des Ungleichungssystems für dieses Element berechnet werden, \quotes{zusammen fasst}. 
Wir erhalten nun das Ungleichungssystem
\begin{equation}\label{soi2}
\ugs_I := \{x_j \sqgeq F_j(x)\}.
\end{equation}
Dies liefert auch für diejenigen Variablen, die im ursprünglichen Ungleichungssystem auf keiner linken Seite einer Ungleichung vorkommen, das Gewünschte: 
Existieren im ursprünglichen Ungleichungssystem keine Ungleichungen mit linker Seite $x_j$, so ist die Menge $\mathcal{F}_{\ugs,j}$ leer. 
Dann ist aber $\bigsqcup \mathcal{F}_{\ugs,j}$ die Abbildung, die jedes $x \in L^I$ auf $\bot$ abbildet.

\begin{lemma}
Die Lösungsmengen von $\ugs$ \eqref{soi1} und $\mathcal{U}_{I}$ \eqref{soi2} stimmen überein.
\end{lemma}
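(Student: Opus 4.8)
The plan is to prove that the two solution sets coincide by showing that an arbitrary vector $x \in L^I$ satisfies $\ugs$ if and only if it satisfies $\mathcal{U}_I$, arguing one variable at a time. The single conceptual ingredient is the universal property of the least upper bound: $x_j \sqgeq F_j(x)$, where $F_j(x) = \bigsqcup\{f(x) \mid f \in \mathcal{F}_{\ugs,j}\}$, holds exactly when $x_j$ is an upper bound of the set $\{f(x) \mid f \in \mathcal{F}_{\ugs,j}\}$, that is, exactly when $x_j \sqgeq f(x)$ for every single $f \in \mathcal{F}_{\ugs,j}$. Everything else is bookkeeping around this equivalence.

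First I would fix a vector $x$ solving $\ugs$ and an index $j \in I$. By the definition of $\mathcal{F}_{\ugs,j}$, every $f \in \mathcal{F}_{\ugs,j}$ originates from an inequality $x_j \sqgeq f(x)$ occurring in $\ugs$, which $x$ satisfies by assumption. Hence $x_j$ is an upper bound of $\{f(x) \mid f \in \mathcal{F}_{\ugs,j}\}$ and therefore lies above its least upper bound, giving $x_j \sqgeq F_j(x)$. Since $j$ was arbitrary, $x$ solves $\mathcal{U}_I$.

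Conversely, I would start from a solution $x$ of $\mathcal{U}_I$ and pick any inequality $x_j \sqgeq f(x)$ belonging to $\ugs$. Then $f \in \mathcal{F}_{\ugs,j}$, so $f(x) \sqleq F_j(x) \sqleq x_j$, and the chosen inequality is satisfied. As this reasoning applies verbatim to every inequality of $\ugs$, the vector $x$ solves $\ugs$, which closes the equivalence.

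The only point that needs a moment's care — and the closest thing to an obstacle — is the treatment of those variables $x_j$ for which $\ugs$ contains no inequality at all. For such $j$ the set $\mathcal{F}_{\ugs,j}$ is empty, so $F_j(x) = \bigsqcup \emptyset = \bot$, and the normalised constraint $x_j \sqgeq \bot$ is vacuously true. I would remark explicitly that this is precisely why both directions go through uniformly without a case distinction, and that the empty least upper bound used here is available because $(L,\sqleq)$ is a complete lattice. With this observation the argument requires no further hypotheses.
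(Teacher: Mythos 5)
Your proof is correct and follows essentially the same route as the paper: both arguments reduce the claim to the equivalence \(x_j \sqgeq \bigsqcup S \Longleftrightarrow \forall s \in S:\ x_j \sqgeq s\), the paper proving it in its binary form \(x \sqgeq y \wedge x \sqgeq z \Leftrightarrow x \sqgeq y \sqcup z\) by an algebraic computation with \(\sqcup\), while you invoke the universal property of the least upper bound directly for the whole set \(\mathcal{F}_{\ugs,j}\). Your explicit remark about variables with no inequality (empty join giving \(\bot\), so \(x_j \sqgeq \bot\) holds vacuously) is a point the paper makes in the text preceding the lemma rather than in the proof itself, but the substance is identical.
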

\begin{proof}
Dies folgt unmittelbar aus der Äquivalenz $x \sqgeq y \wedge x \sqgeq z \Leftrightarrow x \sqgeq y \sqcup z$: 
\quotes{$\Rightarrow$} ist offensichtlich, während \quotes{$\Leftarrow$}aus der Tatsache
$x \sqcup (y \sqcup z) 
= (x \sqcup y) \sqcup (x \sqcup z) 
= x \sqcup x
= x$ resultiert.
\end{proof}

Das Ungleichungssystem $\ugs_I$ ist nun offenbar äquivalent zu 
\[x \sqgeq^{I} F(x),\] 
wobei 
\begin{equation}\label{soi-fct}
F := (F_j)_{j \in I} : L^I \to L^I.
\end{equation}
Somit können wir $F$ an Stelle des gegebenen Ungleichungssystems \eqref{soi1} betrachten.

\begin{lemma}
Für jedes $j \in I$ ist $F_j$ monoton als Abbildung zwischen $(L^I, \sqleq^I)$ und $(L,\sqleq)$.
\end{lemma}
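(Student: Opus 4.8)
Der Plan ist, unmittelbar die Definition der Monotonie nachzuprüfen und dabei auszunutzen, dass jede der an $F_j$ beteiligten Abbildungen $f \in \mathcal{F}_{\ugs,j}$ nach Definition eines Ungleichungssystems bereits monoton ist. Ich würde also $x,y \in L^I$ mit $x \sqleq^I y$ beliebig vorgeben und $F_j(x) \sqleq F_j(y)$ zeigen.

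Dazu fixiere ich zunächst ein beliebiges $f \in \mathcal{F}_{\ugs,j}$. Da $f$ monoton ist und $x \sqleq^I y$ gilt, folgt $f(x) \sqleq f(y)$. Weil $f(y)$ ein Element der Menge $\{g(y) \mid g \in \mathcal{F}_{\ugs,j}\}$ ist, deren kleinste obere Schranke definitionsgemäß gerade $F_j(y)$ ist, gilt außerdem $f(y) \sqleq F_j(y)$. Zusammengenommen erhalte ich $f(x) \sqleq F_j(y)$, und zwar für jedes $f \in \mathcal{F}_{\ugs,j}$.

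Damit ist $F_j(y)$ eine obere Schranke der Menge $\{f(x) \mid f \in \mathcal{F}_{\ugs,j}\}$. Da $F_j(x)$ nach Definition die kleinste obere Schranke eben dieser Menge ist, folgt $F_j(x) \sqleq F_j(y)$, was die Behauptung beweist. Die Existenz aller auftretenden kleinsten oberen Schranken ist durch die Vollständigkeit von $(L,\sqleq)$ gesichert; der Sonderfall $\mathcal{F}_{\ugs,j} = \emptyset$, in dem $F_j$ konstant $\bot$ ist, wird von diesem Argument automatisch miterfasst.

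Eine nennenswerte Hürde erwarte ich hier nicht: Das Argument ist im Kern dasselbe wie in dem Beispiel, in dem die punktweise Supremumsbildung monotoner Abbildungen wieder als monoton erkannt und $((L\to_{\text{mon}} L), \sqleqmap)$ als vollständiger Verband nachgewiesen wurde. Der einzige beachtenswerte Punkt ist, sich klarzumachen, dass die Monotonie der einzelnen $f$ bereits in der Definition von $\ugs$ gefordert wird und daher hier ohne zusätzliche Arbeit zur Verfügung steht.
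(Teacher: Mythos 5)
Dein Beweis ist korrekt, schlägt aber einen anderen Weg ein als die Arbeit. Dort wird zunächst eine Hilfsaussage bewiesen: Für zwei monotone Abbildungen $f,g$ ist die punktweise Vereinigung $d \mapsto f(d) \sqcup g(d)$ wieder monoton, gezeigt über die Charakterisierung $a \sqleq b \Leftrightarrow a \sqcup b = b$; anschließend wird die Monotonie von $F_j$ daraus gefolgert, dass $\mathcal{F}_{\ugs,j}$ nach Definition eines Ungleichungssystems endlich ist, $F_j$ also eine endliche Vereinigung monotoner Abbildungen ist. Du arbeitest stattdessen direkt mit der Definition der kleinsten oberen Schranke: Aus $f(x) \sqleq f(y) \sqleq F_j(y)$ für jedes $f \in \mathcal{F}_{\ugs,j}$ folgt, dass $F_j(y)$ eine obere Schranke von $\{f(x) \mid f \in \mathcal{F}_{\ugs,j}\}$ ist, also $F_j(x) \sqleq F_j(y)$. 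Dein Argument ist dabei allgemeiner: Es benötigt die Endlichkeit von $\mathcal{F}_{\ugs,j}$ überhaupt nicht, deckt also auch unendlich viele Ungleichungen pro Variable ab, und erledigt den Randfall $\mathcal{F}_{\ugs,j} = \emptyset$ ohne Sonderbehandlung -- es ist im Kern dasselbe Argument, mit dem im Beispiel zum Verband $((L\to_{\text{mon}} L), \sqleqmap)$ die Monotonie des punktweisen Supremums nachgewiesen wird, worauf du selbst hinweist. Der Beweis der Arbeit kommt dafür mit der rein binären Verbandsidentität aus, lässt aber den Induktionsschritt von zwei auf $m$ Abbildungen sowie die Fälle $m \le 1$ implizit.
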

\begin{proof}
Allgemein gilt: Für zwei vollständige Verbände $(D, \sqleq)$ und $(D',\sqleq')$ und monotone Abbildungen $f, g: D \to D$ ist die Abbildung
\begin{align*}
h: (D,\sqleq) &\to (D,\sqleq), \\
d &\mapsto f(d) \sqcup g(d)
\end{align*} 
wiederum monoton:

Es ist $a \sqleq b \Leftrightarrow a \sqcup b=b$. Sei nun $a \sqleq b$, dann ist
\[h(a) \sqcup h(b) 
= f(a) \sqcup g(a) \sqcup f(b) \sqcup g(b)
= f(b) \sqcup g(b)
= h(b),\]
was $h(a) \sqleq h(b)$ beweist.

Für $j \in I$ ist nun $F_j: L^I \to L$ durch $F_j(x) = \bigsqcup_{i=1}^m f_i(x)$ für monotone Operatoren $f_i \in \mathcal{F}_{\ugs,j}$ definiert. 
Da $m$ endlich ist, folgt die Monotonie von $F_j$ bereits aus obiger Aussage.
\end{proof} 

Mit dem Fixpunktsatz von Tarski folgt nun:
\begin{kor} 
Die kleinste Lösung des Ungleichungssystems $\mathcal{U}$ (\ref{soi1}) ist der kleinste Präfixpunkt von $F$ und dieser stimmt mit dem kleinsten Fixpunkt $\lfp(F)$ von $F$ überein.
\end{kor}

Wir werden die kleinste Lösung eines Ungleichungssystems später mit $\underline{x}$ bezeichnen.

\section{Flussgraphen}\label{sec:fg}
Im vorigen Abschnitt haben wir uns mit Ungleichungssystemen und deren Lö\-sun\-gen befasst. 
In diesem Abschnitt erläutern wir einführend, wie zu einem ge\-ge\-be\-nen Programm ein solches Ungleichungssystem gefunden werden kann, 
dessen Lösung dann die gewünschten Informationen an den einzelnen Programmpunkten liefert.

Im Folgenden gehen wir davon aus, dass jedes Programm bereits als eine endliche Menge von Flussgraphen gegeben ist. Dabei entspricht jeder Flussgraph einer Prozedur des Programms. 
Mit $\Proc$ bezeichnen wir die (endliche) Menge von Prozeduren des Programms, die sich auch rekursiv aufrufen können. 
Um die Darstellung einfach zu halten, verzichten wir auf lokale Variablen und ebenso auch auf Prozedurparameter. Das Programm habe also nur globale Variablen.

Zunächst definieren wir ein solches Flussgraphsystem. Anschließend stellen wir einige Ungleichungen für ein Ungleichungssystem auf, dessen Lösung mit Informationen an Programmpunkten korrespondiert. 

\begin{dfn}
Ein \emph{Flussgraphsystem}\index{Flussgraphsystem} ist eine endliche Menge $G = \{G_{\p}\}_{\p\in \Proc}$ disjunkter endlicher Flussgraphen. Jeder \emph{Flussgraph}\index{Flussgraph} $G_{\p}$ besteht aus
\begin{itemize}
 \item einer endlichen Menge $N_{\p}$ von \emph{Knoten}, die Programmpunkten entsprechen,
 \item einer endlichen Menge $E_{\p} \subseteq (N_{\p} \times \lab \times N_{\p})$ von \emph{Kanten},
 \item einem ausgezeichneten \emph{Startknoten} $s_{\p} \in N_{\p}$ und
 \item einem ausgezeichneten \emph{Endknoten} $r_{\p} \in N_{\p}$ der Prozedur $\p$.
\end{itemize}
Hierbei ist $\lab$ eine Menge von \emph{Beschriftungen}, die den Programmanweisungen entsprechen. 
Solche Beschriftungen sind entweder sogenannte \emph{Basisanweisungen} (im Folgenden meist mit \quotes{$\texttt{b}$} bezeichnet) oder \emph{Prozeduraufrufe} (im Folgenden meist \quotes{$\texttt{p()}$}). 
Die entsprechenden Kanten nennen wir \emph{Basiskanten} und \emph{Aufrufkanten}. 
Wir setzen $N:= \bigcup\{N_{\p} \mid \p\in\Proc\}$ als die Menge aller Knoten und $E:= \bigcup\{E_{\p} \mid \p\in\Proc\}$ als die Menge aller Kanten.
\end{dfn}
Ein solches Flussgraphssystem entsteht aus einem Programm folgendermaßen: Zu\-nächst werden die Programmpunkte einer Prozedur als Knoten aufgefasst. 
Dann wird für jede Anweisung zwischen je zwei Programmpunkten eine gerichtete Kante zwischen den zugehörigen Knoten hinzugefügt, wobei die Kante mit dieser Anweisung beschriftet ist.

Um nun zu jedem Programmpunkt Informationen, wie beispielsweise die möglichen Werte der Programmvariablen, bestimmen zu können, stellen wir zunächst ein Ungleichungssystem auf. 
Dabei sei $(L,\sqleq, \mathcal{F})$ das zugrundeliegende Framework. Also ist $L$ die Menge der Informationen. 
Mit $I: N \to L$ bezeichnen wir die Abbildung, die am Ende der Rechnung jedem Programmpunkt $u$ die gewünschte Information $I[u]$ zuordnet.  
Als Variablen $(x_i)_{i\in I}$ im Ungleichungssystem betrachten wir hier also $(I[\node])_{\node \in N}$. 
Für jede Basisanweisung $\mathtt{b}\in\lab$ wählen wir eine Transferfunktion $f_{\mathtt{b}} \in \mathcal{F}$.

\begin{figure}[ht]\begin{flowgraph}
  \node[stdnode]	(u) 						{$u$};
  \node[stdnode]	(v) 	[below=of u] 		{$v$};
  \node[annnode]	(Iu)	[right] at(u.east)	{$I[u]$};
  \node[annnode]	(Iv)	[right] at(v.east)	{$I[v] \sqgeq f_{\mathtt{b}}(I[u])$};

  \path[->]
	(u) 	edge 					node [swap] {\texttt{b}} 	(v)
	(Iu) 	edge [annarc,bend left]	node {\texttt{b}} 			(Iv);
\end{flowgraph}
\caption{Fluss von Informationen.}
\label{bild:informationsfluss}
\end{figure}

Für jede Basiskante $e=(u,\mathtt{b},v)$ benötigen wir eine Ungleichung
\[I[v] \sqgeq f_e((I[\node])_{\node \in N}),\]
wobei $f_e$ durch $f_e((I[\node])_{\node\in N}) := f_{\mathtt{b}}(I[u])$ definiert ist. 
Also hängt $f_e$ nur von der Basisanweisung $\mathtt{b}$ und der eingehenden Information $I[u]$ ab und ist nach Definition monoton.
Diese Ungleichung wird in \autoref{bild:informationsfluss} illustriert.
Weiter wird für den Startknoten des Programms, den wir mit $s_{\main}$ bezeichnen, verlangt, dass er eine Startinformation enthält. 
Darum enthält $\mathcal{U}$ eine Ungleichung \[s_{\main} \sqgeq \init\] für eine gegebene Startinformation $\init$.
Zur Gewinnung von Ungleichungen aus den Aufrufkanten gibt es verschiedene Mög\-lich\-kei\-ten, welche wir in Kapitel \ref{chap:PA} betrachten werden.

Ist zu einem Programm nun ein Ungleichungssystem $\mathcal{U}$ gegeben, in dem ausschließlich monotone Funktionen verwendet werden, 
so können wir wie in \autoref{sec:UGS} die Funktion $F$ bestimmen, deren kleinster Fixpunkt mit der kleinsten Lösung von $\mathcal{U}$ übereinstimmt. 

\begin{bem}
Das hier vorgestellte Verfahren, um aus einem Programm ein zu untersuchendes Ungleichungssystem zu gewinnen, ist eine \emph{Vorwärtsanalyse}. Im Gegensatz dazu stehen \emph{Rückwärtsanalysen}: 
Hier wird der Endknoten $r_{\main}$ mit einer Startinformation $\init$ versehen und die Informationen laufen in umgekehrter Richtung durch die Kanten. 
Weisen wir also wieder jeder Basisanweisung $\mathtt{b} \in \lab$ eine Transferfunktion $f_\mathtt{b}$ zu, so definieren wir $f_{(u,\mathtt{b},v)} ((I[\node])_{\node \in N}) := f_\mathtt{b}(I[v])$. 
Das Ungleichungssystem besteht nun aus Ungleichungen 
\[I[u] \sqgeq f_e((I[\node])_{\node \in N})\]
für jede Basiskante $e=(u,\mathtt{b},v)$ sowie
\[r_{\main} \sqgeq \init.\]
Wie auch bei der Vorwärtsanalyse werden weitere Ungleichungen benötigt, die Prozeduren behandeln. Diese bestimmt man analog zur Vorwärtsanalyse. Deshalb betrachten wir in dieser Arbeit nur Vorwärtsanalysen.
\end{bem}

\section{Der Workset-Algorithmus}\label{sec:wla}
Ein Ungleichungssystem $\mathcal{U}$ kann mithilfe eines \emph{Worklist-Algorithmus} gelöst werden. Dieser überprüft sukzessive alle Ungleichungen. 
Ist eine Ungleichung nicht erfüllt, verändert er den Wert der Variablen auf ihrer \quotes{größeren} Seite, so dass sie erfüllt ist. 

Ein intuitiver Ansatz für einen solchen Algorithmus ist der hier angegebene. Dabei schreiben wir in allen folgenden Algorithmen stets \texttt{x} anstelle von $\mathtt{(x_i)_{i \in I}}$.

\begin{pseudocode}
//initialize worklist with all inequalities of the system of inequalities
W := $()$;
forall (($\mathtt{x_i}\sqgeq$f(x)) $\in$ $\mathcal{U}$) $\{$W := W:($\mathtt{x_i}\sqgeq$f(x))$\}$;
//initialize all variables with the minimal value $\gray{\bot \in L}$ of the complete lattice
forall (i$\in$I) $\{\mathtt{x_i}$:=$\bot;\}$
//satisfy all inequalities successively; add inequalities to worklist, if they use manipulated variables
while W $\ne ()$ $\{$
  ($\mathtt{x_j}\sqgeq$f(x)) := head(W); W := tail(W);
  t := f(x);
  if $\neg$(t $\sqleq$ $\mathtt{x_j}$) $\{$
    $\mathtt{x_j}$ := $\mathtt{x_j}$ $\sqcup$ t;
    forall (($\mathtt{x_k}\sqgeq$g(x)) $\in$ $\mathcal{U}$ : g uses $\mathtt{x_j}$) $\{$W := W:($\mathtt{x_k}\sqgeq$g(x));$\}$
  $\}$
$\}$
\end{pseudocode}
Dieser Algorithmus berechnet die kleinste Lösung des Ungleichungssystems $\mathcal{U}$ iterativ. 
Zunächst wird jeder zu berechnende Wert auf den kleinstmöglichen, also $\bot$, gesetzt. 
Dann wird eine Liste, die sogannte \emph{Worklist}, betrachtet, die zu jedem Zeitpunkt des Algorithmus alle potentiell unerfüllten Ungleichungen enthalten soll. 
Initialisiert wird diese Liste mit allen Ungleichungen von $\mathcal{U}$, da zu Beginn von keiner Ungleichung bekannt ist, ob die Variablen sie bereits erfüllen.

Bis alle Ungleichungen erfüllt sind, geht der Algorithmus folgendermaßen vor: 
Zu\-nächst erntfernt er die erste Ungleichung aus der Liste, welche die Gestalt $x_j \sqgeq f(x)$ hat, und überprüft sie. 
Dazu wird der Wert von $f(x)$ mit den zu diesem Zeitpunkten berechneten Werten der Variablen bestimmt. 
Falls die Ungleichung erfüllt ist, muss nichts weiter getan werden. 
Andernfalls muss der Wert $x_j$ so vergrößert werden, dass er mindestens so groß wie der Wert $t$ ist. 
Anschließend müssen alle Ungleichungen, die den Wert der Variable $x_j$ benutzen, erneut überprüft werden, und werden deshalb an die Worklist angehängt. 

Bei diesem Verfahren werden in der Tat nur diejenigen Ungleichungen erneut betrachtet, die vom Wert der Variable $x_j$ abhängen. Angehängt werden genauer
\[\mathcal{U} \setminus \big\{ (x_k \sqgeq f(x)) \in \mathcal{U} \mid \forall l_1,l_2\in L: f(x\{l_1/j\}) = f(x\{l_2/j\})\big\}.\]
Dabei definieren wir die Abbildung $x\{l/i\} : I \to L$ durch
\begin{align*}
x\{l/i\}(j) := \begin{cases} x_i &\text{ für } j\ne i \\ l &\text{ für } j = i. \end{cases}
\end{align*}
Die neue Abbildung $x\{l/i\}$ liefert für jeden Index außer $i$ dasselbe wie $x$. Für den Index $i$ hat $x\{l/i\}$ den Wert $l$.

\begin{lemma}\label{lem:wla-korr}
Sofern der Worklist-Algorithmus terminiert, liefert er die kleinste Lösung des Ungleichungssystems.
\end{lemma}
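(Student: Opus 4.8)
Der Plan ist, die Korrektheit über zwei Invarianten des Algorithmus zu zeigen, die ich jeweils per Induktion über die Schleifendurchläufe nachweisen würde. Nach dem vorangehenden Korollar existiert die kleinste Lösung $\underline{x}$ von $\mathcal{U}$ und stimmt mit $\lfp(F)$ überein; es genügt also zu zeigen, dass das vom Worklist-Algorithmus berechnete Tupel bei Terminierung gleich $\underline{x}$ ist. Die erste Invariante lautet: Jede Ungleichung, die gerade nicht in der Worklist steht, ist durch die aktuellen Variablenwerte erfüllt. Die zweite Invariante lautet: Das berechnete Tupel $x = (x_i)_{i \in I}$ erfüllt stets $x \sqleq^I \underline{x}$. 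Ist der Algorithmus terminiert, so ist die Worklist leer; nach der ersten Invariante ist $x$ dann eine Lösung und nach der zweiten zugleich $x \sqleq^I \underline{x}$. Da $\underline{x}$ die kleinste Lösung ist, gilt auch $\underline{x} \sqleq^I x$, und mit der Antisymmetrie von $\sqleq^I$ folgt $x = \underline{x}$.

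Für die erste Invariante würde ich den Induktionsschritt auf eine Fallunterscheidung stützen. Zu Beginn stehen alle Ungleichungen in der Worklist, die Aussage ist also leer erfüllt. Wird eine Ungleichung $x_j \sqgeq f(x)$ entnommen und ist sie bereits erfüllt, so ändert sich kein Wert und sie verlässt die Worklist als erfüllte Ungleichung. Andernfalls wird $x_j$ zu $x_j \sqcup f(x)$ vergrößert, und der Algorithmus fügt genau die Ungleichungen wieder ein, deren rechte Seite von $x_j$ abhängt (dies ist gerade die zuvor angegebene Menge). Zu zeigen bleibt dann, dass jede nicht wieder aufgenommene Ungleichung $x_k \sqgeq g(x)$ erfüllt bleibt: Hängt $g$ nicht von $x_j$ ab, so bleibt $g(x)$ durch die Vergrößerung von $x_j$ unverändert, während die linke Seite $x_k$ höchstens wächst (und zwar nur im Fall $k=j$). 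Eine zuvor erfüllte Ungleichung bleibt damit erfüllt. Da im Durchlauf nur die gerade bearbeitete Ungleichung die Worklist verlässt, ist die Invariante danach wieder intakt.

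Für die zweite Invariante genügt beim Aktualisierungsschritt der Nachweis von $x_j \sqcup f(x) \sqleq \underline{x}_j$. Dazu würde ich $x_j \sqleq \underline{x}_j$ (Induktionsvoraussetzung) mit der Monotonie von $f$ kombinieren: Aus $x \sqleq^I \underline{x}$ folgt $f(x) \sqleq f(\underline{x})$, und da $\underline{x}$ eine Lösung von $\mathcal{U}$ ist, gilt $f(\underline{x}) \sqleq \underline{x}_j$. Also ist $f(x) \sqleq \underline{x}_j$, und mit $x_j \sqleq \underline{x}_j$ liegt auch das Supremum $x_j \sqcup f(x)$ unterhalb von $\underline{x}_j$. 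Der Induktionsanfang ist wegen $\bot \sqleq \underline{x}_i$ für jedes $i$ trivial.

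Die Hauptschwierigkeit erwarte ich im Selbstbezugsfall, in dem $f$ von $x_j$ abhängt: Hier kann sich $f(x)$ durch die Vergrößerung von $x_j$ ändern, sodass die gerade bearbeitete Ungleichung nach dem Update nicht notwendig erfüllt ist. Dass die erste Invariante dennoch erhalten bleibt, beruht genau darauf, dass der Algorithmus alle von $x_j$ abhängigen Ungleichungen --- diese eingeschlossen --- wieder in die Worklist aufnimmt. Die saubere Buchführung über diese Wiederaufnahme, insbesondere die korrekte Behandlung des Falls $k=j$, ist der heikelste Punkt; die übrigen Schritte sind routinemäßige Abschätzungen in vollständigen Verbänden.
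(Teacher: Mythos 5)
Ihr Beweis ist korrekt und verläuft im Wesentlichen genauso wie der Beweis im Text: dieselben zwei Schleifeninvarianten (jede nicht in der Worklist enthaltene Ungleichung ist erfüllt; die aktuellen Variablenwerte liegen unterhalb der kleinsten Lösung), jeweils per Induktion über die Schleifendurchläufe nachgewiesen, mit demselben Schluss bei Terminierung. Der einzige kosmetische Unterschied ist, dass Sie die Werte direkt gegen die kleinste Lösung $\underline{x}$ abschätzen, indem Sie $f(\underline{x}) \sqleq \underline{x}_j$ aus der Lösungseigenschaft benutzen, während das Papier gegen $\lfp(F)$ abschätzt via $f(\lfp(F)) \sqleq F_j(\lfp(F)) = \lfp(F)_j$ --- beides stimmt nach dem vorangehenden Korollar überein.
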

\begin{proof}
Während des Schleifendurchlaufes innerhalb des Algorithmus gelten folgende Invarianten:
\begin{enumerate}
\item Die jeweils aktuell berechnete Lösung für $\mathtt{x}$ approximiert den kleinsten Fixpunkt komponentenweise von unten, d.h.~es gilt $\mathtt{x_i} \sqleq \lfp(F)_i$ für jedes $i \in I$.
\item Jede Ungleichung, die nicht in der Worklist enthalten ist, ist aktuell erfüllt. Es gilt also $ \neg(\mathtt{x_j \sqgeq f(x)}) \Rightarrow (x_j \sqgeq f(x)) \in \mathsf{W}$.
\end{enumerate}
Der Algorithmus terminiert, wenn die Worklist leer ist. Mit Gültigkeit der zweiten Invariante ist also jede Ungleichung von $\mathcal{U}$ erfüllt. 
Also ist $\mathtt{x}$ in der Tat eine Lösung des Ungleichungssystems und damit ein Präfixpunkt der Abbildung $F$, welche $\mathcal{U}$ beschreibt. 
Somit gilt $\mathtt{x_i} \sqgeq \lfp(F)_i$ für jeden Index $i \in I$. 
Mit der ersten Invariante gilt dann aber schon Gleichheit. Also stimmt $\mathtt{x}$ mit der kleinsten Lösung $\lfp(F)$ von $\mathcal{U}$ überein.

Wir müssen nun noch zu zeigen, dass die Invarianten tatsächlich gelten. Vor dem ersten Schleifendurchlauf ist die erste Invariante gültig, da alle Variablen mit $\bot$ initialisiert wurden. 
Die zweite Invariante gilt, da alle Ungleichungen und damit insbeonsdere die nicht erfüllten in \textsf{W} enthalten sind. 

Wir prüfen nun, dass die Invarianten beim Ausführen des Schleifenrumpfes erhalten bleiben. Zunächst einmal wird \textsf{W} um eine Ungleichung $u$ verringert. 
Falls die Ungleichung bereits erfüllt ist, ist nichts zu tun. 
Da keine Variable verändert wird, sind auch die anderen Ungleichungen weiterhin erfüllt, die nicht in \textsf{W} enthalten sind. 
Somit bleibt die zweite Invariante gültig.

Falls $u$ dagegen nicht erfüllt ist, so wird die Variable der \quotes{größeren} Seite dieser Ungleichung verändert. 
Potentiell sind danach alle Ungleichungen, die diese Variable benutzen, nicht mehr erfüllt.
Alle anderen Ungleichungen wurden nicht verändert und sind damit weiterhin erfüllt. 
Da die potentiell nicht mehr gültigen Ungleichungen zu \textsf{W} wieder hinzugefügt werden, enthält \textsf{W} nun also alle Ungleichungen, die vorher nicht erfüllt waren, 
sowie alle Ungleichung, die nach Veränderung der einen Variable nicht mehr erfüllt sein könnten. 
Demnach ist jede unerfüllte Ungleichung in \textsf{W} enthalten. 
Die zweite Bedingung ist somit in der Tat invariant unter Ausführung der Schleife. 

Dies gilt auch für die erste Bedingung: Sei $x_j \sqgeq f(x)$ die im Schleifenrumpf betrachtete Ungleichung $u$ und $t = f(\mathtt{x})$. Nach Annahme gilt $\mathtt{x_j} \sqleq \lfp(F)_j$. 
Falls $\mathtt{x_j}$ nicht verändert wird, bleibt die Bedingung offenbar erhalten. Im anderen Fall müssen wir $x_j \sqcup t \sqleq \lfp(F)_j$ nachweisen. Dazu reicht, $t \sqleq \lfp(F)_j$ zu zeigen. 
Zunächst ist \[t = f(\mathtt{x}) \sqleq f(\lfp(F)),\] da $f$ monoton ist und nach Voraussetzung $\mathtt{x} \sqleq^I \lfp(F)$ gilt. 
Weiter gilt nach Konstruktion $f(x) \sqleq F_j(x)$ für alle $x$ und damit auch \[f(\lfp(F)) \sqleq F_j(\lfp(F)).\] Mit \[F_j(\lfp(F))=\lfp(F)_j\] folgt die Invarianz der ersten Bedingung.
\end{proof}

\begin{lemma}\label{lem:wla-term-falls-endl}
Der Algorithmus terminiert, wenn $L$ und $I$ endlich sind. 
\end{lemma}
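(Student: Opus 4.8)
Der Plan ist, die Terminierung über eine Potentialfunktion nachzuweisen, die bei jedem Schleifendurchlauf bezüglich einer wohlfundierten Ordnung echt fällt. Zunächst würde ich die zentrale Beobachtung festhalten, dass jede Veränderung einer Variablen deren Wert \emph{echt} vergrößert: Betritt der Algorithmus den Rumpf der inneren \texttt{if}-Abfrage, so gilt $\neg(t \sqleq x_j)$. Wegen $x_j \sqleq x_j \sqcup t$ und $x_j \sqcup t \ne x_j$ --- andernfalls wäre $t \sqleq x_j \sqcup t = x_j$ --- folgt dann $x_j \sqsubset x_j \sqcup t$. Da zudem $I$ endlich ist und nach Definition eines Ungleichungssystems jede Variable nur in endlich vielen Ungleichungen auf einer linken Seite vorkommt, ist bereits $\mathcal{U}$ endlich; die Worklist wird also mit endlich vielen Ungleichungen initialisiert und bei jeder Variablenveränderung nur um eine Teilmenge von $\mathcal{U}$ ergänzt.

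Als Nächstes würde ich die endliche Höhe des Verbandes ausnutzen: Da $L$ endlich ist, ist für jedes $l \in L$ die Länge $h(l)$ der längsten echt aufsteigenden Kette unterhalb von $l$ wohldefiniert, und aus $l \sqsubset l'$ folgt stets $h(l) < h(l')$, da sich eine solche Kette zu $l$ durch $l'$ verlängern lässt. Mit $H := \max\{h(l) \mid l \in L\}$ betrachte ich dann die Potentialfunktion
\[
\Phi := \Big( \sum_{j \in I} \big(H - h(x_j)\big),\ |W| \Big) \in \nn \times \nn,
\]
versehen mit der lexikographischen Ordnung; dabei bezeichnet $x_j$ den aktuell berechneten Variablenwert und $|W|$ die aktuelle Länge der Worklist. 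Die Summe ist wegen der Endlichkeit von $I$ eine wohldefinierte natürliche Zahl.

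Schließlich würde ich zeigen, dass $\Phi$ bei jeder Iteration des \texttt{while}-Rumpfes echt fällt. Wird die Variable nicht verändert, weil die Ungleichung bereits erfüllt ist, so bleibt die erste Komponente unverändert, während $|W|$ durch das Entfernen des Kopfes um eins sinkt; $\Phi$ fällt also in der zweiten Komponente. Wird hingegen $x_j$ verändert, so wächst nach der obigen Beobachtung $h(x_j)$ echt und die erste Komponente sinkt um mindestens eins; dies dominiert lexikographisch die möglicherweise wachsende Worklist, sodass $\Phi$ auch hier echt fällt. Da $\nn \times \nn$ mit der lexikographischen Ordnung wohlfundiert ist, gibt es keine unendliche echt fallende Folge von Potentialwerten, und der Algorithmus terminiert.

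Die wesentliche Schwierigkeit sehe ich im Umgang mit der wachsenden Worklist: Ungleichungen werden nur dann erneut angehängt, wenn eine Variable echt vergrößert wird, und dies kann wegen der endlichen Höhe von $L$ zusammen mit der Endlichkeit von $I$ nur endlich oft geschehen. Genau diese Kopplung zwischen Variablenhöhe und Worklist-Länge wird durch die lexikographische Struktur von $\Phi$ sauber erfasst; der Rest ist dann Routine.
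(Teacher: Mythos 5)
Ihr Beweis ist korrekt, verfolgt aber eine genuin andere Strategie als die Arbeit. Dort wird nach jedem Schritt das gesamte Paar $(W^{(n)}, x^{(n)})$ aus Worklist und Variablenbelegung betrachtet und gezeigt, dass alle während des Laufs auftretenden Paare paarweise verschieden sind: Wird zwischen zwei Schritten nur der Fall \quotes{Ungleichung erfüllt} benutzt, so ist die Worklist echt kürzer geworden; tritt mindestens einmal der Fall \quotes{Ungleichung verletzt} auf, so ist die Variablenbelegung echt gewachsen. Da der Zustandsraum $2^\mathcal{U} \times L^I$ bei endlichem $L$ und $I$ endlich ist, kann der Lauf nur endlich viele Schritte umfassen (Schubfachschluss). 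Sie ersetzen dieses Abzählargument durch eine explizite Rangfunktion: Die lexikographisch geordnete Potentialfunktion aus der Gesamt-\quotes{Kohöhe} $\sum_{j\in I}(H - h(x_j))$ und der Worklist-Länge fällt in jedem Schleifendurchlauf echt, und die Wohlfundiertheit der lexikographischen Ordnung auf $\nn \times \nn$ liefert die Terminierung. Beide Argumente beruhen auf denselben zwei Fallbeobachtungen (erfüllte Ungleichung: Liste wird kürzer; verletzte Ungleichung: Variable wächst echt). Ihr Zugang hat dabei zwei Vorteile: Er umgeht die leichte Unsauberkeit der Arbeit, die Worklist -- eine Liste, die Duplikate enthalten kann -- als Element von $2^\mathcal{U}$ aufzufassen, da Sie nur die Listenlänge verwenden; und er ist allgemeiner, denn er benötigt von $L$ nur endliche Höhe (keine unendlichen echt aufsteigenden Ketten) statt Endlichkeit, während das Abzählargument der Arbeit die Endlichkeit von $L^I$ und $2^\mathcal{U}$ tatsächlich braucht. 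Der Preis ist lediglich die etwas größere Maschinerie der lexikographischen Ordnung.
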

\begin{proof}
Der Algorithmus berechnet in jedem Schritt neue Werte für die Worklist und die Variablen $x_i$. 
Wir bezeichnen nun die Werte, die nach dem $n$-ten Schritt berechnet wurden, mit $(W^{(n)}, x^{(n)}) \in 2^\mathcal{U} \times L^I$. 
Dabei ist also $W^{(n)} \in 2^\mathcal{U}$ die Liste derjenigen Ungleichungen, die nach dem $n$-ten Schritt potentiell nicht erfüllt sind. 
Der Vektor $x^{(n)} \in L^I$ beinhaltet die im $n$-ten Schritt berechneten Variablenwerte. Es ist nun zu zeigen, dass es eine natürliche Zahl $n \in \nn$ gibt, so dass $W^{(n)}$ leer ist.

Die $(W^{(n)}, x^{(n)})$ entstehen folgendermaßen:
Nach Initialisierung ist $W^{(0)}$ die Liste, die das Ungleichungssystem $\mathcal{U}$ enthält, und $x^{(0)}_i = \bot$ für jedes $i \in I$.
Sei für ein $n\ge 0$ bereits $(W^{(n)}, x^{(n)})$ konstruiert, $W^{(n)}$ nichtleer und $u=(x_j \sqgeq f(x))$ die nächste betrachtete Ungleichung. 
Dann entsteht $(W^{(n+1)}, x^{(n+1)})$ durch einen der beiden folgenden Fälle:
\begin{itemize}
\item[1.] Die Ungleichung $u$ ist erfüllt. Dann ist $x^{(n+1)} = x^{(n)}$ und $W^{(n+1)} = \text{tail}(W^{(n)})$. Insbesondere ist die neue Liste kürzer als die alte, d.h.~es gilt $|W^{(n+1)}| < |W^{(n)}|$.
Dabei bezeichnen wir mit $|W|$ die Länge einer Liste $W$.
\item[2.] Die Ungleichung $u$ ist nicht erfüllt. $W^{(n+1)}$ entsteht aus $\text{tail}(W^{(n)})$ durch An\-hän\-gen der Ungleichungen, die $x_j$ benutzen. 
Außerdem ist \[x^{(n+1)}_i = \begin{cases} x^{(n)}_i&\text{falls }i\ne j\\x^{(n)}\sqcup f(x^{(n)})&\text{sonst.}\end{cases}\] 
Insbesondere ist $x^{(n+1)} \sqsupset^I x^{(n)}$: Es ist $x^{(n+1)}_j \sqgeq x^{(n)}_j$ nach Konstruktion. 
Würde schon Gleichheit gelten, so wäre $x^{(n)}_j = x^{(n)}_j \sqcup f(x^{(n)})$ und damit $f(x^{(n)}) \sqleq x^{(n)}_j$. 
Dies ist ein Widerspruch dazu, dass die Ungleichung $u$ nach der $n$-ten Iteration nicht erfüllt ist.
\end{itemize}
Es gilt nun $(W^{(n)},x^{(n)}) \ne (W^{(m)},x^{(m)})$ für alle $n<m$: 
Wir unterscheiden zwei Fälle. Sei bei der Herleitung von $(W^{(m)},x^{(m)})$ aus $(W^{(n)},x^{(n)})$ nur der Fall 1 benutzt worden. Dann ist $|W^{(m)}| < |W^{(n)}|$
und damit $(W^{(n)},x^{(n)}) \ne (W^{(m)},x^{(m)})$.
Also sei bei der Herleitung mindestens einmal der Fall 2 aufgetreten. Damit ist aber $x^{(m)} \sqsupset^I x^{(n)}$. Auch dann ist $(W^{(n)},x^{(n)}) \ne (W^{(m)},x^{(m)})$.

Sei $\kappa:= |2^\mathcal{U} \times L^I|$ die Mächtigkeit von $2^\mathcal{U} \times L^I$. Sind nun $L$ und $I$ endlich, so ist auch $\lambda:=|\{f:L^I \to L \mid f \text{ monoton}\}| < \infty$ und 
$\mu:=|\mathcal{U}| \le |I| \cdot \lambda < \infty$. 
Es folgt $\kappa\le 2^{\lambda}\cdot |L|^{|I|} < \infty$.

Da die $(W^{(n)},x^{(n)})$ paarweise verschieden sind, kann es nur $\kappa$ verschiedene solche Tupel geben. Also ist nach spätestens $n=\kappa$ Schritten $W^{(n)}$ leer und der Algorithmus terminiert.
\end{proof}

In \autoref{sec:callstring-ansatz} werden wir auch Ungleichungssysteme einführen, die unendliche viele Ungleichungen enthalten. Darum erweitern wir das Konzept des Worklist-Algorithmus. 
Im Gegensatz zu einer endliche Liste von Ungleichungen kann eine \emph{Workset} beliebig viele Ungleichungen enthalten: 
\begin{pseudocode}
W := $\mathcal{U}$;
forall (i$\in$I) $\{\mathtt{x_i}$:=$\bot;\}$
while W $\ne ()$ $\{$
  choose u=($\mathtt{x_j}\sqgeq$f(x)) from W; W := W$\setminus\{$u$\}$;
  t := f(x);
  if $\neg$(t $\sqleq$ $\mathtt{x_j}$) $\{$
    $\mathtt{x_j}$ := $\mathtt{x_j}$ $\sqcup$ t;
    forall (($\mathtt{x_k}\sqgeq$g(x)) $\in$ $\mathcal{U}$ : g uses $\mathtt{x_j}$) $\{$W := W:($\mathtt{x_k}\sqgeq$g(x))$;\}$
  $\}$
$\}$
\end{pseudocode}
Wie zuvor berechnet dies eine Lösung des Ungleichungssystems $\mathcal{U}$.
Enthält das Ungleichungssystem nun unendlich viele Ungleichungen, so kann Terminierung in endlicher Zeit nicht garantiert werden. 
Da aber diese Berechnungsvorschrift im Fall endlich vieler Variablen wieder dem zuerst vorgestellten Algorithmus entspricht, nennen wir auch diese wieder Algorithmus.


\chapter{Klassische Ansätze zur interprozeduralen Analyse}\label{chap:PA}
Im letzten Kapitel haben wir Ungleichungssysteme und deren Lösungen betrachtet. 
Außerdem haben wir für einfachere Anweisungen eines Programms, sogenannte \emph{Basisanweisungen}, Ungleichungen aufgestellt, die den Datenfluss durch diese Anweisung beschreiben. 
Gegenstand dieses Kapitels ist nun die Behandlung der übrigen Anweisungen, der \emph{Prozeduraufrufe}.

Sharir und Pnueli \cite{sharir-pnueli} haben zwei Ansätze eingeführt, mit denen Prozeduraufrufe behandelt werden können. 
Der \emph{funktionale Ansatz} berechnet für jede Prozedur zunächst eine \emph{Summary-Information}, die den Datenfluss durch diese Prozedur beschreibt. 
Damit können anschließend Prozeduraufrufe sehr ähnlich zu Basisanweisungen mithilfe von Transferfunktionen behandelt werden. 
Die grundlegende Idee des \emph{Call-String-Ansatzes} dagegen ist, jeden Prozeduraufruf durch die Anweisungen der aufgerufenen Prozedur zu ersetzen 
und so das Programm wie eine einzige Prozedur zu behandeln. 
Ziel dieses Kapitels ist, diese beiden Ansätze im Hinblick darauf zu untersuchen, wie präzise die von ihnen berechneten Informationen sind. 

Dazu werden wir in diesem Kapitel zunächst den funktionalen und den Call-String-Ansatz vorstellen und deren theoretisch beste Lösungen untersuchen. 
Anschließend werden wir die Polyederanalyse betrachten, die von Cousot und Halbwachs \cite{CH78-POPL} eingeführt wurde und den funktionalen Ansatz variiert. 
Dabei werden wir zwei Mög\-lich\-kei\-ten studieren, die Summary-Informationen berechnen. 
Der eine Ansatz stammt dabei ebenfalls von Cousot und Halbwachs \cite{CH78-POPL} und benutzt Relationen, 
während Seidl et al.~\cite{seidl07} die Summary-Informationen mithilfe von Matrizenmengen bestimmen.

Der funktionale und der Call-String-Ansatz definieren Ungleichungssysteme. 
Wir werden uns zuletzt also der Frage zu, ob diese durch einen Workset-Algorithmus, wie er in \autoref{sec:wla} vorgestellt wurde, gelöst werden können.

Im Folgenden sei $(L,\sqleq, \mathcal{F})$ ein monotones Framework und $\init_L$ ein Startwert. 
Wir schreiben $\id$ und $\init$ anstelle von $\id_L$ bzw.~$\init_L$, falls das zugrundeliegende $L$ aus dem Zusammenhang eindeutig ist. 
Sei weiter $G = \{G_{\p}\}_{\p\in \Proc}$ ein Flussgraphsystem. Zu jeder Basisanweisung $\mathtt{b}$ wählen wir eine Transferfunktion $f_{\mathtt{b}} \in \mathcal{F}$.

\section{Der funktionale Ansatz}\label{sec:funktionaler-ansatz}
Ein Ansatz zur Berechnung von Informationen für die Punkte eines Programmes mit Prozeduren ist, zunächst den Effekt einer jeden Prozedur $\p \in \Proc$ durch eine Transferfunktion zu beschreiben. 
Dafür bestimmen wir für jeden Programmpunkt $u \in N_p$ diejenige Abbildung $T[u] \in \mathcal{F}$, die einer Information im Startpunkt der Prozedur $\p$ die zugehörige Information in $u$ zuordnet. 
Der Effekt der Prozedur, wie sich also Daten vom Start- bis zum Endknoten verändern, ist entsprechend die Abbildung für den Endknoten der Prozedur $T[r_\p]$. 

Da vom Startknoten zu sich selbst noch keine Transformation statt findet, ist die Identität eine mögliche Abbildung vom Startknoten zu sich selbst. 
Also ist $T[s_{\p}]$ mindestens so groß wie die Identität. 
Ist $e=(u,\mathtt{b},v)$ eine Basiskante, so werden Daten bis $v$ so transformiert, wie es der Hintereinanderschaltung der Transformation bis $u$ mit der Transformation $f_e$ der Kante $e$ entspricht. 
Ist $e$ dagegen eine Kante, an der die Prozedur $\q$ aufgerufen wird, so wird hinter die Transformation bis $u$ der Effekt der Prozedur $\q$ geschaltet, 
um eine Transformation der Daten bis $v$ zu erhalten.
\begin{figure}[ht]
  \begin{flowgraph}
	\node[startnode]	(sp) 					{$s_{\p}$};
	\node				(procp) [above=of sp]	{$\p$:};
	\node[stdnode]		(u)		[below=of sp] 	{$u$};
	\node[stdnode]		(v)		[below=of u]	{$v$};
	\node[startnode]	(sq) 	[right=of u]	{$s_{\q}$};
	\node				(procq) [above=of sq]	{$\q$:};
	\node[stdnode]		(rq)	[below=of sq] 	{$r_\q$};

	\node[annnode]		(Tsp)	[left] 	at (sp.west) 	{$\id \sqleq T[s_\p]$};
	\node[annnode]		(Tu)	[left] 	at (u.west) 	{$T[u]$};
	\node[annnode]		(Tv)	[left] 	at (v.west) 	{$\left.\begin{array}{r}f_{\mathtt{b}}\circ T[u]\\T[r_{\q}]\circ T[u]\end{array}\right\}\sqleq T[v]$};
	\node[annnode]		(Tsq)	[right] at (sq.east) 	{$T[s_\q]\sqgeq \id$};
	\node[annnode]		(Trq)	[right] at (rq.east) 	{$T[r_\q]$};
	\path[->]
	  (sp) 		edge [snakearc]											(u)
	  (u)		edge 			node {$\mathtt{b}\slash\mathtt{q()}$}	(v)
	  (sq)		edge [snakearc]				 							(rq)
	  (Tu)		edge [annarc,bend right]		(Tv)
	  (Trq)		edge [annarc,bend left]	(Tv)
	;
	
  \end{flowgraph}
  \caption{Informationsfluss im Ungleichungssystem \eqref{T-Ugs}.}
  \label{bild-T-Ugs}
\end{figure}

Zusammen ergibt sich also das Ungleichungssystem
\begin{equation}\label{T-Ugs}\begin{split}
T[s_{\p}]	&\sqgeq \id_L \text{ für } \p \in \Proc\\
T[v] 		&\sqgeq f_{\mathtt{b}} \circ T[u] 	\text{ für } e=(u,\mathtt{b},v) \in E \text{ Basiskante} \\
T[v] 		&\sqgeq T[r_{\q}] \circ T[u] 		\text{ für } e=(u,\mathtt{q()},v) \in E \text{ Aufrufkante}
\end{split}\end{equation}
Dieses wird in \autoref{bild-T-Ugs} verdeutlicht. 
Die dabei verwendeten Abbildungen 
\begin{align*}
f_{T,\id}: 						\mathcal{F}^N &\to \mathcal{F},\\g&\mapsto \id_L
\intertext{für den Startknoten einer Prozedur,}
f_{T,(u,\mathtt{b},v)}:			\mathcal{F}^N &\to \mathcal{F},\\g&\mapsto f_{\mathtt{b}} \circ g_u
\intertext{für Basiskanten und}
f_{T,(u,\mathtt{q()},v)}(g): 	\mathcal{F}^N &\to \mathcal{F},\\g&\mapsto g_{r_{\q}} \circ g_u  
\end{align*}
für Prozeduraufrufe sind offenbar monoton.
Wir schreiben später auch abkürzend $f_{T,\cdot}$ für diese Abbildungen.

Offenbar gibt es für jedes $T[u]$ nur endlich viele Ungleichungen. Das Ungleichungssystem \eqref{T-Ugs} hat also die in \autoref{sec:UGS} geforderte Gestalt.

Die gesuchte Information in einem Programmpunkt $u \in N_\p$ einer Prozedur $\p \in \Proc$ ist weiter gegeben durch die kleinste Lösung des Ungleichungssystems
\begin{equation}\label{R-Ugs}\begin{split}
R[s_{\main}]	&\sqgeq \init\\
R[v] 			&\sqgeq f_{\mathtt{b}}(R[u]) 		\text{ für } e=(u,\mathtt{b},v)\in E \text{ Basiskante} \\
R[s_{\q}]		&\sqgeq R[u]						\text{ für } e=(u,\mathtt{q()},v) \in E \text{ Aufrufkante} \\
R[v] 			&\sqgeq \underline{T}[r_{\q}](R[u])	\text{ für } e=(u,\mathtt{q()},v) \in E \text{ Aufrufkante}
\end{split}\end{equation}
wobei $R[u] \in L$ für alle $u \in N$. Die Zusammenhänge zwischen den Werten der einzelnen $R[u]$ werden in \autoref{bild-R-Ugs} dargestellt.

\begin{figure}[ht]
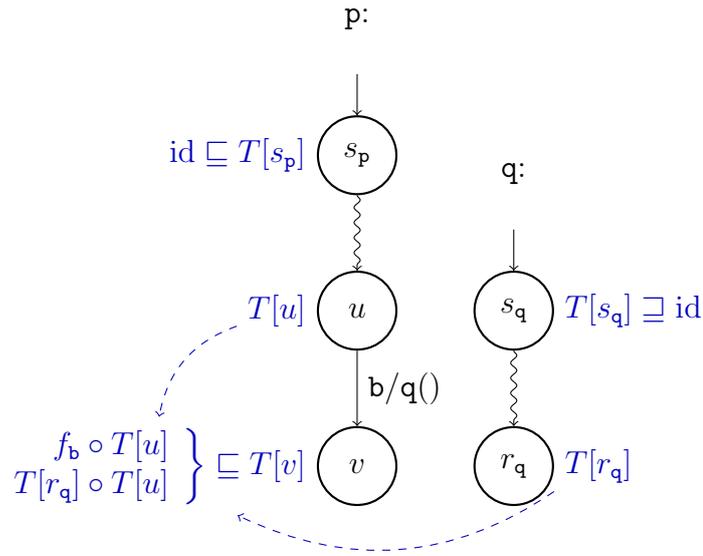

  \begin{flowgraph}
	\node[startnode]	(sp) 					{$s_{\p}$};
	\node				(procp) [above=of sp]	{$\p$:};
	\node[stdnode]		(u)		[below=of sp] 	{$u$};
	\node[stdnode]		(v)		[below=of u]	{$v$};
	\node[startnode]	(sq) 	[right=of u]	{$s_{\q}$};
	\node				(procq) [above=of sq]	{$\q$:};
	\node[stdnode]		(rq)	[below=of sq] 	{$r_\q$};

	\node[annnode]		(Ru)	[left] 	at (u.west) 	{$R[u]$};
	\node[annnode]		(Rv)	[left] 	at (v.west) 	{$\left.\begin{array}{r}f_{\mathtt{b}}(R[u])\\\underline{T}[r_{\q}](R[u])\end{array}\right\}\sqleq R[v]$};
	\node[annnode]		(Rsq)	[right] at (sq.east) 	{$R[s_\q]\sqgeq R[u]$};
	\path[->]
	  (sp) 		edge [snakearc] 										(u)
	  (u)		edge 			node {$\mathtt{b}\slash\mathtt{q()}$}	(v)
	  (sq)		edge [snakearc]				 							(rq)
	  (Ru)		edge [annarc,bend right]		(Rv)
	  (Ru)		edge [annarc,bend left]			(Rsq)
	;
  \end{flowgraph}
  \caption{Informationsfluss im Ungleichungssystem \eqref{R-Ugs}.}
  \label{bild-R-Ugs}
\end{figure}

Wie in \autoref{sec:fg} ist im Startknoten als Information mindestens die Startinformation $\init$ enthalten. 
Der Startknoten einer beliebigen Prozedur enthält die Information des Knotens, von dem aus die Prozedur aufgerufen wurde. 
Für jede Kante $e = (u,\mathtt{s},v)$ enthält $v$ die Information aus $u$, nachdem diese entsprechend des Effektes der Kante $e$ verändert wurde. 
Im Falle einer Basiskante ist dieser Effekt $f_e$, im Falle einer Aufrufkante ist es der Effekt der aufgerufenen Prozedur $\q$, d.h.~$\underline{T}[r_{\q}]$.
Die dabei verwendeten Abbildungen 
\begin{align*}
f_{R,\init} 						:L^N &\to L,\\ \mathcal{S} &\mapsto \init
\intertext{für die Startinformation,}
f_{R,(u,\mathtt{b},v)} 				:L^N &\to L,\\ \mathcal{S} &\mapsto f_{\mathtt{b}} (\mathcal{S}_u) 
\intertext{für Basiskanten und}
f_{R,(u,\mathtt{q()},v),\text{ent}} :L^N &\to L,\\ \mathcal{S} &\mapsto \mathcal{S}_u
\intertext{sowie}
f_{R,(u,\mathtt{q()},v),\text{ret}} :L^N &\to L,\\ \mathcal{S} &\mapsto \underline{T}[r_{\q}] (\mathcal{S}_u)
\end{align*}
für Prozeduraufrufe sind offenbar ebenfalls monoton.
Wie für die verallgemeinerten Transferfunktionen des Ungleichungssystems \eqref{T-Ugs} schreiben wir hier abkürzend $f_{R,\cdot}$ für diese Abbildungen.

Offenbar gibt es für jedes $R[u]$ nur endlich viele Ungleichungen. Das Ungleichungssystem \eqref{R-Ugs} hat also die in \autoref{sec:UGS} geforderte Gestalt.

Mit dem in diesem Abschnitt vorgestellten funktionalen Ansatz können Programme analysiert werden, die Prozeduren enthalten. 
Dazu müssen allerdings zunächst Transferfunktionen berechnet werden, die die Effekte der Prozeduren beschreiben. 
Den Wert in einem Rückkehrknoten nach einem Prozeduraufruf enthält man nun nicht aus den in dieser Prozedur berechneten Werten, 
sondern durch Anwenden der für die aufgerufene Prozedur berechneten Transferfunktion auf den Wert aus dem Aufrufknoten. 
Dadurch werden Reihenfolgen von Prozeduraufrufen und deren Verschachtelungen nicht explizit behandelt. 
Im nächsten Abschnitt führen wir eine Analyse ein, die die Aufrufreihenfolge von Prozeduren explizit behandelt.

\section{Der Call-String-Ansatz}\label{sec:callstring-ansatz}
Ein anderer Ansatz zur interprozeduralen Analyse ist, eine Analyse eines Programms mit nur einer Prozedur zu simulieren. 
Die Idee dabei ist, im Flussgraphen der $\main$-Prozedur des Programms jede Aufrufkante durch eine Kopie des Flussgraphen der aufgerufenen Prozedur zu ersetzen. 
Realisiert wird dies durch Identifikation der verschiedenen Kopien mit sogenannten \emph{Call-Strings}. 
Diese beschreiben, welche Prozeduraufrufe dem Aufruf der betrachteten Prozedur vorangegegangen sind, und beschreiben die Kopie dadurch eindeutig.

Anstatt vieler Flussgraphen wird also nur ein einziger betrachtet. 
Dieser erweiterte Flussgraph enthält dann nicht nur Basiskanten, die mit Basisanweisungen beschriftet sind, sondern auch Kanten, um den Eintritt und die Rückkehr aus einer Prozedur zu behandeln. 
Wir erweitern dazu die Kantenmenge $E$ um Kanten, die den Eintritt und die Rückkehr aus einer Prozedur beschreiben. Dabei unterscheiden wir zwei Arten von Prozeduraufrufen: 
Auf der einen Seite gibt es diejenigen Aufrufe, denen eine zugehörige Rückkehr folgt. Solche Aufrufe codieren wir durch $\call$, die zugehörige Rückkehr wird mit $\ret$ beschrieben. 
Auf der anderen Seite gibt es auch Aufrufe, bei denen der Pfad aus dieser aufgerufenen Prozedur nicht mehr zurückkehrt. Solche Aufrufe bezeichnen wir mit $\enter$. 
Somit können wir die Ausführungspfade des Programms beschreiben als Worte über dem Alphabet
\begin{align*}
\widetilde{E} := &\{e \in E \mid e \text{ Basiskante}\} \\
	&\cup\{(u,\call,s_{\q}) \mid \exists (u,\mathtt{q()},v) \in E\} \\
	&\cup\{(r_\q,\ret,v) \mid \exists (u,\mathtt{q()},v) \in E\} \\
	&\cup\{(u,\enter,s_\q) \mid \exists (u,\mathtt{q()},v) \in E\}
\end{align*}
Diese Pfade sollen \quotes{wohlgeschachtelt} sein, d.h.~jede Prozedur erst dann wieder verlassen werden, wenn alle in ihr aufgerufenen Prozeduren bereits beendet wurden. 
Genauer bedeutet \quotes{wohlgeschachtelt}, dass der Aufruf einer Prozedur $\p$ in einem Pfad, auf den später im Pfad eine zugehörige Rückkehr folgt, 
mit $\call$ und die entsprechende Rückkehrkante mit $\ret$ bezeichnet wird. 
Sollte zwischen diesen beiden Kanten eine weitere Prozedur $\q$ augerufen werden, so erwarten wir, 
dass es zwischen der Aufrufkante der Prozedur $\q$ und der Rückkehrkante der Prozedur $\p$ eine Kante gibt, die die Rückkehr aus der Prozedur $\q$ kennzeichnet. 
Ist dagegen $\p$ eine Prozedur, die im Pfad aufgerufen wird, ohne dass es im Pfad eine zugehörige Rückkehr gibt, so wird diese Aufrufkante mit $\enter$ gekennzeichnet. 
Diese Bedingungen implizieren insbesondere, dass auf eine $\call$-Kante niemals eine $\enter$-Kante folgen kann. Die Menge dieser Pfade werden wir in \ref{sec:koinzidenz} einführen.

Eine erste Idee, um eine intraprozedurale Analyse zu simulieren, ist es, im Flussgraphen von $\main$ eine Aufrufkante einer Prozedur $\p$ durch den zu $\p$ gehörigen Flussgraphen zu ersetzen. 
Da die Benennung der Knoten eines Flussgraphen eindeutig sein muss, können verschiedene Kopien des Flussgraphen von $\p$ nicht mit denselben Knotennamen benannt werden. 
Die verschiedenen Kopien unterscheiden sich jedoch genau durch die Call-Strings.

Intuitiv beschreibt ein Call-String eine Liste von denjenigen Aufrufkanten, deren zugehörige Prozedur noch nicht wieder beendet wurden. 
\begin{figure}[ht]
  \begin{flowgraph}
	\node[startnode]	(smain) 				{\small{$s_{\main}$}};
	\node[stdnode]		(u1) 	[below=of smain]	{$u_1$};
	\node[fadenode]		(v1) 	[below=of u1] 		{$v_1$};
	\node[stdnode]		(sq1) 	[right=of u1]		{$s_\mathtt{q_1}$};
	\node[stdnode]		(u2) 	[below=of sq1]		{$u_2$};
	\node[fadenode]		(v2) 	[below=of u2] 		{$v_2$};
	\node[stdnode]		(sq2) 	[right=of u2]		{$s_\mathtt{q_2}$};
	\node[stdnode]		(u3) 	[below=of sq2]		{$u_3$};
	\node[fadenode]		(v3) 	[below=of u3] 		{$v_3$};
	\node 				(dots)	[right=of u3]		{$\dots$};
	\node[stdnode]		(sqn) 	[right=of dots]		{$s_\mathtt{q_n}$};
	\node[stdnode]		(u) 	[below=of sqn]		{$u$};
	\node[callstring]	(csmain)[above right] at (smain.north east) {$\eps$};
	\node[callstring]	(csq1)	[right] at (sq1.north east) 	{$(u_1,\mathtt{q_1()},v_1)$};
	\node[callstring]	(csq2)	[right] at (sq2.north east) 	{$(u_1,\mathtt{q_1()},v_1)\cdot(u_2,\mathtt{q_2()},v_2)$};
	\node[callstring]	(csqn)	[right] at (sqn.north east) 	{$\begin{array}{l} (u_1,\mathtt{q_1()},v_1)\\\cdot(u_2,\mathtt{q_2()},v_2)\\\cdots(u_{n-1},\mathtt{q_{n-1}}(),v_{n-1})\end{array}$};
	\path[->]	
		(smain) edge [snakearc] 							(u1)
		(u1) 	edge [fadearc,swap]	node {$\mathtt{q_1()}$} (v1)
		(u1)	edge [dashed] 		node {$\enter$} 		(sq1)
		(sq1)	edge [snakearc] 				 			(u2)
		(u2) 	edge [fadearc,swap]	node {$\mathtt{q_2()}$}	(v2)
		(u2)	edge [dashed] 		node {$\enter$} 		(sq2)
		(sq2)	edge [snakearc] 				 			(u3)
		(u3) 	edge [fadearc,swap]	node {$\mathtt{q_3()}$}	(v3)
		(u3)	edge [dashed]		node {$\enter$}			(dots)
		(dots)	edge [dashed]		node {$\enter$}			(sqn)
		(sqn)	edge [snakearc]								(u)
		(v1) 	edge [dashed, bend right=15, callstringcolor] (csq1)
		(v2) 	edge [dashed, bend right=12, callstringcolor] (csq2)
		(csq1) 	edge [dashed, bend left=15, callstringcolor] (csq2)
	;

  \end{flowgraph} 
  \caption{Aufbau eines Call-Strings.}
  \label{bild-callstrings}
\end{figure}

Für jeden Prozeduraufruf $e=(u,\mathtt{q()},v)\in E_\p$ konkatenieren wir an die Call-Strings von $\p$ die Aufrufkante $e$ (vergleiche auch Abbildung \ref{bild-callstrings}):
\begin{equation}\label{CS-Ugs}\begin{split}
\CS[\main]	&\supseteq \{\eps\} \\
\CS[\q]		&\supseteq \CS[\p] \cdot e \text{ für } e=(u,\mathtt{q()},v)\in E_\p
\end{split}\end{equation}
Dabei ist $\CS[\p] \subset E^\ast$ für jedes $\p \in \Proc$ und $\CS[\p] \cdot e := \{\pi \cdot e \mid \pi \in \CS[\p]\}$. 
Die kleinste Lösung, die wir mit $\underline{\CS}$ bezeichnen, liefert zu jeder Prozedur dann genau die möglichen Call-Strings. 
Ein Tupel $(u,w) \in N_{\p} \times \underline{\CS}[\p]$ beschreibt nun diejenige \quotes{Kopie} des Knotens $u$, die genau durch Prozeduraufrufe auf $w$ entsteht. 
Insbesondere gehören die $(u,\eps) \in N_{\main} \times \underline{\CS}[\main]$ zu dem \quotes{ursprünglichen} Flussgraphen, in den die anderen \quotes{hineinkopiert} werden.

Die Abbildungen $f_{\CS,\eps}, f_{\CS,e}: (2^{E^\ast})^{\Proc} \to 2^{E^\ast}$, die definiert sind durch
\begin{align*}
f_{\CS,\eps}(A)	&:=\{\eps\} \\
f_{\CS,e}(A)	&:=\{w \cdot e \mid w \in A_u\} \text{ für ein } e \in E
\end{align*}
für $A \in 2^{E^\ast}$ sind offensichtlich monoton. 

Offenbar gibt es für jedes $\CS[\p]$ nur endlich viele Ungleichungen. Das Ungleichungssystem \eqref{CS-Ugs} hat also die in \autoref{sec:UGS} geforderte Gestalt.

Jede Kopie der Prozedur $\p$ korrespondiert zu einem Aufruf der Kopie $\p$. 
Wird innerhalb derselben Prozedur $\q\in\Proc$ an verschiedenen Kanten $e_1,e_2$ die Prozedur $\p$ aufgerufen, so enden die Call-Strings der Kopien auf $e_1$ bzw.~$e_2$ und sind deshalb verschieden. 
Ebenso unterscheiden sich die Aufrufkanten, wenn die aufrufenden Prozeduren verschieden sind, und mit gleichem Argument unterscheiden sich dann auch die Call-Strings.
Wird in zwei Kopien derselben Prozedur an derselben Stelle $\p$ aufgerufen, so unterscheiden sich 
die Call-Strings der aufrufenden Prozeduren und somit auch die beiden Call-Strings der Aufrufe von $\p$.

Somit ist es ein natürlicher Ansatz, die verschiedenen Kopien durch die zugehörigen Call-Strings zu unterscheiden. 
Die gesuchte Information in einem Programmpunkt $u \in N_{\p}$ bezüglich eines Call-Strings $w \in \underline{\CS}[\p]$ ist dann gegeben durch $A[u,w] \in L$.
\begin{equation}\label{A-Ugs}\begin{split}
A[s_{\main}](\eps)	&\sqgeq \init \\
A[v](w) 			&\sqgeq f_{\mathtt{b}}(A[u](w))\text{ für } e=(u,\mathtt{b},v) \in E_\p \text{ Basiskante} \\
A[s_{\q}](w\cdot e) &\sqgeq A[u](w)				\text{ für } e=(u,\mathtt{q()},v) \in E_\p \text{ Aufrufkante} \\
A[v](w) 			&\sqgeq A[r_{\q}](w\cdot e)	\text{ für } e=(u,\mathtt{q()},v) \in E_\p \text{ Aufrufkante}
\end{split}\end{equation}
für alle $\p \in \Proc$ und $w \in \underline{\CS}[\p]$. Dabei ist also $A: N_{\CS} \to L$ für \[N_{\CS} := \bigcup\{N_\p\times \underline{\CS}[\p] \mid \p \in \Proc\}.\]
Die kleinste Lösung bezeichnen wir wiederum mit $\underline{A}$. Die Zusammenhänge zwischen den einzelnen Informationen werden in \autoref{bild-A-Ugs} verdeutlicht.
\begin{figure}[ht]
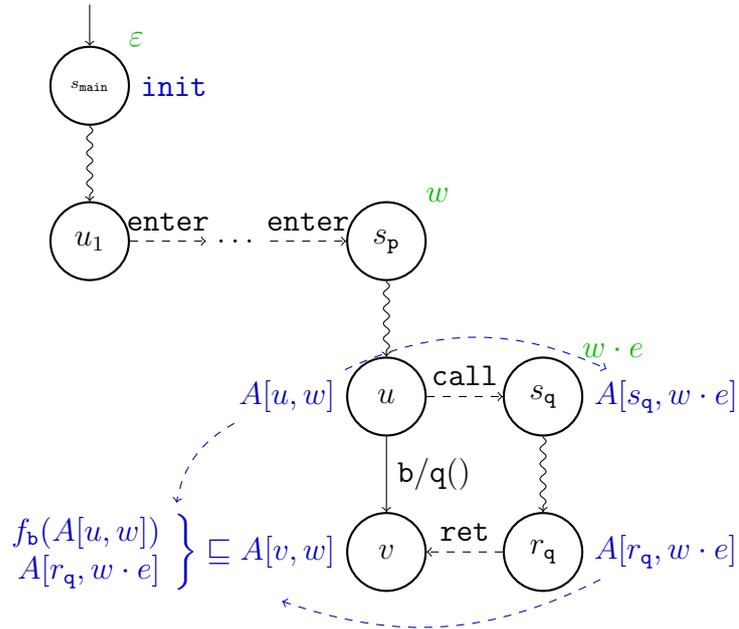

  \begin{flowgraph}
	\node[startnode]	(smain) 					{\tiny{$s_{\main}$}};
	\node[stdnode]		(u1)	[below=of smain] 	{$u_1$};
	\node 				(dots)	[right=of u1]		{$\dots$};
	\node[stdnode]		(sp)	[right=of dots]		{$s_\p$};
	\node[stdnode]		(u)		[below=of sp]		{$u$};
	\node[stdnode]		(sq)	[right=of u]		{$s_\q$};
	\node[stdnode]		(rq)	[below=of sq]		{$r_\q$};
	\node[stdnode]		(v)		[below=of u]		{$v$};
	\node[callstring] 	(csmain)[above right] at (smain.north east) {$\eps$};
	\node[callstring]	(csp)	[above right] at (sp.north east) 	{$w$};
	\node[callstring]	(csq)	[above right] at (sq.north east) 	{$w\cdot e$};
	\node[annnode]		(init)	[right] at (smain.east) {$\init$};
	\node[annnode]		(Auw)	[left] 	at (u.west) 	{$A[u,w]$};
	\node[annnode]		(Asqwv)	[right] at (sq.east) 	{$A[s_\q,w\cdot e]$};
	\node[annnode]		(Arqwv)	[right] at (rq.east) 	{$A[r_\q,w\cdot e]$};
	\node[annnode]		(Avw)	[left] 	at (v.west) 	{$\left.\begin{array}{r}f_{\mathtt{b}}(A[u,w])\\A[r_{\q},w\cdot e]\end{array}\right\}\sqleq A[v,w]$};
	\path[->]
	  (smain) 	edge [snakearc] 										(u1)
	  (u1)		edge [dashed]	node {$\enter$} 						(dots)
	  (dots)	edge [dashed]	node {$\enter$} 						(sp)
	  (sp)		edge [snakearc]				 							(u)
	  (u)		edge [dashed]	node {$\call$} 							(sq)
	  (u)		edge 			node {$\mathtt{b}\slash\mathtt{q()}$}	(v)
	  (sq)		edge [snakearc]				 							(rq)
	  (rq)		edge [dashed]	node[swap] {$\ret$} 					(v)
	  (Auw)		edge [annarc,bend left=25]								(Asqwv)
	  (Auw)		edge [annarc,bend right=25]								(Avw)
	  (Arqwv)	edge [annarc,bend left=23]								(Avw)
	;
  \end{flowgraph}
  \caption{Informationsfluss im Ungleichungssystem \eqref{A-Ugs}.}
  \label{bild-A-Ugs}
\end{figure}

Das Ungleichungssystem entsteht folgendermaßen: 
Wie im funktionalen Ansatz enthält der \quotes{ursprüngliche} Flussgraph der Prozedur $\main$ im Startknoten mindestens die Startinformation $\init$, was die erste Ungleichung widerspiegelt.
Die zweite Ungleichung beschreibt, dass jeder Knoten die durch eine eingehende Basiskante veränderte Information des Vorgängerknotens enthält, und zwar auf derselben Call-String-Ebene.
Beim Aufruf einer Prozedur wird ein Call-String $w$ um die Aufrufkante verlängert und so zu einem Call-String $w \cdot e$. 
Dabei wird in den Startknoten der aufgerufenen Prozedur mit dem entsprechend verlängerten Call-String die Information des Knotens, an dem diese Prozedur aufgerufen wird, kopiert. 
Bei der Rückkehr aus dieser Prozedur wird ebenfalls Information kopiert, und zwar aus dem Endknoten der Prozedur zum Rückkehrknoten der aufrufenden Prozedur mit entsprechend verkürztem Call-String. 
Dies wird durch die letzten beiden Ungleichungen beschrieben.

Die verallgemeinerten Transferfunktionen, die das Ungleichungssystem \eqref{A-Ugs} beschreiben, sind dann gegeben durch
\begin{align*}
f_{A,\init,\eps} 												:L^{N_{\CS}} &\to L,\\ \mathcal{S} &\mapsto \init
\intertext{für die Startinformation,}
f_{A,(u,\mathtt{b},v),w} 										:L^{N_{\CS}} &\to L,\\ \mathcal{S} &\mapsto f_{\mathtt{b}} (\mathcal{S}_{u,w}) 
\intertext{für Basiskanten und}
f_{A,(u,\mathtt{q()},v),w,\text{ent}}							:L^{N_{\CS}} &\to L,\\ \mathcal{S} &\mapsto \mathcal{S}_{u,w}
\intertext{sowie}
f_{A,(u,\mathtt{q()},v),w\cdot (u,\mathtt{q()},v),\text{ret}} 	:L^{N_{\CS}} &\to L,\\ \mathcal{S} &\mapsto \mathcal{S}_{r_\q,w\cdot(u,\mathtt{q},v)}
\end{align*}
für Prozeduraufrufe. Offenbar sind diese Abbildungen monoton.
Wir schreiben später auch abkürzend $f_{A,\cdot}$ für diese Abbildungen.

Offenbar gibt es für jedes $A[u,w]$ nur endlich viele Ungleichungen. Das Ungleichungssystem \eqref{A-Ugs} hat also die in \autoref{sec:UGS} geforderte Gestalt.

Um nun die gesuchte Information in einem Knoten $u$ zu bestimmen, haben wir also mehrere \quotes{Kopien} dieses Knotens $u$ angelegt und können für jede dieser Kopien eine Information berechnen. 
Die gewünschte Information in $u$ erhalten wir durch
\[\hat{A}[u] := \bigsqcup \big\{\underline{A}[u,w] \mid w \in \underline{\CS}[\p]\big\}.\] 
In der Praxis werden oft \emph{beschränkte} Call-Strings verwendet: Ist das Programm rekursiv, so werden die Call-Strings beliebig lang. 
Damit enthält aber das Ungleichungssystem \eqref{A-Ugs} unendliche viele Variablen und ist somit in endlicher Zeit nicht lösbar. 
Beschränkt man dagegen die Länge der Call-Strings, so sind nur endlich viele Ungleichungen zu lösen. 
Wir werden im Folgenden aber sehen, dass unbeschränkte Call-Strings notwendig sind, damit der Call-String-Ansatz präzise Lösungen liefert.

Wir haben in diesem Abschnitt einen Ansatz vorgestellt, der mithilfe von Call-Strings und zusätzlichen Kanten die verschiedenen Prozeduren des Programms als eine einzige behandelt 
und dadurch den Programmfluss explizit behandelt. 
Im folgenden Abschnitt werden wir prüfen, ob dieser und der zuvor vorgestellte funktionale Ansatz korrekte oder sogar präzise Lösungen liefern.

\section{Koinzidenz}\label{sec:koinzidenz}
In den vorigen Abschnitten haben wir zwei Ansätze zur interprozeduralen Analyse vorgestellt. 
Eine andere und natürliche Herangehensweise an die Berechnung gesuchter Informationen ist die Betrachtung aller erreichenden Pfade zu einem Programmpunkt. 
Dabei werden zunächst alle Pfade bestimmt, die zu einem Programmpunkt führen, und anschließend die Basisanweisungen entlang dieser Kanten miteinander verknpüft. 
Für jeden Pfad erhält man so eine in dem Programmpunkt gültige Information. Die kleinste obere Schranke dieser Informationen ist die gesuchte. 
Diese Lösung wird \emph{$\MOP$-Lösung} genannt. 
Dabei steht {$\MOP$} für \quotes{Meet Over All Paths}. 
Der Name kommt von der Bezeichnung \quotes{Meet} für $\sqcap$ und stammt aus den Beginnen der Analyse, wo eine Information als präziser eingestuft wird, je größer sie ist. 
In dieser Arbeit wäre also die Bezeichnung \emph{Join Over All Paths} (für den \quotes{Join} $\sqcup$) passender. 
Wir nennen eine Analyse korrekt, wenn sie diese Lösung approximiert, und präzise, 
wenn sie sogar diese Lösung berechnet. 
In diesem Abschnitt werden wir zunächst die $\MOP$-Lösungen definieren und anschließend zeigen, 
dass der funktionale und der Call-String-Ansatz korrekt und unter geeigneten Voraussetzungen sogar präzise ist. 
Entsprechende Sätze zur Übereinstimmung der {$\MOP$}- mit der kleinsten Lösung eines (Un-)Gleichungssystems finden sich in \cite{Niel}.

Bevor wir die $\MOP$-Lösung definieren können, müssen wir zunächst die Pfade definieren. Wir unterscheiden zwei Arten von Pfaden: 
\emph{Same-Level-Pfade} sind Pfade, die mit dem Startknoten einer Prozedur beginnen und mit einem beliebigen Knoten derselben Prozedur enden. 
Von Interesse sind aber auch Pfade mit Startknoten $s_{\main}$, die Konkatenationen von Same-Level-Pfaden und Prozeduraufrufen (ohne zugehörige Rückkehr) sind.
Diese beiden Sorten von Pfaden lassen sich als kleinste Lösungen zweier Ungleichungssysteme schreiben: Das Ungleichungssystem
\begin{equation}\label{SLP-Ugs}\begin{split}
\SLP[s_{\p}]& \supseteq \{\eps\} \\
\SLP[v]		& \supseteq \SLP[u] \cdot e \text{ für } e= (u,\mathtt{b},v)\in E_\p \text{ Basiskante} \\
\SLP[v]		& \supseteq \SLP[u] \cdot (u, \call, s_{\q}) \cdot \SLP[r_{\q}] \cdot (r_{\q}, \ret, v) \text{ für } e= (u,\mathtt{q()},v)\in E_\p 
\end{split}\end{equation}
liefert die Same-Level-Pfade mit $\SLP[u] \in \widetilde{E}^\ast$. 
\begin{figure}[ht]
  \begin{flowgraph}
	\node[startnode]	(sp)					{$s_\p$};
	\node[stdnode]		(u') 	[below=of sp]	{$u$};
	\node[stdnode]		(u) 	[below=of u']	{$v$};
	\node[stdnode]		(sq) 	[right=of u]		{$s_\q$};
	\node[stdnode]		(rq) 	[below=of sq]		{$r_\q$};
	\node[stdnode]		(v) 	[below=of u]		{$w$};
	\node				(empty2)[below=of v]		{};
	\path[->]	

		(sp)	edge [snakearc] 					(u')
		(u')	edge []		 	node {$\mathtt{b}$}	(u)
		(u)		edge [dashed] 	node {$\call$} 		(sq)
		(sq)	edge [snakearc] 		 			(rq)
		(rq) 	edge [swap,dashed]		node {$\ret$}		(v)
		(v) 	edge [snakearc]						(empty2)
	;

  \end{flowgraph}
  \caption{Aufbau eines Same-Leve-Pfades.}
  \label{bild-slp}
\end{figure}

Der leere Pfad $\eps$ ist ein Pfad von $s_{\p}$ nach $s_{\p}$. Dies motiviert die erste Ungleichung. 
Ist $e= (u,\mathtt{b},v)$ eine Basiskante und $\pi$ ein Same-Level-Pfad nach $u$, so ist offenbar die Konkatenation $\pi \cdot e$ ein Same-Level-Pfad nach $v$, was die zweite Ungleichung erklärt. 
Ist dagegen $e= (u,\mathtt{q()},v)$ ein Prozeduraufruf, $\pi$ ein Same-Level-Pfad nach $u$ und $\pi'$ ein Same-Level-Pfad nach $r_\q$, so erhält man einen Same-Level-Pfad nach $v$ wie folgt: 
Zunächst konkateniert man $\pi$ mit einer $\call$-Kante zum Startknoten $s_{\q}$ der aufgerufenen Prozedur $\q$. 
Daran wird der Same-Level-Pfad $\pi'$ zum Rückkehrknoten $r_{\q}$ dieser Prozedur $\q$ angehängt. 
Zuletzt wird die entsprechende $\ret$-Kante nach $v$ angefügt, die den Rückkehr aus der Prozedur beschreibt. 
Diese Vorgehensweise wird durch die dritte Ungleichung ausgedrückt. Ein solcher Pfad ist schematisch in \autoref{bild-slp} illustriert.
Die Definition garantiert offenbar, dass Same-Level-Pfade wohlgeschachtelt sind.

Ähnlich sind nun die Pfade von $s_{\main}$ zu einem beliebigen Punkt gegeben durch
\begin{equation}\label{GP-Ugs}\begin{split}
\GP[s_{\main}]	& \supseteq \{\eps\} \\
\GP[u]			& \supseteq \GP[s_{\p}] \cdot \underline{\SLP}[u] \text{ für } u \in N_{\p}\\
\GP[s_{\q}]		& \supseteq \GP[u] \cdot (u, \mathtt{\enter}, s_\q) \text{ für } e= (u,\mathtt{q()},v)\in E_\p,
\end{split}\end{equation}
wobei wiederum $\GP: N \to \widetilde{E}^\ast$. 
\begin{figure}[ht]
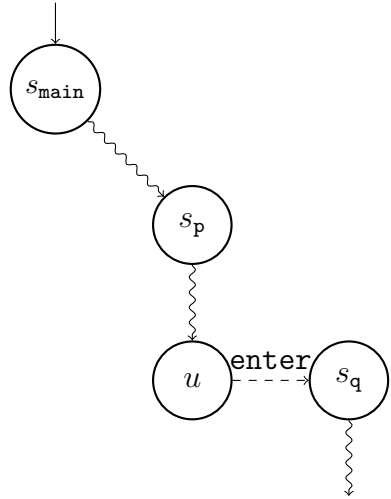

  \begin{flowgraph}
	\node[startnode]	(smain)							{\small{$s_{\main}$}};
	\node[stdnode]		(sp) 	[below right=of smain]	{$s_\p$};
	\node[stdnode]		(u) 	[below=of sp]			{$u$};
	\node[stdnode]		(sq) 	[right=of u]		{$s_\q$};
	\node				(empty)	[below=of sq]		{};
	\path[->]	
		(smain)	edge [snakearc] 					(sp)
		(sp)	edge [snakearc] 					(u)
		(u)		edge [dashed] 	node {$\enter$}		(sq)
		(sq) 	edge [snakearc]						(empty)
	;

  \end{flowgraph}
  \caption{Aufbau eines (beliebigen) Pfades.}
  \label{bild-gp}
\end{figure}

Die erste Ungleichung ergibt sich wie im vorigen Ungleichungssystem. 
Ist $e= (u,\mathtt{q()},v)$ Kante einer Prozedur $\p$ und gibt es einen Pfad nach $u$, so entsteht daraus ein Pfad nach $s_\q$ durch Konkatenation einer $\enter$-Kante nach $s_\q$ an den Pfad nach $u$.
Gibt es einen Pfad zum Startknoten $s_\p$ einer Prozedur $\p$ und einen Same-Level-Pfad zu einem Programmpunkt $u \in N_{\p}$ dieser Prozedur, 
so erhält man durch Konkatentation dieser beiden Pfade einen Pfad nach $u$. Dies erklärt die anderen beiden Ungleichungen. 
Ein solcher Pfad ist schematisch in \autoref{bild-gp} zu sehen. 
Auch diese Definition garantiert offenbar, dass die Pfade wohlgeschachtelt sind.

Für jede Basiskante $e=(u,\mathtt{b},v) \in E$ haben wir bereits ein $f_\mathtt{b} \in \mathcal{F}$ gewählt. Dies erweitern wir nun für Pfade:
Für $\pi=e_1\dots e_k \in \underline{\GP}[u]$ definieren wir 
\[f_\pi = f_k \circ \ldots \circ f_1 \in \mathcal{F},\]
wobei
\[f_i :=
\begin{cases}
f_{\mathtt{b_i}} & \text{für eine Basiskante } e_i=(u_i,\mathtt{b_i},v_i) \in E \\
\id & \text{sonst}.
\end{cases}\]
Die Abbildung $f_\pi$ beschreibt also die Verknüpfung der einzelnen Transferfunktionen entlang der Kanten in der Reihenfolge ihrer Ausführung. Weiter definieren wir $f_{\eps} := \id$.

Alternativ könnte man auf die Unterscheidung von $\enter$- und $\call$-Kanten verzichten und stattdessen die Pfade entsprechend der Ausführung der Programmes sukzessiv bilden. 
Dadurch können Aufrufkanten von bereits wieder verlassenen Prozeduren nicht von denen unterschieden werden, deren Prozeduren noch nicht abgearbeitet wurden.
In unserem Ansatz, die Pfade mithilfe eines Ungleichungssystems zu berechnen, wird diese Unterscheidung dagegen gemacht. 
Dies wird im Folgenden nützlich sein, da zum einen diese Pfade in offensichtlicher Weise wohlgeschachtelt sind und es zum anderen direkt zu der Definition eines \emph{Call-Strings} korrespondiert.

Wir wollen nun Pfade mit den Call-Strings in Verbindung bringen. Dazu definieren wir zunächst diejenigen Pfade zum Startknoten einer Prozedur, die mit einem gegebenen Call-String korrespondieren.
Sei dazu $\p \in \Proc$ eine Prozedur und $w \in \underline{\CS}[\p]$ ein Call-String der Gestalt $w = (u_1,\mathtt{p_1()}, v_1) \dots (u_n,\mathtt{p_n()}, v_n)$ mit $\mathtt{p_n} = \p$. 
Wir erhalten nun die passenden Pfade zu einem Knoten $u \in N_\p$, indem wir zunächst an einen Same-Level-Pfad zum Knoten $u_1$ eine $\enter$-Kante nach $s_{\mathtt{p_1}}$ hängen. 
Daran wiederum konkatenieren wir einen Same-Level-Pfad nach $u_2$ und setzen dies solange fort, bis daraus ein Pfad nach $s_\p$ entstanden ist. 
An diesen hängen wir zuletzt einen Same-Level-Pfad nach $u$ an. Diese Pfade sind also gegeben durch
\[\Path(u,w) := \{\pi_1 \cdot (u_1,\enter,s_{\mathtt{p_1}}) \cdot \pi_2 \cdots (u_n,\enter,s_{\mathtt{p}}) \cdot \pi_{n+1} 
\mid \pi_i \in \underline{SL}[u_i] \}.\]
Dabei sei $u_{n+1} = u$. Die Menge aller möglichen Pfade nach $u$ erhalten wir nun, indem wir die Pfade bezüglich aller Call-Strings von $\p$ vereinigen. Dies zeigt das folgende Lemma.
\begin{lemma}\label{pfadlemma}
Für jede Prozedur $\p \in \Proc$ und jeden Knoten $u \in N_p$ gilt
\[\underline{\GP}[u] = \bigcup \big\{ \Path(u,w) \mid w \in \underline{\CS}[\p] \big\} .\]
\end{lemma}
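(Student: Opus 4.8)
Der Beweis gelingt über zwei Inklusionen, wobei beide Male die Minimalität einer kleinsten Lösung ausgenutzt wird. Setze zur Abkürzung $M[u] := \bigcup\{\Path(u,w) \mid w \in \underline{\CS}[\p]\}$ für $u \in N_\p$; zu zeigen ist dann $\underline{\GP}[u] = M[u]$. Als technisches Hilfsmittel benötige ich zunächst, dass Same-Level-Pfade unter Konkatenation abgeschlossen sind, genauer: Für $\sigma \in \underline{\SLP}[s_\p]$ und $\tau \in \underline{\SLP}[u]$ mit $u \in N_\p$ gilt stets $\sigma\cdot\tau \in \underline{\SLP}[u]$. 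Dies folgt durch eine kurze Induktion über die Herleitung von $\tau$ im Ungleichungssystem \eqref{SLP-Ugs}: Für $\tau = \eps$ ist $u = s_\p$ und die Aussage klar; entsteht $\tau$ aus einem kürzeren Same-Level-Pfad $\tau' \in \underline{\SLP}[u']$ durch Anhängen einer Basiskante bzw.\ einer $\call$-$\ret$-Konstruktion, so liefert die Induktionsvoraussetzung $\sigma\cdot\tau' \in \underline{\SLP}[u']$ und damit nach derselben Ungleichung auch $\sigma\cdot\tau \in \underline{\SLP}[u]$.

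Für die Inklusion $\underline{\GP}[u] \subseteq M[u]$ würde ich nachweisen, dass $M$ bereits eine Lösung des Ungleichungssystems \eqref{GP-Ugs} ist; da $\underline{\GP}$ dessen kleinste Lösung ist, folgt dann $\underline{\GP}[u] \subseteq M[u]$ für jeden Knoten $u$. Die erste Ungleichung ist erfüllt, da der leere Call-String $\eps \in \underline{\CS}[\main]$ liegt und $\Path(s_{\main},\eps) = \underline{\SLP}[s_{\main}] \ni \eps$. Für die zweite Ungleichung sei $\sigma \in M[s_\p]$, etwa $\sigma \in \Path(s_\p,w)$ mit $w \in \underline{\CS}[\p]$, und $\tau \in \underline{\SLP}[u]$; schreibt man $\sigma = \pi_1\cdot(u_1,\enter,s_{\mathtt{p_1}})\cdots(u_n,\enter,s_\p)\cdot\pi_{n+1}$ mit $\pi_{n+1} \in \underline{\SLP}[s_\p]$, so ist nach der Hilfsaussage $\pi_{n+1}\cdot\tau \in \underline{\SLP}[u]$ und folglich $\sigma\cdot\tau \in \Path(u,w) \subseteq M[u]$. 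Für die dritte Ungleichung sei $e=(u,\mathtt{q()},v)\in E_\p$ und $\sigma \in \Path(u,w)$; dann ist $w\cdot e \in \underline{\CS}[\q]$ nach \eqref{CS-Ugs}, und man prüft direkt $\sigma\cdot(u,\enter,s_\q) \in \Path(s_\q, w\cdot e) \subseteq M[s_\q]$, indem man in der Darstellung des letzteren den abschließenden Same-Level-Pfad als $\eps$ wählt.

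Für die umgekehrte Inklusion $M[u] \subseteq \underline{\GP}[u]$ würde ich per Induktion über die Länge $n$ des Call-Strings $w$ zeigen, dass $\Path(u,w) \subseteq \underline{\GP}[u]$; dabei nutze ich, dass $\underline{\GP}$ als kleinste Lösung insbesondere selbst eine Lösung von \eqref{GP-Ugs} ist. Im Fall $n=0$ ist $w=\eps$, also $\p=\main$ und $\Path(u,\eps) = \underline{\SLP}[u]$; wegen $\eps \in \underline{\GP}[s_{\main}]$ liefert die zweite Ungleichung $\underline{\SLP}[u] \subseteq \underline{\GP}[u]$. Für $n\ge 1$ faktorisiert jeder nichtleere Call-String gemäß \eqref{CS-Ugs} als $w = w'\cdot e_n$ mit $e_n=(u_n,\mathtt{p_n()},v_n) \in E_{\mathtt{p_{n-1}}}$, $\mathtt{p_n}=\p$ und $w' \in \underline{\CS}[\mathtt{p_{n-1}}]$ (mit der Konvention $\mathtt{p_0} := \main$). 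Ein Pfad $\pi \in \Path(u,w)$ zerlegt sich dann als $\pi = \hat\pi \cdot (u_n,\enter,s_\p)\cdot\pi_{n+1}$ mit $\hat\pi \in \Path(u_n,w')$ und $\pi_{n+1} \in \underline{\SLP}[u]$. Nach Induktionsvoraussetzung ist $\hat\pi \in \underline{\GP}[u_n]$, mit der dritten Ungleichung folgt $\hat\pi\cdot(u_n,\enter,s_\p) \in \underline{\GP}[s_\p]$, und mit der zweiten Ungleichung schließlich $\pi \in \underline{\GP}[u]$.

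Der Hauptaufwand liegt nicht in einer einzelnen schwierigen Idee, sondern in der sorgfältigen Buchführung über die Indizierung der Teilpfade $\pi_i$ und der Call-String-Buchstaben sowie im sauberen Nachweis der beiden strukturellen Beobachtungen — der Abgeschlossenheit der Same-Level-Pfade unter Konkatenation und der Faktorisierung nichtleerer Call-Strings als $w'\cdot e_n$. Insbesondere muss man bei der Zerlegung von $\pi \in \Path(u,w)$ in den $\Path(u_n,w')$-Anteil und den abschließenden Same-Level-Pfad genau darauf achten, dass die Knoten $u_i$ tatsächlich in den jeweils richtigen Prozeduren $\mathtt{p_{i-1}}$ liegen, damit die herangezogenen Ungleichungen von \eqref{GP-Ugs} überhaupt anwendbar sind.
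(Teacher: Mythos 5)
Dein Beweis ist korrekt und folgt im Wesentlichen demselben Weg wie der Beweis der Arbeit: eine Inklusion durch den Nachweis, dass $M$ eine Lösung von \eqref{GP-Ugs} ist (Minimalität von $\underline{\GP}$), die andere durch Induktion über die Call-String-Länge unter Ausnutzung der Ungleichungen von \eqref{GP-Ugs}. Der einzige Unterschied ist, dass du die Abgeschlossenheit der Same-Level-Pfade unter Konkatenation als explizites Hilfslemma mit eigener Induktion beweist, wo die Arbeit dies nur mit \glqq nach Konstruktion\grqq{} abhandelt -- das ist eine Präzisierung, aber kein anderer Ansatz.
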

\begin{proof}
Seien $\p \in \Proc$, $u \in N_\p$ und $\Path(u) := \bigcup \{ \Path(u,w) \mid w \in \underline{\CS}[\p] \}$ die Menge der Pfade nach $u$, die zu einem Call-String korrespondieren. 
Um $\underline{\GP}[u] \subseteq \Path(u)$ zu zeigen, rechnen wir nach, dass $(\Path(\node))_{\node \in N}$ eine Lösung des Ungleichungssystem \eqref{GP-Ugs} ist.

Sei also zunächst $u = s_{\main}$. Es ist
\begin{align*}
\Path(s_{\main})
\supseteq \Path(s_{\main}, \eps)
\supseteq \{\eps\}
\end{align*}
und die erste Ungleichung des Ungleichungssystems ist erfüllt.

Sei nun $u \in N_\p$ beliebig. Zu zeigen ist $\Path(u) \supseteq \Path(s_\p) \cdot \underline{\SLP}[u] $. 
Sei $\pi \in \Path(s_\p)$. 
Dann gibt es einen Call-String $w$ mit $\pi \in \Path(s_\p,w)$ und 
$\pi$ hat die Gestalt 
\[\pi =\pi_1 \cdot (u_1,\enter,s_{\mathtt{p_1}}) \cdot \pi_2 \cdots (u_n,\enter,s_{\mathtt{p}}) \cdot \pi_{n+1}\] 
wie in \autoref{bild-pfadlemma-u} 
mit $\pi_i \in \underline{\SLP}[u_i]$ für $1 \le i \le n$ und $\pi_{n+1}\in\underline{\SLP}[s_\p]$. 
Für einen beliebigen Same-Level-Pfad $\pi' \in \underline{\SLP}[u]$ ist nach Konstruktion auch 
\[\pi_{n+1}\cdot \pi' \in \underline{\SLP}[u].\] 
Dann ist aber nach Definition 
\[\pi \cdot \pi' \in \Path(u,w)\subseteq \Path(u).\]
Somit gilt auch die zweite Ungleichung.
\begin{figure}[ht]
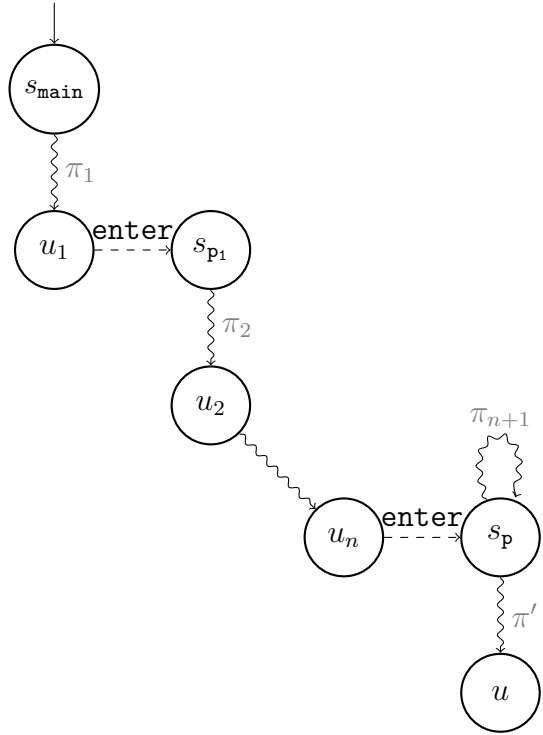

  \begin{flowgraph}
	\node[startnode]	(smain) 					{\small{$s_{\main}$}};
	\node[stdnode]		(u1) 	[below=of smain]	{$u_1$};
	\node[stdnode]		(sp1) 	[right=of u1]		{$s_\mathtt{p_1}$};
	\node[stdnode]		(u2) 	[below=of sp1]		{$u_2$};
	\node[stdnode]		(un) 	[below right=of u2]	{$u_n$};
	\node[stdnode]		(sp) 	[right=of un]		{$s_\mathtt{p}$};
	\node[stdnode]		(u) 	[below=of sp]		{$u$};
	\path[->]	
		(smain) edge [snakearc] 		node[edgenamecolor] {$\pi_1$}					(u1)
		(u1)	edge [dashed] 				node {$\enter$} 								(sp1)
		(sp1)	edge [snakearc] 		node[edgenamecolor] {$\pi_2$}					(u2)
		(u2)	edge [snakearc]															(un) 
		(un)	edge [dashed] 				node {$\enter$} 								(sp)
		(sp)	edge [out=110,in=70,loop,snakearc]	node[edgenamecolor] {$\pi_{n+1}$}	(sp)
		(sp)	edge [snakearc]			node[edgenamecolor] {$\pi'$}					(u)
	;

  \end{flowgraph}
  \caption{Beweis von Lemma \ref{pfadlemma}, $\Path(u) \supseteq \Path(s_\p) \cdot \underline{\SLP}[u]$.}
  \label{bild-pfadlemma-u}
\end{figure}

Sei nun $e=(u,\mathtt{q()}, v) \in E_\p$ eine Aufrufkante. Zu zeigen ist $\Path(s_\q) \supseteq \Path(u) \cdot (u,\enter,s_\q)$.
Sei also $\pi \in \Path(u)$. 
Dann gibt es einen Call-String $w \in \underline{\CS}[\p]$, so dass $\pi \in \Path(u,w)$. 
Der Pfad $\pi$ hat also die Gestalt
\[\pi = \pi_1 \cdot (u_1,\enter,s_{\mathtt{p_1}}) \cdot \pi_2 \cdots (u_n,\enter,s_{\mathtt{p}}) \cdot \pi_{n+1}\] 
wie in \autoref{bild-pfadlemma-sq}
mit $\pi_i \in \underline{\SLP}[u_i]$ für $1 \le i \le n$ und $\pi_{n+1}\in\underline{\SLP}[u]$.
Nun ist nach Definition \[w \cdot e \in \underline{\CS}[\p]\] und damit \[\pi \cdot (u,\enter,s_\q) \in \Path(s_\q,w\cdot e).\]
\begin{figure}[ht]
  \begin{flowgraph}
	\node[startnode]	(smain) 					{\small{$s_{\main}$}};
	\node[stdnode]		(u1) 	[below=of smain]	{$u_1$};
	\node[stdnode]		(sp1) 	[right=of u1]		{$s_\mathtt{p_1}$};
	\node[stdnode]		(u2) 	[below=of sp1]		{$u_2$};
	\node[stdnode]		(un) 	[below right=of u2]	{$u_n$};
	\node[stdnode]		(sp) 	[right=of un]		{$s_\mathtt{p}$};
	\node[stdnode]		(u) 	[below=of sp]		{$u$};
	\node[stdnode]		(sq) 	[right=of u]		{$s_\mathtt{q}$};
	\path[->]	
		(smain) edge [snakearc]						node[edgenamecolor]{$\pi_1$}		(u1)
		(u1)	edge [dashed] 						node {$\enter$} 					(sp1)
		(sp1)	edge [snakearc] 					node[edgenamecolor] {$\pi_2$}		(u2)
		(u2)	edge [snakearc]															(un) 
		(un)	edge [dashed] 						node {$\enter$} 					(sp)
		(sp)	edge [snakearc]						node[edgenamecolor] {$\pi_{n+1}$}	(u)
		(u)		edge [dashed]						node {$\enter$}						(sq)
	;
  \end{flowgraph}
  \caption{Beweis von Lemma \ref{pfadlemma}, $\Path(s_\q) \supseteq \Path(u) \cdot (u,\enter,s_\q)$.}
  \label{bild-pfadlemma-sq}
\end{figure}

Für die andere Inklusion seien $\p \in \Proc$, $u \in N_\p$ und $w \in \underline{\CS}[\p]$ beliebig. Es genügt, $\Path(u,w) \subseteq \underline{\Path}[u]$ zu zeigen. 
Dies geschieht durch vollständige Induktion nach der Länge $n$ des Call-Strings $w$.

Sei zunächst $n=0$. Dann ist der Call-String $w$ bereits das leere Wort und wir erhalten 
\[\Path(u,w) 
= \Path(u,\eps)
= \{\pi\cdot\pi'\mid\pi\in\underline{\SLP}[s_\p]\wedge\pi'\in\underline{\SLP}[u]\} = \underline{\SLP}[u] \subseteq \underline{\GP}[u].
\]
Letztere Inklusion resultiert daraus, dass $w=\eps$ bereits $\p = \main$ impliziert.

Gelte die Behauptung nun für alle Pfade der Länge kleiner oder gleich $k$. Sei $w$ ein Call-String der Länge $k+1$. 
Dann ist $w$ von der Form $w = \widetilde{w} \cdot \widetilde{e}$, wobei $\widetilde{e}=(\widetilde{u},\mathtt{p()},\widetilde{v})\in N_\q$ für eine Prozedur $\q \in \Proc$. 

Wir können einen Pfad $\pi \in \Path(u,w)$ nun schreiben als 
\[\pi = \widetilde{\pi} \cdot (\widetilde{u},\enter,s_\p) \cdot \pi'\]
für Pfade $\widetilde{\pi} \in \Path(\widetilde{u},\widetilde{w})$ und $\pi' \in \underline{\SLP}[u]$. Betrachte dazu auch \autoref{bild-pfadlemma-IS}.
\begin{figure}[ht]
  \begin{flowgraph}
	\node[startnode]	(smain) 					{\small{$s_{\main}$}};
	\node[stdnode]		(sq) 	[below right=of smain]		{$s_\mathtt{q}$};
	\node[stdnode]		(tildeu) 	[below=of sq]		{$\widetilde{u}$};
	\node[fadenode]		(tildev) 	[below=of tildeu]	{$\widetilde{v}$};
	\node[stdnode]		(sp) 	[right=of tildeu]		{$s_\mathtt{p}$};
	\node[stdnode]		(u) 	[below=of sp]		{$u$};
	\node[callstring]	(csq)[above right] at (sq.north east) {$\widetilde{w}$};
	\node[callstring]	(csp)	[right] at (sp.north east) 	{$w=\widetilde{w}\cdot\widetilde{e}$};
	\path[->]	
		(smain) edge [snakearc] 									(sq)
		(sq)	edge [snakearc] 									(tildeu)
		(tildeu)edge [fadearc]		node[edgenamecolor] {$\widetilde{e}$}		(tildev)
		(tildeu)edge [dashed] 				node {$\enter$} 			(sp)
		(sp)	edge [snakearc]		node[edgenamecolor] {$\pi'$}				(u)
	;

  \end{flowgraph}
  \caption{Beweis von Lemma \ref{pfadlemma}, Induktionsschritt.}
  \label{bild-pfadlemma-IS}
\end{figure}

Nach Induktionsvoraussetzung ist $\Path(\widetilde{u},\widetilde{w})\subseteq\underline{\GP}[\widetilde{u}]$. 
Das Ungleichungssystem \eqref{GP-Ugs} zur Definition der Pfade liefert nun, dass mit $\widetilde{\pi} \in \underline{\GP}[\widetilde{u}]$ 
\[\widetilde{\pi} \cdot (\widetilde{u},\enter,s_\p) \in \underline{\GP}[s_\p]\] 
und weiter 
\[\pi = \widetilde{\pi} \cdot (\widetilde{u},\enter,s_\p) \cdot \pi' \in \underline{\GP}[u]\] 
gilt. Es folgt die Behauptung.
\end{proof}

Wir haben nun verschiedene Pfade definiert: 
Zum einen Same-Level-Pfade, die vom Startknoten einer Prozedur zu einem Knoten derselben Prozedur führen, 
zum anderen beliebige Pfade, die im Startknoten des Programms beginnen und zu einem Knoten einer beliebigen Prozedur führen. 
Weiter haben wir für jeden Call-String die Menge derjenigen Pfade definiert, deren $\enter$-Kanten zum Call-String korrespondieren. 
Zuletzt haben wir in Lemma \ref{pfadlemma} einen Zusammenhang zwischen den beiden letzteren Mengen hergestellt. 

Wir können nun die $\MOP$-Lösung definieren durch 
\[\MOP[u] := \{f_\pi(\init) \mid \pi \in \underline{\GP}[u]\}\]
für $u \in N$. 
\begin{dfn}
Eine constraint-basierte Datenflussanalyse mit Ungleichungssystem $\ugs$ heißt
\begin{itemize}
\item \emph{korrekt}, falls $\MFP[u] \sqgeq \MOP[u]$ für alle $u \in N$ gilt.
\item \emph{präzise}, falls $\MFP = \MOP$ gilt.
\end{itemize}
Dabei bezeichnet $\MFP$ die kleinste Lösung des Ungleichungssystems $\ugs$.
\end{dfn}
Wir können nun zeigen, dass der funktionale und der Call-String-Ansatz korrekt sind. Stellen wir weiter geeignete Forderungen an die Transferfunktionen, so sind die Ansätze sogar präzise. 
Insbesondere liefern sie dann dasselbe Resultat. Wir beginnen mit dem funktionalen Ansatz.

\begin{lemma}\label{lemma-koinzidenz-T}
Für jedes $u \in N$ gilt 
\[\underline{T}[u] \sqgeqmap \bigsqcup \big\{f_\pi \mid \pi \in \underline{\SLP}[u]\big\}.\]
Ist $(L,\sqleq, \mathcal{F})$ universell-distributiv oder positiv-distributiv und in letzterem Fall weiter jeder Programmpunkt erreichbar, so gilt sogar für jedes $u \in N$
\[\underline{T}[u] = \bigsqcup \big\{f_\pi \mid \pi \in \underline{\SLP}[u]\big\}.\]
\end{lemma}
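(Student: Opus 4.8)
Der Plan ist, die beiden Behauptungen getrennt zu beweisen. Schreibe ich abkürzend $S[u] := \bigsqcup\{f_\pi \mid \pi \in \underline{\SLP}[u]\}$, so besagt die erste Aussage gerade $S[u] \sqleqmap \underline{T}[u]$ (also $\underline{T}[u] \sqgeqmap S[u]$) und die zweite zusätzlich die umgekehrte Ungleichung. Für die erste zeige ich, dass $\underline{T}[u]$ eine obere Schranke aller einzelnen $f_\pi$ mit $\pi \in \underline{\SLP}[u]$ ist; für die Gleichheit weise ich nach, dass die Familie $S=(S[u])_{u \in N}$ selbst eine Lösung des Ungleichungssystems \eqref{T-Ugs} ist, woraus mit der Minimalität von $\underline{T}$ die Abschätzung $\underline{T}[u] \sqleqmap S[u]$ folgt.

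Für die Korrektheit genügt es, $f_\pi \sqleqmap \underline{T}[u]$ für jeden einzelnen Pfad $\pi \in \underline{\SLP}[u]$ zu zeigen; durch Bildung der kleinsten oberen Schranke folgt dann $S[u] \sqleqmap \underline{T}[u]$. Ich führe dies per Induktion über die Länge von $\pi$. Da $\underline{\SLP}$ als kleinste Lösung ein Fixpunkt des zu \eqref{SLP-Ugs} gehörigen Operators ist, ist jeder Pfad entweder das leere Wort $\eps$ (mit $u=s_\p$) oder setzt sich gemäß einer der Ungleichungen von \eqref{SLP-Ugs} aus echt kürzeren Pfaden zusammen. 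Im Basisfall ist $f_\eps = \id \sqleqmap \underline{T}[s_\p]$ wegen der ersten Ungleichung von \eqref{T-Ugs}. Für $\pi = \pi' \cdot e$ mit einer Basiskante $e=(u,\mathtt{b},v)$ liefert die Induktionsvoraussetzung $f_{\pi'} \sqleqmap \underline{T}[u]$, und mit der Monotonie der Komposition sowie der zweiten Ungleichung folgt $f_\pi = f_{\mathtt{b}} \circ f_{\pi'} \sqleqmap f_{\mathtt{b}} \circ \underline{T}[u] \sqleqmap \underline{T}[v]$. Für eine Aufrufkante $e=(u,\mathtt{q()},v)$ ist $\pi = \pi_1 \cdot (u,\call,s_\q) \cdot \pi_2 \cdot (r_\q,\ret,v)$ mit $\pi_1 \in \underline{\SLP}[u]$, $\pi_2 \in \underline{\SLP}[r_\q]$; da $\call$- und $\ret$-Kanten die Identität beitragen, ist $f_\pi = f_{\pi_2} \circ f_{\pi_1}$, und mit der Induktionsvoraussetzung und der dritten Ungleichung ergibt sich $f_\pi \sqleqmap \underline{T}[r_\q] \circ \underline{T}[u] \sqleqmap \underline{T}[v]$.

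Für die Gleichheit prüfe ich, dass $S$ die drei Ungleichungen von \eqref{T-Ugs} erfüllt. Die Ungleichung $S[s_\p] \sqgeqmap \id$ ist klar, da $\eps \in \underline{\SLP}[s_\p]$ und somit $\id = f_\eps$ einer der Summanden der kleinsten oberen Schranke ist. Für eine Basiskante $e=(u,\mathtt{b},v)$ werte ich punktweise aus: Für $l \in L$ ist $(f_{\mathtt{b}}\circ S[u])(l) = f_{\mathtt{b}}\big(\bigsqcup\{f_{\pi'}(l) \mid \pi' \in \underline{\SLP}[u]\}\big)$. Genau hier geht die Distributivität gemäß Definition \ref{dfn:monFW} ein: Sie erlaubt, $f_{\mathtt{b}}$ in die kleinste obere Schranke hineinzuziehen, so dass der Ausdruck zu $\bigsqcup\{f_{\pi'\cdot e}(l) \mid \pi' \in \underline{\SLP}[u]\}$ wird, und wegen $\{\pi'\cdot e \mid \pi' \in \underline{\SLP}[u]\} \subseteq \underline{\SLP}[v]$ ist dies $\sqleq S[v](l)$. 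Der Fall der Aufrufkante verläuft analog, benutzt aber zusätzlich, dass mit $f_{\mathtt{b}}$ auch jede Komposition $f_{\pi_2}$ wieder distributiv ist, so dass nach Vertauschen beider kleinster oberer Schranken die Menge $\{\pi_1 \cdot (u,\call,s_\q)\cdot\pi_2\cdot(r_\q,\ret,v)\} \subseteq \underline{\SLP}[v]$ erscheint und $(S[r_\q]\circ S[u])(l) \sqleq S[v](l)$ folgt.

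Die Hauptschwierigkeit liegt im Umgang mit der kleinsten oberen Schranke über der \emph{leeren} Menge: Ist ein Knoten $u$ nicht erreichbar, so ist $\underline{\SLP}[u]=\emptyset$ und $S[u]$ die konstante Abbildung auf $\bot$. Im universell-distributiven Fall gilt $f_{\mathtt{b}}(\bot)=\bot$, so dass die Vertauschung auch für $A=\emptyset$ richtig bleibt; im bloß positiv-distributiven Fall ist sie nur für nichtleere Mengen erlaubt, weshalb dort die Erreichbarkeit jedes Programmpunkts vorausgesetzt wird. Dass sich diese Erreichbarkeit ($\underline{\GP}[u]\ne\emptyset$) auf $\underline{\SLP}[u]\ne\emptyset$ überträgt, entnehme ich Lemma \ref{pfadlemma}, da jeder globale Pfad nach $u$ mit einem Same-Level-Pfad nach $u$ endet. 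Dies ist die eigentlich heikle Stelle; die Monotonie- und Kompositionsargumente sowie der Nachweis, dass Kompositionen distributiver Abbildungen distributiv bleiben, sind dagegen Routine.
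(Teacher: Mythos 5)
Your proposal is correct and follows essentially the same route as the paper's own proof: an induction over the path length showing $f_\pi \sqleqmap \underline{T}[u]$ for each $\pi \in \underline{\SLP}[u]$ (with the same case split into basic-edge extension and call--body--return decomposition), followed by the verification that $\bigl(\bigsqcup\{f_\pi \mid \pi \in \underline{\SLP}[u]\}\bigr)_{u \in N}$ is itself a solution of \eqref{T-Ugs} -- using distributivity to push $f_{\mathtt{b}}$ resp.\ $f_{\pi_2}$ through the least upper bounds, with reachability guaranteeing nonemptiness in the merely positiv-distributive case -- so that minimality of $\underline{T}$ yields the converse inequality. The only place you go slightly beyond the paper is that you justify the implication \emph{erreichbar} $\Rightarrow \underline{\SLP}[u]\ne\emptyset$ explicitly via Lemma \ref{pfadlemma}, a step the paper's proof merely asserts.
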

\begin{proof}
Wir schreiben abkürzend $M_u := \{f_\pi \mid \pi \in \underline{\SLP}[u]\}$. 

Zunächst zeigen wir, dass jedes Element von $M_u$ kleiner oder gleich $\underline{T}[u]$ ist. Daraus folgt $\bigsqcup M_u \sqleqmap \underline{T}[u]$.
Für die Gleichheit müssen wir dann noch die umgekehrte Ungleichung zeigen. Dazu rechnen wir nach, dass $(\bigsqcup M_\node)_{\node \in N}$ eine Lösung des Ungleichungssystems \eqref{T-Ugs} ist. 
\begin{enumerate}
\item 
  Wir zeigen folgende Aussage durch vollständige Induktion nach $k \in \nn$: 
  Für $u \in N$ und einen Pfad $\pi \in \underline{\SLP}[u]$ der Länge kleiner oder gleich $k$ ist $f_\pi \sqleqmap \underline{T}[u]$.

  Seien also $u \in N$ and $\pi \in \underline{\SLP}[u]$.
  Falls $\pi$ die Länge $0$ hat, gilt $\pi=\eps$ und $u=s_\p$. Damit folgt $f_{\pi} = \id \sqleqmap \underline{T}[s_\p]$.

  Gelte die Behauptung für Pfade, deren Länge kleiner oder gleich $k \in \nn$ ist. Sei $\pi$ ein Pfad der Länge $k+1$, d.h.~$\pi=e_1\dots e_ke_{k+1}$, wobei $e_{k+1} = (u_k,\mathtt{s},u)$. 
  Gemäß der Definition von Same-Level-Pfaden kann die letzte Kante $e_{k+1}$ weder eine Eintritts- noch eine Aufrufkante sein. Also ist $e_{k+1}$ entweder eine Basiskante oder eine Rückkehrkante. 
  Wir unterscheiden also zwei Fälle:
  \begin{enumerate}
  \item 
  Sei $e_{k+1}= (u_k,\mathtt{b},u)\in E$ eine Basiskante. Dann ist $\tilde{\pi}:=e_1\dots e_k\in \SLP[u_k]$ ein Pfad nach $u_k$ der Länge $k$. 
  Nach Induktionsvoraussetzung ist $f_{\tilde{\pi}} \sqleqmap \underline{T}[u_k]$. Weiter ist $f_{k+1} = f_\mathtt{b}$ und damit
  \[f_{\pi} \sqleqmap f_\mathtt{b} \circ \underline{T}[u_k] \sqleqmap \underline{T}[u]\]
  gemäß der Definition des Ungleichungssystems $T$.
  \item
  Sei $e_{k+1}$ eine Rückkehrkante, d.h.~$e_{k+1}=(r_\q,\ret,u)$ für ein $\q \in \Proc$ mit zugehöriger Aufrufkante $e_i=(u_{i-1},\call,s_\q)$ für ein $1 \le i < k$, 
  so dass $\pi = \pi' \cdot e_i \cdot \pi'' \cdot e_{k+1}$  wie in \autoref{bild-lemma-koinzidenz-T-IS} für Pfade $\pi' = e_1 \dots e_{i-1} \in \underline{\SLP}[u_{i-1}]$ 
  und $\pi''=e_{i+1}\dots e_k \in \underline{\SLP}[r_\q]$ und eine Aufrufkante $e=(u_{i-1},\mathtt{q()},u)\in E$. 

  \begin{figure}[ht]
	  \begin{flowgraph}
	\node[startnode]	(smain) 						{\small{$s_{\main}$}};
	\node[stdnode]		(ui-1) 	[below right=of smain]	{\small{$u_{i-1}$}};
	\node[stdnode]		(sq) 	[right=of ui-1]			{$s_\q$};
	\node[stdnode]		(rq) 	[below=of sq]			{$r_\q$};
	\node[stdnode]		(u) 	[below=of ui-1]			{$u$};
	\path[->]	
		(smain) edge [snakearc] node{$\pi'$}						(ui-1)
		(ui-1)	edge [dashed] 	node{$\call$}node[swap,edgenamecolor]{$e_i$}		(sq)
		(sq)	edge []			node {$\pi''$}						(rq)
		(rq)	edge [dashed]	node[swap]{$\ret$}node[edgenamecolor]{$e_{k+1}$}	(u)
	;

	\end{flowgraph}
	\caption{Beweis von Lemma \ref{lemma-koinzidenz-T}, Induktionsschritt.}
	\label{bild-lemma-koinzidenz-T-IS}
  \end{figure}

  Nach Induktionsvoraussetzung ist nun $f_{\pi'} \sqleqmap \underline{T}[u_{i-1}]$ und $f_{\pi''} \sqleqmap \underline{T}[r_\q]$. 
  Damit gilt \[f_\pi = f_{\pi''} \circ f_{\pi'} \sqleqmap \underline{T}[r_\q] \circ \underline{T}[u_{i-1}].\]
  Es folgt die gewünschte Ungleichung.
  \end{enumerate}
\item 
  Um zu zeigen, dass $(\bigsqcup M_\node)_{\node \in N}$ eine Lösung von \eqref{T-Ugs} ist, müssen wir folgende Ungleichungen prüfen:
  \begin{alignat}{2}
  &\bigsqcup M_{s_\p}	&&\sqgeqmap \id \label{lemma-koinzidenz-T1}\\
  &\bigsqcup M_v 		&&\sqgeqmap f_\mathtt{b} \circ \bigsqcup M_u \text{ für eine Basiskante } e = (u,\mathtt{b},v) \in E \label{lemma-koinzidenz-T2}\\
  &\bigsqcup M_v 		&&\sqgeqmap \bigsqcup M_{r_\q} \circ \bigsqcup M_u \text{ für eine Aufrufkante } e = (u,\mathtt{q()},v) \in E \label{lemma-koinzidenz-T3}
  .\end{alignat}
  Es gilt für alle $\p \in \Proc$
  \[
  \bigsqcup M_{s_\p} 
  = \big\{f_\pi \mid \pi \in \underline{\SLP}[s_\p]\big\} 
  \sqgeqmap f_{\eps} = \id
  \]
  vermöge $\eps \in \underline{\SLP}[s_\p]$. Also gilt die Ungleichung \eqref{lemma-koinzidenz-T1}.

  Sei weiter $e=(u,\mathtt{b},v)\in E$ eine Basiskante. Dann ist
  \begin{align*}
  M_v
  &= \big\{f_\pi \mid \pi \in \underline{\SLP}[v]\big\} \\
  &\supseteq \big\{f_\pi \mid \pi \in \underline{\SLP}[v] \wedge (\exists \tilde{\pi}\in \underline{\SLP}[u]:\pi = \tilde{\pi}\cdot e\big)\} \\ 
  &= \big\{f_{\tilde{\pi}\cdot e} \mid \tilde{\pi}\in \underline{\SLP}[u]\big\} \\
  &= \big\{f_\mathtt{b} \circ f_{\tilde{\pi}} \mid \tilde{\pi}\in \underline{\SLP}[u]\big\}
  .\end{align*}
  Damit folgt
  \[\bigsqcup M_v
  \sqgeqmap \bigsqcup\big\{f_\mathtt{b} \circ f_{\tilde{\pi}} \mid \tilde{\pi}\in \underline{\SLP}[u]\big\}
  = f_\mathtt{b} \circ \bigsqcup \big\{f_{\tilde{\pi}} \mid \tilde{\pi}\in \underline{\SLP}[u]\big\}
  = f_\mathtt{b} \circ \bigsqcup M_u,
  \]
  wobei die mittlere Gleichheit aus folgender Tatsache resultiert: Falls $(L,\sqleq, \mathcal{F})$ uni\-ver\-sell-distributiv ist, folgt sie gerade aus der Definition. 
  Falls $(L,\sqleq, \mathcal{F})$ positiv-distributiv ist, gilt die Aussage nur, falls $M_u$ nicht leer ist. Andererseits ist aber $M_u$ genau dann leer, wenn $\underline{\SLP}[u]$ leer ist. 
  Dies wiederum ist genau dann der Fall, wenn $u$ nicht erreichbar ist. 
  Da wir aber im Falle der positiven Distributivität von $(L,\sqleq, \mathcal{F})$ die Erreichbarkeit eines jeden Programmpunktes vorausgesetzt hatten, ist $M_u$ nicht leer. 
  Es folgt die Gültigkeit der Ungleichung \eqref{lemma-koinzidenz-T2}.

  Sei zuletzt $e = (u,\mathtt{q()},v)\in E$ eine Aufrufkante, dann gilt
  \begin{align*}
  \lefteqn{\big\{f_\pi \mid \pi \in \underline{\SLP}[v]\big\} } 
  \\ &\supseteq \big\{f_\pi \mid \pi \in \underline{\SLP}[v] \wedge (\exists \tilde{\pi}_1\in \underline{\SLP}[u], \tilde{\pi}_2\in\underline{\SLP}[r_\q]:\\
	&\phantom{~\supseteq~\big\{f_\pi \mid } \pi = \tilde{\pi}_1 \cdot (u,\call, s_\q) \cdot \tilde{\pi}_2 \cdot (r_\q, \ret, v))\big\} 
  \\ &= \big\{f_{(r_\q, \ret, v)} \circ f_{\tilde{\pi}_2} \circ f_{(u,\call, s_\q)} \circ f_{\tilde{\pi}_1} \mid \tilde{\pi}_1\in \underline{\SLP}[u], \tilde{\pi}_2\in\underline{\SLP}[r_\q]\big\}
  \\ &= \big\{f_{\tilde{\pi}_2} \circ f_{\tilde{\pi}_1} \mid \tilde{\pi}_1\in \underline{\SLP}[u], \tilde{\pi}_2\in\underline{\SLP}[r_\q]\big\}
  .\end{align*}
  Weiter ist für jedes $\hat{\pi}_2\in\underline{\SLP}[r_\q]$
  \[
  \big\{f_{\tilde{\pi}_2} \circ f_{\tilde{\pi}_1} \mid \tilde{\pi}_1\in \underline{\SLP}[u], \tilde{\pi}_2\in\underline{\SLP}[r_\q]\big\}
  \supseteq
  \big\{ f_{\hat{\pi}_2} \circ f_{\tilde{\pi}_1} \mid \tilde{\pi}_1\in\underline{\SLP}[u]\big\} 
  .\]
  Also ist
  \begin{align*}
  \bigsqcup M_v 
  \sqgeqmap \bigsqcup& \big\{f_{\tilde{\pi}_2} \circ f_{\tilde{\pi}_1} \mid \tilde{\pi}_1\in \underline{\SLP}[u], \tilde{\pi}_2\in\underline{\SLP}[r_\q]\big\}\\
  \sqgeqmap \bigsqcup& \big\{ f_{\hat{\pi}_2} \circ f_{\tilde{\pi}_1} \mid \tilde{\pi}_1\in\underline{\SLP}[u]\big\}\
  \intertext{für jedes $\hat{\pi}_2 \in \underline{\SLP}[r_\q]$. Daraus folgt in Analogie zu obiger Argumentation mit der jeweiligen Distributivität}
  \bigsqcup M_v 
  \sqgeqmap \bigsqcup& \big\{\bigsqcup\big\{ f_{\hat{\pi}_2} \circ f_{\tilde{\pi}_1} \mid \tilde{\pi}_1\in\underline{\SLP}[u]\big\} \mid \tilde{\pi}_2\in\underline{\SLP}[r_\q] \big\}\\
  = \bigsqcup&\big\{f_{\hat{\pi}_2} \circ \bigsqcup\big\{f_{\tilde{\pi}_1} \mid \tilde{\pi}_1\in\underline{\SLP}[u]\big\} \mid \tilde{\pi}_2\in\underline{\SLP}[r_\q]\big\}
  \\= \bigsqcup& \big\{f_{\hat{\pi}_2} \mid \tilde{\pi}_2\in\underline{\SLP}[r_\q]\big\} \circ \bigsqcup\big\{f_{\tilde{\pi}_1} \mid \tilde{\pi}_1\in\underline{\SLP}[u]\big\} 
  \\= \bigsqcup& M_{r_\q} \circ \bigsqcup M_u.
  \end{align*}
  Dies beweist Ungleichung \eqref{lemma-koinzidenz-T3} und damit die Behauptung.\qedhere
\end{enumerate}
\end{proof}

\begin{satz}\label{lemma-koinzidenz-R}
Für jedes $u \in N$ gilt 
\[\underline{R}[u] \sqgeq \MOP[u].\]
Ist $(L,\sqleq, \mathcal{F})$ universell-distributiv oder positiv-distributiv und in letzterem Fall weiter jeder Programmpunkt erreichbar, so gilt sogar für jedes $u \in N$
\[\underline{R}[u] = \MOP[u].\]
\end{satz}
\begin{proof}
Der Aufbau des Beweises ist analog zu dem vorangegangenen.
Dazu definieren wir $M_u := \{f_\pi(\init) \mid \pi \in \underline{P}[u]\}$. 
\begin{enumerate}
 \item 
  Wir führen einen Induktionsbeweis über $k \in \nn$: 
  Falls $u\in N$ und $\pi \in \underline{\GP}[u]$ ein Pfad der Länge kleiner oder gleich $k \in \nn$ ist, so gilt $f_{\pi}(\init) \sqleq \underline{R}[u]$.

  Seien also $u \in N$ und $\pi \in \underline{\GP}[u]$.
  Falls $\pi$ die Länge $0$ hat, so ist $\pi = \eps$. Insbesondere ist $u=s_{\main}$ und 
  \[f_{\eps}(\init) = \init \sqleq \underline{R}[s_{\main}].\]
  Gelte die Behauptung bereits für Pfade der Länge kleiner oder gleich $k \in \nn$ und sei $\pi \in \underline{\GP}[u]$ ein Pfad der Länge $k+1$, d.h.~$\pi=e_1\dots e_ke_{k+1}$, 
  wobei $e_{k+1} = (u_k,\mathtt{s},u)$. 
  Dann ist $\tilde{\pi}:=e_1\dots e_k\in P[u_k]$ ein Pfad der Länge $k$ und mit der Monotonie von $f_{k+1}$ erhalten wir nach Induktionsvoraussetzung
  \[f_{\pi}(\init) = f_{k+1}(f_{\tilde{\pi}}(\init)) \sqleq f_{k+1}(\underline{R}[u_k]).\]
  Nach Definition von $\underline{\GP}[u]$ kann die letzte Kante $e_{k+1}$ von $\pi$ eine Basis-, Eintritts- oder Rückkehrkante sein. Wir unterscheiden also drei Fälle:
  \begin{enumerate}
  \item 
  Sei $e_{k+1}=(u_k,\mathtt{b},u)\in E$ eine Basiskante. Dann gilt $f_{k+1} = f_\mathtt{b}$ und damit
  \[f_{\pi}(\init) \sqleq f_\mathtt{b}(\underline{R}[u_k]) \sqleq \underline{R}[u]\]
  nach Definition des Ungleichungssystems $R$.
  \item 
  Sei $e_{k+1}$ eine Eintrittskante, d.h.~$e_{k+1}=(u_k,\enter,s_\q)$ für ein $\q \in \Proc$ und $u = s_\q$. Damit ist $f_{k+1}=\id$, woraus
  \[f_{\pi}(\init) \sqleq \underline{R}[u_k]\] 
  folgt. Weiter existiert eine Kante $e=(u_k,\mathtt{q()},v)\in E$ und somit ist nach Definition des Ungleichungssystems $R$ auch 
  \[\underline{R}[u_k] \sqleq \underline{R}[u].\]
  Zusammen folgt 
  \[f_\pi(\init) \sqleq \underline{R}[u].\]
  \item 
  Sei $e_{k+1}$ eine Rückkehrkante, d.h.~$e_{k+1}=(r_\q,\ret,u)$ für ein $\q \in \Proc$ und $f_{k+1}=\id$. 
  Dann gibt es eine zugehörige Aufrufkante $e_i=(u_{i-1},\call,s_\q)$ für ein $1 \le i < k$. 
  Sei $\pi' := e_1 \dots e_{i-1} \in \underline{P}[u_{i-1}]$ ein Pfad der Länge $i-1 \le k$ und $\pi'':=e_{i+1}\dots e_k \in \underline{\SLP}[r_\q]$ 
  wie auch in \autoref{bild-lemma-koinzidenz-T-IS} zum Beweis von Lemma \ref{lemma-koinzidenz-T}.

  Dann gilt $\pi = \pi' \cdot e_i \cdot \pi'' \cdot e_{k+1}$ und damit $f_\pi = f_{\pi''} \circ f_{\pi'}$. 
  Nach Induktionsvoraussetzung ist dann
  \begin{align*}
  f_{\pi'} (\init) &\sqleq \underline{R}[u_{i-1}].
  \intertext{Mit der Monotonie von $f_{\pi''}$ ist}
  f_{\pi}(\init)
  &= f_{\pi''} \circ f_{\pi'} (\init) \\
  &\sqleq f_{\pi''}(\underline{R}[u_{i-1}]).
  \intertext{Da nach Lemma \ref{lemma-koinzidenz-T} $f_{\pi''} \sqleqmap \underline{T}[r_\q]$ gilt, ist weiter}
  f_{\pi''}(\underline{R}[u_{i-1}]) &\sqleq \underline{T}[r_\q](\underline{R}[u_{i-1}])\\
  &\sqleq \underline{R}[u].
  \end{align*}
  Die letzte Ungleichung folgt dabei aus der Definition des Ungleichungssystems \eqref{T-Ugs} zur Berechnung der Summary-Informationen.
  Zusammen gilt also 
  \[f_{\pi}(\init) \sqleq \underline{R}[u].\]
  Es folgt die Behauptung. 
  \end{enumerate}
\item 
  Für die andere Richtung prüfen wir
\begin{alignat}{2}
& \bigsqcup M_{s_{\main}} && \sqgeq \init \label{lemma-koinzidenz-R1}\\
& \bigsqcup M_v && \sqgeq f_\mathtt{b}\big(\bigsqcup M_u\big) \text{ für eine Basiskante } e = (u,\mathtt{b},v) \label{lemma-koinzidenz-R2}\\
& \bigsqcup M_{s_\q} && \sqgeq \bigsqcup M_u \text{ für eine Aufrufkante } e = (u,q(),v) \label{lemma-koinzidenz-R3}\\
& \bigsqcup M_v && \sqgeq \underline{T}[r_\q]\big(\bigsqcup M_u\big) \text{ für eine Aufrufkante } e = (u,q(),v). \label{lemma-koinzidenz-R4}
\end{alignat}
  Es ist
  \[
  M_{s_{\main}} 
  = \{f_\pi(\init) \mid \pi \in \underline{\GP}[s_{\main}]\}
  \ni f_{\eps}({\init}) 
  = \init
  \]
  vermöge $\eps \in \underline{\GP}[s_{\main}]$.
  Damit folgt
  \[\bigsqcup M_{s_{\main}} 
  \sqgeq \init,
  \]
  also die Gültigkeit der Ungleichung \eqref{lemma-koinzidenz-R1}.

  Sei $e=(u,\mathtt{b},v)\in E$ eine Basiskante. Dann gilt $\{\pi \cdot e \mid \pi \in \underline{\GP}[u]\} \subseteq \underline{\GP}[v]$ 
  und analog zum Beweis von Lemma \ref{lemma-koinzidenz-T} folgt Ungleichung \eqref{lemma-koinzidenz-R2} mit
  \[
  \bigsqcup M_v
  \sqgeq \bigsqcup \big\{f_\mathtt{b} (f_{\tilde{\pi}}(\init)) \mid \tilde{\pi} \in \underline{\GP}[u] \big\}
  = f_\mathtt{b}\big(\bigsqcup M_u\big).
  \]
  Sei nun $(u,\mathtt{q()},v)\in E$ eine Aufrufkante. Dann ist
  \[\{\pi \cdot (u,\enter,s_\q) \mid \pi \in \underline{\GP}[u]\} \subseteq \underline{\GP}[s_\q]\]
  und
  \[\{\pi \cdot (u,\call,s_{\q}) \cdot \tilde{\pi} \cdot (r_{\q},\ret,v) \mid \pi \in \underline{\GP}[u] \wedge \tilde{\pi} \in \underline{\SLP}[r_{\q}]\} \subseteq \underline{\GP}[v].\]
  Damit folgt einerseits Ungleichung \eqref{lemma-koinzidenz-R3} aus
  \[
  \bigsqcup M_{s_\q}
  \sqgeq \bigsqcup \big\{f_{\pi \cdot (u,\enter,s_\q)} \mid \pi \in \underline{\GP}[u]\big\}
  = \bigsqcup \big\{f_{\pi} \mid \pi \in \underline{\GP}[u]\big\}
  = \bigsqcup M_u
  \]
  und andererseits Ungleichung \eqref{lemma-koinzidenz-R4}, wobei die jeweilige Distri\-bu\-ti\-vi\-täts\-eigen\-schaft wie im Beweis von Lemma \ref{lemma-koinzidenz-T} benutzt wird, mit
  \begin{align*}
  \bigsqcup M_v
  &\sqgeq \bigsqcup \big\{f_{\pi \cdot (u,\call,s_{\q}) \cdot \tilde{\pi} \cdot (r_{\q},\ret,v)} (\init) \mid \pi \in \underline{\GP}[u] \wedge \tilde{\pi} \in \underline{\SLP}[r_{\q}]\big\} \\
  &= \bigsqcup \big\{f_{\tilde{\pi}} \circ f_\pi (\init)\mid \pi \in \underline{\GP}[u] \wedge \tilde{\pi} \in \underline{\SLP}[r_{\q}]\big\} \\
  &= \bigsqcup \big\{f_{\tilde{\pi}}\mid\tilde{\pi} \in \underline{\SLP}[r_{\q}]\big\}\big(\bigsqcup\big\{f_\pi (\init) \mid \tilde{\pi} \in \underline{\GP}[u]\big\}\big)\\
  &= \bigsqcup \big\{f_{\tilde{\pi}}\mid\tilde{\pi} \in \underline{\SLP}[r_{\q}]\big\}\big(\bigsqcup\big\{f_\pi (\init) \mid \tilde{\pi} \in \underline{\GP}[u]\big\}\big)\\
  &= \underline{T}[r_{\q}] \big( \bigsqcup(M_u) \big).
  \end{align*}
  Die letzte Gleichheit folgt dabei aus Lemma \ref{lemma-koinzidenz-T}. Es folgt die Behauptung. \qedhere
\end{enumerate}
\end{proof}

Wir betrachten nun den Call-String-Ansatz.
\begin{lemma}\label{lemma-koinzidenz-A}
Für jedes $(u,w) \in N_{\CS}$ gilt 
\[\underline{A}[u,w] \sqgeq \bigsqcup \big\{f_\pi(\init) \mid \pi \in \GP(u,w)\big\}.\]
Ist $(L,\sqleq, \mathcal{F})$ universell-distributiv oder positiv-distributiv und in letzterem Fall weiter jeder Programmpunkt erreichbar, so gilt sogar für jedes $(u,w) \in N_{\CS}$
\[\underline{A}[u,w] = \bigsqcup \big\{f_\pi(\init) \mid \pi \in \GP(u,w)\big\}.\]
\end{lemma}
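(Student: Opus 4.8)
The plan is to mirror the two-part template used for Lemma~\ref{lemma-koinzidenz-T} and Satz~\ref{lemma-koinzidenz-R}. Abbreviating $M_{u,w} := \{f_\pi(\init) \mid \pi \in \Path(u,w)\}$ — the set of path-effects written $\GP(u,w)$ in the statement, where $\Path(u,w)$ is the set of paths to $u$ matching the call string $w$ defined before Lemma~\ref{pfadlemma} — I would first prove the correctness inequality $\underline{A}[u,w] \sqgeq \bigsqcup M_{u,w}$ for all $(u,w) \in N_{\CS}$, and then, under the distributivity hypothesis, the reverse inequality. For the latter it suffices to check that the family $\big(\bigsqcup M_{\node,w}\big)$ solves the system~\eqref{A-Ugs}: since $\underline{A}$ is its least solution, this forces $\underline{A}[u,w] \sqleq \bigsqcup M_{u,w}$, and with the first part equality follows.

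For correctness I would induct on the length $k$ of $\pi\in\Path(u,w)$, proving $f_\pi(\init)\sqleq\underline{A}[u,w]$. The base case $k=0$ forces $\pi=\eps$, $u=s_{\main}$, $w=\eps$ and $f_\eps(\init)=\init\sqleq\underline{A}[s_{\main},\eps]$ by the first line of~\eqref{A-Ugs}. In the step I split on the last edge $e_{k+1}$. A base edge $(u_k,\mathtt{b},u)$ reduces to a shorter prefix in $\Path(u_k,w)$ and is closed by monotonicity of $f_{\mathtt{b}}$ and the base-edge inequality. An $\enter$ edge $(u_k,\enter,s_{\q})$ forces $u=s_{\q}$ and $w=\widetilde{w}\cdot e$ for the call edge $e=(u_k,\mathtt{q()},v)$; since $\enter$ carries $\id$, applying the call inequality $\underline{A}[s_{\q},\widetilde{w}\cdot e]\sqgeq\underline{A}[u_k,\widetilde{w}]$ to the prefix in $\Path(u_k,\widetilde{w})$ finishes it. The crux is a $\ret$ edge $(r_{\q},\ret,u)$: I decompose $\pi=\pi'\cdot(u_{i-1},\call,s_{\q})\cdot\pi''\cdot(r_{\q},\ret,u)$ with $\pi'\in\Path(u_{i-1},w)$ and $\pi''\in\underline{\SLP}[r_{\q}]$, collapse the balanced $\call$/$\ret$ bracket into a single $\enter$ edge to form $\widehat{\pi}:=\pi'\cdot(u_{i-1},\enter,s_{\q})\cdot\pi''$, and observe that $\widehat{\pi}$ has length $k$, lies in $\Path(r_{\q},w\cdot e)$ for $e=(u_{i-1},\mathtt{q()},u)$, and satisfies $f_{\widehat{\pi}}=f_{\pi''}\circ f_{\pi'}=f_\pi$ (as $\call$, $\ret$, $\enter$ all carry $\id$); the induction hypothesis and the return inequality $\underline{A}[u,w]\sqgeq\underline{A}[r_{\q},w\cdot e]$ then give the claim.

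For the reverse inequality I would verify the four lines of~\eqref{A-Ugs} for $\big(\bigsqcup M_{\node,w}\big)$ by the analogous surgery: $\eps\in\Path(s_{\main},\eps)$ gives the start line; appending a base edge gives $\{\pi\cdot e\mid\pi\in\Path(u,w)\}\subseteq\Path(v,w)$; appending an $\enter$ edge gives $\{\pi\cdot(u,\enter,s_{\q})\mid\pi\in\Path(u,w)\}\subseteq\Path(s_{\q},w\cdot e)$; and expanding a trailing $\enter$ into a $\call$/$\ret$ bracket via the $\SLP$-rule~\eqref{SLP-Ugs} sends each $\pi\in\Path(r_{\q},w\cdot e)$ to some $\widehat{\pi}\in\Path(v,w)$ with $f_{\widehat{\pi}}=f_\pi$. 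Only the base-edge line pulls a transfer function through a supremum, so distributivity enters exactly as in Lemma~\ref{lemma-koinzidenz-T} — unconditionally when $\mathcal{F}$ is universally distributive, and via $M_{u,w}\ne\emptyset$ (guaranteed by reachability) when it is only positively distributive; the other three lines need no distributivity because $\call$, $\ret$ and $\enter$ act as $\id$.

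The main obstacle I anticipate is the call-string bookkeeping in the return case: verifying that collapsing a balanced $\call$/$\ret$ pair into an $\enter$ edge keeps the path at the correct level $w\cdot e$ while strictly shortening it, so that the induction is well-founded, and dually that the $\SLP$-expansion in the solution check stays at level $w$. A secondary point is to pin down, in the positively-distributive setting, the precise notion of reachability that guarantees $\Path(u,w)\ne\emptyset$ for every $(u,w)\in N_{\CS}$ at issue, so that the base-edge step is legitimate.
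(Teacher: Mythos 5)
Your proposal is correct and takes essentially the same route as the paper's own proof: correctness by induction on path length with the same three-way split on the last edge, and the reverse inequality by verifying that $\bigl(\bigsqcup\{f_\pi(\init)\mid \pi\in\Path(\node,w)\}\bigr)_{(\node,w)\in N_{\CS}}$ solves \eqref{A-Ugs}, with distributivity (plus nonemptiness via reachability in the positively-distributive case) entering only at the base-edge line. Your explicit $\call$/$\enter$ surgery is in fact slightly more careful than the paper, whose proof applies the induction hypothesis to prefixes still containing $\call$-edges and appends $\call$- and $\ret$-edges to elements of $\GP(\cdot,\cdot)$ without noting that $\Path(u,w)$ is defined using $\enter$-markers; the replacement you perform is exactly what makes those steps literally correct and is harmless since all these edges carry the transfer function $\id$.
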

\begin{proof}
Dieses Lemma beweisen wir wie die beiden vorherigen, indem wir beide Ungleichungen zeigen. Dazu sei $M_{u,w} := \{f_\pi(\init) \mid \pi \in \GP(u,w)\}$.
\begin{enumerate}
 \item 
  Wir zeigen wieder durch vollständige Induktion nach $k \in \nn$: 
  Falls $(u,w) \in N_{\CS}$ und $\pi \in \GP(u,w)$ ein Pfad der Länge kleiner oder gleich $k \in \nn$ ist, 
  so gilt $f_{\pi}(\init) \sqleq \underline{A}[u,w]$.

  Sei also $(u,w) \in N_{\CS}$. Für einen Pfad $\pi \in \GP(u,w)$ der Länge $k=0$ gilt $\pi = \eps$. Insbesondere ist $u=s_{\main}$ und $w = \eps$. Damit folgt
  \[f_{\eps}(\init) = \init \sqleq \underline{A}[s_{\main},\eps].\]
  Gelte die Behauptung nun für Pfade der Länge kleiner oder gleich $k \in \nn$ und sei $\pi$ ein Pfad der Länge $k+1$, d.h.~$\pi=e_1\dots e_ke_{k+1}$, wobei $e_{k+1} = (u_k,\mathtt{s},u)$. 
  Wiederum betrachten wir die drei Fälle, dass die letzte Kante $e_{k+1}$ von $\pi$ eine Basis-, Eintritts- oder Rückkehrkante ist.
  Wir schreiben wieder $\tilde{\pi}=e_1\cdots e_k$.

  \begin{enumerate}
  \item Sei $e_{k+1}=(u_k,\mathtt{b},u)\in E$ eine Basiskante. Dann ist auch $\tilde{\pi} \in \GP(u_k,w)$ und weiter $f_{k+1} = f_{\mathtt{b}}$. 
  Somit gilt $f_{\tilde{\pi}}(\init) \sqleq \underline{A}[u_{k},w]$ nach Induktionsvoraussetzung. Weiter ist
  \[f_{\pi}(\init) = f_{k+1} (f_{\tilde{\pi}}(\init)) \sqleq f_\mathtt{b} (\underline{A}[u_{k},w]) \sqleq \underline{A}[u,w]\] 
  vermöge der Definition des Ungleichungssystems $A$.
	  
  \item Sei $e_{k+1}$ eine Eintrittskante, d.h.~$e_{k+1}=(u_k,\enter,s_\p)$, mit zugehöriger Aufrufkante $e=(u_k,\mathtt{p()},v)\in E_\q$ für ein $\q \in \Proc$, 
  so dass $w = \tilde{w} \cdot e$ für ein $\tilde{w} \in \underline{\CS}[\q]$. Dann ist $f_{k+1} = \id$ und weiter nach Induktionsvoraussetzung
  \[f_{\pi}(\init) = f_{\tilde{\pi}}(\init) \sqleq \underline{A}[u_{k},\tilde{w}] \sqleq \underline{A}[s_\q,\tilde{w}\cdot e] = \underline{A}[u,w]\] 
  gemäß der Definition des Ungleichungssystems $A$.

  \item Sei $e_{k+1}$ eine Rückkehrkante, d.h.~$e_{k+1}=(r_\q,\ret,u)$ und es gibt eine Kante $e=(u_i,\mathtt{q()},u)\in E_\p$ für ein $\q \in \Proc$ und $1 \le i \le k$.
  Aus $w \in \underline{\CS}[\p]$ folgt $w \cdot e \in \underline{\CS}[\q]$. Hieraus erhalten wir zusammen mit $f_{k+1} = \id$ und der Induktionsvoraussetzung
  \[f_{\pi}(\init) = f_{\tilde{\pi}}(\init) \sqleq \underline{A}[r_\q,\tilde{w}\cdot e] \sqleq \underline{A}[u,w]\] 
  vermöge der Definition des Ungleichungssystems $A$. Es folgt die gewünschte Ungleichung.
  \end{enumerate} 
\item 
  Wir zeigen für jedes $(u,w) \in N_{\CS}$
  \begin{alignat}{2}
  &\bigsqcup M_{\main,\eps}	&&\sqgeq \init \label{lemma-koinzidenz-A1}\\
  &\bigsqcup M_{v,w} 			&&\sqgeq f_e\big(\bigsqcup M_{u,w}\big)	\text{ für eine Basiskante } e=(u,\mathtt{b},v) \in E_\p \label{lemma-koinzidenz-A2}\\
  &\bigsqcup M_{s_\q,w\cdot e}&&\sqgeq \bigsqcup M_{u,w}				\text{ für eine Aufrufkante } e=(u,\mathtt{q()},v) \in E_\p \label{lemma-koinzidenz-A3}\\
  &\bigsqcup M_{v,w} 			&&\sqgeq \bigsqcup M_{r_\q,w\cdot e}	\text{ für eine Aufrufkante } e=(u,\mathtt{q()},v) \in E_\p \label{lemma-koinzidenz-A4}
  \end{alignat}

  Zunächst folgt \eqref{lemma-koinzidenz-A1} direkt mit
  \begin{align*}
  M_{s_{\main},\eps}
  &= \{f_\pi(\init) \mid \pi \in \GP(s_{\main},\eps)\}\\
  &\supseteq \{f_\pi(\init) \mid \pi = \eps\}\\
  &= \{\init\}.
  \end{align*}
  Sei nun $e=(u,\mathtt{b},v) \in E_\p$ eine Basiskante und $w \in \underline{\CS}[\p]$. 
  Dann folgt aus $\pi \in \GP(u,w)$ schon $\pi \cdot e \in \GP(v,w)$. Somit ist, wobei wir die Distributivitätseigenschaften der Transferfunktionen wie im Beweis von \ref{lemma-koinzidenz-R} benutzen,
  \begin{align*}
  \bigsqcup M_{v,w} 
  &= \bigsqcup \big\{f_\pi(\init) \mid \pi \in \GP(v,w)\big\}\\
  &\sqgeq \bigsqcup\big\{f_{\tilde{\pi}\cdot e}(\init) \mid \tilde{\pi} \in \GP(u,w)\big\}\\
  &= f_\mathtt{b}\big(\bigsqcup\big\{f_{\tilde{\pi}}(\init) \mid \tilde{\pi} \in \GP(u,w)\big\}\big)\\
  &= f_\mathtt{b}(M_{u,w})
  \end{align*}
  und Ungleichung \eqref{lemma-koinzidenz-A2} ist gezeigt. 

  Sei $e = (u,\mathtt{q()},v) \in E_\p$ eine Aufrufkante und $w \in \underline{\CS}[\p]$. Dann ist
  \[
  \big\{ \tilde{\pi}\cdot (u,\call,s_\q) \mid \tilde{\pi} \in \GP(u,w)\big\}
  \subseteq \GP(s_\q,w\cdot e) 
  \]
  und
  \[
  \big\{ \tilde{\pi}\cdot (r_\q,\ret,v) \mid \tilde{\pi} \in \GP(r_\q,w\cdot e)\big\}
  \subseteq \GP(v,w)
  .\]
  Somit gilt einerseits
  \begin{align*}
  M_{s_\q,w\cdot e}
  &= \big\{f_\pi(\init) \mid \pi \in \GP(s_\q,w\cdot e)\big\}\\
  &\supseteq \big\{f_{\tilde{\pi}\cdot (u,\call,s_\q) }(\init) \mid \tilde{\pi} \in \GP(u,w)\big\}\\ 
  &= \big\{f_{\tilde{\pi}}(\init) \mid \tilde{\pi} \in \GP(u,w)\big\}\\ 
  &= M_{u,w}.
  \end{align*} 
  Daraus folgt direkt \eqref{lemma-koinzidenz-A3}.
  Andererseits ist
  \begin{align*}
  M_{v,w}
  &= \big\{f_\pi(\init) \mid \pi \in \GP(v,w)\big\}\\ 
  &\supseteq \big\{f_{\tilde{\pi}\cdot (r_\q,\ret,v) }(\init) \mid \tilde{\pi} \in \GP(r_\q,w\cdot e)\big\}\\ 
  &= \big\{f_{\tilde{\pi}}(\init) \mid \tilde{\pi} \in \GP(r_\q,w\cdot e)\big\}\\ 
  &= M_{r_\q,w\cdot e}
  \end{align*}
  und auch Ungleichung \eqref{lemma-koinzidenz-A4} ist erfüllt. Es folgt die Behauptung.\qedhere
\end{enumerate}
\end{proof}

\begin{satz}\label{lemma-koinzidenz-hatA}
Für jedes $u \in N$ gilt 
\[\hat{A}[u] \sqgeq \MOP[u].\]
Ist $(L,\sqleq, \mathcal{F})$ universell-distributiv oder positiv-distributiv und in letzterem Fall weiter jeder Programmpunkt erreichbar, so gilt sogar für jedes $u \in N$
\[\hat{A}[u] = \MOP[u].\]
\end{satz}
\begin{proof}
Seien $\p \in \Proc$ und $u \in N_\p$. 
Aus dem vorigen Lemma \ref{lemma-koinzidenz-A} folgt nun
\begin{align*}
\hat{A}[u]
&= \bigsqcup \big\{\underline{A}[u,w] \mid w \in \underline{\CS}[\p] \big\} \\
&\sqleq \bigsqcup \big\{\bigsqcup \{f_\pi(\init) \mid \pi \in \GP(u,w)\} \mid w \in \underline{\CS}[\p] \big\} \\
&= \bigsqcup \big\{f_\pi(\init) \mid \pi \in \bigcup\big\{\GP(u,w) \mid w \in \underline{\CS}[\p]\big\} \big\} \\
&= \bigsqcup \big\{f_\pi(\init) \mid \pi \in \underline{\GP}[u]\big\} 
\end{align*}
und daraus die erste Behauptung. Mit den entsprechenden Voraussetzung an die Distributivität der Transferfunktionen gilt sogar überall Gleichheit.
\end{proof}

Wir haben nun alle nötigen Aussagen beisammen, um beweisen zu können, 
dass der funktionale und der Call-String-Ansatz unter geeigneten Voraussetzungen an die verwendeten Transferfunktionen in der Tat dieselbe präziseste Information liefern:
\begin{kor}[Koinzidenztheorem]\label{satz:koinzidenz}
Für jeden Knoten $u \in N$ gilt
\[\hat{A}[u] \sqleq \underline{R}[u].\]
Ist $(L,\sqleq, \mathcal{F})$ universell-distributiv oder positiv-dis\-tri\-bu\-tiv und in letzterem Fall weiter jeder Programmpunkt erreichbar, 
so gilt \[\hat{A}[u] = \underline{R}[u]\] für jeden Knoten $u \in N$.
\end{kor}
\begin{proof}
Dies folgt direkt aus Satz \ref{lemma-koinzidenz-R} und Satz \ref{lemma-koinzidenz-hatA}.
\end{proof}
\begin{bem}
Wir haben in \autoref{sec:callstring-ansatz} bereits erwähnt, dass in der Praxis oft die Länge der Call-Strings beschränkt wird. 
Können im Programm Call-Strings beliebiger Längen auftreten, so ist die Menge der beschränkten Call-Strings $\underline{\CS}_\text{beschr}[\p]$ eine echte Teilmenge der Menge aller Call-Strings. 
Sind die Transferfunktionen universell-distributiv oder positiv-distributiv und in letzterem Fall alle Programmpunkte erreichbar, 
so folgt aus dem Beweis von Lemma \ref{lemma-koinzidenz-hatA} und Satz \ref{satz:koinzidenz}
\begin{align*}
\MOP[u] 
&= \bigsqcup \big\{f_\pi(\init) \mid \pi \in \underline{\GP}[u]\big\}\\
&= \bigsqcup \big\{\underline{A}[u,w] \mid w \in \underline{\CS}[\p] \big\}\\
&\sqsupset \bigsqcup \big\{\underline{A}[u,w] \mid w \in \underline{\CS}_\text{beschr}[\p] \big\}.
\end{align*}
Es gilt also keine Koinzidenz, wenn wir die Call-Strings beschränken. Darum werden wir im Folgenden stets nur unbeschränkte Call-Strings betrachten.
\end{bem}

In diesem Abschnitt haben wir zwei Ansätze zur Behandlung von Prozeduren vorgestellt: 
Zum einen den Call-String-Ansatz, der mithilfe von Call-Strings alle Prozeduren eines Programms wie eine einzige behandelt, 
zum anderen den funktionalen Ansatz, der die Prozeduren getrennt betrachtet und jeder Prozedur ihren Effekt zuordnet. 
Wir haben gezeigt, dass beide Ansätze korrekte constraint-basierte Datenflussanalysen sind. 
Zuletzt haben wir gesehen, dass unter geeigneten Annahmen an die Distributivität der verwendeten Transferfunktionen beide Ansätze gleichwertig sind: 
Falls das zugrundeliegende monotone Framework universell-distributiv ist oder zumindest positiv-distributiv ist und in letzterem Fall zusätzlich alle Programmpunkte erreichbar sind, 
sind sowohl der funktionale als auch der Call-String-Ansatz präzise constraint-basierte Datenflussanalysen und beide berechnen insbesondere dasselbe Ergebnis. 

Oft ist die Transferfunktion, die als Effekt einer Prozedur berechnet wurde, nicht endlich darstellbar. Es muss also eine andere, endliche Darstellung gefunden werden. 
Im nächsten Abschnitt stellen wir mit der Polyederanalyse dafür ein Beispiel vor. 
Anstelle von Transferfunktionen berechnen wir dort auf zwei verschiedene Arten Summary-Informationen, 
aus denen wir wiederum Transferfunktionen zum Beschreiben des Effektes der Prozedur herleiten können.

\section{Polyederanalyse}\label{sec:polyeder}
Ziel der Polyederanalyse ist, zu jedem Programmpunkt (affine) Ungleichungen der Gestalt $\sum_{i=1}^n a_i x_i \ge a_0$ zu bestimmen, welche von den Programmvariablen erfüllt werden. 
Damit kann beispielsweise überprüft werden, ob ein Index innerhalb gewisser Array-Grenzen liegt. Geometrisch bilden die Variablenwerte, die solche Ungleichungen erfüllen, ein konvexes Polyeder. 
Die Polyederanalyse ist ein Beispiel für eine interprozedurale Datenflussanalyse. 
Zur Berechnung gültiger affiner Ungleichungen stellen wir in diesem Abschnitt zunächst Ungleichungssysteme zur Berechnung der erreichbaren Werte auf. 
In \autoref{sec:konvexe-polyeder} werden wir diese dann zu Ungleichungssystemen erweitern, deren Lösungen tatsächlich affine Ungleichungen beschreiben.

Ein in dieser Hinsicht zu untersuchendes Programm habe eine endliche Menge $\Proc$ von Prozeduren und sei im Folgenden durch ein Flussgraphsystem $\{G_{\p}\}_{\p\in \Proc}$ wie in \autoref{sec:fg} gegeben.
Dabei habe das Programm $n$ (globale) Variablen $\var{1},\dots,\var{n}$, welche Werte $x_1,\dots,x_n$ in $\rr$ annehmen. 
Als einzigen Basisanweisungen betrachten wir \emph{affine Zuweisungen} der Form $\mathtt{x_j:=a_0 + \sum_{i=1}^na_ix_i}$ für ein $1 \le j \le n$. 
In den kommenden Abschnitten werden wir meist $\mathtt{t}$ anstelle von $\mathtt{a_0 + \sum_{i=1}^n a_i x_i}$ 
und entsprechend $\mathtt{x_j := t}$ anstatt $\mathtt{x_j:=a_0 + \sum_{i=1}^na_ix_i}$ schreiben.  

Im Folgenden geben wir zwei Ansätze an, die ähnlich dem funktionalen Ansatz zunächst Summary-Informationen für die Prozeduren berechnen, und vergleichen diese: 
Der erste Ansatz verwendet Matrizenmengen und basiert auf \cite{seidl07}, während der zweite Ansatz Relationen als Summary-Informationen berechnet und in \cite{CH78-POPL} zu finden ist.

\subsection{Framework und Zuweisungen}
Ein \emph{Zustand} ist ein Vektor $(1,x_1,\dots,x_n) \in \{1\}\times\rr^n$, der die Werte der Variablen enthält. 
Die zusätzliche erste Komponente $1$ dient dazu, die Transferfunktion einer affinen Zuweisung mithilfe einer linearen Abbildung darstellen zu können. 
Wir bezeichnen mit $\Sigma := \{1\} \times \rr^n$ den Raum der möglichen Zustände. 
Die Analyse zur Bestimmung der erreichbaren Zustände ist nun auf Mengen von solchen Zuständen definiert, also auf dem vollständigen Verband $(L,\sqleq) := (2^\Sigma,\subseteq)$. 
Zusammen mit der Menge $\mathcal{F}$ der universell-distributiven Abbildungen $L \to L$ ist $(L, \subseteq, \mathcal{F})$ also ein universell-distributives Framework.

Wir bezeichnen mit $\Mat(n)$ die Menge der reellen $n\times n$-Matrizen.
Weiter sei 
\[\Mat(\Sigma) := \{A \in \Mat(n+1) \mid a_{00}=1 \wedge \forall 1 \le j \le n.~a_{0j}=0\}\] 
die Menge derjenigen reellen $(n+1)\times(n+1)$-Matrizen, die Zustände aus $\Sigma$ wieder auf Zustände aus $\Sigma$ abbilden. 
Man sieht leicht, dass tatsächlich jede Matrix, die alle Elemente aus $\Sigma$ wieder nach $\Sigma$ abbildet, von dieser Gestalt ist.
Setzen wir 
\[
M_{\mathtt{x_j := t}}:=
\begin{pmatrix}
I_j 									& 0 			& 0 \\
\mathtt{a_0} \cdots \mathtt{a_{j-1}}	& \mathtt{a_j} 	& \mathtt{a_{j+1}} \cdots \mathtt{a_n} \\
0 										& 0 			& I_{n-j} 
\end{pmatrix}
,\]
wobei $I_k \in \Mat(k)$ die Einheitsmatrix für ein $k \ge 1$ ist,
können wir damit den Effekt der Zuweisung $\mathtt{x_j := t}$ durch
\[\begin{pmatrix}
1 \\ \tilde{x}_1 \\ \vdots \\ \tilde{x}_n
\end{pmatrix}
:= 
\begin{pmatrix}
I_j 									& 0 			& 0 \\
\mathtt{a_0} \cdots \mathtt{a_{j-1}}	& \mathtt{a_j} 	& \mathtt{a_{j+1}} \cdots \mathtt{a_n} \\
0 										& 0 			& I_{n-j} 
\end{pmatrix}
\cdot 
\begin{pmatrix}
1\\x_1\\\vdots\\x_n
\end{pmatrix}
\]
ausdrücken.
Es ist $\tilde{x}_j = a_0 + \sum_{i=1}^n a_i x_i$. Weiter gilt für $1 \le i \le n$ und $i \ne j$ offenbar $\tilde{x}_i=x_i$. 
Das beschreibt die Tatsache, dass $\var{j}$ die einzige veränderte Variable ist, und diese genau gemäß der Vorschrift der Zuweisung verändert wurde. 

Mit anderen Worten weisen wir einer Kante $e=(u,\mathtt{x_j := t}, v) \in E$ die Transferfunktion 
\begin{align*}
f_{(u,\mathtt{x_j := t}, v)} : L^N &\to L,\\ \mathcal{S} &\mapsto f_{\mathtt{x_j := t}}(\mathcal{S}_u)
\intertext{zu. Dabei ist $\mathcal{S} = (\mathcal{S}_\node)_{\node \in N}$ und}
f_{\mathtt{x_j := t}}: L &\to L,\\ S &\mapsto \{M_{\mathtt{x_j := t}} \cdot s \mid s \in S\}.\end{align*}

Wir wollen nun zunächst die in einem jeden Programmpunkt erreichbaren Zu\-stän\-de berechnen. 
Dazu bestimmen wir zunächst die Effekte der Prozeduren, um damit das Ungleichungssystem \eqref{R-Ugs} aufstellen und lösen zu können. Die Effekte wollen wir auf zwei verschiedene Arten berechnen. 
In beiden Fällen werden wir die Abbildungen nicht direkt aus $\mathcal{F}$ wählen, sondern zunächst über einem anderen Raum definieren. 
Die dann berechneten Lösungen lassen sich jedoch mit Abbildungen aus $\mathcal{F}$ identifizieren. 

\subsection{Matrizenmengen als Summary-Information}\label{subsec:polyeder-mat}
Wie im vorigen Abschnitt bereits beschrieben, lässt sich der Effekt einer affinen Zuweisung mittels einer Matrix ausdrücken. 
Der Effekt zweier aufeinander folgender Kanten ist dann genau durch das Produkt der beiden Matrizen gegeben, die die Effekte der einzelnen Kanten beschreiben. 
So lässt sich auch der Effekt entlang eines Pfades durch eine Matrix ausdrücken. 
Die Idee ist nun, einen Knoten mit einer solchen Matrix zu annotieren, um damit zu beschreiben, dass vom Startknoten bis zu diesem Knoten ein Zustand genau durch diese Matrix verändert wird. 

Nun können aber auch mehrere Pfade zu einem Knoten führen. Die Effekte dieser Pfade entsprechen also mehreren Matrizen. 
Somit ist es nicht möglich, den Effekt bis zu einem Knoten allein mit einer Matrix zu beschreiben. 
Stattdessen werden wir zu jedem Knoten eine Menge solcher Matrizen bestimmen. 
Wir definieren das Ungleichungssystem zur Berechnung der Summary-Informationen ähnlich wie im Ungleichungssystem \eqref{T-Ugs}, 
allerdings über dem vollständigen Verband \[(L_M, \subseteq) := (2^{\Mat(\Sigma)},\subseteq).\] 
Anschließend erläutern wir, wie sich eine Matrizenmenge als universell-distributive Abbildung $L \to L$ auffassen lässt.

Dazu betrachten wir die verallgemeinerten Transferfunktionen $f_{T_M,\id},f_{T_M,\mathtt{x_j:=t}}$ und $f_{T_M,\mathtt{q()}}: {L_M}^N \to L_M$ mit 
\begin{align*} 
f_{T_M,\id}						(\mathcal{A}) &:= \{I_{n+1}\}, \\
f_{T_M,(u,\mathtt{x_j:=t},v)}	(\mathcal{A}) &:= \{M_{\mathtt{x_j := t}}\} \circ \mathcal{A}_u\\
f_{T_M,(u,\mathtt{q()},v)}		(\mathcal{A}) &:= \mathcal{A}_{r_{\q}} \circ \mathcal{A}_u  
\end{align*}
für $\mathcal{A}\in {L_M}^N$. Das Ungleichungssystem zur Bestimmtung der Prozedureffekte als Transferfunktion hat also folgende Gestalt:
\begin{equation}\label{TM-Ugs}\begin{split}
T_M[s_{\p}] &\supseteq \{I_{n+1}\}\\
T_M[v] 		&\supseteq \{M_{\mathtt{x_j := t}}\} \circ T_M[u] \text { für } (u,\mathtt{x_j := t},v) \in E_p \text{ Basiskante} \\
T_M[v] 		&\supseteq T_M[r_{\q}] \circ T_M[u] \text{ für } e=(u,\mathtt{q()},v) \in E \text{ Aufrufkante}.
\end{split}\end{equation}
Dabei ist $T_M[u] \in L_M = 2^{\Mat(\Sigma)}$ für jedes $u \in N_{\p}$ und $\p \in \Proc$. 

Mit $\circ$ bezeichnen wir dabei die elementweise Matrizenmultiplikation: 
Für Matrizenmengen $\mathcal{A}, \mathcal{B} \subseteq \Mat(\Sigma)$ ist $\mathcal{A} \circ \mathcal{B} := \{A\cdot B \mid A \in \mathcal{A}, B \in \mathcal{B}\}$ 
die kanonische Fortsetzung der Matrizenmultiplikation auf Matrizenmengen.

Wir wollen nun Matrizenmengen als Transferfunktionen auffassen. Dazu definieren wir eine Abbildung $\alpha_{\text{Mat}} : L_M \to \mathcal{F}$. 
Diese bildet eine Matrix auf diejenige Funktion ab, die einen Zustand auf das Produkt der Matrix mit diesem Zustand abbildet, und setzen dies fort auf Mengen von Matrizen. 
Konkret definieren wir also für eine Matrizenmenge $\mathcal{A} \subseteq \Mat(\Sigma)$ 
\begin{align*}
\alpha_{\text{Mat}}(\mathcal{A}) : L &\to L, \\
S &\mapsto \{A \cdot s \mid A \in \mathcal{A}, s \in S\}.
\end{align*}
Die Abbildung $\alpha_{\text{Mat}}$ ist wohldefiniert: Seien dazu $\mathcal{A} \in L_M$ und $\mathcal{S} \subseteq L$. Dann ist
\begin{align*}
\alpha_{\text{Mat}}\big(\mathcal{A}\big)\big(\bigsqcup \mathcal{S}\big)
&= \big\{ A \cdot s \mid A \in \mathcal{A} \wedge s \in \bigsqcup \mathcal{S}\big\} \\
&= \big\{ A \cdot s \mid A \in \mathcal{A} \wedge (\exists S \in \mathcal{S}:s \in S) \big\} \\
&= \bigcup\big\{ \{ A \cdot s \mid A \in \mathcal{A} \wedge s \in S\} \mid S \in \mathcal{S}\big\} \\
&= \bigcup\big\{ \alpha_{\text{Mat}}(\mathcal{A})(S) \mid S \in \mathcal{S}\big\}.
\end{align*}
Also ist $\alpha_{\text{Mat}}(\mathcal{A})$ universell-distributiv. Dies zeigt die Wohldefiniertheit von $\alpha_{\text{Mat}}$.

Insbesondere gilt
$f_{(u,\mathtt{x_j := t},v)}((\mathcal{S}_\node)_{\node \in N})= \alpha_{\text{Mat}}( \{ M_{\mathtt{x_j:=t}}\}) (\mathcal{S}_u)$.

\subsection{Relationen als Summary-Information}\label{subsec:polyeder-rel}
Ein alternativer Ansatz ist, den Effekt einer Kante als Relation aufzufassen. Dabei stehen zwei Zustände $x$ und $y$ in Relation zueinander, wenn $y$ aus $x$ durch den Effekt dieser Kante entsteht. 
Genauer ordnen wir nun einem jeden Knoten eine Relation $\mathcal{R} \subseteq \rr^n \times \rr^n$ zu, wobei ein Tupel $(x,y)$ genau dann in $\mathcal{R}$ enthalten ist, 
wenn es eine Kante zu diesem Knoten gibt, die $x$ zu $y$ transfomiert.
Speziell für eine affine Zuweisung $\mathtt{x_j := t}$ ist die Relation $\mathcal{R}_{\mathtt{x_j:=t}} $, die den Effekt dieser Zuweisung beschreibt, gegeben durch
\[\big\{(x,y) \in \rr^n\times\rr^n \mid \forall 1 \le i \le n:((i\ne j \Rightarrow y_i = x_i) \wedge y_j = a_0 + \sum_{k=1}^n a_k x_k)\big\}.\]
Als zugrundeliegenden vollständigen Verband betrachen wir ähnlich wie bei den Matrizenmengen \[(L_R, \subseteq) := (2^{\rr^n \times \rr^n}, \subseteq).\]

Zum Berechnen des Effektes einer Prozedur stellen wir wieder ein Ungleichungssystem auf:
Konkret benutzen wir dazu die verallgemeinerten Transferfunktionen $f_{T_R,\id}$, $f_{T_R,\mathtt{x_j:=t}}$ und $f_{T_R,\mathtt{q()}}: {L_R}^N \to L_R$ mit
\begin{align*} 
f_{T_R,\id}						(\mathcal{R}) &:= \{(x,x) \in \rr^n \times \rr^n\}, \\
f_{T_R,(u,\mathtt{x_j:=t},v)}	(\mathcal{R}) &:= \mathcal{R}_{\mathtt{x_j := t}} \circ \mathcal{R}_u\\
f_{T_R,(u,\mathtt{q()},v)}		(\mathcal{R}) &:= \mathcal{R}_{r_{\q}} \circ \mathcal{R}_u  
\end{align*}
für $\mathcal{R} \in {L_R}^N$. Das zu lösende Ungleichungssystem hat also folgende Gestalt:
\begin{equation}\label{TR-Ugs}\begin{split}
T_R[s_{\p}] 	&\sqgeq \{(x,x) \in \rr^n\times\rr^n\} \\
T_R[v] 			&\sqgeq R_{\mathtt{x_j := t}} \circ T_R[u] \text { für } (u,\mathtt{s},v) \in E_p \text{ Basiskante} \\
T_R[v] 			&\sqgeq T_R[r_{\q}] \circ T_R[u] \text{ für } e=(u,\mathtt{q()},v) \in E \text{ Aufrufkante}.
\end{split}\end{equation}
Dabei ist $T_R[u] \in 2^{\rr^n \times \rr^n}$ für jedes $u \in N_{\p}$ und $\p \in \Proc$. 
 
Mit $\circ$ bezeichnen wir die Verknüpfung zweier Relationen: 
Fasst man eine Relation als eine Zuordnung auf und sind $(x,y)\in \mathcal{R}_1,(y,z) \in \mathcal{R}_2$, 
so wird also $x$ von $\mathcal{R}_1$ dem Wert $y$ zugeordnet und der wiederum von $\mathcal{R}_2$ dem Wert $z$. 
Die Verknüpfung $\mathcal{R}_2 \circ \mathcal{R}_1$ ordnet also dem Wert $x$ den Wert $z$ zu. 
Genauer ist
\[\mathcal{R}_2 \circ \mathcal{R}_1 := \{(x,z) \mid \exists y\in\rr^n:(x,y) \in \mathcal{R}_1 \wedge (y,z) \in \mathcal{R}_2\}.\] 
Offenbar bildet $\{(x,x) \in \rr^n\times\rr^n\}$ das neutrale Element dieser Verknüpfung.

Wiederum identifizieren wir mithilfe einer Abbildung $\alpha_{\text{Rel}} : L_R \to \mathcal{F}$ eine Relation mit einer Abbildung auf Zustandsmengen, 
indem wir eine Relation, wie oben beschrieben, als eine Zuordnung auffassen. Konkret ist für $R \subseteq \rr^n \times \rr^n$ dann 
\begin{align*}
\alpha_{\text{Rel}}(\mathcal{R}) : L &\to L, \\ S &\mapsto \{(1,y) \mid \exists (1,x)\in S:(x,y) \in \mathcal{R}\}.
\end{align*}
Dabei schreiben wir abkürzend $(1,x)$ für $(1,x_1,\dots,x_n)$ mit $x=(x_1,\dots,x_n) \in \rr$. 

Auch die Abbildung $\alpha_{\text{Rel}}$ ist wohldefiniert: Seien dazu $\mathcal{R} \in L_R$ und $\mathcal{S} \subseteq L$. Dann ist
\begin{align*}
\alpha_{\text{Rel}}\big(\mathcal{R}\big)\big(\bigsqcup \mathcal{S}\big)
&= \big\{ (1,y) \mid \exists x \in \rr^n:(x,y) \in \mathcal{R} \wedge x \in \bigsqcup \mathcal{S}\big\} \\
&= \big\{ (1,y) \mid \exists x \in \rr^n:(x,y) \in \mathcal{R} \wedge \exists S \in \mathcal{S}:x \in S\big\} \\
&= \bigcup\big\{ \{(1,y) \mid \exists x \in S:(x,y) \in \mathcal{R}\} \mid S \in \mathcal{S}\big\} \\
&= \bigcup\big\{ \alpha_{\text{Rel}}(\mathcal{R})(S) \mid S \in \mathcal{S}\big\}
\end{align*}
Also ist $\alpha_{\text{Rel}}(\mathcal{R})$ universell-distributiv. Dies zeigt die Wohldefiniertheit von $\alpha_{\text{Rel}}$.

Um später die kleinsten Lösungen $\underline{T}_M$ und $\underline{T}_R$ vergleichen zu können, benötigen wir einen Zusammenhang zwischen Matrizenmengen und Relationen, den wir später einführen werden.
Speziell für $\{M_{\mathtt{x_j := t}}\}$ und $\mathcal{R}_{\mathtt{x_j := t}}$, die ja beide den Effekt derselben affinen Zuweisung beschreiben, können wir einen solchen aber sofort herstellen: 
Es ist $(1,y) = M_{\mathtt{x_j := t}} \cdot (1,x)$ äquivalent zu 
$y_i = x_i$ für alle $i \ne j$ und $y_j = a_0 + \sum_{i=1}^n a_i x_i$, was nach Definition genau $(x,y) \in \mathcal{R}_{\mathtt{x_j := t}}$ bedeutet. 
Mit anderen Worten ist also
\[\mathcal{R}_{\mathtt{x_j := t}} = \{(x,y) \in \rr^n\times\rr^n \mid (1,y) = M_{\mathtt{x_j := t}} \cdot (1,x)\}.\]
Dies liefert also einen Zusammenhang zwischen $M_{\mathtt{x_j := t}}$ und $\mathcal{R}_{\mathtt{x_j := t}}$. 
Desweiteren erhalten wir sofort $f_{(u,\mathtt{x_j := t},v)}((\mathcal{S}_\node)_{\node \in N}) = \alpha_{\text{Rel}}(\mathcal{R}_{\mathtt{x_j:=t}}) (\mathcal{S}_u)$.

\subsection{Erreichbare Zustände}
In den vorigen Abschnitten \ref{subsec:polyeder-mat} und \ref{subsec:polyeder-rel} haben wir Summary-Informationen nicht direkt als Abbildungen $L \to L$ berechnet, 
sondern zunächst als Matrizenmengen bzw.~Relationen ausgedrückt. 
Außerdem haben wir Abbildungen $\alpha_\text{Mat}$ und $\alpha_\text{Rel}$ definiert, mit denen diese Matrizenmengen und Relationen als Abbildungen $L \to L$ aufgefasst werden können. 
Derartige Abbildungen benötigen wir, um die erreichbaren Zustände in den Programmknoten mit einem Ungleichungssystem der Gestalt \eqref{R-Ugs} aus dem funktionalen Ansatz zu berechnen. 
Für den Aufruf einer Prozedur $\q$ wird in diesem Ungleichungssystem die Abbildung $\underline{T}[r_\q]$ benötigt.
Diese haben wir hier nicht direkt zur Verfügung, da wir das Ungleichungssystem \eqref{T-Ugs}, mit dem diese Abbildungen ausgerechnet werden, 
durch die Ungleichungssysteme \eqref{TM-Ugs} und \eqref{TR-Ugs} ersetzt haben. 
Wir können jedoch die Abbildung $\underline{T}[r_\q]$ ausdrücken durch $\alpha_\text{Mat}(\underline{T_M}[r_\q])$ bzw.~$\alpha_\text{Rel}(\underline{T_R}[r_\q])$. 
Die entsprechenden Ungleichungssysteme bezeichnen wir mit $R_M$ und $R_R$ und die zugehörenden verallgemeinerten Transferfunktionen nennen wir $f_{R_M,\cdot}: L^N \to L$ und $f_{R_R,\cdot}: L^N \to L$. 
Sie sind definieren wir durch
\begin{align*}
f_{R_M,\init}(\mathcal{S}) 							&:= \Sigma \\
f_{R_M,(u,\mathtt{x_j:=t},v)}(\mathcal{S}) 			&:= f_{x_j:=t} (\mathcal{S}_u) \\
f_{R_M,(u,\mathtt{q()},v),\text{ent}}(\mathcal{S}) 	&:= \mathcal{S}_u\\
f_{R_M,(u,\mathtt{q()},v),\text{ret}}(\mathcal{S}) 	&:= \alpha_{\text{Mat}}(\underline{T_M}[r_{\q}]) (\mathcal{S}_u)
\end{align*}
und
\begin{align*}
f_{R_R,\init}(\mathcal{S}) 							&:= \Sigma \\
f_{R_R,(u,\mathtt{x_j:=t},v)}(\mathcal{S}) 			&:= f_{x_j:=t} (\mathcal{S}_u) \\
f_{R_R,(u,\mathtt{q()},v),\text{ent}}(\mathcal{S}) 	&:= \mathcal{S}_u\\
f_{R_R,(u,\mathtt{q()},v),\text{ret}}(\mathcal{S}) 	&:= \alpha_{\text{Rel}}(\underline{T_R}[r_{\q}]) (\mathcal{S}_u).
\end{align*}
Bis auf die Abbildungen $f_{R_M,(u,\mathtt{q()},v),\text{ret}}$ und $f_{R_R,(u,\mathtt{q()},v),\text{ret}}$ stimmen die verwendeten Abbildungen also überein.

\subsection{Vergleich der beiden Summary-basierten Ansätze}
Im Folgenden untersuchen wir, ob die Ansätze mit Matrizenmengen bzw.~Relationen zu jedem Programmpunkt stets dieselben erreichbaren Zustände berechnen 
oder ob ein Ansatz eine präzisere Lösung liefert als der andere.

Zunächst definieren wir eine Abbildung zwischen Matrizenmengen und Relationen. 
Dabei ordnen wir einer Matrizenmenge diejenigen Paare zu, bei denen die zweite Komponente aus der ersten durch Anwenden einer Matrix dieser Menge entsteht. 
Da die Matrizenmengen auf $\{1\}\times\rr^n$ operieren, die Relationen dagegen aus Tupeln von Elementen aus $\rr^n$ bestehen, brauchen wir zusätzlich noch eine Inklusions- und Projektionsabbildung, 
um diese beide Räumen durch eine Bijektion miteinander in Verbindung zu setzen:
\begin{align*}
 \iota: \rr^n &\to \{1\}\times\rr^n,			& \pi: \{1\}\times\rr^n &\to \rr^n,
\\(x_1,\dots,x_n) &\mapsto (1,x_1,\dots,x_n) 	& (1,x_1,\dots,x_n) &\mapsto (x_1,\dots,x_n).
\end{align*}

Wir ordnen nun einer Matrix eine Relation zu mithilfe der Abbildung
\begin{align*}\tilde{\Phi}: \mats &\to \relspot,\\A &\mapsto \tilde{\Phi}_A:= \{(x, \pi(A \cdot \iota(x))) \mid x \in \rr^n\}\end{align*}
und setzen diese kanonisch auf Matrizenmengen fort durch
\begin{align*}
\Phi: \matspot& \to \relspot,\\\mathcal{A} 
&\mapsto \bigcup\big\{\tilde{\Phi}_A \mid A \in \mathcal{A}\big\} 
= \big\{(x, \pi(A \cdot \iota(x))) \mid x \in \rr^n \wedge A \in \mathcal{A}\big\}.
\end{align*}

\begin{lemma}\label{T-lemma}
Seien $X_1, X_2 \in \mats$ and $\mathcal{X} \subseteq \matspot$. Dann gilt:
\begin{enumerate}
 \item $\Phi_{I_{n+1}} = \{(x,x) \mid x \in \rr^n\}$,
 \item $\Phi_{M_{\mathtt{x_j := t}}} = \mathcal{R}_{\mathtt{x_j := t}}$,
 \item $\Phi_{X_2 \circ X_1} = \Phi_{X_2} \circ \Phi_{X_1}$,
 \item $\Phi_{\bigcup\mathcal{X}} = \bigcup\{\Phi_X \mid X \in \mathcal{X}\}$.
\end{enumerate}
\end{lemma}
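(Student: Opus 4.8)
The plan is to verify the four identities \textit{a)}--\textit{d)} separately, relying throughout on two elementary facts about the maps $\iota$ and $\pi$: they are mutually inverse bijections between $\rr^n$ and $\Sigma = \{1\}\times\rr^n$, i.e.\ $\pi\circ\iota = \id_{\rr^n}$ and $\iota\circ\pi = \id_\Sigma$, and every matrix $A \in \Mat(\Sigma)$ maps $\Sigma$ into $\Sigma$, which is exactly the defining property of $\Mat(\Sigma)$. For a single matrix $A$ the subscript notation $\Phi_A$ is read as $\tilde\Phi_A$; this is harmless since $\Phi_{\{A\}} = \tilde\Phi_A$ by definition of $\Phi$.

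For \textit{a)} I would simply compute: since $I_{n+1}\cdot\iota(x) = \iota(x)$, we get $\pi(I_{n+1}\cdot\iota(x)) = \pi(\iota(x)) = x$, so $\Phi_{I_{n+1}} = \{(x,x)\mid x\in\rr^n\}$. Part \textit{b)} is essentially a restatement of an observation already made in \autoref{subsec:polyeder-rel}: writing $\iota(x) = (1,x)$, the identity $M_{\mathtt{x_j := t}}\cdot(1,x) = (1,y)$ with $y = \pi(M_{\mathtt{x_j := t}}\cdot\iota(x))$ is equivalent to $y_i = x_i$ for $i\ne j$ together with $y_j = a_0 + \sum_{i=1}^n a_i x_i$, and this is precisely the membership condition $(x,y)\in\mathcal{R}_{\mathtt{x_j := t}}$. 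Hence the two relations coincide.

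The heart of the lemma is \textit{c)}, where the outer $\circ$ on the right denotes relation composition while the $\circ$ on the left is the matrix product $X_2 X_1$. I would expand the right-hand side: a pair $(x,z)$ lies in $\Phi_{X_2}\circ\Phi_{X_1}$ iff there is a $y$ with $y = \pi(X_1\iota(x))$ and $z = \pi(X_2\iota(y))$, that is, $z = \pi\big(X_2\,\iota(\pi(X_1\iota(x)))\big)$, whereas the left-hand side contributes $z = \pi(X_2 X_1\,\iota(x))$. The main obstacle is reconciling these through the extra $\iota\circ\pi$ that the composition inserts: here I use that $X_1\in\Mat(\Sigma)$ forces $X_1\iota(x)\in\Sigma$, so $\iota(\pi(X_1\iota(x))) = X_1\iota(x)$, whence $X_2\,\iota(\pi(X_1\iota(x))) = X_2X_1\iota(x)$ and the two $z$-values agree. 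Since for each $x$ the intermediate $y$ is uniquely determined, the sets of pairs match and \textit{c)} follows.

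Finally, \textit{d)} is a purely set-theoretic manipulation of nested unions and requires none of the geometry above. Unfolding the definition of $\Phi$, a pair $(x,y)$ lies in $\Phi_{\bigcup\mathcal{X}}$ iff $y = \pi(A\iota(x))$ for some $A\in\bigcup\mathcal{X}$, i.e.\ for some $X\in\mathcal{X}$ and some $A\in X$; this is exactly the condition that $(x,y)\in\Phi_X$ for some $X\in\mathcal{X}$, which characterizes membership in $\bigcup\{\Phi_X\mid X\in\mathcal{X}\}$. Thus both sides coincide and the proof is complete.
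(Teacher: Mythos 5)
Your computations for parts a), b) and d) are correct and follow the same route as the paper's own proof; in c) your central step --- inserting $\iota\circ\pi$ and cancelling it because $X_1\iota(x)\in\Sigma$ --- is exactly the paper's key computation $\tilde{\Phi}_B\circ\tilde{\Phi}_A=\tilde{\Phi}_{BA}$, and you even make explicit the use of $\iota\circ\pi=\id_\Sigma$ that the paper leaves tacit.

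There is, however, a gap in c): you prove the identity only for \emph{single} matrices, whereas the paper proves --- and later needs --- it for \emph{sets} of matrices. The hypothesis \quotes{$X_1,X_2\in\mats$} is a slip in the statement: the map $\Phi$ is defined on $\matspot$, the paper's own proof immediately takes elements \quotes{$A\in X_1$ und $B\in X_2$}, and in the Satz following the lemma, part c) is applied to expressions such as $\Phi_{\{M_{\mathtt{x_j:=t}}\}\circ T_M[u]}$ and $\Phi_{T_M[r_\p]\circ T_M[u]}$, where both arguments are matrix sets and $\circ$ on the left is the elementwise product $\{BA\mid A\in X_1,\ B\in X_2\}$. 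The set version does not follow from your single-matrix case without a further argument: one needs $\Phi_{X_2\circ X_1}=\bigcup\{\tilde{\Phi}_{BA}\mid A\in X_1,\ B\in X_2\}=\bigcup\{\tilde{\Phi}_B\circ\tilde{\Phi}_A\mid A\in X_1,\ B\in X_2\}$, and then the fact that composition of relations distributes over arbitrary unions in both arguments, so that this union equals $\bigcup\{\tilde{\Phi}_B\mid B\in X_2\}\circ\bigcup\{\tilde{\Phi}_A\mid A\in X_1\}=\Phi_{X_2}\circ\Phi_{X_1}$. Note also that your closing remark in c), that \quotes{for each $x$ the intermediate $y$ is uniquely determined}, holds only in the single-matrix (functional) setting and is not available for sets; what replaces it is precisely this union/distributivity argument. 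Parts a), b) and d) need no change.
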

Hierbei bezeichnen wir mit $\circ$, wie in den vorigen Abschnitten definiert, die Komposition von Matrizenmengen bzw.~von Relationen.
\begin{proof}
Die Aussagen a) und b) resultieren direkt aus der Tatsache, dass $\Phi_{\{A\}} = \tilde{\Phi}_A$ für jede Matrix $A\in\Mat(\Sigma)$ gilt, sowie den Identitäten 
\[\pi (I_{n+1} \cdot \iota(x)) = x\] und \[\pi (M_{\mathtt{x_j := t}} \cdot \iota (x)) = (x_1,\dots, x_{j-1}, a_0 + \sum_{i=1}^n a_i x_i, x_{j+1},\dots,x_n).\]

Seien nun $A \in X_1$ und $B \in X_2$. Dann ist
\begin{align*}
\tilde{\Phi}_B \circ \tilde{\Phi}_A 
&= \{ (x,z) \mid \exists y:(x,y)\in \tilde{\Phi}_A \wedge (y,z) \in \tilde{\Phi}_B \} \\
&= \{ (x,z) \mid \exists y:y = \pi(A \cdot \iota (x)) \wedge z = \pi(B \cdot \iota (y)) \} \\
&= \{ (x,z) \mid z = \pi( B \cdot \iota (\pi(A \cdot \iota (x))))\} \\
&= \{ (x,z) \mid z = \pi ( B \cdot A \cdot \iota (x)) \} \\
&= \tilde{\Phi}_{BA}.
\end{align*}
Mit dieser Identität folgt nun c) aus
\begin{align*}
\Phi_{X_2 \circ X_1}
&= \Phi_{\{BA \mid A \in X_1 \wedge B \in X_2\}} \\
&= \bigcup \big\{\tilde{\Phi}_{BA} \mid A \in X_1 \wedge B \in X_2\big\} \\
&= \bigcup \big\{\tilde{\Phi}_{B} \circ \tilde{\Phi}_A \mid A \in X_1 \wedge B \in X_2\big\} \\
&= \bigcup \big\{\tilde{\Phi}_{B} \mid B \in X_2\big\} \circ \bigcup \big\{\tilde{\Phi}_A \mid A \in X_1 \big\} \\
&= \Phi_{X_2} \circ \Phi_{X_1}.
\end{align*}
Ähnlich erhalten wir für $\mathcal{X} \subset \mathcal{M}$
\begin{align*}
\Phi_{\bigcup \mathcal{X}}
&= \Phi_{\bigcup \{X \mid X \in \mathcal{X}\}} \\
&= \bigcup \big\{\tilde{\Phi}_A \mid A \in \bigcup \big\{X \mid X \in \mathcal{X}\big\} \big\} \\
&= \bigcup \big\{ \bigcup \big\{\tilde{\Phi}_A \mid A \in X \big\} \mid X \in \mathcal{X}\big\} \\
&= \bigcup \big\{\tilde{\Phi}_X \mid X \in \mathcal{X}\big\},
\end{align*}
was d) beweist.
\end{proof}

Wir wollen nun die kleinsten Lösungen der Ungleichungssysteme $T_M$ und $T_R$ zueinander in Verbindung setzen. Dabei ist $\underline{T_M}[u] \in \matspot$
für jeden Knoten $u\in N$, also $\underline{T_M} \in (\matspot)^{N}$. Hierbei identifizieren wir einen Vektor ${A}$ mit der Abbildung 
\begin{align*} 
N &\to 2^{\Mat(\Sigma)},\\ 
u &\mapsto {{A}}_u.
\end{align*} 
Entsprechend ist $\underline{T_R} \in (\relspot)^{N}$. Wir definieren also eine Relation \[R \subseteq (\matspot)^{N} \times (\relspot)^{N},\] um die beiden Lösungen miteinander zu vergleichen. Genauer sei 
\[({\mathcal{A}}, \mathcal{B}) \in R :\Longleftrightarrow \forall u \in N:\mathcal{B}(u) = \Phi_{{\mathcal{A}}(u)}.\] 

\begin{satz}
Es gilt $(\underline{T_M}, \underline{T_R}) \in R$.
\end{satz} 
\begin{proof}
Seien $F_M$ bzw.~$F_R$ wiederum diejenigen Abbildungen, die die Ungleichungssysteme $T_M$ bzw.~$T_R$ beschreiben. 
Damit ist $\underline{T_M}=\lfp(F_M)$ und entsprechend $\underline{T_R}=\lfp(F_R)$. 
Also reicht es, die Bedingungen von Satz \ref{relFixpkte} für eben diese Abbildungen $F_M$ und $F_R$ nachzurechnen. 
Offenbar sind $((\matspot)^{N}, \subseteq^N)$ und $((\relspot)^{N},\subseteq^N)$ vollständige Verbände und die $F^M$ und $F^R$ monoton, da sie aus monotonen Funktionen entstehen.

Seien ${\mathcal{A}}\in(\matspot)^{N}, \mathcal{B}\in(\relspot)^{N}$ mit $({\mathcal{A}},\mathcal{B}) \in R$. Um Bedingung a) zu zeigen, muss \[(F_M(\mathcal{A}),F_R(\mathcal{B}))\in R\] gezeigt werden.
Konkret müssen wir also 
\[F_R(\mathcal{B})(u) = \Phi_{F_M(\mathcal{A})(u)}\]
für jedes $u \in N$ nachweisen. Nach Teil d) von Lemma \ref{T-lemma} genügt es, dies für alle Funktionen der ursprünglichen Ungleichungssysteme zu zeigen.

Für die Initialwerte folgt \[(I_{n+1},\{(x,x)\in\rels\}) \in R\] direkt aus Teil a) von Lemma \ref{T-lemma}.
Seien weiter $u \in N$ und $T_R[u] = \Phi_{T_M[u]}$. Mit den Aussagen c) und b) von Lemma \ref{T-lemma} folgt nun
\[
\Phi_{\{M_{\mathtt{x_j:=t}}\} \circ T_M[u]}
= \Phi_{\{M_{\mathtt{x_j:=t}}\}}\circ \Phi_{T_M[u]} 
= R_{\mathtt{x_j:=t}}\circ T_R[u],
\]
was $(\{M_{\mathtt{x_j:=t}}\}, \mathcal{R}_{\mathtt{x_j:=t}})\in R$ liefert.

Sei nun $u \in N$, $\p \in \Proc$, $T_R[u] = \Phi_{T_M[u]}$ and $T_R[r_\p] = \Phi_{T_M[r_\p]}$. 
Wiederum mit den Teilen c) and b) von Lemma \ref{T-lemma} erhalten wir
\begin{align*}
\Phi_{T_M[r_\p]\circ T_M[u]}
= \Phi_{T_M[r_\p]}\circ \Phi_{T_M[u]} 
= T_R[r_\p] \circ T_R[u] 
.\end{align*}
Damit ist auch $(T_M[r_\p], T_R[r_\p])\in R$ und Bedingung a) von Satz \ref{relFixpkte} ist gezeigt.

Seien also $X \subseteq R$ und $u \in N$. 
Für jedes $A \in X_1(u)$ gibt es also ${\mathcal{A}} \in X_1$ mit $A = {\mathcal{A}}(u)$ und ein $\mathcal{B} \in X_2$ mit $(\mathcal{A},\mathcal{B})\in R$ und damit $\Phi_A = B$ für $B:=\mathcal{B}(u)$. 
Dann ist \[\{\Phi_A \mid A \in X_1(u) \} \subseteq \{B \mid B \in X_2(u)\}.\] 
Die andere Inklusion zeigt man analog. Daraus erhalten wir
\[
\Phi_{\bigsqcup X_1(u)}
= \bigsqcup\big\{\Phi_A \mid A \in X_1(u)\big\}
= \bigsqcup\big\{B \mid B \in X_2(u)\big\}
= \bigsqcup X_2(u).
\]
Auch Bedingung b) von Satz \ref{relFixpkte} ist gezeigt. Es folgt die Behauptung.
\end{proof}

\begin{satz}
Für jedes $u \in N$ gilt $\underline{R_M[u]} = \underline{R_R[u]}$.
\end{satz}
\begin{proof}
Die Ungleichungssysteme unterscheiden sich nur durch die Definitionen der Abbildung $f_{R_M,(u,\mathtt{q()},v),\text{ret}}$ und $f_{R_R,(u,\mathtt{q()},v),\text{ret}}$ für diejenigen Ungleichungen, 
die die Rück\-kehr aus einer Prozedur $\q \in \Proc$ beschreiben. es ist
\[f_{R_M,(u,\mathtt{q()},v),\text{ret}} = \alpha_{\text{Mat}}({\underline{T_M}[r_{\q}]}) \circ \pr_u\]
und 
\[f_{R_R,(u,\mathtt{q()},v),\text{ret}} = \alpha_{\text{Rel}}(\underline{T_R}[r_{\q}]) \circ \pr_u\]
für die Projektion $\pr_u: L^N \to L$ auf die Komponente $u$. Darum genügt es, \[\alpha_{\text{Mat}}(\underline{T_M}[r_{\q}]) = \alpha_{\text{Rel}}(\underline{T_R}[r_{\q}])\] zu zeigen.
Sei dazu ${S} \in L$. Zunächst beobachten wir
\begin{align*}
\alpha_{\text{Mat}}(\underline{T_R}[r_{\q}]) ({S})
&= \{ (1,\tilde{y}) \mid \exists (1,\tilde{x}) \in {S} :(\tilde{x},\tilde{y})\in \underline{T_R}[r_{\q}]\} \\
&= \{ y \mid \exists x \in S :(\pi(x),\pi(y))\in \underline{T_R}[r_{\q}]\}.
\end{align*}
Nach dem vorigen Lemma wissen wir bereits $\underline{T_R}[r_{\q}] = \Phi_{\underline{T_M}[r_{\q}]}$. Somit gilt weiter
\begin{align*}
\lefteqn{\{ y \mid \exists x \in {S} :(\pi(x),\pi(y))\in \underline{T_R}[r_{\q}]\}}\\
&= \{ y \mid \exists x \in {S}~\exists A \in \underline{T_M}[r_{\q}]:\pi(y) = \pi(A \cdot \iota(\pi(x)))\} \\
&= \{ y \mid \exists x \in {S}~\exists A \in \underline{T_M}[r_{\q}]:y = A \cdot x\} \\
&= \{ A \cdot x \mid x \in {S} \wedge A \in \underline{T_M}[r_{\q}]\} \\
&= \alpha_{\text{Mat}}(\underline{T_M}[r_\q])(S)
\end{align*}
Es folgt die Behauptung.
\end{proof}

Die beiden in den letzten Abschnitten vorgestellen Analysen mithilfe von Matrizenmengen bzw.~Relationen als Summary-Informationen liefern also das gleiche Ergebnis. 
Später werden wir überprüfen, ob dies für den nächsten Schritt der Polyederanalyse, in dem zusätzlich konvexe Hüllen gebildet werden, immer noch gilt. 
Die Bildung konvexer Hüllen ist eine \emph{abstrakte} Interpretation. Mit solchen werden wir uns in \autoref{chap:abstr} beschäftigen.

\section{Terminierung des Workset-Algorithmus}\label{sec:eff}
In \autoref{sec:funktionaler-ansatz} und \autoref{sec:callstring-ansatz} haben wir zwei Ansätze zur interprozeduralen Datenflussanalyse vorgestellt. 
In beiden Ansätzen werden Ungleichungssysteme aufgestellt, deren kleinste Lösungen genau die gewünschten Informationen über das Programm liefsern. 
In Abschnitt \ref{sec:wla} haben als Hilfsmittel zum Lösen von Ungleichungssystemen den {Workset-Algorithmus} vorgestellt. Falls er terminiert, so liefert er eine Lösung dieses Ungleichungssystems.
Außerdem hatten wir gesehen, dass ein solcher Algorithmus tatsächlich terminiert, falls der zugrundeliegende Verband und die Menge der Ungleichungen endlich sind. 
Ist der Verband dagegen nicht endlich, so kann Terminierung nicht garantiert werden, wie das folgende Beispiel zeigt.
\begin{bsp}
Wir betrachten das Ungleichungssystem $\mathcal{U}$, das gegeben ist durch
\begin{align*}
&A[1] \sqgeq \init\\
&A[2] \sqgeq f_{x:=0}(A[1])\\
&A[2] \sqgeq f_{x:=x+1}(A[2])\\
&A[3] \sqgeq f_{x:=0}(A[2])
\end{align*}
mit $\init := [-\infty,+\infty]$ sowie
\begin{align*}
f_{x:=0}(I) &= \{[0,0]\} 
\intertext{und}
f_{x:=x+1}(I) &= \begin{cases} \emptyset & \text{ falls } I = \emptyset \\ [l+1,u+1] & \text{ falls } I =[l,u] \end{cases}
\end{align*}
für $I \in L_{\text{Int}}$.

Im Algorithmus bekommt nun $A[2]$ aufgrund der zweiten Ungleichung den Wert $[0,0]$. Durch die dritte Ungleichung wird dieser zunächst in einem \quotes{ersten} Schritt ersetzt durch
\[[0,0] \cup [1,1] = [0,1].\]
Anschließend ist die dritte Ungleichung wiederum nicht erfüllt. Im \quotes{zweiten} Schritt wird der Wert von $A[2]$ also ersetzt durch
\[[0,1] \cup [1,2] = [0,2].\]
Die dritte Ungleichung ist weiterhin nicht erfüllt. Dies setzt sich fort, so dass $A[2]$ nach $n$ Schritten den Wert $[0,n]$ hat. 
Da die dritte Ungleichung auch für diesen Wert nicht erfüllt ist, kann der Algorithmus in endlicher Zeit nicht terminieren.
\end{bsp}

Wir haben in diesem Kapitel zwei Ansätze zur Analyse interprozeduraler Programme untersucht. 
Der Call-String-Ansatz simulitert eine interprozedurale Analyse mithilfe von Call-Strings, 
während der funktionale Ansatz zunächst die Effekte der Prozeduren bestimmt und diese dann wie die Transferfunktionen für Basiskanten benutzt. 
Weiter haben wir gezeigt, dass die kleinsten Lösungen beider Ansätze übereinstimmen, sofern die verwendeten Abbildungen gewissen Distributivitätseigenschaften genügen. 

Zuletzt haben wir gesehen, dass eine Berechnung der Lösung von Ungleichungssystemen durch den Workset-Algorithmus nicht unbedingt terminieren muss. 
Deshalb müssen in der Praxis Strategien benutzt werden, mit deren Hilfe neue Analysen gefunden, zu denen Lösungen in endlicher Zeit berechnet werden können. 
Zwei solche Mittel sind die \emph{abstrakte Interpretation} und die Verwendung eines \emph{Widening-Operators}. 
Diese Konzepte werden wir in den beiden kommenden Kapiteln vorstellen und dort wiederum die Ergebnisse des funktionalen und des Call-String-Ansatzes vergleichen.


\chapter{Abstrakte Interpretation}\label{chap:abstr}
Abstrakte Interpretation ist ein von Cousot und Cousot \cite{CousotCousot77-1} entwickeltes Mittel, um den Berechnungsaufwand in praktischen Analysen zu reduzieren. 
Die Idee dabei ist, anstelle der gesuchten eine \quotes{abstrakte} Information zu berechnen. 
Dazu abstrahiert man die verallgemeinerten Transferfunktionen des Ungleichungssystems zu Abbildungen auf dem abstrakten Verband. 
Dadurch erhält man ein Ungleichungssystem, dessen kleinste Lösung unter geeigneten Voraussetzungen genau die abstrakte Version der gesuchten Information ist. 
In dem Fall nennen wir das abstrahierte Ungleichungssystem \emph{präzise}. 
Die Lösung des abstrahierten Ungleichungssystems kann nun im Gegensatz zu der des konkreten Ungleichungssystems endlich darstellbar und in endlicher Zeit berechenbar sein. 
Darum ist abstrakte Interpretation ein starkes Werkzeug bei der Konstruktion effizienter Analysen.

Abstrakte Interpretation kann aber auch ein Mittel zur Konstruktion neuer korrekter oder präziser Analysen sein. 
Dazu wird eine gesuchte Information aus einer anderen berechnet, zu deren Berechnung wiederum eine korrekte oder präzise Anlayse bekannt ist. 
Ist eine solche Analyse gegeben und abstrahiert man diese korrekt und präzise, so gewinnt man daraus wieder eine korrekte bzw.~präzise Analyse. 
Die Bedingungen, die an die abstrakte Interpretation gestellt werden, damit diese korrekt oder präzise ist, können also dazu verwendet werden, aus einer gegebenen Analyse eine neue zu konstruieren.

In \autoref{chap:PA} haben wir mit dem funktionalen und dem Call-String-Ansatz zwei Ansätze zur Analyse interprozeduraler Programme betrachtet, 
die unter geeigneten Voraussetzungen dieselbe kleinste Lösung liefern. 
In diesem Kapitel wollen wir diese beiden Ansätze zusätzlich abstrakt interpretieren, Eigenschaften der so abstrahierten Ungleichungssysteme untersuchen und überprüfen, 
ob beide Ansätze wieder die gleichen Lösungen liefern, wenn zusätzlich abstrakte Interpretation benutzt wird.
Dazu führen wir zunächst \emph{Abstraktionen} und \emph{abstrahierte Ungleichungssysteme} ein 
und untersuchen letztere im Hinblick auf die sogenannte \emph{Korrektheit} und \emph{Präzision} ihrer kleinsten Lösungen. 
Die entsprechenden Begrifflichkeiten finden sich in \cite{CousotCousot77-1} oder auch in \cite{Niel}. 
Anschließend übertragen wir diese Vorgehensweise auf die in Kapitel \ref{chap:PA} vorgestellten Ansätze zur \emph{interprozeduralen} Analyse. 
Zuletzt betrachten wir wieder die Polyederanalyse, bei der wir mithilfe der Bildung \textit{konvexer Hüllen} die Ungleichungssysteme abstrahieren wollen.

\section{Galois-Verbindungen}\label{sec:Galois}
Um einen \quotes{konkreten} und einen \quotes{abstrakten} vollständigen Verband miteinander in Verbindung zu setzen, 
benötigen wir nicht nur eine \emph{Abstraktion}, die einem konkreten Element ein abstraktes zuordnet, 
sondern auch eine \emph{Konkretisierung}, die zu jedem abstrakten Element ein konkretes liefert. 
Solche Abbildungen erhält man durch eine Galois-Verbindung.
\begin{dfn}
Seien $(L,\sqleq)$ und $(L^\sharp, \sqleq^\sharp)$ Halbordnungen und $\alpha: L \to L^\sharp$ sowie $\gamma:L^\sharp \to L$ Abbildungen. 
Dann heißt $(\alpha,\gamma)$ \emph{Galois-Verbindung}\index{Galois-Verbindung}, falls für alle $x \in L$ und $y \in L^\sharp$ gilt:
\[\alpha(x) \sqleq^\sharp y \Longleftrightarrow x \sqleq \gamma(y).\]
Dabei nennt man $\alpha$ \emph{Abstraktion}\index{Abstraktion} und $\gamma$ \emph{Konkretisierung}\index{Konkretisierung}.
\end{dfn}

Wir erinnern an folgendes Beispiel in \cite{Wilhelm-Maurer}.
\begin{bsp}
\newcommand{\mybar}[1]{\bar{#1}}
Zu zeigen ist die Aussage, dass für jede ganze Zahl $z \in \zz$ der Ausdruck $z^2(z+1)^2$ durch $4$ teilbar ist. 
Dazu abstrahieren wir die ganzen Zahlen, indem wir eine Zahl mit dem Rest identifizieren, der bei Division dieser Zahl durch $4$ übrig bleibt. 
Wir haben also
\begin{align*}
L = \zz \text{ und } L^\sharp = \{\mybar{0}, \mybar{1}, \mybar{2}, \mybar{3}\}
\end{align*}
und definieren $\alpha= L \to L^\sharp$ durch $\alpha(z) := \mybar{r}$ für dasjenige $r \in \{0,1,2,3\}$ mit \[z \equiv x \text{ mod } 4.\]

Weiter definieren wir Abstraktionen $\mybar{+}: L^\sharp \times L^\sharp \to L^\sharp$ und $\mybar{\cdot}: L^\sharp \times L^\sharp \to L^\sharp$ 
der Addition $+ : L \times L \to L$ und der Multiplikation $\cdot : L \times L \to L$ durch
\begin{align*}
\mybar{z_1} \mybar{+} \mybar{z_2}		:= \overline{\mybar{z_1} + \mybar{z_2}} 
\text{ und }
\mybar{z_1} \mybar{\cdot} \mybar{z_2}	:= \overline{\mybar{z_1} \cdot \mybar{z_2}}
\end{align*}

Man sieht nun leicht die Gültigkeit von
\begin{align*}
\overline{z_1 + z_2}		= \mybar{z_1} \mybar{+} \mybar{z_2}
\text{ und }
\overline{z_1 \cdot z_2}	= \mybar{z_1} \mybar{\cdot} \mybar{z_2}.
\end{align*}
Es kann nun der Ausdruck 
\[{z^2(z+1)^2}\]
abstrakt interpretiert werden.
Berechnet man diesen abstrakte Ausdruck für jeden möglichen Wert $z \in \{0,1,2,3\}$, so erhält man in jedem der vier Fälle den Wert $\mybar{0}$. Die Aussage
\[\overline{z^2(z+1)^2} = \mybar{0}\]
ist nun äquivalent zu der gewünschten Aussage, dass $z^2(z+1)^2$ durch $4$ teilbar ist.
\end{bsp}
In diesem Beispiel wurde also, anstatt eine Aussage für unendliche viele Elemente $z \in \zz = L$ zu zeigen, eine äquivalente Aussage in einem endlichen Raum $L^\sharp$ bewiesen. 
Auf ähnliche Weise kann es also möglich sein, ein Ungleichungssystem über einem unendlichen Verband $(L, \sqleq)$ so zu abstrahieren, 
dass ein entsprechendes Ungleichungssystem über einem Verband $(L^\sharp,\sqleq^\sharp)$ definiert ist, der endlich oder unendlich mit endlicher Kettenhöhe ist.

Das nächste Lemma zeigt, dass sich Abstraktionen und Konkretisierungen bereits eindeutig gegenseitig festlegen.
\begin{lemma}\label{lemma-eind-abstr}
Sei $(\alpha,\gamma)$ eine Galois-Verbindung zwischen den vollständigen Verbänden $(L\sqleq)$ und $(L^\sharp,\sqleq^\sharp)$. 
Dann bestimmen $\alpha$ und $\gamma$ sich eindeutig durch
\begin{align*}
\alpha(x) = &\bigsqcap\nolimits^\sharp \big\{ y \in L^\sharp \mid x \sqleq \gamma(y) \big\} \\
\gamma(y) = &\bigsqcup \big\{ x \in L \mid \alpha(x) \sqleq y \big\}.
\end{align*}
\end{lemma}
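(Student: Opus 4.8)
Der Plan ist, die beiden Gleichungen getrennt und völlig symmetrisch zu beweisen. In beiden Fällen nutze ich die definierende Äquivalenz der Galois-Verbindung, um die in den Formeln auftretende Menge als Menge aller oberen bzw.\ unterer Schranken eines einzigen Elements umzuschreiben, und reduziere die Behauptung damit auf eine elementare Schrankenbetrachtung. Dass die Formeln $\alpha$ allein durch $\gamma$ und umgekehrt $\gamma$ allein durch $\alpha$ ausdrücken, liefert dann unmittelbar die im Lemma behauptete gegenseitige eindeutige Festlegung.

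Für die erste Gleichung würde ich ein beliebiges $x \in L$ fixieren und die Menge $S := \{y \in L^\sharp \mid x \sqleq \gamma(y)\}$ betrachten. Nach der Galois-Eigenschaft gilt $x \sqleq \gamma(y) \Leftrightarrow \alpha(x) \sqleq^\sharp y$, also ist $S = \{y \in L^\sharp \mid \alpha(x) \sqleq^\sharp y\}$ gerade die Menge aller oberen Schranken von $\alpha(x)$ in $L^\sharp$. Anschließend zeige ich, dass $\alpha(x)$ bereits die größte untere Schranke von $S$ ist: Wegen der Reflexivität von $\sqleq^\sharp$ liegt $\alpha(x)$ selbst in $S$, und nach Konstruktion von $S$ ist $\alpha(x)$ eine untere Schranke von $S$. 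Ist $l$ eine weitere untere Schranke, so folgt $l \sqleq^\sharp \alpha(x)$, da $\alpha(x) \in S$. Damit ist $\alpha(x) = \bigsqcap\nolimits^\sharp S$, was die erste Gleichung liefert.

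Die zweite Gleichung behandle ich spiegelbildlich. Für festes $y \in L^\sharp$ liefert die Galois-Eigenschaft $\alpha(x) \sqleq^\sharp y \Leftrightarrow x \sqleq \gamma(y)$, also ist $T := \{x \in L \mid \alpha(x) \sqleq^\sharp y\} = \{x \in L \mid x \sqleq \gamma(y)\}$ die Menge aller unteren Schranken von $\gamma(y)$. Wie zuvor liegt $\gamma(y)$ selbst in $T$ und ist zugleich obere Schranke von $T$, und jede weitere obere Schranke $u$ erfüllt $\gamma(y) \sqleq u$ wegen $\gamma(y) \in T$. Somit ist $\gamma(y) = \bigsqcup T$.

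Eine echte Schwierigkeit erwarte ich nicht; der entscheidende Schritt ist jeweils nur das Umschreiben der definierenden Menge mithilfe der Galois-Äquivalenz, nach dem sich beide Gleichungen auf die Beobachtung reduzieren, dass ein Element, das zugleich in einer Menge liegt und untere (bzw.\ obere) Schranke dieser Menge ist, bereits deren Infimum (bzw.\ Supremum) sein muss. Die Vollständigkeit der beiden Verbände geht dabei lediglich ein, um überhaupt die Existenz von $\bigsqcap\nolimits^\sharp S$ und $\bigsqcup T$ zu sichern.
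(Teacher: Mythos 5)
Ihr Beweis ist korrekt und folgt im Wesentlichen demselben Weg wie die Arbeit: Die Galois-Äquivalenz wird benutzt, um die Menge $\{y \in L^\sharp \mid x \sqleq \gamma(y)\}$ als Menge der oberen Schranken von $\alpha(x)$ umzuschreiben, woraus $\alpha(x) = \bigsqcap\nolimits^\sharp S$ folgt; analog für $\gamma$. Sie führen lediglich die elementare Schrankenbetrachtung, die die Arbeit stillschweigend voraussetzt, sowie den zweiten (dort nur als \quotes{entsprechend} abgehandelten) Fall explizit aus.
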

\begin{proof}
Für beliebiges $x \in L$ gilt
\begin{align*}
\alpha(x) 
= \bigsqcap\nolimits^\sharp \big\{ y \in L^\sharp \mid \alpha(x) \sqleq^\sharp y \big\}
= \bigsqcap\nolimits^\sharp \big\{ y \in L^\sharp \mid x \sqleq \gamma(y) \big\}.
\end{align*}
Dabei resultiert die letzte Gleichheit aus der Äquivalenz von $\alpha(x) \sqleq^\sharp y$ und $\gamma(y) \sqleq x$.

Diese Darstellung definiert $\alpha$ bereits eindeutig.
Die eindeutige Darstellung von $\gamma$ zeigt man entsprechend.
\end{proof}
Die Konkretisierung zu einer Abstraktion ist also eindeutig. Mit dieser eindeutigen Darstellung kann man nun leicht zeigen, dass Abstraktionen und Konkretisierungen monoton sind.
\begin{kor}\label{kor:galois-ist-mon}
Sei $(\alpha,\gamma)$ eine Galois-Verbindung zwischen den voll\-stän\-di\-gen Ver\-bän\-den $(L\sqleq)$ und $(L^\sharp,\sqleq^\sharp)$. Dann sind $\alpha$ und $\gamma$ monoton.
\end{kor}
\begin{proof}
Seien zunächst $x_1,x_2 \in L$ mit $x_1 \sqleq x_2$. Dann ist offenbar
\[
 \big\{ y \in L^\sharp \mid x_2 \sqleq \gamma(y) \big\}
\subset
 \big\{ y \in L^\sharp \mid x_1 \sqleq \gamma(y) \big\}.
\]
Damit gilt nach Lemma \ref{lemma-eind-abstr}
\[
 \alpha(x_2)
= \bigsqcap\nolimits^\sharp \big\{ y \in L^\sharp \mid x_2 \sqleq \gamma(y) \big\}
\sqgeq^\sharp \bigsqcap\nolimits^\sharp \big\{ y \in L^\sharp \mid x_1 \sqleq \gamma(y) \big\}
= \alpha(x_1).
\]
Also ist $\alpha$ monoton. Die Monotonie von $\gamma$ zeigt man genauso.
\end{proof}

Wir können nun zeigen, dass eine Abbildung zwischen zwei vollständigen Ver\-bän\-den genau dann eine Abstraktion in einer Galois-Verbindung ist, wenn sie universell-distributiv ist. 
\begin{lemma}\label{lem:abstr-sind-univ-distr}
Sind $(L,\sqleq)$ und $(L^\sharp, \sqleq^\sharp)$ zwei vollständige Verbände und $\alpha: L \to L^\sharp$ eine Abbildung, 
so ist $\alpha$ genau dann universell-distributiv, wenn $\alpha$ eine Abstraktion ist.
\end{lemma}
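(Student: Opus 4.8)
The plan is to prove the \quotes{genau dann, wenn} as two separate implications. For \quotes{Abstraktion $\Rightarrow$ universell-distributiv} I would exploit the Galois adjunction together with the monotonicity already recorded in Korollar \ref{kor:galois-ist-mon}. For the converse I would construct the concretization explicitly via the formula from Lemma \ref{lemma-eind-abstr}, namely $\gamma(y) := \bigsqcup\{x \in L \mid \alpha(x) \sqleq^\sharp y\}$, and verify the defining equivalence $\alpha(x) \sqleq^\sharp y \Leftrightarrow x \sqleq \gamma(y)$.

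First I would assume $\alpha$ is the abstraction of a Galois connection $(\alpha,\gamma)$ and show $\alpha(\bigsqcup A) = \bigsqcup\nolimits^\sharp \alpha(A)$ for every $A \subseteq L$, writing $\alpha(A) := \{\alpha(a) \mid a \in A\}$. The inequality $\bigsqcup\nolimits^\sharp\alpha(A) \sqleq^\sharp \alpha(\bigsqcup A)$ is immediate from the monotonicity of $\alpha$, since $\alpha(\bigsqcup A)$ is an upper bound of $\alpha(A)$. For the reverse inequality I set $y := \bigsqcup\nolimits^\sharp \alpha(A)$; then $\alpha(a) \sqleq^\sharp y$ for every $a \in A$, so the adjunction gives $a \sqleq \gamma(y)$, whence $\gamma(y)$ bounds $A$ and $\bigsqcup A \sqleq \gamma(y)$, and applying the adjunction once more yields $\alpha(\bigsqcup A) \sqleq^\sharp y$. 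This argument also covers $A = \emptyset$ and forces $\alpha(\bot) = \bot^\sharp$, so the distributivity obtained is genuinely \emph{universell} and not merely positiv.

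For the converse I assume $\alpha$ universell-distributiv, define $\gamma$ as above, and check the equivalence. The implication \quotes{$\Rightarrow$} is purely formal: if $\alpha(x) \sqleq^\sharp y$ then $x$ lies in the set whose supremum is $\gamma(y)$, hence $x \sqleq \gamma(y)$. The implication \quotes{$\Leftarrow$} is where universal distributivity enters and is the main obstacle. Assuming $x \sqleq \gamma(y)$, monotonicity of $\alpha$ (every distributive map is monoton, as noted after Definition \ref{dfn:monFW}) gives $\alpha(x) \sqleq^\sharp \alpha(\gamma(y))$, and here I apply distributivity to the defining supremum, so that $\alpha(\gamma(y)) = \bigsqcup\nolimits^\sharp\{\alpha(x') \mid \alpha(x') \sqleq^\sharp y\}$. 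Every member of this last set is $\sqleq^\sharp y$ by construction, so $y$ is an upper bound and the supremum is $\sqleq^\sharp y$, giving $\alpha(x) \sqleq^\sharp y$.

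The delicate point worth flagging is precisely that the set $\{x' \mid \alpha(x') \sqleq^\sharp y\}$ may a priori be empty, so that $\gamma(y)$ could equal $\bigsqcup \emptyset = \bot$. It is exactly the \emph{universal} (rather than positive) distributivity that both handles this empty supremum and guarantees $\alpha(\bot) = \bot^\sharp \sqleq^\sharp y$, so that $\bot$ always belongs to the set and the computation of $\alpha(\gamma(y))$ goes through without a case distinction. I expect no further difficulties, since the monotonicity facts are supplied by Korollar \ref{kor:galois-ist-mon} and the shape of $\gamma$ by Lemma \ref{lemma-eind-abstr}.
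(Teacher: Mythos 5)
Your proof is correct and follows essentially the same route as the paper's: the same Galois-adjunction argument (together with the monotonicity from Korollar \ref{kor:galois-ist-mon}) for the direction \quotes{Abstraktion $\Rightarrow$ universell-distributiv}, and for the converse the same explicit construction $\gamma(y) = \bigsqcup\{x \in L \mid \alpha(x) \sqleq^\sharp y\}$ with the same verification of the adjunction, where universal distributivity is applied to $\alpha(\gamma(y))$. Your additional remark on the empty set --- that universal distributivity forces $\alpha(\bot) = \bot^\sharp$, so the set defining $\gamma(y)$ is never empty --- is a correct observation that the paper leaves implicit, but it does not alter the argument.
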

\begin{proof}
Sei $\alpha$ eine Abstraktion und $\gamma$ die zugehörige Konkretisierung. Sei $X \subseteq L$ beliebig. Offenbar gilt für jedes $x \in X$ bereits 
\[\alpha(x) \sqleq^\sharp \bigsqcup\nolimits^\sharp \big\{\alpha(x) \mid x \in X\big\} = \bigsqcup\nolimits^\sharp \alpha(X)\] 
und nach Definition einer Galois-Verbindung damit auch 
\[x \sqleq \gamma\big(\bigsqcup\nolimits^\sharp \alpha(X)\big).\] 
Daraus folgt direkt 
\[\bigsqcup X \sqleq \gamma\big(\bigsqcup\nolimits^\sharp \alpha(X)\big),\] 
was wiederum äquivalent ist zu 
\[\alpha\big(\bigsqcup X\big) \sqleq^\sharp \bigsqcup\nolimits^\sharp \alpha(X).\] 
Nach Korollar \ref{kor:galois-ist-mon} ist $\alpha$ monoton und dies impliziert die umgekehrte Ungleichung und somit Gleichheit. Also ist $\alpha$ universell-distributiv.

Sei nun $\alpha$ universell-distributiv. Definiere wie in Lemma \ref{lemma-eind-abstr} 
\begin{align*}
\gamma: L^\sharp &\to L, \\
y &\mapsto \bigsqcup L_y
\end{align*} 

für \[L_y := \{x \in L \mid \alpha(x) \sqleq^\sharp y\}.\] Da $(L,\sqleq)$ ein vollständiger Verband ist, existiert $\bigsqcup L_y$ und $\gamma$ ist wohldefiniert. 
Ist $y \sqleq^\sharp y'$, so ist offenbar $L_y \subseteq L_{y'}$ und damit $\gamma(y) \sqleq \gamma(y')$. Also ist $\gamma$ monoton. 
Es bleibt zu zeigen, dass $(\alpha,\gamma)$ tatsächlich eine Galois-Verbindung ist.

Seien dazu $x \in L$ und $y \in L^\sharp$. Gelte zunächst $\alpha(x) \sqleq^\sharp y$. Dann ist $x \in L_y$ und damit \[x \sqleq \bigsqcup L_y = \gamma(y).\]

Sei nun umgekehrt $x \sqleq \gamma(y) = \bigsqcup L_y$. Mit der Monotonie und universellen Distributivität von $\alpha$ folgt 
\[\alpha(x) \sqleq^\sharp \alpha\big(\bigsqcup L_y\big) = \bigsqcup\nolimits^\sharp \big\{\alpha(x) \mid x \in L_y\big\} \sqleq^\sharp y,\] 
wobei letzteres daraus resultiert, dass $\alpha(x) \sqleq^\sharp y$ für jedes $x \in L_y$. Damit folgt die Behauptung.
\end{proof}

Das folgende Lemma liefert eine äquivalente Beschreibung des Begriffs \emph{Galois-Ver\-bin\-dung}, die wir im Folgenden benötigen.
\begin{lemma}\label{lemma-eig-galois}
Seien $(L\sqleq)$ und $(L^\sharp,\sqleq^\sharp)$ vollständige Verbände und $\alpha: L \to L^\sharp$ und $\gamma: L^\sharp \to L$. Dann sind die folgenden Aussagen äquivalent:
\begin{enumerate}
 \item Es ist $(\alpha,\gamma)$ ist eine Galois-Verbindung.
 \item Die Abbildungen $\alpha$ und $\gamma$ sind monoton und erfüllen $\alpha \circ \gamma \sqleqmap \id_{L^\sharp}$ und $\gamma \circ \alpha \sqgeqmap \id_L$.
\end{enumerate}
\end{lemma}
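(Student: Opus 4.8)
The plan is to prove the two implications (a)$\Rightarrow$(b) and (b)$\Rightarrow$(a) separately. Both reduce to an entirely elementary argument using reflexivity of the two orders together with transitivity and monotonicity; the only point requiring care is bookkeeping over which order, $\sqleq$ or $\sqleq^\sharp$, each inequality lives in. Before starting I would unfold the two composite conditions pointwise, since $\alpha\circ\gamma \sqleqmap \id_{L^\sharp}$ and $\gamma\circ\alpha \sqgeqmap \id_L$ are inequalities in the function lattices of Beispiel~\ref{bsp:Abbildungsverband} and simply mean $\alpha(\gamma(y)) \sqleq^\sharp y$ for all $y \in L^\sharp$ and $x \sqleq \gamma(\alpha(x))$ for all $x \in L$.

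For (a)$\Rightarrow$(b), I would first invoke Korollar~\ref{kor:galois-ist-mon}, which already establishes that the components of a Galois connection are monotone, so monotonicity of $\alpha$ and $\gamma$ is free. To obtain $\gamma\circ\alpha \sqgeqmap \id_L$, I fix $x \in L$ and instantiate the defining equivalence $\alpha(x) \sqleq^\sharp y \Leftrightarrow x \sqleq \gamma(y)$ at $y := \alpha(x)$; since $\alpha(x) \sqleq^\sharp \alpha(x)$ holds by reflexivity, the equivalence yields $x \sqleq \gamma(\alpha(x))$. Symmetrically, to obtain $\alpha\circ\gamma \sqleqmap \id_{L^\sharp}$, I fix $y \in L^\sharp$ and instantiate at $x := \gamma(y)$; from $\gamma(y) \sqleq \gamma(y)$ the equivalence gives $\alpha(\gamma(y)) \sqleq^\sharp y$.

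For (b)$\Rightarrow$(a), I assume $\alpha,\gamma$ monotone with $\alpha\circ\gamma \sqleqmap \id_{L^\sharp}$ and $\gamma\circ\alpha \sqgeqmap \id_L$ and verify the equivalence in both directions. If $\alpha(x) \sqleq^\sharp y$, then monotonicity of $\gamma$ gives $\gamma(\alpha(x)) \sqleq \gamma(y)$, and chaining this with $x \sqleq \gamma(\alpha(x))$ by transitivity yields $x \sqleq \gamma(y)$. Conversely, if $x \sqleq \gamma(y)$, then monotonicity of $\alpha$ gives $\alpha(x) \sqleq^\sharp \alpha(\gamma(y))$, and chaining with $\alpha(\gamma(y)) \sqleq^\sharp y$ gives $\alpha(x) \sqleq^\sharp y$.

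I do not expect a genuine obstacle: the statement is the standard unit/counit reformulation of an order-theoretic adjunction. The main risk is purely notational, namely correctly translating the function-space inequalities $\sqleqmap$ and $\sqgeqmap$ into their pointwise meaning and keeping the two orders apart in each chaining step. I would also remark that completeness of the lattices is not actually used in this equivalence; only reflexivity, transitivity, and monotonicity enter, so the lemma in fact holds for arbitrary partial orders.
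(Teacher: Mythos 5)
Your proof is correct and follows essentially the same route as the paper: monotonicity via Korollar \ref{kor:galois-ist-mon}, the unit/counit inequalities by instantiating the defining equivalence at $y:=\alpha(x)$ bzw.\ $x:=\gamma(y)$, and the converse by chaining the unit/counit with monotonicity. Your closing observation that completeness of the lattices is never used (so the lemma holds for arbitrary Halbordnungen) is accurate but does not alter the argument.
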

\begin{proof}
Sei zunächst $(\alpha,\gamma)$ eine Galois-Verbindung. Die Monotonie von $\alpha$ und $\gamma$ haben wir bereits in Korollar \ref{kor:galois-ist-mon} gezeigt. 
Für beliebiges $x\in L$ ist trivialerweise $\alpha(x) \sqleq^\sharp \alpha(x)$. Nach Definition der Galois-Verbindung folgt daraus $x \sqleq \gamma \circ \alpha (x)$. 
Ebenso zeigt man $\alpha \circ \gamma \sqleqmap \id_{L^\sharp}$. 

Seien nun $\alpha$ und $\gamma$ monoton und gelte $\alpha \circ \gamma \sqleqmap \id_{L^\sharp}$ und $\gamma \circ \alpha \sqgeqmap \id_L$. 
Sei $x \in L$ und $y \in L^\sharp$ mit $\alpha(x) \sqleq^\sharp y$. Nach Voraussetzung und mit der Monotonie von $\gamma$ ist dann $x \sqleq \gamma \circ \alpha (x) \sqleq \gamma(y)$. 
Analog zeigt man, dass $x \sqleq \gamma(y)$ schon $\alpha(x) \sqleq^\sharp y$ impliziert. Also ist $(\alpha,\gamma)$ eine Galois-Verbindung.
\end{proof}

Die Verknüpfungen von Abstraktion und Konkretisierung liefern also noch nicht die Identität, approximieren diese aber. 
Verknüpft man die Verknüpfungen wiederum mit der Abstraktion oder der Konkretisierung, so erhält man dabei aber die Abstraktion bzw.~die Konkretisierung. Dies zeigt das folgende Lemma.
\begin{lemma}\label{lemma:eig-galois-verknuepfungen}
Sei $(\alpha,\gamma)$ eine Galois-Verbindung zwischen den vollständigen Verbänden $(L,\sqleq)$ und $(L^\sharp,\sqleq^\sharp)$. Dann gilt
\begin{align*}
 \alpha \circ \gamma \circ \alpha = \alpha \text{ und } \gamma \circ \alpha \circ \gamma = \gamma.
\end{align*}
\end{lemma}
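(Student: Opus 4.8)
Der Plan ist, beide Identitäten auf die äquivalente Charakterisierung von Galois-Verbindungen aus Lemma \ref{lemma-eig-galois} zurückzuführen. Dieses Lemma liefert, dass $\alpha$ und $\gamma$ monoton sind und dass $\alpha \circ \gamma \sqleqmap \id_{L^\sharp}$ sowie $\gamma \circ \alpha \sqgeqmap \id_L$ gelten. Genau diese drei Tatsachen werden benötigt; weitere Eigenschaften der Verbände gehen nicht ein, insbesondere keine Vollständigkeit und keine explizite Darstellung aus Lemma \ref{lemma-eind-abstr}.

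Zunächst würde ich $\alpha \circ \gamma \circ \alpha = \alpha$ beweisen, indem ich für jedes $x \in L$ beide Ungleichungen zeige und anschließend die Antisymmetrie von $\sqleq^\sharp$ benutze. Aus $\gamma \circ \alpha \sqgeqmap \id_L$ folgt $x \sqleq \gamma(\alpha(x))$; wendet man die monotone Abbildung $\alpha$ an, so erhält man $\alpha(x) \sqleq^\sharp \alpha(\gamma(\alpha(x)))$, also $\alpha \sqleqmap \alpha \circ \gamma \circ \alpha$. Umgekehrt liefert $\alpha \circ \gamma \sqleqmap \id_{L^\sharp}$, ausgewertet am Punkt $\alpha(x) \in L^\sharp$, gerade $\alpha(\gamma(\alpha(x))) \sqleq^\sharp \alpha(x)$, also $\alpha \circ \gamma \circ \alpha \sqleqmap \alpha$. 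Zusammen folgt die Gleichheit punktweise und damit als Gleichheit von Abbildungen.

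Die zweite Identität $\gamma \circ \alpha \circ \gamma = \gamma$ ist vollständig dual, und ich würde sie symmetrisch argumentieren: Für $y \in L^\sharp$ gibt $\gamma \circ \alpha \sqgeqmap \id_L$, angewandt auf $\gamma(y)$, die Ungleichung $\gamma(y) \sqleq \gamma(\alpha(\gamma(y)))$, während $\alpha \circ \gamma \sqleqmap \id_{L^\sharp}$ zunächst $\alpha(\gamma(y)) \sqleq^\sharp y$ und nach Anwenden der monotonen Abbildung $\gamma$ dann $\gamma(\alpha(\gamma(y))) \sqleq \gamma(y)$ ergibt. Erneut liefert die Antisymmetrie von $\sqleq$ die gewünschte Gleichheit.

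Ein wirkliches Hindernis gibt es hier nicht; der Beweis ist reine Buchführung. Die einzige Sorgfalt besteht darin, die beiden Ungleichungen $\alpha \circ \gamma \sqleqmap \id_{L^\sharp}$ und $\gamma \circ \alpha \sqgeqmap \id_L$ jeweils in der richtigen Richtung einzusetzen und die Monotonie in jeder der beiden Beweishälften genau einmal zu verwenden, einmal auf $\alpha$ und einmal auf $\gamma$.
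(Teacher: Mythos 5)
Dein Beweis ist korrekt und verfährt genauso wie der Beweis der Arbeit: Beide leiten aus Lemma \ref{lemma-eig-galois} die Ungleichungen $\gamma \circ \alpha \sqgeqmap \id_L$ und $\alpha \circ \gamma \sqleqmap \id_{L^\sharp}$ her, verketten sie mit der Monotonie von $\alpha$ zu $\alpha(x) \sqleq^\sharp \alpha(\gamma(\alpha(x))) \sqleq^\sharp \alpha(x)$ und schließen per Antisymmetrie; die zweite Identität behandelst du explizit dual, wo die Arbeit nur \quotes{völlig analog} schreibt.
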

\begin{proof}
Nach Lemma \ref{lemma-eig-galois} ist $\alpha \circ \gamma \sqleqmap \id_{L^\sharp}$ und $\gamma \circ \alpha \sqgeqmap \id_L$. Damit und mit der Monotonie von $\alpha$ folgt für $x \in L$
\begin{align*}
				\alpha(x) 
\sqleq^\sharp 	\alpha \circ (\gamma \circ \alpha) (x)
= 				(\alpha \circ \gamma) \circ \alpha (x)
\sqleq^\sharp 	\alpha(x).
\end{align*}
Somit gilt überall bereits Gleichheit und es folgt die erste Aussage. Die andere Aussage zeigt man völlig analog.
\end{proof}

\begin{bsp}\label{bsp:summary-inform-von-polyederana-sind-abstr}
In \autoref{chap:PA} haben wir den funktionalen Ansatz zur interprozeduralen Datenflussanalyse untersucht, welcher Transferfunktionen als Summary-Informationen berechnet. 
In der Polyederanalyse in \autoref{sec:polyeder} haben wir diese Summary-Informationen jedoch nicht direkt als Transferfunktionen berechnet, sondern zunächst Matrizenmengen bzw.~Relationen berechnet. 
Diese haben wir mithilfe von Abbildungen $\alpha_{\text{Mat}}$ und $\alpha_{\text{Rel}}$ als Transferfunktionen aufgefasst. Diese Abbildungen sind Abstraktionen. 
Um dies zu sehen, reicht es nach Lemma \ref{lem:abstr-sind-univ-distr} nachzuweisen, dass die Abbildungen universell-distributiv sind.

Seien dazu zunächst $\mathcal{M} \subset L_M$ eine Menge von Matrizenmengen und $S \in L$ eine Menge von Zuständen. Dann ist
\begin{align*}
\lefteqn{\alpha_{\text{Mat}} \big(\bigcup \mathcal{M} \big) \big(S\big)}\\
&= \big\{ A \cdot s \mid s \in S \wedge A \in \bigcup\mathcal{M} \big\} \\
&= \big\{ A \cdot s \mid s \in S \wedge (\exists \mathcal{A} \in \mathcal{M}:A \in \mathcal{A} \big) \} \\
&= \bigcup\big\{ \{A \cdot s \mid s \in S \wedge A \in \mathcal{A}\} \mid \mathcal{A} \in \mathcal{M} \big \} \\
&= \bigcup\big\{ \alpha_{\text{Mat}}(\mathcal{A} \mid \mathcal{A} \in \mathcal{M} \} \big( S \big).
\end{align*}
Entsprechend ist für $\mathcal{R} \subset L_R$ und $S \in L$
\begin{align*}
\lefteqn{\alpha_{\text{Rel}} \big(\bigcup \mathcal{R} \big) \big(S\big)} \\
&= \big\{ (1,y) \mid \exists x \in \rr^n: ((1,x) \in S \wedge (x,y) \in \bigcup\mathcal{R}) \big\} \\
&= \big\{ (1,y) \mid \exists x \in \rr^n \exists R \in \bigcup\mathcal{R}: ((1,x) \in S \wedge (x,y) \in R) \big\} \\
&= \big\{ \{(1,y) \mid \exists x \in \rr^n: ((1,x) \in S \wedge (x,y) \in R) \} \mid  R \in \bigcup\mathcal{R} \big\} \\
&= \bigcup\big\{ \alpha_{\text{Rel}}(\mathcal{R} \mid R \in \mathcal{R} \} \big( S \big).
\end{align*}
Also sind $\alpha_{\text{Mat}}$ und $\alpha_{\text{Rel}}$ Abstraktionen.
\end{bsp}

In \autoref{sec:konvexe-polyeder} werden wir sogenannte \quotes{konvexe Hüllen} bilden. 
Dazu definieren wir hier Hüllenoperatoren und zeigen, dass Hüllenoperatoren und Galois-Verbindungen sich gegenseitig induzieren.
\begin{dfn}[Hüllenoperator] 
Sei $(L,\sqleq)$ eine Halbordnung. Eine Abbildung $h: L \to L$ heißt \emph{Hüllenoperator}\index{Hüllenoperator}, falls gilt:
\begin{enumerate}
 \item Die Abbildung $h$ ist \emph{extensiv}, d.h.~für alle $x \in L$ ist $x \sqleq h(x)$.
 \item Die Abbildung $h$ ist \emph{monoton}, d.h.~für alle $x,y \in L$ ist $(x \sqleq y \Rightarrow h(x) \sqleq h(y))$.
 \item Die Abbildung $h$ ist \emph{idempotent}, d.h.~für alle $x \sqleq L$ ist $h(h(x)) = h(x)$.
\end{enumerate}
\end{dfn}

\begin{lemma}\label{lemma-korresp-huell-galois}
Seien $(L,\sqleq)$ und $(L^\sharp,\sqleq^\sharp)$ zwei Halbordnungen. Dann gilt:
\begin{enumerate} 
 \item Ist $(\alpha,\gamma)$ eine Galois-Verbindung zwischen $(L,\sqleq)$ und $(L^\sharp,\sqleq^\sharp)$, so ist die Abbildung $h := \gamma \circ \alpha : L \to L$ ein Hüllenoperator.
 \item Ist $h: L \to L$ ein Hüllenoperator, so ist $(h,\id)$ eine Galois-Verbindung zwischen $(L,\sqleq)$ und $(h(L),\sqleq)$.
\end{enumerate}
\end{lemma}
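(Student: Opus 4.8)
The plan is to treat the two directions separately and reduce each to the three closure axioms (extensivity, monotonicity, idempotence) together with the elementary properties of a Galois connection already at our disposal. For part a), where $h := \gamma \circ \alpha$, monotonicity is immediate: $\alpha$ and $\gamma$ are monotone by Korollar \ref{kor:galois-ist-mon}, and a composition of monotone maps is monotone. Extensivity is precisely the relation $\gamma \circ \alpha \sqgeqmap \id_L$ from Lemma \ref{lemma-eig-galois}, that is, $x \sqleq \gamma(\alpha(x)) = h(x)$ for every $x \in L$. Idempotence follows from the identity $\alpha \circ \gamma \circ \alpha = \alpha$ of Lemma \ref{lemma:eig-galois-verknuepfungen} via
\[h \circ h = \gamma \circ (\alpha \circ \gamma \circ \alpha) = \gamma \circ \alpha = h.\]
Since those auxiliary results were formulated for complete lattices, I would remark that the three facts actually invoked here — monotonicity of $\alpha$ and $\gamma$, the inequality $x \sqleq \gamma(\alpha(x))$, and $\alpha \circ \gamma \circ \alpha = \alpha$ — each follow directly from the defining equivalence $\alpha(x) \sqleq^\sharp y \Leftrightarrow x \sqleq \gamma(y)$ and therefore remain valid over the arbitrary half-orders of the present lemma.

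For part b) I would apply the definition of a Galois connection to the pair $\alpha := h : L \to h(L)$ and $\gamma := \id : h(L) \to L$, so that I must establish, for all $x \in L$ and $y \in h(L)$, the equivalence $h(x) \sqleq y \Leftrightarrow x \sqleq y$. The forward direction is a one-line application of extensivity: from $h(x) \sqleq y$ we obtain $x \sqleq h(x) \sqleq y$. The decisive step for the converse is the observation that every element of the image is a fixed point of $h$: if $y = h(z)$, then idempotence yields $h(y) = h(h(z)) = h(z) = y$. Granting this, $x \sqleq y$ together with monotonicity of $h$ gives $h(x) \sqleq h(y) = y$, as required.

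I expect the main difficulty to be bookkeeping rather than substance. The only genuine mathematical content is the fixed-point characterization of $h(L)$ used in part b); the rest is a mechanical verification of the closure axioms. The point demanding care is that domains and codomains be read correctly — $h$ must be regarded as a map into $h(L)$, $\id$ as the inclusion of $h(L)$ into $L$, and $h(L)$ as carrying the order restricted from $(L,\sqleq)$, so that $(h(L),\sqleq)$ is itself a half-order — after which no computation remains.
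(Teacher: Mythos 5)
Your proof is correct and follows essentially the same route as the paper's: in part a) monotonicity and extensivity come straight from the Galois connection, and in part b) the forward direction uses extensivity while the converse uses that every element of $h(L)$ is a fixed point of $h$ together with monotonicity. The only cosmetic differences are that you derive idempotence from the identity $\alpha \circ \gamma \circ \alpha = \alpha$ of Lemma \ref{lemma:eig-galois-verknuepfungen} instead of the paper's two-inequality sandwich (which is the same computation in disguise), and your remark that the invoked facts follow from the defining equivalence alone -- and hence hold over arbitrary Halbordnungen, not just vollständige Verbände -- is a welcome precision that the paper leaves implicit.
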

\begin{proof} Beide Aussagen lassen sich schnell nachrechnen:
\begin{enumerate}
 \item Setze $h := \gamma \circ \alpha$. Da $\alpha$ und $\gamma$ monoton sind, ist auch $h$ monoton. Weiter gilt $h = \gamma \circ \alpha \sqgeqmap \id_L$. Also ist $h$ auch extensiv. Damit folgt weiter 
 \[ h \circ h \sqgeqmap h \circ \id = h. \]
 Andererseits ist aber auch $\alpha \circ \gamma \sqleqsharpmap \id_{L^\sharp}$ und damit
 \[ h \circ h = \gamma \circ (\alpha \circ \gamma) \circ \alpha \sqleqsharpmap \gamma \circ \id_{L^\sharp} \circ \alpha = h.\]
 Zusammen folgt die Idempotenz von $h$. Somit ist $h$ ein Hüllenoperator.
 \item Seien $x \in L$ sowie $y \in h(L)$ beliebig. Wir zeigen nun die Äquivalenz 
 \[h(x) \sqleq y \Longleftrightarrow x \sqleq \id(y).\]
 Gelte zunächst $h(x) \sqleq y$. Da $h$ extensiv ist, folgt sofort \[x \sqleq h(x) \sqleq y.\]
 Sei nun $x \sqleq y$. Da $y \in h(L)$, existiert ein $x' \in L$ mit $y = h(x')$. Mit der Monotonie und Idempotenz von $h$ erhalten wir nun \[h(x) \sqleq h(y) = h(h(x')) = h(x') = y.\] 
 Also ist $(\alpha,\id)$ eine Galois-Verbindung. \qedhere
\end{enumerate}
\end{proof}

\section{Abstrakte Interpretation von Ungleichungssystemen}
In diesem Abschnitt betrachten wir abstrakte Interpretationen von Ungleichungssystemen. 
Wie zu Beginn des Kapitels erwähnt, soll dabei zu einem Ungleichungssystem $\ugs$ über $L$ und einer Abstraktion $\alpha : L \to L^\sharp$ 
ein Ungleichungssystem $\ugssharp$ über $L^\sharp$ aufgestellt werden. 
Dessen kleinste Lösung soll dem abstrahierten Wert der kleinsten Lösung von $\ugs$ entsprechen oder diesen zumindest approximieren. 

Formal erhält man ein solches abstrahiertes Ungleichungssystem durch eine \emph{abstrakte Interpretation}, 
die den verallgemeinerten Transferfunktionen Abbildungen auf dem {abstrakten} Raum $L^\sharp$ zuordnet. 
Dazu benötigen wir für eine Indexmenge $I$ zunächst einen Lift einer Abstraktion $\alpha: L \to L^\sharp$ auf den Raum $L^I \to L$, der die verallgemeinerten Transferfunktionen enthält.
\begin{lemma}
Sei $(\alpha,\gamma)$ eine Galois-Verbindung zwischen den voll\-stän\-di\-gen Ver\-bän\-den $(L,\sqleq)$ und $(L^\sharp,\sqleq^\sharp)$. 
Definiere $\hat{\alpha}: L^I \to (L^\sharp)^I$ und $\hat{\gamma}: (L^\sharp)^I \to L^I$ durch
\begin{align*}
 \hat{\alpha}((x_i)_{i \in I}) &:= (\alpha(x_i))_{i \in I} \\
 \hat{\gamma}((y_i)_{i \in I}) &:= (\gamma(y_i))_{i \in I}.
\end{align*}
Dann ist $(\hat{\alpha},\hat{\gamma})$ eine Galois-Verbindung zwischen $(L^I,\sqleq^I)$ und $((L^\sharp)^I, (\sqleq^\sharp)^I)$.
\end{lemma}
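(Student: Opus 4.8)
The plan is to verify the defining biconditional of a Galois connection directly for $(\hat{\alpha},\hat{\gamma})$, reducing it index by index to the assumed Galois connection $(\alpha,\gamma)$. Concretely, we must show that for all $x = (x_i)_{i \in I} \in L^I$ and $y = (y_i)_{i \in I} \in (L^\sharp)^I$ the equivalence
\[
\hat{\alpha}(x)\,(\sqleq^\sharp)^I\,y \iff x \sqleq^I \hat{\gamma}(y)
\]
holds, where $(\sqleq^\sharp)^I$ and $\sqleq^I$ denote the componentwise orders introduced in \autoref{sec:UGS}.

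First I would unfold both componentwise orderings. Since vectors are compared pointwise, and since $\hat{\alpha}(x) = (\alpha(x_i))_{i \in I}$ and $\hat{\gamma}(y) = (\gamma(y_i))_{i \in I}$ by definition, the left-hand side of the biconditional says precisely that $\alpha(x_i) \sqleq^\sharp y_i$ holds for every $i \in I$, while the right-hand side says that $x_i \sqleq \gamma(y_i)$ holds for every $i \in I$.

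Next, for each fixed index $i \in I$ I would invoke the defining equivalence of the given Galois connection $(\alpha,\gamma)$, namely $\alpha(x_i) \sqleq^\sharp y_i \Leftrightarrow x_i \sqleq \gamma(y_i)$. As this scalar biconditional holds for each index separately, forming the conjunction over all $i \in I$ yields exactly the desired equivalence between the two vector statements. Hence $(\hat{\alpha},\hat{\gamma})$ is a Galois connection between $(L^I,\sqleq^I)$ and $((L^\sharp)^I,(\sqleq^\sharp)^I)$.

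There is essentially no hard part here: the whole argument is a bookkeeping reduction to the one-dimensional case already assumed. The only point requiring a moment of care is that the componentwise order faithfully translates each vector comparison into a universally quantified family of scalar comparisons, so that the per-index equivalences combine cleanly without any interaction between distinct indices. (Alternatively, one could verify the two conditions of Lemma \ref{lemma-eig-galois} componentwise, checking monotonicity of $\hat{\alpha}$ and $\hat{\gamma}$ together with $\hat{\alpha}\circ\hat{\gamma} \sqleqmap \id$ and $\hat{\gamma}\circ\hat{\alpha} \sqgeqmap \id$; but the direct route above is shorter and needs no auxiliary results.)
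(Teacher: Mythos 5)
Your proof is correct and follows exactly the same route as the paper's: unfold the componentwise orders, apply the scalar Galois equivalence $\alpha(x_i) \sqleq^\sharp y_i \Leftrightarrow x_i \sqleq \gamma(y_i)$ at each index $i \in I$, and recombine the universally quantified statements. The paper's proof is precisely this three-step chain of equivalences, so there is nothing to add.
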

\begin{proof}
Seien $x \in L^I$ und $y \in (L^\sharp)^I$. Dann gilt
\begin{align*}
\hat{\alpha}(x) (\sqleq^\sharp)^I y
&\Longleftrightarrow \forall i \in I:\alpha(x_i) \sqleq^\sharp y_i \\
&\Longleftrightarrow \forall i \in I:x_i \sqleq \gamma(y_i) \\
&\Longleftrightarrow x \sqleq^I \hat{\gamma}(y).
\end{align*}
Es folgt die Behauptung.
\end{proof}

Wir können nun definieren, was wir unter einer abstrakten Interpretation eines Ungleichungssystems verstehen:
\begin{dfn}
Seien $I$ eine Indexmenge und $(L,\sqleq)$ und $(L^\sharp,\sqleq^\sharp)$ vollständige Verbände. 
Eine \emph{abstrakte Interpretation} bezüglich einer Abstraktion $\alpha : L \to L^\sharp$ ist eine Zuordnung $(L^I \to L) \to ((L^\sharp)^I \to L^\sharp)$, 
die jeder konkreten Abbildung $f:L^I \to L$ eine abstrakte Abbildung $f^\sharp:(L^\sharp)^I \to L^\sharp$ zuordnet. 
Falls 
\[\alpha \circ f(x) \sqleq f^\sharp \circ \hat{\alpha}(x)\] 
für alle Abbildungen $f$ und alle $x \in L^N$ gilt,
nennen wir die abstrakte Interpretation \emph{$\alpha$-korrekt}.
Falls sogar
\[\alpha \circ f = f^\sharp \circ \hat{\alpha}\] 
für alle Abbildungen $f$ gilt,
nennen wir die abstrakte Interpretation \emph{$\alpha$-präzise}.

Das \emph{abstrahierte Ungleichungssystem} zu einem Ungleichungssystem $\ugs$ bezüglich einer solchen abstrakten Interpretation ist definiert durch
\[\ugssharp := \{(x_j \sqgeq^\sharp f^\sharp(x)) \mid (x_j \sqgeq f(x)) \in \ugs \}.\]
\end{dfn}

\begin{lemma}\label{lemma-equiv-korr-abstr}
Die folgenden Aussagen sind äquivalent:
\begin{enumerate}
 \item Für alle $x \in L^I$ gilt 			$\alpha \circ f(x) 						\sqleq^\sharp 	f^\sharp \circ \hat{\alpha}(x)$.
 \item Für alle $x \in L^I$ gilt 			$f(x) 									\sqleq 			\gamma \circ f^\sharp \circ \hat{\alpha}(x)$.
 \item Für alle $y \in (L^\sharp)^I$ gilt 	$f \circ \hat{\gamma}(y) 				\sqleq 			\gamma \circ f^\sharp (y)$.
 \item Für alle $y \in (L^\sharp)^I$ gilt 	$\alpha \circ f \circ \hat{\gamma}(y)	\sqleq^\sharp 	f^\sharp (x)$.
\end{enumerate}
\end{lemma}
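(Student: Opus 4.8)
The plan is to run the four assertions around a cycle a)$\Rightarrow$b)$\Rightarrow$c)$\Rightarrow$d)$\Rightarrow$a), separating the steps into two qualitatively different kinds. The implications a)$\Rightarrow$b) and c)$\Rightarrow$d) are nothing more than the defining adjunction of the Galois connection $(\alpha,\gamma)$. Reading a) as $\alpha(f(x)) \sqleq^\sharp f^\sharp(\hat{\alpha}(x))$ for the \emph{single} elements $f(x) \in L$ and $f^\sharp(\hat{\alpha}(x)) \in L^\sharp$, the equivalence $\alpha(a) \sqleq^\sharp b \Leftrightarrow a \sqleq \gamma(b)$ turns it directly into b). In exactly the same manner, applying the adjunction to $f(\hat{\gamma}(y)) \in L$ and $f^\sharp(y) \in L^\sharp$ transforms c) into d). Both of these are in fact equivalences, so nothing beyond citing the definition of a Galois-Verbindung is required there.

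The actual content sits in the two crossing steps, which pass between the statements quantified over $x \in L^I$ (namely a), b)) and those quantified over $y \in (L^\sharp)^I$ (namely c), d)). Here I would substitute and use the semi-inverse inequalities. Since the lifted pair $(\hat{\alpha},\hat{\gamma})$ is again a Galois connection (as just shown), Lemma \ref{lemma-eig-galois} yields $\hat{\alpha} \circ \hat{\gamma} \sqleqmap \id_{(L^\sharp)^I}$ and $\hat{\gamma} \circ \hat{\alpha} \sqgeqmap \id_{L^I}$. For b)$\Rightarrow$c), fix $y$ and instantiate b) at $x := \hat{\gamma}(y)$, giving $f(\hat{\gamma}(y)) \sqleq \gamma(f^\sharp(\hat{\alpha}(\hat{\gamma}(y))))$; because $\hat{\alpha}(\hat{\gamma}(y)) \sqleq^\sharp y$ and $\gamma \circ f^\sharp$ is monotone (monotonicity of $\gamma$ from Korollar \ref{kor:galois-ist-mon}, of $f^\sharp$ from the framework), the right-hand side is at most $\gamma(f^\sharp(y))$, which is c). Symmetrically, for d)$\Rightarrow$a), fix $x$ and instantiate d) at $y := \hat{\alpha}(x)$; from $x \sqleq^I \hat{\gamma}(\hat{\alpha}(x))$ and monotonicity of $\alpha \circ f$ one obtains $\alpha(f(x)) \sqleq^\sharp \alpha(f(\hat{\gamma}(\hat{\alpha}(x)))) \sqleq^\sharp f^\sharp(\hat{\alpha}(x))$, which is a).

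The only place where care is needed — the \emph{de facto} main obstacle — is the monotonicity bookkeeping in these two crossing steps: the argument for b)$\Rightarrow$c) relies on $f^\sharp$ being monotone, whereas d)$\Rightarrow$a) relies on $f$ being monotone. Both are available, since the concrete generalized transfer functions are monotone by the definition of an Ungleichungssystem and the abstract functions $f^\sharp$ are taken monotone as well. I would make this dependence explicit and emphasise that it is precisely the inequalities $\hat{\alpha}\circ\hat{\gamma} \sqleqmap \id$ and $\hat{\gamma}\circ\hat{\alpha} \sqgeqmap \id$ that allow replacing the "displaced" argument $\hat{\alpha}(\hat{\gamma}(y))$ (respectively $\hat{\gamma}(\hat{\alpha}(x))$) by $y$ (respectively $x$), each time at the cost of a single monotonicity application. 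Closing the cycle then establishes all four equivalences simultaneously.
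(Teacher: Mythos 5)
Your proposal is correct and takes essentially the same route as the paper: the pairs a)$\Leftrightarrow$b) and c)$\Leftrightarrow$d) follow directly from the defining adjunction of the Galois-Verbindung, and the crossing step is done by instantiating at $\hat{\gamma}(y)$ (resp.\ $\hat{\alpha}(x)$) and absorbing $\hat{\alpha}\circ\hat{\gamma} \sqleqmap \id$ (resp.\ $\hat{\gamma}\circ\hat{\alpha} \sqgeqmap \id$) via monotonicity. The only difference is organizational — you close a cycle a)$\Rightarrow$b)$\Rightarrow$c)$\Rightarrow$d)$\Rightarrow$a), whereas the paper proves b)$\Leftrightarrow$c) directly and notes the reverse direction is analogous; your explicit bookkeeping of which monotonicity ($f^\sharp$ versus $f$) is used in each crossing step is a welcome refinement of the paper's terser argument.
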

\begin{proof}
Die Äquivalenz der Aussagen a) und b) bzw.~c) und d) folgt jeweils sofort aus der Definition einer Galois-Verbindung. Es genügt also zu zeigen, dass die Aussagen b) und c) äquivalent sind.

Gelte zunächst b) und sei $y \in (L^\sharp)^I$ beliebig. Dann ist $\hat{\gamma}(y) \in L^I$ und nach b) gilt
\[f \circ \hat{\gamma} (y) \sqleq \gamma \circ f^\sharp \circ \hat{\alpha} \circ \hat{\gamma} (y).\]
Da $(\hat{\alpha}, \hat{\gamma})$ eine Galois-Verbindung ist, gilt nach Lemma \ref{lemma-eig-galois} $\hat{\alpha} \circ \hat{\gamma} (y) \sqleq^I y$. Damit folgt das Gewünschte.

Die andere Richtung zeigt man analog. Es folgt die Behauptung.
\end{proof}

Seien im Folgenden eine Galois-Verbindung $(\alpha,\gamma)$ zwischen vollständigen Ver\-bän\-den $(L,\sqleq)$ und $(L^\sharp, \sqleq^\sharp)$, 
eine Indexmenge $I$, ein Ungleichungssystem $\ugs$, eine abstrakte Interpretation und das zugehörige abstrahierte Ungleichungssystem $\ugssharp$ fixiert. 
Wir nennen dann auch $\ugs$ das \emph{konkrete} und $\ugssharp$ das \emph{abstrahierte} Ungleichungssystem. 
Mit $F$ und $F^\sharp$ bezeichnen wir diejenigen Abbildungen, deren Präfixpunkte mit den Lösungen der Ungleichungssysteme übereinstimmen.

Wie eingangs beschrieben, soll mit der abstrakten Interpretation eines Ungleichungssystems nach Möglichkeit der abstrahierter Wert der kleinsten Lösung des Ungleichungssystems berechnet 
oder zumindest approximiert werden. 
Der berechnete Wert $\lfp(F^\sharp)$ soll also mindestens soviel Information enthalten wie $\hat{\alpha}(\lfp(F))$. 
Dass dieser abstrahierte Wert tatsächlich immer approximiert wird, wenn die benutzte abstrakte Interpretation $\alpha$-korrekt ist, ist die Aussage des folgenden Satzes. 
Um den abstrahierten Wert tatsächlich zu erreichen, ist sogar $\alpha$-Präzision vonnöten, wie wir anschließend zeigen werden.
\begin{satz}\label{satz:abstrint-korr}
Sei die zugrundeliegende abstrakte Interpretation $\alpha$-korrekt.
Dann gilt \[\lfp(F^\sharp) \sqgeq^I \hat{\alpha}(\lfp(F)).\]
\end{satz}
\begin{proof}
Wir benutzen Satz \ref{relFixpkte}. 
Dabei betrachten wir die Relation \[R := \{ (x,y) \in L^I \times {L^\sharp}^I \mid \hat{\alpha}(x) {\sqleq^\sharp}^I y\}.\]
Wenn wir für die Abbildungen $F$ und $F^\sharp$ die Voraussetzungen von Satz \ref{relFixpkte} zeigen können, so folgt $(\lfp(F),\lfp(F^\sharp)) \in R$, was genau die gewünschte Aussage ist.

Sei nun $(x,y) \in R$. Um $(F(x),F^\sharp(y)) \in R$ zu zeigen, genügt es, $\alpha(F_i(x)) \sqleq^\sharp F^\sharp_i(y)$ für alle $i \in I$ nach zu weisen. 
Für $i \in I$ ist wegen der universellen Distributivität von $\alpha$
\begin{align*}
\alpha(F_i(x)) 
&= \alpha\big(\bigsqcup \big\{f(x) \mid f\in\mathcal{F}_i\big\} \big)\\
&= \bigsqcup\nolimits^\sharp\big\{\alpha \circ f(x) \mid f\in\mathcal{F}_i \big\} \\
&\sqleq \bigsqcup\nolimits^\sharp\big\{f^\sharp(\hat{\alpha}(x)) \mid f\in\mathcal{F}_i \big\} \\
&= F^\sharp_i(\hat{\alpha}(x)) \\
&\sqleq^\sharp F^\sharp_i(y),
\end{align*}
da $\hat{\alpha}(x) {\sqleq^\sharp}^I y$ nach Voraussetzung und $F^\sharp$ monoton ist. Damit ist Voraussetzung a) aus Satz \ref{relFixpkte} gezeigt.
 
Sei nun $X \subset R$, d.h.~für alle $x=(x^1,x^2) \in X$ und für alle $i \in I$ ist $\hat{\alpha}(x^1_i) \sqleq^\sharp x^2_i$. 
Um $(\bigsqcup X^1, \bigsqcup X^2)\in R$ zu beweisen,
zeigen wir $\hat{\alpha}(\bigsqcup X^1))_i \sqleq^\sharp\ (\bigsqcup^\sharp X^2)_i$ für alle $i \in I$:
\begin{align*}
\big(\hat{\alpha}\big(\bigsqcup X^1\big)\big)_i
&= \alpha\big(\big(\bigsqcup{X}^1\big)_i\big) \\
&= \alpha\big(\bigsqcup{X}^1_i\big) \\
&= \bigsqcup\nolimits^{\sharp}\alpha({X}^1_i) \\
&= \bigsqcup\nolimits^{\sharp}\big\{\alpha(x^1_i)\mid x \in X\big\}
\end{align*}
Für alle $x \in {X}$ haben wir nun $\hat{\alpha}(x^1_i) {\sqleq^\sharp}^N x^2_i$. Damit ist $\bigsqcup^\sharp\{x^2_i \mid x \in {X}\}$ obere Schranke von $\{\hat{\alpha}(x^1_i)\mid x \in {X}\}$. 
Wir erhalten
\[\big(\hat{\alpha}\big(\bigsqcup{X}^1\big)\big)_i
\sqleq^\sharp \bigsqcup\nolimits^{\sharp}\big\{x^2_i\mid X \in {X}\big\}
= \big(\bigsqcup\nolimits^{\sharp} {X}_2\big)_i\]
und auch Voraussetzung b) aus Satz \ref{relFixpkte} ist erfüllt. Es folgt die Behauptung.
\end{proof}

\begin{satz}\label{satz:abstrint-praez}
Sei die zugrundeliegende abstrakte Interpretation $\alpha$-präzise. 
Dann gilt \[\lfp(F^\sharp) = \hat{\alpha}(\lfp(F)).\]
\end{satz}
\begin{proof}
Nach Satz \ref{satz:abstrint-korr} wissen wir bereits, dass $\lfp(F^\sharp) \sqgeq^I \hat{\alpha}(\lfp(F))$ gilt. 
Um die umgekehrte Ungleichung zu zeigen, genügt es nach zu weisen, dass $\hat{\alpha}(\lfp(F))$ eine Lösung des Ungleichungssystems $\ugssharp$ ist. 
Sei dazu $x_j \sqgeq f(x)$ eine beliebige Ungleichung von $\ugs$. 
Da $\lfp(F)$ eine Lösung dieses Ungleichungssystems ist, gilt $\lfp(F)_j \sqgeq f(\lfp(F))$. Mit den Voraussetzungen an $f$ und $f^\sharp$ und der Monotonie von $\alpha$ folgt nun
\begin{align*}
f^\sharp(\hat{\alpha}(\lfp(F)))
= \alpha(f(\lfp(F)))
\sqleq \alpha((\lfp(F))_j)
= (\hat{\alpha}(\lfp(F)))_j.
\end{align*}
Also erfüllt $\hat{\alpha}(\lfp(F))$ die Ungleichung $x_j \sqgeq^\sharp f^\sharp(x)$. Es folgt die Behauptung.
\end{proof}
Wir sagen auch, dass $\ugssharp$ die kleinste Lösung von $\ugs$ \emph{präzise abstrakt} berechnet.

Später werden wir das folgende Lemma benötigen, das eine weitere Eigenschaft von solchen präzisen abstrakt interpretierten Abbildungen liefert.
\begin{lemma}\label{lemma:hilfe-abstr-T-praez}
Seien $f : L^I \to L$ und $f^\sharp: (L^\sharp)^I \to L^\sharp$ monoton mit $f^\sharp \circ \hat{\alpha} = \alpha \circ f$. Dann gilt
\[\alpha \circ f \circ \hat{\gamma} \circ \hat{\alpha} = \alpha \circ f.\]
\end{lemma}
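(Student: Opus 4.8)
The plan is to reduce the claimed identity to the triple-composition identity of Lemma \ref{lemma:eig-galois-verknuepfungen}, applied not to $(\alpha,\gamma)$ but to its lift $(\hat{\alpha},\hat{\gamma})$. First I would recall that the lemma immediately preceding the definition of abstract interpretation shows that $(\hat{\alpha},\hat{\gamma})$ is again a Galois connection, now between $(L^I,\sqleq^I)$ and $((L^\sharp)^I,(\sqleq^\sharp)^I)$. Since $L$ and $L^\sharp$ are complete lattices, their componentwise powers $L^I$ and $(L^\sharp)^I$ are complete lattices as well, their suprema being formed componentwise exactly as in Beispiel \ref{bsp:Abbildungsverband}. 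Hence Lemma \ref{lemma:eig-galois-verknuepfungen} is applicable to $(\hat{\alpha},\hat{\gamma})$ and yields the triple-composition identity
\[\hat{\alpha} \circ \hat{\gamma} \circ \hat{\alpha} = \hat{\alpha}.\]

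The remainder is a short substitution. Starting from the left-hand side of the claim, I would use the $\alpha$-precision hypothesis $\alpha \circ f = f^\sharp \circ \hat{\alpha}$ to rewrite
\[\alpha \circ f \circ \hat{\gamma} \circ \hat{\alpha} = f^\sharp \circ \hat{\alpha} \circ \hat{\gamma} \circ \hat{\alpha},\]
then collapse $\hat{\alpha} \circ \hat{\gamma} \circ \hat{\alpha}$ to $\hat{\alpha}$ by the identity above, obtaining $f^\sharp \circ \hat{\alpha}$, and finally apply the precision hypothesis once more to return to $\alpha \circ f$. This single chain of equalities closes the proof, so no separate treatment of the two inequalities is required.

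There is no genuine obstacle here; the only point demanding a moment's care is that the triple identity must be invoked for the lifted pair $(\hat{\alpha},\hat{\gamma})$ rather than for $(\alpha,\gamma)$, together with the observation that this lift really is a Galois connection between complete lattices — both facts already being available from earlier results. It is worth noting that the monotonicity of $f$ and $f^\sharp$ assumed in the statement is not actually used in the argument: only the precision identity and the properties of the Galois lift enter. Indeed, the triple identity of Lemma \ref{lemma:eig-galois-verknuepfungen} rests solely on the Galois-connection properties recorded in Lemma \ref{lemma-eig-galois}, so the argument would go through for $(\hat{\alpha},\hat{\gamma})$ even if one wished to avoid appealing to completeness of the product lattices.
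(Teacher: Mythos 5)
Your proof is correct and follows essentially the same route as the paper: both arguments combine the precision hypothesis $\alpha \circ f = f^\sharp \circ \hat{\alpha}$ (used twice) with the triple-composition identity $\hat{\alpha} \circ \hat{\gamma} \circ \hat{\alpha} = \hat{\alpha}$ obtained from Lemma \ref{lemma:eig-galois-verknuepfungen}, the only difference being that you read the chain of equalities from the left-hand side of the claim while the paper starts from $\alpha \circ f$. Your additional remarks — that the identity must be invoked for the lifted pair $(\hat{\alpha},\hat{\gamma})$ and that the monotonicity assumptions are never used — are accurate but do not change the substance of the argument.
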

\begin{proof}
Aus der Voraussetzung für $f$ und $f^\sharp$ und Lemma \ref{lemma:eig-galois-verknuepfungen} folgt unmittelbar
\begin{align*}
\alpha \circ f 
= f^\sharp \circ \hat{\alpha} 
= f^\sharp \circ \hat{\alpha} \circ \hat{\gamma} \circ \hat{\alpha} 
= \alpha \circ f  \circ \hat{\gamma} \circ \hat{\alpha} 
\end{align*}
und dies ist die Behauptung.
\end{proof}

\begin{bsp}
In Beispiel \ref{bsp:summary-inform-von-polyederana-sind-abstr} haben wir gesehen, dass die Abbildungen $\alpha_\text{Mat}$ und $\alpha_\text{Rel}$, 
Matrizenmengen oder Relationen als Transferfunktionen auffassen, Abstraktionen sind. 
Definiert man nun ein Ungleichungssystem $T$ der Transferfunktionen durch das Ungleichungssystem \eqref{T-Ugs} aus Abschnitt \ref{sec:funktionaler-ansatz}, 
so ist dies eine präzise abstrakte Interpretation:

Seien dazu zunächst $\mathcal{A}_1, \mathcal{A}_2 \in L_M$ und $S \in L$. Dann gilt 
\begin{align*}
\lefteqn{(\alpha_{\text{Mat}}(\mathcal{A}_2) \circ \alpha_{\text{Mat}}(\mathcal{A}_1)) (S)} \\
&= \alpha_{\text{Mat}}(\mathcal{A}_2) ( \{A \cdot s \mid s \in S \wedge A \in \mathcal{A}_1 \} ) \\
&= \alpha_{\text{Mat}}( \{ B \cdot A \cdot s \mid s \in S \wedge A \in \mathcal{A}_1 \wedge B \in \mathcal{A}_2 \} ) \\
&= \alpha_{\text{Mat}}( \{ C \cdot s \mid s \in S \wedge C \in \mathcal{A}_2 \circ \mathcal{A}_1 \} ) \\
&= \alpha_{\text{Mat}}(\mathcal{A}_2 \circ \mathcal{A}_1)(S).
\end{align*}
Entsprechend ist für $R_1, R_2 \in L_R$ und $S \in L$
\begin{align*}
\lefteqn{(\alpha_{\text{Rel}}(R_2) \circ \alpha_{\text{Rel}}(R_1)) (S)} \\
&= \alpha_{\text{Rel}}(R_2) ( \{(1,y) \mid \exists x \in \rr^n: ((1,x) \in S \wedge (x,y) \in R_1) \} ) \\
&= ( \{(1,z) \mid \exists x,y \in \rr^n: ((1,x) \in S \wedge (x,y) \in R_1 \wedge (y,z) \in R_2)\} ) \\
&= ( \{(1,z) \mid \exists x \in \rr^n:((1,x) \in S \wedge (x,z) \in R_2\circ R_1)\} ) \\
&= \alpha_{\text{Rel}}(R_2 \circ R_1)(S).
\end{align*}
Mit diesen Identitäten können wir nun die gewünschte Aussage leicht aus Satz \ref{satz:abstrint-praez} folgern.
Seien dafür zunächst $\mathcal{A} \in (L_M)^N$ und $S \in L$. Dann gilt
\begin{align*}
\alpha_{\text{Mat}} \circ f_{T_M,\id} (\mathcal{A}) (S)
&= \alpha_{\text{Mat}} ( \{E_{n+1}\} ) (S)\\
&= \{E_{n+1}\cdot s \mid s \in S\}\\
&= S
\end{align*}
und andererseits auch
\begin{align*}
f_{T,\id} \circ \hat{\alpha}_{\text{Mat}} (\mathcal{A}) (S) 
&= \id (S) \\
& = S.
\end{align*}
Also ist $\alpha_M \circ f_{T_M,\id} = f_{T,\id} \circ \hat{\alpha}_M$. 
Weiter ist mit der Rechnung zu Beginn dieses Beispiels
\begin{align*}
\alpha_{\text{Mat}} \circ f_{T_M,(u,\mathtt{x_j := t},v)} (\mathcal{A})
&= \alpha_{\text{Mat}} ( \{ M_{\mathtt{x_j:=t}} \} \circ \mathcal{A}_u ) \\
&= \alpha_{\text{Mat}} ( \{ M_{\mathtt{x_j:=t}} \}) \circ \alpha_{\text{Mat}}(\mathcal{A}_u ) \\
&= f_{\mathtt{x_j:=t}} \circ \alpha_{\text{Mat}}(\mathcal{A}_u ) \\
&= f_{T,(u,\mathtt{x_j:=t},v)} \circ \hat{\alpha}_{\text{Mat}}(\mathcal{A})
\end{align*}
und damit $\alpha_{\text{Mat}} \circ f_{T_M,(u,\mathtt{x_j := t},v)} = f_{T,(u,\mathtt{x_j:=t},v)} \circ \hat{\alpha}_{\text{Mat}}$. 
Ebenso wie für Zuweisungen folgt für Prozeduraufrufe 
$\alpha_{\text{Mat}} \circ f_{T_M,(u,\mathtt{q()},v)} = f_{T,(u,\mathtt{q()},v)} \circ \hat{\alpha}_{\text{Mat}}$ aus
\begin{align*}
\alpha_{\text{Mat}} \circ f_{T_M,(u,\mathtt{q()},v)} (\mathcal{A})
&= \alpha_{\text{Mat}} ( \mathcal{A}_{r_\q} \circ \mathcal{A}_u ) \\
&= \alpha_{\text{Mat}} ( \mathcal{A}_{r_\q}) \circ \alpha_M(\mathcal{A}_u ) \\
&= f_{T,(u,\mathtt{q()},v)} \circ \hat{\alpha}_{\text{Mat}}(\mathcal{A}).
\end{align*}
Nach Satz \ref{satz:abstrint-praez} ist also $\alpha_{\text{Mat}}(\underline{T_M}[u]) = \underline{T}[u]$ für jedes $u \in N$. 
Anstatt das Ungleichungssystem $T$ in \eqref{T-Ugs} direkt zu lösen, 
kann also stattdessen das Ungleichungssystem $T_M$ in \eqref{TM-Ugs} gelöst und dessen kleinste Lösung durch $\alpha_M$ abstrahiert werden, wie es in Abschnitt \ref{sec:polyeder} auch geschehen ist.

Genauso zeigt man $\alpha_{\text{Rel}}(\underline{T_R}[u]) = \underline{T}[u]$ für jedes $u \in N$.
\end{bsp}

Für jede Galois-Verbindung existiert eine $\alpha$-korrekte abstrakte Interpretation: 
Die folgende Definition zeigt, dass sich zu Abbildungen über $L$ in kanonischer Weise Abbildungen auf $L^\sharp$ angeben lassen. 
Der anschließende Satz zeigt, dass diese Zuordnung eine $\alpha$-korrekte abstrakte Interpretation liefert.
\begin{dfn}
Wir nennen die Zuordnung $(L^I \to L) \to ((L^\sharp)^I \to L^\sharp)$ , die einem $f: L^I \to L$ die Abbildung \[f^+:=\alpha \circ f \circ \gamma\] zuordnet, 
\emph{$\alpha$-kanonische abstrakte Interpretation}.
\end{dfn}
Das dadurch abstrahierte Ungleichungssystem $\ugs^+$ nennen wir entsprechend das \emph{$\alpha$-kanonisch abstrahierte Ungleichungssystem}.

\begin{satz}\label{satz:kanonisch-ist-am-besten}
Sei eine beliebige abstrakte Interpretation gegeben, die jedem $f$ ein $f^\sharp$ zuordnet.
\begin{enumerate}
\item Die abstrakte Interpretation ist genau dann $\alpha$-korrekt, wenn $f^+(y) \sqleq^\sharp f^\sharp(y)$ für alle $f$ und alle $y \in (L^\sharp)^I$ gilt.
Insbesondere ist die kanonische abstrakte Interpretation $\alpha$-korrekt und liefert von allen abstrakten Interpretation das kleinste Ergebnis.
\item Falls es eine $\alpha$-präzise abstrakte Interpretation gibt, ist die kanonische abstrakte Interpretation ebenfalls $\alpha$-präzise.
\end{enumerate}
\end{satz}
\begin{proof}
Die gewünschte Äquivalenzaussage folgt direkt aus Lemma \ref{lemma-equiv-korr-abstr}. 
Sie impliziert sofort, dass die kanonische abstrakte Interpretation korrekt ist.
Nach Konstruktion ist nun für alle $y \in (L^\sharp)^I$
\begin{align*}
F^\sharp_i (y) 
& = \bigsqcup\nolimits^\sharp \big\{ f^\sharp (y) \mid f \in \mathcal{F}_i \big\} \\
& \sqgeq^\sharp \bigsqcup\nolimits^\sharp \big\{ f^+ (y) \mid f \in \mathcal{F}_i \big\} \\
&= F^+_i(y).
\end{align*}
Damit ist aber auch $F^+ \overline{(\sqleq^\sharp)^I} F^\sharp$ und so $\lfp(F^+) (\sqleq^\sharp)^I \lfp(F^\sharp)$. Dies beweist den ersten Teil.

Gebe es nun eine $\alpha$-präzise abstrakte Interpretation. Nach Lemma \ref{lemma:hilfe-abstr-T-praez} ist dann $f^+ \circ \alpha = \alpha \circ f \circ \gamma \circ \alpha = \alpha \circ f$, 
also liefert auch $f^+$ eine $\alpha$-präzise abstrakte Interpretation. Es folgt die Behauptung.
\end{proof}

Eine $\alpha$-präzise abstrakte Interpretation zu einer gegeben Galois-Verbindung $(\alpha,\gamma)$ und einem gegebenen Ungleichungssystem muss nicht existieren. 
Dies folgt aus vorigem Satz \ref{satz:kanonisch-ist-am-besten}, wenn man zeigen kann, dass eine $\alpha$-kanonische abstrakte Interpretation nicht die gewünschte präzise Lösung liefert. 
Ein Beispiel dafür ist die folgende abstrakte Interpretation.
\begin{bsp}\label{bsp:kan-nicht-praez}
Wir betrachten die vollständigen Ver\-bän\-de, die durch die Hasse-Dia\-gram\-me in Abbildung \ref{bild:kan-nicht-praez-verband} gegeben sind.
\begin{figure}[ht]\begin{center}\begin{tikzpicture}[node distance = 0.25cm]
\node 	(L)						{$L=$};
\node	(invL)	[right=of L] 	{};
\node	(top)	[right=of invL]	{$\top$};
\node	(c) 	[below =of top]	{$c$};
\node	(b) 	[below =of c]	{$b$};
\node	(a) 	[below =of b]	{$a$};
\node	(bot)	[below =of a]	{$\bot$};
\node	(inv')	[right=of top] 	{};
\node	(inv'')	[right=of inv'] {};
\node	(inv''')[right=of inv'']{};
\node 	(L')	[right=of inv''']{$L^\sharp=$};
\node	(top') 	[right=of L']	{$\top^\sharp$};

\node	(e) 	[below =of top']{$e$};
\node	(d) 	[below =of e]	{$d$};
\node	(bot') 	[below =of d]	{$\bot^\sharp$};
\path[-]
  (top)	edge node {} (c)
  (c)	edge node {} (b)
  (b)	edge node {} (a)
  (a)	edge node {} (bot)
  (top')edge node {} (e)
  (e)	edge node {} (d) 
  (d)	edge node {} (bot')
;
\end{tikzpicture}
\caption{Hasse-Diagramm des zugrundeliegenden Verbandes in Beispiel \ref{bsp:kan-nicht-praez}.}
\label{bild:kan-nicht-praez-verband} 
\end{center}\end{figure}
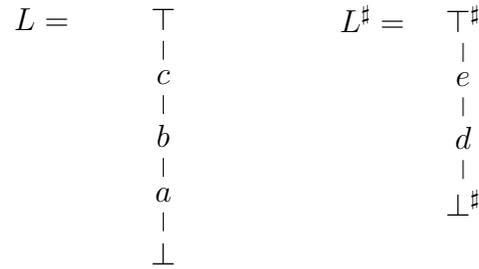

Wir definieren nun eine Galois-Verbindung $(\alpha,\gamma)$ zwischen diesen vollständigen Ver\-bän\-den durch
\begin{align*}
\alpha(x) := 
  \begin{cases}
  \bot^\sharp 	&\text{falls } x=\bot \\
  d 			&\text{falls } x= a\\
  e 			&\text{falls } x\in\{b,c\} \\
  \top^\sharp 	&\text{falls } x= \top
  \end{cases}
&&\text{und}&&
\gamma(y) := 
  \begin{cases}
  \bot	&\text{falls } y=\bot^\sharp \\
  a 	&\text{falls } y= d \\
  c 	&\text{falls } y= e\\
  \top	&\text{falls } y= \top^\sharp.
  \end{cases}
\end{align*}
Man sieht leicht
\begin{align*}
\alpha\circ\gamma&=\id_{L^\sharp}, \\
\gamma\circ\alpha_{|\{\bot,a,c,\top\}} &= \id_{|\{\bot,a,c,\top\}} \intertext{und}
\gamma\circ\alpha(b)&=c\sqsupset b.
\end{align*}
Also ist $\gamma \circ \alpha \sqgeqmap \id_L$ und nach Lemma \ref{lemma-eig-galois} ist $(\alpha,\gamma)$ eine Galois-Verbindung.

Wie betrachten nun ein Ungleichungssystem, das nur eine Variable $x$ benutzt. Die Abbildungen, die das Ungleichungssystem definieren, sind also auf $(L \to L)$ definiert.

Wir definieren eine Abbildung $f : L \to L$ durch
\begin{align*}
f(x) :=& 
  \begin{cases}
  \bot  		&\text{falls } x=\bot \\
  b 			&\text{falls } x\in\{a,b\}\\
  \top 			&\text{falls } x\in\{c,\top\}. 
  \end{cases}
\intertext{Offenbar ist $f$ universell-distributiv. Mit $\mathcal{F}$ bezeichnen wir das kleinste unter Komposition abschlossene System von Abbildungen, das $\id_L$ und $f$enthält. 
Da die Komposition universell-distributiver Abbildungen wieder universell-distributiv ist, ist $(L,\sqleq,\mathcal{F})$ ein universell-distributives monotones Framework. 
Die zugehörige optimale Abstraktion $f^+ = \alpha \circ f \circ \gamma$ erfüllt}
f^+(y) =& 
  \begin{cases}
  \bot^\sharp	&\text{falls } y=\bot^\sharp \\
  e 			&\text{falls } y=d\\
  \top^\sharp	&\text{falls } y\in\{e,\top^\sharp\}. 
  \end{cases}
\end{align*}
Wir betrachten nun ein Ungleichungssystem $\ugs$, das aus den folgenden beiden Ungleichungen besteht:
\begin{alignat*}{3}
 x &\sqgeq a 		&\text{ und } 	x &\sqgeq f(x).
\intertext{Dieses Ungleichungssystem wird durch die Abbildung $F:L \to L$ beschrieben, die durch $F(x) := a \sqcup f(x)$ definiert ist.
Das zugehörige abstrahierte Ungleichungssystem ist definiert durch}
 y &\sqgeq^\sharp d	&\text{ und } 	x &\sqgeq^\sharp f^\sharp(x).
\end{alignat*}
und wird beschrieben durch die Abbildung $F^\sharp: L^\sharp \to L^\sharp$ mit $F^\sharp(y) = d \sqcup^\sharp f^\sharp(y)$.

Man rechnet nun leicht für die kleinsten Fixpunkte
\begin{align*}
 \lfp(F) = b	&\text{ und }	\lfp(F^\sharp) = \top^\sharp
\end{align*}
aus. Es ist aber $\alpha(b) = e \ne \top^\sharp$. Also ist $\lfp(F^\sharp) \ne \alpha(\lfp(F))$. 
Die abstrakte Interpretation ist somit zwar optimal, aber nicht präzise, womit gezeigt ist, dass eine $\alpha$-präzise abstrakte Interpretation nicht existieren muss.
\end{bsp}
Wenn die $\alpha$-kanonische abstrakte Interpretation also nicht das gewünschte Ergebnis liefert, 
so liefert sie doch nach Satz \ref{satz:kanonisch-ist-am-besten} zumindest von allen möglichen abstrakten Interpretationen die präziseste Lösung.

\section{Abstrakte Interpretation für interprozedurale Datenflussanalysen}\label{sec:abstrint-fuer-PA}
Im letzten Abschnitt haben wir abstrakte Interpretationen von Ungleichungssystemen betrachtet, indem wir jede der verwendeten Abbildungen durch eine {abstrahierte} Abbildung ersetzt haben. 
Wir werden in diesem Abschnitt nun die Ungleichungssysteme abstrahieren, die wir in \autoref{chap:PA} zur Analyse interprozeduraler Programme eingeführt haben, 
indem wir in den verallgemeinerten Transferfunktionen jede benutzte Transferfunktion durch eine abstrahierte Abbildung ersetzen. 

Um die so abstrahierten Ungleichungssysteme untersuchen zu können, müssen wir eine zugrundeliegende Galois-Verbindung auf die Räume fortsetzen, über denen die Ungleichungssystem definiert sind:
Zunächst benötigen wir eine Fortsetzung der Abstraktion auf $(L \to L) \to (L^\sharp \to L^\sharp)$, um auch diejenigen Abbildungen, mit denen Transferfunktionen berechnet werden, abstrahieren zu können.
\begin{dfn}
Sei $(\alpha, \gamma)$ eine Galois-Verbindung zwischen den voll\-stän\-di\-gen Ver\-bän\-den $(L,\sqleq)$ und $(L^\sharp, \sqleq^\sharp)$. Wir definieren nun
\begin{align*}
   \alphamap : (L \to_\text{mon} L) &\to (L^\sharp \to_\text{mon} L^\sharp)
  & \text{und}& 
  &\gammamap : (L^\sharp \to_\text{mon} L^\sharp) &\to (L \to_\text{mon} L)
\\ f &\mapsto \alpha \circ f \circ \gamma 
  & &
  &g &\mapsto \gamma \circ g \circ \alpha
\end{align*}
als Fortsetzungen von $\alpha$ und $\gamma$ auf Abbildungen. Diese werden in Abbildung \ref{bild-Forts-Abstr-Transfer} illustriert.
\begin{figure}[ht]
  \begin{center} 
  $ \xymatrix{
	  L \ar[r]^{f} & L \ar[d]^{\alpha} 
  &&	L \ar[r]^{\overline{\alpha}(g)} \ar[d]_{\alpha} & L
  \\	L^\sharp \ar[r]^{\overline{\alpha}(f)} \ar[u]^{\gamma} & L^\sharp
  &&	L^\sharp \ar[r]^{g} & L^\sharp \ar[u]_{\gamma} 
  }$
  \end{center}
\caption{Fortsetzung von Abstraktionen für Transferfunktionen.}
\label{bild-Forts-Abstr-Transfer}
\end{figure}
\end{dfn}
Das folgende Lemma zeigt, dass wir dadurch tatsächlich wieder eine Abstraktion erhalten.
\begin{lemma}
Die Fortsetzung $(\alphamap,\gammamap)$ ist eine Galois-Verbindung zwischen den voll\-stän\-di\-gen Verbänden $((L\to_\text{mon} L),\sqleqmap)$ und $((L^\sharp\to_\text{mon} L^\sharp),\sqleqsharpmap)$. 
\end{lemma}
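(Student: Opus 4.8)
The plan is to reduce everything to the equivalent characterization of Galois connections provided by Lemma \ref{lemma-eig-galois}: a pair of maps between complete lattices forms a Galois connection exactly when both maps are monotone and their two composites bracket the identity from the appropriate sides. Since the monotone function spaces $((L\to_{\text{mon}} L),\sqleqmap)$ and $((L^\sharp\to_{\text{mon}} L^\sharp),\sqleqsharpmap)$ are complete lattices (cf.\ Example \ref{bsp:Abbildungsverband} and the subsequent example treating the monotone case), this characterization is available for the lifted pair $(\alphamap,\gammamap)$, and I would apply it there rather than unfolding the defining equivalence by hand.

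First I would settle well-definedness and monotonicity. The maps $\alphamap(f)=\alpha\circ f\circ\gamma$ and $\gammamap(g)=\gamma\circ g\circ\alpha$ are composites of monotone maps, since $\alpha$ and $\gamma$ are monotone by Korollar \ref{kor:galois-ist-mon} and $f,g$ are monotone by assumption; hence they indeed land in the monotone function spaces. Monotonicity of $\alphamap$ itself is then immediate pointwise: if $f_1\sqleqmap f_2$, then $f_1(\gamma(y))\sqleq f_2(\gamma(y))$ for every $y$, and applying the monotone $\alpha$ gives $\alphamap(f_1)(y)\sqleq^\sharp\alphamap(f_2)(y)$; the argument for $\gammamap$ is symmetric.

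The core of the proof consists of the two bracketing inequalities. For $\gammamap(\alphamap(f))\sqgeqmap f$ with $f\in(L\to_{\text{mon}} L)$, I would evaluate $\gammamap(\alphamap(f))(l)=\gamma(\alpha(f(\gamma(\alpha(l)))))$ at an arbitrary $l\in L$ and chain two facts from Lemma \ref{lemma-eig-galois}: from $l\sqleq\gamma(\alpha(l))$ (i.e.\ $\gamma\circ\alpha\sqgeqmap\id_L$) and the monotonicity of $f$ we get $f(l)\sqleq f(\gamma(\alpha(l)))$, while applying $\gamma\circ\alpha\sqgeqmap\id$ at the point $f(\gamma(\alpha(l)))$ gives $f(\gamma(\alpha(l)))\sqleq\gamma(\alpha(f(\gamma(\alpha(l)))))$; combining yields $f(l)\sqleq\gammamap(\alphamap(f))(l)$. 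Symmetrically, for $\alphamap(\gammamap(g))\sqleqsharpmap g$ with $g\in(L^\sharp\to_{\text{mon}} L^\sharp)$ and $y\in L^\sharp$, I would use $\alpha\circ\gamma\sqleqsharpmap\id_{L^\sharp}$ together with the monotonicity of $g$ to obtain the chain $\alpha(\gamma(g(\alpha(\gamma(y)))))\sqleq^\sharp g(\alpha(\gamma(y)))\sqleq^\sharp g(y)$, where the first inequality applies $\alpha\circ\gamma\sqleqsharpmap\id_{L^\sharp}$ and the second uses $\alpha(\gamma(y))\sqleq^\sharp y$ with $g$ monotone. Both composites then bracket the identity as required, and Lemma \ref{lemma-eig-galois} yields that $(\alphamap,\gammamap)$ is a Galois connection.

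I expect no genuine obstacle here; the only point demanding care is the bookkeeping of composition order, because the two facts $\alpha\circ\gamma\sqleqsharpmap\id$ and $\gamma\circ\alpha\sqgeqmap\id$ point in opposite directions and must be invoked at the correct inner argument. An alternative, equally short route would verify the defining equivalence $\alphamap(f)\sqleqsharpmap g\Leftrightarrow f\sqleqmap\gammamap(g)$ directly, which unfolds into precisely the same chains of inequalities; I would nonetheless keep Lemma \ref{lemma-eig-galois} as the organizing device, since it cleanly separates the routine monotonicity checks from the two semi-inverse conditions.
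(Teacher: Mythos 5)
Your proof is correct, but it is organized differently from the paper's. The paper verifies the defining equivalence $\alphamap(f) \sqleqsharpmap g \Leftrightarrow f \sqleqmap \gammamap(g)$ directly: it fixes $x \in L$, combines $f(x) \sqleq f(\gamma(\alpha(x)))$ (from $\gamma \circ \alpha \sqgeqmap \id_L$ and the monotonicity of $f$) with the hypothesis $\alphamap(f) \sqleqsharpmap g$ rewritten through the Galois connection $(\alpha,\gamma)$ into $f(\gamma(y)) \sqleq \gamma(g(y))$, instantiated at $y = \alpha(x)$, and then declares the converse implication analogous. You instead reduce to the characterization of Lemma \ref{lemma-eig-galois}, checking that $\alphamap$ and $\gammamap$ are monotone operators and that $\gammamap \circ \alphamap \sqgeqmap \id$ and $\alphamap \circ \gammamap \sqleqsharpmap \id$ hold; both of your pointwise chains are correct, and the argument is not circular since Lemma \ref{lemma-eig-galois} is proved independently. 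The two routes rest on the same elementary facts, so the difference is one of bookkeeping. Your route buys symmetry and explicitness: the two bracketing inequalities are genuinely parallel computations, so nothing is delegated to an \emph{analogous} appeal, and you make the well-definedness check (that $\alphamap(f)$ is again monotone) explicit, which the paper leaves tacit. The price is two extra prerequisites that the direct verification never needs: Lemma \ref{lemma-eig-galois} is stated for complete lattices, so you must invoke that $((L\to_\text{mon} L),\sqleqmap)$ and $((L^\sharp\to_\text{mon} L^\sharp),\sqleqsharpmap)$ are complete lattices (available from the example following Beispiel \ref{bsp:Abbildungsverband}, as you note), and you must verify monotonicity of the lifted operators themselves, which in the paper's argument plays no role at all.
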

\begin{proof}
Seien dazu $f: L \to L$ und $g: L^\sharp \to L^\sharp$ beliebige monotone Abbildungen.

Gelte zunächst $\alphamap(f) \sqleqsharpmap g$. Zu zeigen ist $f \sqleqmap \gammamap(g)$. Sei dazu $x \in L$ beliebig. Offenbar gilt
\begin{align*}
\alpha(x) &\sqleq^\sharp \alpha(x).
\intertext{Da $(\alpha,\gamma)$ eine Galois-Verbindung ist, ist dies äquivalent zu}
x &\sqleq \gamma (\alpha (x)).
\intertext{Mit der Monotonie von $f$ gilt damit weiter}
f(x) &\sqleq f(\gamma (\alpha (x))).
\intertext{Nach Voraussetzung gilt für jedes $y \in L^\sharp$}
\alpha \circ f \circ \gamma (y) &= \alphamap(f)(y) \sqleq^\sharp g(y).
\intertext{Da $(\alpha,\gamma)$ eine Galois-Verbindung ist, ist dies äquivalent zu}
f (\gamma (y)) &\sqleq \gamma (g(y)).
\intertext{Für $y := \alpha(x)$ gilt also}
f(x) &\sqleq f(\gamma (\alpha (x))) \\
& \sqleq\gamma(g(\alpha(x))) \\
&= \gammamap(y).
\end{align*}
Dies impliziert $f \sqleqmap \gammamap(g)$. Die umgekehrte Richtung zeigt man analog. Also ist $(\alphamap,\gammamap)$ eine Galois-Verbindung.
\end{proof}

Sei im Folgenden $G = (N_\p,E_\p,s_\p,r_\p)_{\p \in \Proc}$ ein Flussgraphsystem, das ein Programm mit einer endlichen Menge $\Proc$ von Prozeduren beschreibt. 
Sei $(L,\sqleq,\mathcal{F})$ das zugrundeliegende monotone Framework, $(L^\sharp, \sqleq^\sharp)$ ein vollständiger Verband 
und $(\alpha,\gamma)$ eine Galois-Verbindung zwischen diesen vollständigen Verbänden. 
Sei außerdem eine abstrakte Interpretation $\mathcal{F} \to (L^\sharp \to L^\sharp)$ fest ausgewählt, die einer Abbildung $f$ eine Abbildung $f^\sharp$ zuordnet.

Wir definieren nun eine abstrakte Interpretation eines Ungleichungssystems zur Analyse des Programms, 
indem wir in den verallgemeinerten Transferfunktionen $f_{T,\cdot}$, $f_{R,\cdot}$ und $f_{A,\cdot}$ die Abbildungen $f_b$, die für die Basiskanten benutzt werden, abstrakt interpretieren. 
Die dadurch entstehenden Abbildungen nennen wir $f_{T,\cdot}^\sharp, f_{R,\cdot}^\sharp$ bzw.~$f_{A,\cdot}^\sharp$. 

Die Abbildungen $f_{T,\cdot}^\sharp : (L^\sharp \to L^\sharp)^N \to (L^\sharp \to L^\sharp)$ sind gegeben durch
\begin{align*}
 f_{T,\id}^\sharp (h) &:= \id_{L^\sharp}, \\
 f_{T,(u,\mathtt{b},v)}^\sharp (h) &:= f_\mathtt{b}^\sharp \circ h_u, \\
 f_{T,(u,\mathtt{q()},v)}^\sharp (h) &:= h_{r_\q} \circ h_u
\end{align*}
für $h=(h_\node)_{\node \in N} \in (L^\sharp \to L^\sharp)^N$, Basiskanten $(u,\mathtt{b},v)$ und Aufrufkanten $(u,\mathtt{q()},v)$.

Die Abbildungen $f_{R,\cdot}^\sharp : (L^\sharp)^N \to L^\sharp$ sind definiert durch
\begin{align*}
 f_{R,\init}^\sharp (y) &:= \init^\sharp, \\
 f_{R,(u,\mathtt{b},v)}^\sharp (y) &:= f_\mathtt{b}^\sharp (y_u), \\
 f_{R,(u,\mathtt{q()},v),\text{ent}}^\sharp (y) &:= y_u, \\
 f_{R,(u,\mathtt{q()},v),\text{ret}}^\sharp (y) &:= \underline{T}^\sharp[r_\q] (y_{u})
\end{align*}
für $y=(y_\node)_{\node \in N} \in (L^\sharp)^N$, Basiskanten $(u,\mathtt{b},v)$ und Aufrufkanten $(u,\mathtt{q()},v)$.

Zuletzt sind die Abbildungen $f_{A,\cdot}^\sharp : (L^\sharp)^{N_{\CS}} \to L^\sharp$ 
definiert durch
\begin{align*}
f_{A,\init}^\sharp(y) &:= \init^\sharp \\
f_{A,(u,\mathtt{b},v),w}^\sharp(y) &:= f_{\mathtt{b}}^\sharp (y_{(u,w)})\\
f_{A,(u,\mathtt{q()},v),w,\text{ent}}^\sharp (y)&:= y_{(u,w)} \\
f_{A,(u,\mathtt{q()},v),w\cdot(u,\mathtt{q()},v),\text{ret}}^\sharp(y) &:= y_{(r_\q,w\cdot(u,\mathtt{q()},v))}
\end{align*}
wobei wiederum $(u,\mathtt{b},v)$ Basiskanten und $(u,\mathtt{q()},v)$ Aufrufkanten beschreiben.

Um die Lösungen der Ungleichungssysteme besser unterscheiden zu können, schreiben wir im abstrahierten Ungleichungssystem $T^\sharp[u], R^\sharp[u]$ und $A^\sharp[u,w]$ 
anstelle von $T[u], R[u]$ bzw.~$A[u,w]$. 
Damit wir auch wieder über die kleinsten Lösungen der abstrahierten Ungleichungssysteme reden können, fordern wir weiter für die abstrakte Interpretation, 
dass jeder monotonen Abbildung $f \in \mathcal{F}$ eine monotone Abbildung $f^\sharp$ zugeordnert wird.

Wir untersuchen in den folgendenden Abschnitten, ob eine $\alpha$-korrekte, $\alpha$-präzise oder $\alpha$-kanonische abstrakte Interpretation der Transferfunktionen 
auch schon eine entsprechende abstrakte Interpretation der verallgemeinertern Transferfunktionen induziert. 
Dazu müssen wir die Abstraktion $\alpha$ auf die entsprechenden Verbände fortsetzen, die bei den verschiedenen Ungleichungssystemen benutzt werden.

\subsection{Korrekte abstrakte Interpretation}\label{subsec:korrekt-abstrint}
Zu Beginn dieses Abschnittes haben wir eingeführt, wie wir bezüglich einer vorgegebenen abstrakten Interpretation der Transferfunktionen 
auch die verallgemeinerten Transferfunktionen des funktionalen und des Call-String-Ansatzes abstrakt interpretieren wollen. 
In diesem Abschnitt wollen wir untersuchen, ob eine so induzierte abstrakte Interpretation der verallgemeinerten Transferfunktionen korrekt ist, 
wenn die benutzten Transferfunktionen korrekt abstrakt interpretiert werden.

Zunächst betrachten wir die konkreten und die zugehörigen abstrahierten Ungleichungssysteme aus dem Ansatz zur Analyse interprozeduraler Programme, der Summary-Informationen benutzt. 
Die konkreten Systeme sind durch die Ungleichungssysteme \eqref{T-Ugs} und \eqref{R-Ugs} für den funktionalen Ansatz und durch das Ungleichungssystem \eqref{A-Ugs} für den Call-String-Ansatz gegeben. 
Um ein abstrahiertes System zu erhalten, ersetzen wir also, wie zuvor beschrieben, jedes $f_\mathtt{b}$ durch eine abstrahierte Abbildung $f_\mathtt{b}^\sharp$. 
Speziell ändern wir die Initialwerte $\id_L$ zu $\id_{L^\sharp}$ und $\init$ zu $\init^\sharp$. Dies induziert korrekt abstrahierte Ungleichungssysteme, wie der folgende Satz zeigt.
\begin{satz}\label{satz:abstr-T-korr}
Sei die abstrakte Interpretation der Transferfunktionen $\alpha$-korrekt. 
Dann ist die dadurch induzierte abstrakte Interpretation der verallgemeinerten Transferfunktionen $f_{T,\cdot}$ auch $\alphamap$-korrekt.
\end{satz}
\begin{proof}
Wir beweisen die Aussage, in dem wir die verschiedenen Abbildungen $f_{T,\cdot}$ der Reihe nach untersuchen. Sei dazu $h = (h_\node)_{\node \in N} \in \mathcal{F}^N$ beliebig. 

Nach Definition gilt
\[(\alphamap \circ f_{T,\cdot}) (h) 
= \alphamap (f_{T,\cdot}(h))
= \alpha \circ f_{T,\cdot}(h) \circ \gamma \]
und
\[(f_{T,\cdot}^\sharp \circ \hat{\alphamap}) (h)
= f_{T,\cdot}^\sharp (\hat{\alphamap} (h)) 
= f_{T,\cdot}^\sharp ( (\alphamap(h_\node))_{\node \in N})
= f_{T,\cdot}^\sharp ( (\alpha \circ h_\node \circ \gamma))_{\node \in N})
.\]
Es ist also
\[\alpha \circ f_{T,\cdot}(h) \circ \gamma \sqleqsharpmap f_{T,\cdot}^\sharp ( (\alpha \circ h_\node \circ \gamma))_{\node \in N}) \]
zu überprüfen.
\begin{enumerate}\renewcommand{\labelenumi}{\roman{enumi})}
\item 
Zunächst einmal ist 
\begin{align*}
(\alpha \circ f_{T,\id})(h) \circ \gamma
= \alpha \circ \id_L \circ \gamma
= \alpha \circ \gamma.
\end{align*}
Außerdem ist 
\begin{align*} 
f_{T,\id}^\sharp ((\alpha \circ h_\node \circ \gamma)_{\node \in N})
= \id_{L^\sharp}.
\end{align*}
Da $(\alpha,\gamma)$ eine Galois-Verbindung ist, ist $\alpha \circ \gamma \sqleqsharpmap \id_{L^\sharp}$. Es gilt also stets
die gewünschte Ungleichung. 
\item 
Weiter ist für eine Basiskante $(u,\mathtt{b},v)\in E$
\begin{align*}
\alpha \circ f_{T,(u,\mathtt{b},v)}(h) \circ \gamma
= \alpha \circ f_\mathtt{b} \circ h_u \circ \gamma 
\end{align*}
und 
\begin{align*} 
f_{T,(u,\mathtt{b},v)}^\sharp ((\alpha\circ h_\node \circ \gamma)_{\node \in N}))
= f_\mathtt{b}^\sharp \circ \alpha \circ h_u \circ \gamma.
\end{align*}
Dann gilt nach Voraussetzung also bereits 
$\alpha \circ f_\mathtt{b} \sqleqsharpmap f_\mathtt{b}^\sharp \circ \alpha$
und damit ist
die gewünschte Ungleichung erfüllt.
\item 
Zuletzt sei $(u,\mathtt{q()},v) \in E$ eine Aufrufkante. Für diesen ist
\begin{align*}
\alpha \circ f_{T,(u,\mathtt{q()},v)}(h) \circ \gamma
= \alpha \circ h_{r_\q} \circ h_u \circ \gamma
\end{align*}
und 
\begin{align*} 
f_{T,(u,\mathtt{q()},v)}^\sharp ((\alpha \circ h_\node \circ \gamma)_{\node \in N})
&= (\alpha \circ h_{r_\q} \circ \gamma) \circ (\alpha \circ h_u \circ \gamma)\\
&= \alpha \circ h_{r_\q} \circ (\gamma \circ \alpha) \circ h_u \circ \gamma
\end{align*}
Da stets $\id_L \sqleqmap \gamma \circ \alpha$ gilt, ist
die gewünschte Ungleichung immer erfüllt.
\end{enumerate}\renewcommand{\labelenumi}{\alph{enumi})}
Es folgt die Behauptung.
\end{proof}
Insbesondere gilt also für jeden Knoten $u \in N$
\begin{align*}
\underline{T}^\sharp[u] \sqgeqsharpmap \alphamap(\underline{T}[u]).
\end{align*}

Wir betrachten nun das Ungleichungssystem zur Berechnung der Informationen an den Programmpunkten. Als Initialwert des abstrahierten Ungleichungssystem benutzen wir $\init^\sharp := \alpha(\init)$.
\begin{satz}\label{satz:abstr-R-korr}
Sei die abstrakte Interpretation der Transferfunktionen $\alpha$-korrekt. 
Dann ist die dadurch induzierte abstrakte Interpretation der verallgemeinerten Transferfunktionen $f_{R,\cdot}$ auch $\alpha$-korrekt.
\end{satz}
\begin{proof}
Sei $x = (x_\node)_{\node \in N} \in L^I$. Wir überprüfen die Aussage wieder der Reihe nach für alle verwendeten Abbildungen.
\begin{enumerate}\renewcommand{\labelenumi}{\roman{enumi})}
\item
Zunächst ist \[(\alpha \circ f_{R,\init})(x) = \alpha(\init)\] und \[(f_{R,\init}^\sharp \circ \hat{\alpha}) (x) = \init^\sharp.\]
Wegen $\init^\sharp = \alpha(\init)$ ist also sogar Gleichheit erfüllt. 
\item
Für Basiskanten $(u,\mathtt{b},v) \in E$ ist nun
\[(\alpha \circ f_{R,(u,\mathtt{b},v)})(x) = \alpha(f_\mathtt{b}(x_u))\]
 und 
\[(f_{R,(u,\mathtt{b},v)}^\sharp \circ \hat{\alpha}) (x) = f_\mathtt{b}^\sharp(\alpha(x_u)).\]
Nach Voraussetzung gilt also 
die gewünschte Ungleichung.
\item Sei nun $(u,\mathtt{q()},v)$ eine Aufrufkante. Dann ist einerseits
\[(\alpha \circ f_{R,(u,\mathtt{q()},v),\text{ent}})(x) 
= \alpha(x_u)
\] 
und 
\[(f_{R,(u,\mathtt{q()},v),\text{ent}}^\sharp \circ \hat{\alpha}) (x) 
= \alpha(x_u)\]
und es gilt immer Gleichheit.
Andererseits ist 
\[(\alpha \circ f_{R,(u,\mathtt{q()},v),\text{ret}})(x) 
= \alpha(\underline{T}[r_\q](x_u))
\] 
und 
\[(f_{R,(u,\mathtt{q()},v),\text{ret}}^\sharp \circ \hat{\alpha})(x) 
= \underline{T^\sharp}[r_\q](\alpha(x_u)).\]
Wie wir in Satz \ref{satz:abstr-T-korr} gesehen haben, ist die abstrakte Interpretation von $T$ stets korrekt. Somit erhalten wir auch hier direkt
die gewünschte Ungleichung.
\end{enumerate}\renewcommand{\labelenumi}{\alph{enumi})}
Es folgt die Behauptung. 
\end{proof}
Für jeden Knoten $u \in N$ gilt also
\begin{align*}
\underline{R}^\sharp[u] \sqgeq \alpha(\underline{R}[u]).
\end{align*}

Wir betrachten nun das Ungleichungssystem \eqref{A-Ugs}, das eine Analyse mithilfe von Call-Strings beschreibt.
Wie im funktionalen Ansatz fordern wir als Startinformation des abstrakt interpretierten Ungleichungssystems $\init^\sharp = \alpha(\init)$.
\begin{satz}\label{satz:abstr-A-korr}
Sei die abstrakte Interpretation der Transferfunktionen $\alpha$-korrekt. 
Dann ist die dadurch induzierte abstrakte Interpretation der verallgemeinerten Transferfunktionen $f_{A,\cdot}$ auch $\alphacs$-korrekt.
\end{satz}
\begin{proof}
Wie zuvor betrachten wir wieder die verschiedenen Abbildungen. Sei dazu $x \in L^{N_{\CS}}$.
\begin{enumerate}\renewcommand{\labelenumi}{\roman{enumi})}
\item Zunächst ist wiederum
\begin{align*}
(\alpha \circ f_{A,\init})(x) 
= \alpha(\init)
\end{align*}
und
\begin{align*}
(f_{A,\cdot}^\sharp \circ \widehat{\alpha})(x)
= \init^\sharp
\end{align*}
Da $\init^\sharp = \alpha(\init)$, gilt hier wiederum sogar Gleichheit.

\item Weiter ist für Basiskanten $(u,\mathtt{b},v) \in E$
\begin{align*}
({\alpha} \circ f_{A,(u,\mathtt{b},v),w})(x) 
= \alpha(f_{\mathtt{b}}(x_{u,w}))
\end{align*}
und 
\begin{align*}
(f_{A,(u,\mathtt{b},v),w}^\sharp \circ \widehat{{\alpha}}) (x)
= f_{\mathtt{b}}^\sharp (\alpha(x_{u,w})).
\end{align*}
Nach Voraussetzung gilt also die gewünschte Ungleichung.

\item Wir betrachten nun eine Aufrufkante $e=(u,\mathtt{q()},v) \in E$. Dann ist
\begin{align*}
({\alpha} \circ f_{A,(u,\mathtt{q()},v),w,\text{ent}})(y)
= \alpha(y_{u,w})
\end{align*}
und
\begin{align*}
(f_{A,(u,\mathtt{q()},v),w,\text{ent}}^\sharp \circ \widehat{{\alpha}}) (y)
= \alpha(y_{u,w})
\end{align*}
und es gilt sogar Gleichheit. Für die Abbildung, die die Rückkehr aus einer Prozedur beschreibt, ist
\begin{align*}
({\alpha} \circ f_{A,(u,\mathtt{q()},v),w\cdot e,\text{ret}})(y) 
= \alpha(h_{r_\q,w \cdot e})
\end{align*}
und ebenso
\begin{align*}
(f_{A,(u,\mathtt{q()},v),w\cdot e,\text{ret}}^\sharp \circ \widehat{{\alpha}}) (y)
= \alpha(h_{r_\q,w \cdot e}).
\end{align*}
Auch hier gilt also stets Gleichheit.
\end{enumerate}\renewcommand{\labelenumi}{\alph{enumi})}
Es folgt die Behauptung.
\end{proof}
Für jeden Knoten $u \in N$ und Call-String $w \in \CS$ erhalten wir also 
\begin{align*}
\underline{A}^\sharp[u,w]
\sqgeq^\sharp \alpha(\underline{A}[u,w]).
\end{align*}
Damit ist
\begin{align*}
\hat{A}^\sharp[u]
&= \bigsqcup\nolimits^\sharp \big\{ \underline{A}^\sharp[u,w] \mid w \in \underline{\CS}[\p] \big\} \\
&\sqgeq^\sharp \bigsqcup\nolimits^\sharp \big\{ \alpha(\underline{A}[u,w]) \mid w \in \underline{\CS}[\p] \big\} \\
&= \alpha \big(\bigsqcup \big\{ \underline{A}[u,w] \mid w \in \underline{\CS}[\p] \big\} \big)\\
&= \alpha (\hat{A}[u]).
\end{align*}
Die Lösung des Ungleichungssystem, das eine Analyse durch Call-Strings beschreibt, 
wird also durch eine $\alpha$-korrekte abstrakte Interpretation der Transferfunktionen wieder $\alpha$-korrekt abstrahiert. 
Im nächsten Abschnitt zeigen wir, dass auch $\alpha$-Präzision übertragen wird.

\subsection{Präzise abstrakte Interpretation}\label{subsec:praezise-abstrint}
In \autoref{subsec:korrekt-abstrint} haben wir Transferfunktionen korrekt abstrakt interpretiert 
und dadurch in beiden Ansätzen zur interprozeduralen Datenflussanalyse korrekte abstrakte Interpretation der Ungleichungssysteme erhalten. 
Wir werden nun sehen, dass durch eine präzise abstrakte Interpretation der Transferfunktionen auch die Ungleichungssysteme beider Ansätze präzise abstrahiert werden, 
wenn wir für die Galois-Verbindung zusätzlich $\alpha \circ \gamma = \id_{L^\sharp}$ fordern.

Diese Forderung ist natürlich: Nach Definition gilt zunächst $\alpha \circ \gamma \sqleqsharpmap \id_{L^\sharp}$. 
Ist nun für ein $y \in L^\sharp$ tatsächlich 
\[y' := (\alpha \circ \gamma) (y) \sqsubset y,\]
so folgt mit Lemma \ref{lemma:eig-galois-verknuepfungen}
\[\gamma(y') = (\gamma \circ \alpha \circ \gamma) (y) = \gamma(y).\]
Die beiden abstrakten Werte $y$ und $y'$ werden also demselben konkreten Wert zugeordnet. 
Somit können Werte im abstrakten Raum unterschieden werden, die im konkreten Raum nicht unterschieden werden können. 
Dies ist unnatürlich, da der konkrete Raum genauere Informationen liefern sollte als der abstrakte.

\begin{satz}\label{satz:abstr-T-praez}
Sei die abstrakte Interpretation der Transferfunktionen $\alpha$-präzise und gelte $\alpha \circ \gamma = \id_{L^{\sharp}}$. 
Dann ist die dadurch induzierte abstrakte Interpretation der verallgemeinerten Transferfunktionen $f_{T,\cdot}$ auch $\alphamap$-präzise.
\end{satz}
\begin{proof}
Sei $h = (h_\node)_{\node \in N} \in \mathcal{F}^N$ beliebig. 
Nach dem Beweis von Satz \ref{satz:abstr-T-korr} bleibt nun
\begin{align*}
\alpha \circ \gamma &= \id_{L^\sharp},
\intertext{für eine Basiskante $(u,\mathtt{b},v)\in E$}
\alpha \circ f_\mathtt{b} \circ h_u \circ \gamma &= f_\mathtt{b}^\sharp \circ \alpha \circ h_u \circ \gamma,
\intertext{und für einen Prozeduraufruf $(u,\mathtt{q()},v) \in E$}
\alpha \circ h_{r_\q} \circ h_u \circ \gamma &= \alpha \circ h_{r_\q} \circ (\gamma \circ \alpha) \circ h_u \circ \gamma
\end{align*}
zu zeigen. Die ersten beiden Gleichungen folgen direkt aus der Voraussetzung. Die letzte Gleichung folgt aus Lemma \ref{lemma:hilfe-abstr-T-praez}.
\end{proof}
Es gilt also $\underline{T}^\sharp[u] = \alphamap(\underline{T}[u])$ für jeden Knoten $u \in N$.

Wir betrachten nun das Ungleichungssystem zur Berechnung der eigentlichen Informationen.
\begin{satz}\label{satz:abstr-R-praez}
Sei die abstrakte Interpretation der Transferfunktionen $\alpha$-präzise und gelte $\alpha \circ \gamma = \id_{L^{\sharp}}$. 
Dann ist die dadurch induzierte abstrakte Interpretation der verallgemeinerten Transferfunktionen $f_{R,\cdot}$ auch $\alpha$-präzise.
\end{satz}
\begin{proof}
Sei $x = (x_\node)_{\node \in N} \in L^I$. 
Nach dem Beweis von Satz \ref{satz:abstr-R-korr} 
muss nur noch für Basiskanten $(u,\mathtt{b},v) \in E$
\begin{align*}
\alpha(f_\mathtt{b}(x_u)) &= f_\mathtt{b}^\sharp(\alpha(x_u))
\intertext{und für Prozeduraufrufe $(u,\mathtt{q()},v) \in E$}
\alpha(\underline{T}[r_\q](x_u)) &= \underline{T^\sharp}[r_\q](\alpha(x_u))
\end {align*}
gezeigt werden. Ersteres folgt direkt aus der Voraussetzung und letzteres aus Satz \ref{satz:abstr-T-praez}.
\end{proof}
Es gilt also $\underline{R}^\sharp[u] = \alpha(\underline{R}[u])$ für jeden Knoten $u \in N$.

Unter der zusätzlichen Voraussetzung $\alpha \circ \gamma = \id_{L^\sharp}$ gewinnt man aus der präzisen abstrakten Interpretation der Transferfunktionen bereits präzise abstrahierte Ungleichungssysteme,
wie sie zum Berechnen von Informationen mithilfe von Summary-Informationen benutzt werden. Ein entsprechendes Resultat gilt auch für den Call-String-Ansatz.
\begin{satz}\label{satz:abstr-A-praez}
Sei die abstrakte Interpretation der Transferfunktionen $\alpha$-präzise. 
Dann ist die dadurch induzierte abstrakte Interpretation der verallgemeinerten Transferfunktionen $f_{A,\cdot}$ auch $\alphacs$-präzise.
\end{satz}
\begin{proof}
Nach dem Beweis von Satz \ref{satz:abstr-A-korr} genügt es, für Prozeduren $\p \in \Proc$, Basiskanten $(u,\mathtt{b},v) \in E_\p$, Call-Strings $w \in \underline{\CS}[p]$ und $x \in L^{N_{\CS}}$
\begin{align*}
\alpha(f_{\mathtt{b}}(y_{u,w})) &= f_{\mathtt{b}}^\sharp (\alpha(y_{u,w}))
\end{align*}
zu zeigen. Dies folgt aber direkt aus der Voraussetzung.
\end{proof}
Es gilt also $\underline{A}^\sharp[u,w] = \alphacs(\underline{A}[u,w])$ für alle $(u,w) \in N_{\CS}$. Ähnlich wie für Satz \ref{satz:abstr-A-korr} erhalten wir nun 
\begin{align*}
\hat{A}^\sharp[u] = \alpha (\hat{A}[u]).
\end{align*}
Die Lösung des Ungleichungssystem aus dem Call-String-Ansatz  wird also durch eine präzise abstrakte Interpretation der Transferfunktionen wieder präzise abstrahiert.

In diesem Fall liefern also beide Ansätze zur Analyse interprozeduraler Programme wieder dasselbe Ergebnis, wenn die verwendeten Transferfunktionen geeignete Distributivitätsbedingungen erfüllen.
\begin{kor}
Sei das monotone Framework $(L,\sqleq,\mathcal{F})$ universell-distributiv oder positiv-distributiv und in letzterem Fall alle Programmpunkte erreichbar. 
Sei die zugrundeliegende abstrakte Interpretation der Transferfunktionen $\alpha$-präzise und gelte $\alpha \circ \gamma = \id_{L^\sharp}$. 
Dann stimmen die kleinsten Lösungen der abstrahierten Ungleichungssysteme überein, d.h.~es gilt \[\underline{R}^\sharp = \hat{A}^\sharp.\]
\end{kor}
\begin{proof}
Dies folgt unmittelbar aus den Sätzen \ref{satz:abstr-R-praez}, \ref{satz:koinzidenz} und \ref{satz:abstr-A-praez}: Für beliebiges $u \in N$ gilt
\[\underline{R}^\sharp[u] 
= \alpha(\underline{R}[u])
= \alpha(\hat{A}[u])
= \hat{A}^\sharp[u].\]
Dies ist die Behauptung.
\end{proof}

\subsection{Kanonische abstrakte Interpretation}\label{subsec:kanonische-abstrint}
In \autoref{subsec:korrekt-abstrint} und \autoref{subsec:praezise-abstrint} haben wir gesehen, 
dass unter den Voraussetzungen $\alpha \circ \gamma = \id_{L^\sharp}$ und $\init^\sharp = \alpha(\init)$ 
eine präzise abstrakte Interpretation der verwendeten Transferfunktionen zu einer präzisen abstrakten Interpretation der Ungleichungssysteme aus dem funktionalen und dem Call-String-Ansatz führt. 
Wir werden nun zeigen, dass unter denselben Forderungen eine kanonische abstrakte Interpretation der Transferfunktionen 
wieder eine kanonische abstrakte Interpretation der verallgemeinerten Transferfunktionen induziert. 
Sei also im Folgenden $f^+ = \alpha \circ f \circ \gamma$ für die Transferfunktionen $f \in \mathcal{F}$. 
Die dadurch abstrahierten verallgemeinerten Transferfunktionen bezeichnen wir mit $f_{T,\cdot}^\sharp$, $f_{R,\cdot}^\sharp$ und $f_{A,\cdot}^\sharp$. 
Die kanonisch abstrahierten verallgemeinerten Transferfunktionen bezeichnen wir entsprechend mit $f_{T,\cdot}^+$, $f_{R,\cdot}^+$ und $f_{A,\cdot}^+$. 
Diese sind gegeben durch
\begin{align*}
 f^+_{T,\cdot} &:= \alphamap \circ f_{T,\cdot} \circ \hat{\gammamap}, \\
 f^+_{R,\cdot} &:= \alpha \circ f_{R,\cdot} \circ \hat{\gamma}, \\
 f^+_{A,\cdot} &:= \alphacs \circ f_{A,\cdot} \circ \hat{\gammacs}.
\end{align*}
Wir zeigen nun, dass die kanonisch oder durch die $f^+$ abstrahierten verallgemeinerten Transferfunktionen jeweils übereinstimmen, wenn wir zusätzlich $\alpha \circ \gamma = \id_{L^\sharp}$ fordern. 
Wir beginnen mit dem funktionalen Ansatz.

\begin{satz}\label{satz:abstr-T-kan}
Sei $\alpha \circ \gamma = \id_{L^\sharp}$.
Dann gilt $f_{T,\cdot}^\sharp = f_{T,\cdot}^+$.
\end{satz}
\begin{proof}
Wir betrachten wiederum der Reihe nach die verschiedenen Abbildungen $f_{T,\cdot}$ und überprüfen $f_{T,\cdot}^\sharp = f_{T,\cdot}^+$. 
Dabei ist $f_{T,\cdot}^+ := \alphamap \circ f_{T,\cdot} \circ \hat{\gammamap}$ wie in Abbildung \ref{bild:kan-lift-T}. 
\begin{figure}[ht]
 \begin{center} 
  $ \xymatrix{
		\mathcal{F}^N 			\ar[r]^{f_{T,\cdot}}
  &		\mathcal{F}										\ar[d]^{\alphamap}
  \\	(\mathcal{F}^\sharp)^N 	\ar[r]^{f_{T,\cdot}^+}	\ar[u]_{\hat{\gammamap}}
  &		\mathcal{F}^\sharp
  }$
  \end{center}
\caption{Kanonische abstrakte Interpretation der Abbildungen $f_{T,\cdot}$.}
\label{bild:kan-lift-T}
\end{figure}

Für $h =(h_\node)_{\node \in N} \in (\mathcal{F}^\sharp)^N$ ist also 
\begin{align*}
f_{T,\cdot}^+ (h) 
= \alphamap (f_{T,\cdot} (\hat{\gammamap}(h)))
= \alpha \circ (f_{T,\cdot} ((\gamma \circ h_\node \circ \alpha)_{\node \in N}) \circ \gamma.
\end{align*}

Wir benutzen bei jeder der folgenden Rechnungen $\alpha \circ \gamma = \id_{L^\sharp}$. Zunächst gilt für die Startinformation
\begin{align*} 
f_{T,\id}^+ (h)
&= \alpha \circ f_{T,\id} ((\gamma \circ h_\node \circ \alpha)_{\node \in N}) \circ \gamma \\
&= \alpha \circ \id_L \circ \gamma \\
&= \alpha \circ \gamma \\
&= \id_{L^\sharp} \\
&= f_{T,\id}^\sharp (h).
\intertext{
Sei $(u,\mathtt{b},v)\in E$ eine Basiskante. Dann ist 
}
f_{T,(u,\mathtt{b},v)}^+ (h)
&= \alpha \circ f_{T,(u,\mathtt{b},v)} ((\gamma \circ h_\node \circ \alpha)_{\node \in N}) \circ \gamma \\
&= \alpha \circ (f_\mathtt{b} \circ (\gamma \circ h_u \circ \alpha)) \circ \gamma \\
&= (\alpha \circ f_\mathtt{b} \circ \gamma) \circ h_u \circ (\alpha \circ \gamma) \\
&= f_\mathtt{b}^+ \circ h_u \\
&= f_{T,(u,\mathtt{b},v)}^\sharp (h).
\intertext{
Sei nun $(u,\mathtt{q()},v) \in E$ eine Aufrufkante. Dann gilt
}
f_{T,(u,\mathtt{q()},v)}^+ (h)
&= \alpha \circ f_{T,(u,\mathtt{()},v)} ((\gamma \circ h_\node \circ \alpha)_{\node \in N}) \circ \gamma \\
&= \alpha \circ (\gamma \circ h_{r_\q} \circ \alpha) \circ (\gamma \circ h_u \circ \alpha) \circ \gamma \\
&= (\alpha \circ \gamma) \circ h_{r_\q} \circ (\alpha \circ \gamma) \circ h_u \circ (\alpha \circ \gamma) \\
&= h_{r_\q} \circ h_u \\
&= f_{T,(u,\mathtt{b},v)}^\sharp (h). 
\end{align*}
Es folgt die Behauptung.
\end{proof}

Eine entsprechende Aussage können wir auch für das Ungleichungssystem zur Berechnung der gesuchten Informationen zeigen:
\begin{satz}\label{satz:abstr-R-kan}
Sei $\alpha \circ \gamma = \id_{L^\sharp}$.
Dann gilt $f_{R,\cdot}^\sharp = f_{R,\cdot}^+$.
\end{satz}
\begin{proof}
Wir überprüfen also $f_{R,\cdot}^\sharp = f_{R,\cdot}^+$ für $f_{R,\cdot}^+ := \alpha \circ f_{R,\cdot} \circ \hat{\gamma}$ wie in Abbildung \ref{bild:kan-lift-R}. 
\begin{figure}[ht]
 \begin{center} 
  $ \xymatrix{
		L^N 			\ar[r]^{f_{R,\cdot}}
  &		L										\ar[d]^{\alpha}
  \\	(L^\sharp)^N 	\ar[r]^{f_{R,\cdot}^+}	\ar[u]_{\hat{\gamma}}
  &		L^\sharp
  }$
  \end{center}
\caption{Kanonische abstrakte Interpretation der Abbildungen $f_{R,\cdot}$.}
\label{bild:kan-lift-R}
\end{figure}

Sei $y = (y_\node)_{\node \in N} \in (L^\sharp)^N$ beliebig. Dann gilt für die Startinformation
\begin{align*}
f_{R,\init}^+ (y)
&= (\alpha \circ f_{R,\init} \circ \hat{\gamma}) (y) \\
&= \alpha(\init) \\
&= \init^\sharp \\
&= f_{R,\init}^\sharp (y).
\intertext{
Für eine Basiskante $(u,\mathtt{b},v) \in E$ ist
}
f_{R,(u,\mathtt{b},v)}^+ (y)
&= (\alpha \circ f_{R,(u,\mathtt{b},v)} \circ \hat{\gamma}) (y) \\
&= \alpha(f_\mathtt{b}(\gamma(y_u))) \\
&= f_\mathtt{b}^+(y_u) \\
&= f_{R,(u,\mathtt{b},v)}^\sharp (y).
\intertext{
Zuletzt gilt für Aufrufe $(u,\mathtt{q()},v) \in E$ einerseits
}
f_{R,(u,\mathtt{q()},v),\text{ent}}^+ (y)
&= (\alpha \circ f_{R,(u,\mathtt{q()},v),\text{ent}} \circ \hat{\gamma}) (y) \\
&= \alpha(\gamma(y_u)) \\
&= y_u \\
&= f_{R,(u,\mathtt{q()},v),\text{ent}}^\sharp (y).
\intertext{
und andererseits mit Satz \ref{satz:abstr-T-kan}}
f_{R,(u,\mathtt{q()},v),\text{ret}}^+ (y)
&= (\alpha \circ f_{R,(u,\mathtt{q()},v),\text{ret}} \circ \hat{\gamma}) (y) \\
&= \alpha(\underline{T}[r_\q]((\gamma(y_u))) \\
&= \alphamap(\underline{T}[r_\q])(y_u) \\
&= \underline{T}^+[r_\q](y_u) \\
&= f_{R,(u,\mathtt{q()},v),\text{ret}}^\sharp (y).
\end{align*}
Es folgt die Behauptung.
\end{proof}
Im funktionalen Ansatz erhält man also eine kanonische Abstraktion der verallgemeinerten Transferfunktionen, wenn die benutzten Transferfunktionen kanonisch abstrahiert werden.
Dies gilt auch für den Call-String-Ansatz, wie der folgende Satz zeigt.
\begin{satz}\label{satz:abstr-A-kan}
Sei $\alpha \circ \gamma = \id_{L^\sharp}$.
Dann gilt $f_{A,\cdot}^\sharp = f_{A,\cdot}^+$.
\end{satz}
\begin{proof}
Wir überprüfen also $f_{A,\cdot}^\sharp = f_{A,\cdot}^+$ für $f_{A,\cdot}^+ := \alphacs \circ f_{A,\cdot} \circ \hat{\gammacs}$ wie in Abbildung \ref{bild:kan-lift-A}. 
\begin{figure}[ht]
 \begin{center} 
  $ \xymatrix{
		L^{N_{\CS}} 			\ar[r]^{f_{A,\cdot}}
  &		L										\ar[d]^{\alphacs}
  \\	(L^\sharp)^{N_{\CS}} 	\ar[r]^{f_{A,\cdot}^+}	\ar[u]_{\hat{\gammacs}}
  &		L^\sharp
  }$
  \end{center}
\caption{Kanonische abstrakte Interpretation der Abbildungen $f_{A,\cdot}$.}
\label{bild:kan-lift-A}
\end{figure}

Sei nun $y \in (L^{\sharp})^{N_{\CS}}$. Für die Abbildungen gilt allgemein
\begin{align*}
f_{A,\cdot}^+(y)
= (\alphacs \circ f_{A,\cdot} \circ \hat{\gammacs})(y)
= \alpha \big(f_{A,\cdot}((\gammacs(y_{\node,w}))_{(\node,w) \in N_{\CS}})\big).
\end{align*}
Wir betrachten nun die einzelnen Abbildungen.

Für die Startinformation gilt dann
\begin{align*}
f_{A,\init}^+ (y)
&= \alpha \big(f_{A,\init}((\gammacs(y_{\node,w}))_{(\node,w) \in N_{\CS}})\big)\\
&= \alpha(\init) \\
&= \init^\sharp \\
&= f_{A,\init}^\sharp (y).
\intertext{
Ist $(u,\mathtt{b},v) \in E_\p$ eine Basiskante, so ist
}
f_{A,(u,\mathtt{b},v),w}^+ (y)
&= \alpha \big(f_{A,(u,\mathtt{b},v),w}((\gammacs(y_{\node,w}))_{(\node,w) \in N_{\CS}})\big)\\
&= \alpha\big( f_\mathtt{b} (\gamma(y_{u,w})) \big) \\
&= f_\mathtt{b}^+ (y_{u,w}) \\
&= f_{A,(u,\mathtt{b},v),w}^\sharp (y).
\intertext{
Entsprechend gilt für Aufrufkanten $e=(u,\mathtt{q()},v) \in E_\p$ 
}
f_{A,(u,\mathtt{q()},v),\text{ent}}^+ (y)
&= \alpha \big(f_{A,(u,\mathtt{q()},v),w, \text{ent}}((\gamma(y_{\node,w}))_{(\node,w) \in N_{\CS}})\big)\\
&= \alpha\big(\gammacs(y_{u,w}) \big) \\
&= y_{u,w} \\
&= f_{A,(u,\mathtt{q()},v),w,\text{ent}}^\sharp (y).
\intertext{
und
}
f_{A,(u,\mathtt{q()},v),\text{ret}}^+ (y)
&= \alpha \big(f_{A,(u,\mathtt{q()},v),w \cdot e ,\text{ret}}((\gamma(y_{\node,w}))_{(\node,w) \in N_{\CS}})\big)\\
&= \alpha(\gamma(y_{r_\q,w \cdot e})) \\
&= y_{r_\q,w \cdot e} \\
&= f_{A,(u,\mathtt{q()},v),w \cdot e,\text{ret}}^\sharp (y).
\end{align*}
Es folgt die Behauptung.
\end{proof}
In einer Situation, in der die Transferfunktionen nicht präzise abstrakt interpretiert werden können, ist es also grundsätzlich sinnvoll, sie kanonisch zu interpretieren, 
da dies nach Satz \ref{satz:kanonisch-ist-am-besten} die \quotes{beste}, d.h.~kleinste korrekte Lösung auf dem abstrakten Verband liefert. 
Die vorigen Sätze \ref{satz:abstr-R-kan} und \ref{satz:abstr-A-kan} zeigen nun, dass dadurch auch die verallgemeinerten Transferfunktionen kanonisch abstrahiert werden. 
Die zugehörigen abstrahierten Ungleichungssysteme liefern also auch wieder die beste Lösung auf dem abstrakten Verband. 
Für den funktionalen und den Call-String-Ansatz zur interprozeduralen Datenflussanalyse ist demnach zu empfehlen, die Transferfunktionen stets kanonisch abstrakt zu interpretieren, 
da dies in jedem Fall die kleinste korrekte Lösung liefert. 
In Situationen, in denen es sogar präzise abstrakte Interpretationen gibt, erhält man eine solche durch die kanonische abstrakte Interpretation.

Im folgenden Abschnitten betrachten wir wiederum die Polyederanalyse und studieren abstrakte Interpretationen anhand der Bildung konvexer Hüllen. 
Die letzten beiden Lemmata liefern ein Indiz dafür, dass die Bildung konvexer Hüllen im Rahmen der Polyederanalyse eine korrekte oder sogar präzise abstrakte Interpretation ist.

\section{Polyederanalyse und konvexe Hüllenbildung}\label{sec:konvexe-polyeder}
Im Rahmen der Polyederanalyse werden zu jedem Programmpunkt affine Ungleichungen der Gestalt $\sum_{i=1}^n a_i x_i \ge a_0$ bestimmt, 
welche für die möglichen Werte der Variablen an diesem Programmpunkt gelten. 
Die Menge aller Werte, die solche Ungleichungen erfüllen, bildet ein konvexes Polyeder. 

In \autoref{sec:polyeder} haben wir bereits Datenflussanalysen zur Berechnung der erreichbaren Variablenwerte eingeführt. 
Gültige Ungleichungen über diese Variablenwerte erhalten wir nun, indem wir die konvexen Hüllen dieser Wertemengen bestimmen. Dabei orientieren wir uns an \cite{seidl07}. 
Wir werden zeigen, dass die Bildung konvexer Hüllen eine Abstraktion ist. 
Ziel ist nun, eine präzise abstrakte Interpretation der Transferfunktionen aus \autoref{sec:polyeder} zu finden, so dass die konvexen Mengen direkt berechnet werden können, 
ohne dafür zuerst die Ungleichungen zur Berechnung der erreichbaren Variablenwerte zu lösen. 
Wir werden also in diesem Abschnitt zunächst eine abstrakte Interpretationen für die Ungleichungssysteme aus \autoref{sec:polyeder} einführen 
und diese dann im Hinblick auf Korrektheit und Präzision untersuchen. 
Wir sagen im Folgenden \quotes{korrekt} anstelle von \quotes{$\alpha$-korrekt} und \quotes{präzise} anstelle von \quotes{$\alpha$-präzise} für einen konkreten Hüllenoperator $\alpha$. 
Wir werden sehen, dass wir für den Ansatz mit Matrizenmengen eine präzise abstrakte Interpretation angeben können, 
wohingegeben eine analog definierte abstrakte Interpretation für den Ansatz mit Relationen zwar korrekt, aber im Allgemeinen nicht präzise ist. 
Im Gegensatz zu \autoref{sec:polyeder} liefern die beiden Ansätze hier also nicht dasselbe Ergebnis.

Wir werden die Ungleichungssysteme aus \autoref{sec:polyeder} durch die Bildung konvexer Hüllen abstrahieren. 
Um anschließende Aussagen über die Korrektheit und Präzision einer solchen abstrakten Interpretation zu treffen, 
benutzen wir die folgenden beiden Lemmata, die Aussagen über beliebige Hüllenoperatoren treffen. 
Das erste Lemma zeigt dabei, dass eine abstrakte Interpretation durch Hüllenbildung bereits die kanonische abstrakte Interpretation ist. 
Das zweite Lemma benennt Spezialfälle für Abbildungen, die bezüglich dieser abstrakten Interpretation schon die Gestalt haben, 
wie sie in der Definition einer präzisen abstrakten Interpretation gefordert wird. 
Wir erinnern daran, dass eine kanonische abstrakte Interpretation nur dann präzise ist, wenn die zugrundeliegende Galois-Verbindung eine präzise abstrakte Interpretation zulässt, 
und dass letzteres nicht immer der Fall ist. Dies haben wir durch Satz \ref{satz:kanonisch-ist-am-besten} und Beispiel \ref{bsp:kan-nicht-praez} gezeigt. 
Das zweite Lemma ist also nicht trivial.

\begin{lemma}\label{lemma:huell-korr}
Sei $(L,\sqleq)$ ein vollständiger Verband, $\alpha : L \to L$ ein Hül\-len\-ope\-ra\-tor, $f:L^I \to L$ für eine Indexmenge $I$ und $f^\sharp := \alpha \circ f$ für jeden verwendeten Operator $f$. 
Dann ist
$f^\sharp = f^+$.
Insbesondere ist eine solche abstrakte Interpretation $\alpha$-korrekt und sogar $\alpha$-präzise, falls $(\alpha,\id)$ eine präzise abstrakte Interpretation erlaubt.
\end{lemma}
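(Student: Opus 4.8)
The plan is to reduce the claimed identity $f^\sharp = f^+$ to a single observation: for a closure operator the accompanying concretization is trivial. First I would invoke Lemma~\ref{lemma-korresp-huell-galois}~b): since $\alpha$ is a Hüllenoperator, $(\alpha,\id)$ is a Galois-Verbindung between $(L,\sqleq)$ and the image lattice $(\alpha(L),\sqleq)$. Working inside this Galois connection we therefore take $L^\sharp := \alpha(L)$ and, crucially, $\gamma := \id$, the inclusion $\alpha(L)\hookrightarrow L$.

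With $\gamma=\id$, the lifted concretization occurring in the canonical interpretation $f^+=\alpha\circ f\circ\hat\gamma$ satisfies $\hat\gamma\big((y_i)_{i\in I}\big)=(\gamma(y_i))_{i\in I}=(y_i)_{i\in I}$; that is, $\hat\gamma$ is just the componentwise inclusion $(L^\sharp)^I\hookrightarrow L^I$. Consequently, for every $y\in(L^\sharp)^I$,
\[ f^+(y)=\alpha\big(f(\hat\gamma(y))\big)=\alpha\big(f(y)\big)=f^\sharp(y), \]
so $f^\sharp=f^+$. This is essentially the entire argument. The only point needing a word of care is that $f^\sharp=\alpha\circ f$ is to be read as a map on $(L^\sharp)^I$, whose codomain lands in $\alpha(L)=L^\sharp$ automatically because $\alpha$ is a closure operator. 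I expect no real obstacle here; the substantive step is merely recognizing that the concretization collapses to the identity.

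The two supplementary assertions then follow immediately from Satz~\ref{satz:kanonisch-ist-am-besten}. Its part~a) states that the kanonische abstrakte Interpretation is always $\alpha$-korrekt; since we have shown $f^\sharp=f^+$, the abstract interpretation induced by hull formation is $\alpha$-korrekt. For the precision statement I would appeal to part~b) of the same theorem, which guarantees that if the underlying Galois-Verbindung $(\alpha,\id)$ admits any $\alpha$-präzise abstrakte Interpretation, then the canonical one is itself $\alpha$-präzise; together with $f^\sharp=f^+$ this yields exactly the conditional precision claimed. Thus the whole lemma rests on the identification $\gamma=\id$ (hence $\hat\gamma=\mathrm{incl}$), after which the earlier Satz~\ref{satz:kanonisch-ist-am-besten} does all the work.
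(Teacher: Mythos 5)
Your proposal is correct and follows exactly the paper's (very terse) proof, which simply cites Lemma~\ref{lemma-korresp-huell-galois} and Satz~\ref{satz:kanonisch-ist-am-besten}: you have merely made explicit the key identification $\gamma=\id$ (hence $\hat\gamma=\mathrm{incl}$ and $f^+=\alpha\circ f\circ\hat\gamma=\alpha\circ f=f^\sharp$) that the paper leaves implicit. Nothing to add.
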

\begin{proof}
Dies folgt unmittelbar aus Lemma \ref{lemma-korresp-huell-galois} und Satz \ref{satz:kanonisch-ist-am-besten}.
\end{proof}

\begin{lemma}\label{lemma:huell-prec}
Sei $(L,\sqleq)$ ein vollständiger Verband, $\alpha : L \to L$ ein Hül\-len\-ope\-ra\-tor, $f: L^I \to L $ monoton und $f^\sharp := \alpha \circ f$. 
Ist $f$ konstant oder eine Projektion, so gilt $f^\sharp \circ \hat{\alpha} = \alpha \circ f$.
\end{lemma}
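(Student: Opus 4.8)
The plan is to unfold both sides of the claimed identity and reduce it to the single equation $\alpha \circ f \circ \hat{\alpha} = \alpha \circ f$, which I will then check separately in the two cases. Recall that the lift $\hat{\alpha} : L^I \to L^I$ acts componentwise, that is $\hat{\alpha}((x_i)_{i \in I}) = (\alpha(x_i))_{i \in I}$, and that by Lemma \ref{lemma-korresp-huell-galois} the pair $(\alpha,\id)$ forms a Galois-Verbindung, so $\alpha$ is in particular ein Hüllenoperator; the only property I will actually invoke is its Idempotenz $\alpha \circ \alpha = \alpha$. Since $f^\sharp = \alpha \circ f$ by hypothesis, the assertion $f^\sharp \circ \hat{\alpha} = \alpha \circ f$ says precisely that $\alpha(f(\hat{\alpha}(x))) = \alpha(f(x))$ holds for every $x = (x_i)_{i \in I} \in L^I$.

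First I would dispose of the constant case. If $f \equiv c$ for some fixed $c \in L$, then $f$ ignores its argument, so $f(\hat{\alpha}(x)) = c = f(x)$ for every $x \in L^I$. Applying $\alpha$ to both sides gives $\alpha(f(\hat{\alpha}(x))) = \alpha(c) = \alpha(f(x))$, which is exactly the desired equation. This case uses nothing about $\alpha$ beyond its being a well-defined map.

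Next I would treat the projection case. Suppose $f$ is the projection onto the $j$-th coordinate, i.e.\ $f(x) = x_j$ for a fixed index $j \in I$. Because $\hat{\alpha}$ acts componentwise, $f(\hat{\alpha}(x)) = \hat{\alpha}(x)_j = \alpha(x_j)$, and therefore $\alpha(f(\hat{\alpha}(x))) = \alpha(\alpha(x_j))$. Here the one nontrivial step enters: by Idempotenz of the Hüllenoperator $\alpha$ we have $\alpha(\alpha(x_j)) = \alpha(x_j) = \alpha(f(x))$, which again yields the claimed identity. Combining the two cases establishes $f^\sharp \circ \hat{\alpha} = \alpha \circ f$, as required.

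I do not expect a genuine obstacle; the content of the lemma is essentially the observation that a Hüllenoperator absorbs a prior application of itself. The only point I would state carefully is that the componentwise action of $\hat{\alpha}$ is what permits the idempotence of $\alpha$ to be applied coordinatewise in the projection case, whereas the constant case uses no property of $\alpha$ at all.
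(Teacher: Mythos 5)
Your proof is correct and follows essentially the same route as the paper: the constant case needs no property of $\alpha$ at all, and the projection case reduces via the componentwise definition of $\hat{\alpha}$ to the idempotence $\alpha \circ \alpha = \alpha$ of the Hüllenoperator. The brief appeal to Lemma \ref{lemma-korresp-huell-galois} is unnecessary ballast, since idempotence is already part of the definition of a Hüllenoperator, but this does not affect the argument.
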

\begin{proof}
Sei $f$ zunächst konstant, d.h.~für ein $l\in L$ sei $f(x) = l$ für jedes $x \in L^I$. Dann ist
\begin{align*}
f^\sharp \circ \hat{\alpha}(x)
= \alpha(f(\hat{\alpha}(x)))
= \alpha(l)
= \alpha(f(x))
= \alpha \circ f (x).
\end{align*}
Ist $f$ eine Projektion, das heißt $f(x) = x_i$ für ein $i \in I$ und jedes $x \in L^I$, so gilt
\begin{align*}
f^\sharp \circ \hat{\alpha}(x)
= \alpha(f(\hat{\alpha}(x)))
= \alpha(\alpha(x_i))
= \alpha(x_i)
= \alpha(f(x))
= \alpha \circ f (x). 
\end{align*}
Es folgt die Behauptung.
\end{proof}

Wir definieren nun, was eine \emph{konvexe Hülle} ist.
\begin{dfn}
Sei $V$ ein $\rr$-Vektorraum. Eine Menge $X\subseteq V$ heißt \emph{konvex}\index{konvex}, falls $\lambda v + (1- \lambda) w \in X$ für alle $0 \le \lambda \le 1$ und $v,w \in X$.

Die \emph{konvexe Hülle}\index{konvexe Hülle} einer Teilmenge $X \subseteq V$ ist die kleinste konvexe Menge $\langle X \rangle$, die $X$ enthält.
\end{dfn}

\begin{bem}\label{bem:technisches-zu-konvexen-huellen}
Sei $V$ ein $\rr$-Vektorraum.
\begin{enumerate}
\item Sei $\mathcal{C}(V):=\{C \subset V\mid C \text{ konvex}\}$ die Menge der konvexen Teilmengen von $V$. 
Dann ist $(\mathcal{C}(V), \subseteq)$ ein vollständiger Verband, wobei $\bigsqcup_{\mathcal{C}(V)} \mathcal{X} = \langle \bigcup \mathcal{X} \rangle$ für $\mathcal{X} \subseteq \mathcal{C}(V)$.
\item Schreibe $X \subseteq V$ als $X= \{x_i \mid i \in I\}$ für eine Indexmenge $I$. 
Dann ist die konvexe Hülle von $X$ gegeben durch \[\langle X \rangle = \Big\{\sum_{i\in I} \lambda_i x_i \mid \lambda_i\ge 0, \text{fast alle } \lambda_i = 0, \sum_{i \in I} \lambda_i = 1\Big\}.\]
\item Die Abbildung $\alpha: 2^V \to \mathcal{C}(V), X \mapsto \langle X \rangle$ ist ein Hüllenoperator und damit eine Abstraktion.
\item Falls $X \subseteq V$ eine konvexe Menge ist, so ist für jedes $A \subseteq X$ auch $\langle A \rangle \subseteq X$. 
Wir können die Bildung konvexer Hüllen also einschränken zu einer Abbildung $2^X \to \mathcal{C}(X)$.
\item Die Mengen $\Sigma=\{1\}\times\rr^n \subseteq \rr^{n+1}$, $\Mat(\Sigma)$ und $\rr^n \times \rr^n$ sind konvexe Teilmengen von Vektorräumen. 
\end{enumerate}
Sowohl für Matritzenmengen, Relationen als auch Zustandsmengen gilt also, dass die Bildung konvexer Hüllen auf dem zugrundeliegenden Verband operiert.

Wir bezeichnen dabei die Bildung konvexer Hüllen mit 
\begin{align*}
\alpha : L &\to L^\sharp, \\
\alpha_M : L_M &\to L_M^\sharp, \\
\alpha_R : L_R &\to L_R^\sharp .
\end{align*}
Dabei hatten wir 
\begin{align*}
L &= 2^\Sigma , \\
L_M &= 2^{\Mat(\Sigma)}, \\
L_R &= 2^{\rr^n \times \rr^n}
\end{align*}
definiert und setzen weiter
\begin{align*}
L^\sharp &:= \mathcal{C}(\Sigma) \subseteq L, \\
L_M^\sharp &:= \mathcal{C}(\Mat(\Sigma)) \subseteq L_M, \\
L_R^\sharp &:= \mathcal{C}(\rr^n \times \rr^n) \subseteq L_R.
\end{align*}
als die entsprechenden Mengen konvexer Teilmengen. Nach Aussage c) und d) sind diese Abbildungen wohldefiniert.
\end{bem}

Wir abstrahieren nun die Ungleichungssysteme, indem wir konvexe Hüllen bilden. Die abstrahierten Ungleichungssysteme versehen wir mit dem Symbol $^\sharp$. 
Die abstrahierten Abbildungen, welche diese Ungleichungssysteme definieren, sind nun gegeben durch
\begin{align*}
f_{T_M,\cdot}^\sharp &:= \alpha_M \circ f_{T_M,\cdot} & f_{R_M,\cdot}^\sharp &:= \alpha \circ f_{R_M,\cdot} & f_{A,\cdot}^\sharp &:= \alpha \circ f_{A,\cdot}\\
f_{T_R,\cdot}^\sharp &:= \alpha_R \circ f_{T_R,\cdot} & f_{R_R,\cdot}^\sharp &:= \alpha \circ f_{R_R,\cdot} &
\end{align*}
für alle Ungleichungen bis auf diejenigen, die beim funktionalen Ansatz die Rückkehr aus einer Prozedur beschreiben. 
Diese werden zusätzlich so geändert, dass nicht der im Ungleichungssystem $T_M$ bzw.~$T_R$ berechnete Effekt einer Prozedur verwendet wird, sondern stattdessen $T_M^\sharp$ bzw.~$T_R^\sharp$:
\begin{align*}
f_{R_M,(u,\mathtt{q()},v),\text{ret}}^\sharp &:= \alpha \circ (\alpha_{\text{Mat}}({\underline{T_M^\sharp}[r_{\q}]}) \circ \pr_u),\\
f_{R_R,(u,\mathtt{q()},v),\text{ret}}^\sharp &:= \alpha \circ (\alpha_{\text{Rel}}({\underline{T_R^\sharp}[r_{\q}]}) \circ \pr_u).
\end{align*}

\begin{bem}
Es reicht nicht, $f_{R_M,(u,\mathtt{q()},v),\text{ret}}^\sharp := f_{\underline{T_M^\sharp}[r_{\q}]} \circ \pr_u$ zu betrachten, 
da $f_{\underline{T_M^\sharp}[r_{\q}]} \circ \pr_u$ nicht auf konvexen Mengen operiert:
Dazu betrachten wir
\begin{alignat*}{3}
  A & = \begin{pmatrix} 1 & -1 \\ 1 & \wminus 1 \end{pmatrix} \in \Mat(2,\rr), \
& x & = \begin{pmatrix} 1 \\ 0 \end{pmatrix} \in \rr^2, \\
  B & = \begin{pmatrix}  -1 & \wminus 1 \\ -1 & -1 \end{pmatrix} \in \Mat(2,\rr), \
& y & = \begin{pmatrix} 0 \\ 1 \end{pmatrix} \in \rr^2.
\end{alignat*}
Für die Matrizenmenge $\mathcal{A} = \{A,B\}$ und die Zustandsmenge $S = \{x,y\}$ ist dann
\begin{align*}
\alpha_{\text{Mat}} (\mathcal{A})
&= \left\{\begin{pmatrix} 
	2 \lambda - 1	& 1 - 2 \lambda \\
	2 \lambda - 1	& 2 \lambda - 1
  \end{pmatrix}
  \mid 0 \le \lambda \le 1 \right\}
\\&= \left\{a \cdot 
  \begin{pmatrix} 
	1	& -1 \\
	1	& \wminus 1
  \end{pmatrix}
  \mid -1 \le a \le 1 \right\}
\intertext{und}
\alpha(S)
&= \left\{\begin{pmatrix} 
	\mu \\
	1 - \mu
  \end{pmatrix}
  \mid 0 \le \mu \le 1 \right\}
.\end{align*}
Damit erhalten wir nun
\begin{align*}
X &:= \alpha_{\text{Mat}}(\alpha_M(\mathcal{A}))(\alpha(S))
\\&= \left\{a \cdot
  \begin{pmatrix} 
	\mu - (1-\mu)\\
	\mu + (1-\mu)
  \end{pmatrix}
  \mid -1 \le a \le 1 \wedge 0 \le \mu \le 1 \right\}
\\&= \left\{a \cdot
  \begin{pmatrix} 
	2\mu - 1\\
	1
  \end{pmatrix}
  \mid -1 \le a \le 1 \wedge 0 \le \mu \le 1 \right\}
\\&= \left\{a \cdot 
  \begin{pmatrix} 
	b\\
	1
  \end{pmatrix}
  \mid -1 \le a \le 1 \wedge -1 \le b \le 1 \right\}.
\end{align*}
Es gilt für $v\in X$: Ist $v_2 = 0$, so ist auch $v_1=0$. Ist nämlich $v = \mymatrix{ab \\ a} \in X$ und $v_2=a=0$, so ist auch $v_1=ab=0$.
Nun sind $\mymatrix{1 \\ 1}$ und $\mymatrix{\wminus 1 \\ -1} \in X$ vermöge $a=b=1$ bzw.~$a=b=-1$. Andererseits gilt aber für deren Konvexkombination
\[\begin{pmatrix}1\\0\end{pmatrix} 
= \tfrac{1}{2} \begin{pmatrix}1 \\ 1\end{pmatrix} + \tfrac{1}{2} \begin{pmatrix}\wminus 1 \\ -1\end{pmatrix} \notin X.\] 
Folglich kann X nicht konvex sein. Diese Menge ist in \autoref{bild:notw-huell} dargestellt.
\begin{figure}[ht]
 \begin{center}
  \begin{tikzpicture}
   \filldraw[fill=gray!20, draw=black]
      (-1,1) node[above] {$Ay$} -- (1,1) node[above] {$Ax$} -- 
      (-1,-1) node[below] {$Bx$} -- (1,-1) node[below] {$By$} -- (-1,1);
    \draw[help lines,step=1cm] (-1.8,-1.8) grid (1.8,1.8);
    \draw[->] (-2,0) -- (2,0) node[right] {$x_1$} coordinate(x axis);
    \draw[->] (0,-2) -- (0,2) node[above] {$x_2$} coordinate(y axis);
    \foreach \x/
      \xtext in {-1,1} 
      \draw[xshift=\x cm] (0pt,2pt) -- (0pt,-2pt) node[below] {$\xtext$};
      \draw[yshift=1 cm,thick] (2pt,0pt) -- (-2pt,0pt);
      \draw (0,-1.2) node[left] {$-1$}; 
      \draw[yshift=-1 cm,thick] (2pt,0pt) -- (-2pt,0pt);
      \draw (0,1.2) node[left] {$1$}; 

  \end{tikzpicture}
  \caption{Notwendigkeit der konvexen Hüllenbildung.}
  \label{bild:notw-huell}
 \end{center}
\end{figure}
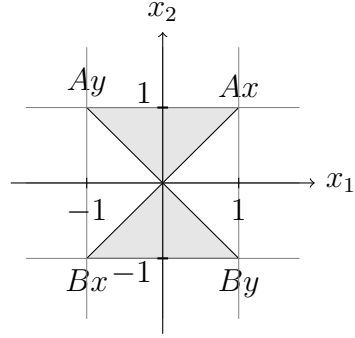

Setzen wir nun 
\begin{alignat*}{3}
\tilde{A} &:= 
  \begin{pmatrix}
	1 	& 0 	& 0 \\
	0	& \multicolumn{2}{c}{\multirow{2}{*}{A}} \\
	0	
  \end{pmatrix}, \
&\tilde{x} &:=
  \begin{pmatrix}
	1 \\
	x_1 \\ x_2  
  \end{pmatrix}, \\
\tilde{B} &:= 
  \begin{pmatrix}
	1 	& 0 	& 0 \\
	0	& \multicolumn{2}{c}{\multirow{2}{*}{B}} \\
	0	
  \end{pmatrix}, \
&\tilde{y} &:=
  \begin{pmatrix}
	1 \\
	y_1 \\ y_2  
  \end{pmatrix},
\end{alignat*} 
so erhalten wir ein Gegenbeispiel in $\{1\} \times \rr^2$ und $\Mat(\{1\} \times \rr^2)$.
\end{bem}

Wir wollen nun untersuchen, ob die Bildung konvexer Hüllen für die in \autoref{sec:polyeder} eingeführten Analysen korrekt oder sogar präzise ist. 
Da die Ungleichungssysteme $T_M$ und $T_R$ nicht direkt die Transferfunktionen berechnen, sind die Resultate aus Abschnitt \ref{sec:abstrint-fuer-PA} hier nicht anwendbar. 
Wir brauchen also neue Betrachtungen, die wir im folgenden Abschnitt präsentieren werden.
Wir orientieren uns dabei erneut an \cite{CH78-POPL} und \cite{seidl07}.

\begin{satz}
Die Ungleichungssysteme $T_M$, $R_M$, $T_R$ und $R_R$ werden durch $T_M^\sharp$, $R_M^\sharp$, $T_R^\sharp$ und $R_R^\sharp$ korrekt abstrahiert.
\end{satz}
\begin{proof}
Nach Lemma \ref{lemma:huell-korr} wissen wir bereits, dass die abstrakten Interpretationen $T_M^\sharp$ von $T_M$ und $T_R^\sharp$ von $T_R$ korrekt ist.
Da $\alpha_M$ und $\alpha_R$ Hüllenoperatoren sind, gilt also für jede Prozedur $\q \in \Proc$
\begin{align*}
\underline{T_M^\sharp}[r_q] \supseteq \alpha_M(\underline{T_M}[r_q]) \supseteq \underline{T_M}[r_q]
\text{ und } \underline{T_R^\sharp}[r_q] \supseteq \alpha_R(\underline{T_R}[r_q]) \supseteq \underline{T_R}[r_q].
\end{align*}
In Beispiel \ref{bsp:summary-inform-von-polyederana-sind-abstr} haben wir gesehen, dass die Abbildungen $\alpha_\text{Mat}$ und $\alpha_\text{Rel}$ Abstraktionen und damit insbesondere monoton sind. 
Da auch $\alpha$ monoton ist, erhalten wir
\begin{align*}
f_{R_M,(u,\mathtt{q()},v),\text{ret}}^\sharp 
= &\alpha \circ (\alpha_\text{Mat}( \underline{T_M^\sharp}[r_q]) \circ \pr_u) \\
\sqgeqmap &\alpha \circ (\alpha_\text{Mat}( \underline{T_M}[r_q]) \circ \pr_u) \\
 = &\alpha \circ f_{R_M,(u,\mathtt{q()},v),\text{ret}}
\intertext{und ebenso}
f_{R_R,(u,\mathtt{q()},v),\text{ret}}^\sharp 
\sqgeqmap&\alpha \circ f_{R_R,(u,\mathtt{q()},v),\text{ret}}.
\end{align*}
Nach Satz \ref{satz:kanonisch-ist-am-besten} abstrahieren also $R_M^\sharp$ und $R_R^\sharp$ die Ungleichungssysteme $R_M$ und $R_R$ korrekt. 
\end{proof}

Zunächst zeigen wir, dass $T_M^\sharp$ auch eine präzise abstrakte Interpretation von $T_M$ beschreibt. Dazu benötigen wir die folgenden beiden technischen Hilfsmittel.
\begin{lemma}\label{lemma-mat-correct}
Sei $X \subseteq \{1\}\times\rr^n$ und $\mathcal{A}, \mathcal{A}_1, \mathcal{A}_2 \subseteq \Mat(n+1,\rr)$. Dann gelten folgende Identitäten:
\begin{enumerate}
\item  
$\alpha(\alpha_{\text{Mat}}(\mathcal{A})(X))
= \alpha(\alpha_{\text{Mat}}(\alpha_M(\mathcal{A}))(\alpha(X)))$
\item
$\alpha_{\text{Mat}} (\mathcal{A}_1 \circ \mathcal{A}_2 )
= \alpha_{\text{Mat}}(\alpha_{\text{Mat}}(\mathcal{A}_1) \circ \alpha_{\text{Mat}}(\mathcal{A}_2))$.
\end{enumerate}
\end{lemma}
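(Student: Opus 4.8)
The plan is to derive both identities from one structural fact: matrix multiplication is bilinear, so a product of two convex combinations is again a convex combination of the pairwise products. Concretely, if $A = \sum_{i}\lambda_i A_i$ and $s = \sum_{j}\mu_j s_j$ are convex combinations (with $\lambda_i,\mu_j \ge 0$, only finitely many nonzero, and $\sum_i \lambda_i = \sum_j \mu_j = 1$), then
\[
A \cdot s = \Big(\sum_{i}\lambda_i A_i\Big)\Big(\sum_{j}\mu_j s_j\Big) = \sum_{i,j}\lambda_i\mu_j\,(A_i \cdot s_j),
\]
and the coefficients $\lambda_i\mu_j$ are nonnegative with $\sum_{i,j}\lambda_i\mu_j = (\sum_i\lambda_i)(\sum_j\mu_j) = 1$; the identical statement holds with a second matrix in place of the vector $s_j$. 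Throughout I would use that $\alpha$ and $\alpha_M$ are Hüllenoperatoren by Bemerkung \ref{bem:technisches-zu-konvexen-huellen}, hence extensiv, monoton and idempotent, together with the explicit convex-combination description of $\langle\cdot\rangle$ given there.

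For part a) I would prove the two inclusions separately. The inclusion $\subseteq$ is immediate: since $\mathcal{A}\subseteq\alpha_M(\mathcal{A})$ and $X\subseteq\alpha(X)$, we get $\alpha_{\text{Mat}}(\mathcal{A})(X)\subseteq\alpha_{\text{Mat}}(\alpha_M(\mathcal{A}))(\alpha(X))$, and applying the monotone operator $\alpha$ gives one inclusion. For $\supseteq$ I would take an arbitrary $A\cdot s$ with $A\in\alpha_M(\mathcal{A})=\langle\mathcal{A}\rangle$ and $s\in\alpha(X)=\langle X\rangle$, write both as convex combinations of elements of $\mathcal{A}$ resp.\ $X$, and use the displayed identity to express $A\cdot s$ as a convex combination of elements of the \emph{un}-hulled set $\alpha_{\text{Mat}}(\mathcal{A})(X)=\{A's'\mid A'\in\mathcal{A},\, s'\in X\}$. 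Hence $A\cdot s\in\langle\alpha_{\text{Mat}}(\mathcal{A})(X)\rangle=\alpha(\alpha_{\text{Mat}}(\mathcal{A})(X))$, so $\alpha_{\text{Mat}}(\alpha_M(\mathcal{A}))(\alpha(X))\subseteq\alpha(\alpha_{\text{Mat}}(\mathcal{A})(X))$; applying $\alpha$ once more and invoking idempotence collapses the double hull and yields the reverse inclusion, so equality follows.

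Part b) I would treat completely analogously, replacing matrix--vector multiplication by matrix--matrix multiplication and $\alpha$ by $\alpha_M$: the $\subseteq$ direction follows from monotonicity together with $\mathcal{A}_1\subseteq\alpha_M(\mathcal{A}_1)$ and $\mathcal{A}_2\subseteq\alpha_M(\mathcal{A}_2)$, while for $\supseteq$ one writes $A_1\in\langle\mathcal{A}_1\rangle$ and $A_2\in\langle\mathcal{A}_2\rangle$ as convex combinations and expands $A_1A_2$ into a convex combination of generator products $A_1^{(i)}A_2^{(j)}\in\mathcal{A}_1\circ\mathcal{A}_2$, after which idempotence of $\alpha_M$ removes the outer hull. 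I do not expect a genuine obstacle here; the only point that needs care is the bookkeeping in the two $\supseteq$ directions, namely checking that the mixed products $A_i\cdot s_j$ (resp.\ $A_1^{(i)}A_2^{(j)}$) really range over the un-hulled sets $\alpha_{\text{Mat}}(\mathcal{A})(X)$ (resp.\ $\mathcal{A}_1\circ\mathcal{A}_2$). This is precisely what lets the outer convex hull absorb the hulls taken on the factors, and it is the crux of both claims.
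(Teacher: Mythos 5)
Your proposal is correct and follows essentially the same route as the paper's own proof: both rest on the bilinearity of matrix multiplication, which turns a product of convex combinations into a convex combination of the pairwise products, and let the outermost hull absorb these combinations. The paper writes this as a chain of equalities underneath the outer $\alpha$ (resp.\ $\alpha_M$), while you organize it as two inclusions using extensivity, monotonicity and idempotence of the hull operators -- a purely presentational difference, not a different argument.
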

\begin{proof} Es ist nach Definition
\begin{align*}
\lefteqn{\alpha( \alpha_{\text{Mat}}(\alpha_M (\mathcal{A}))(\alpha(X)))} \\
&= {\alpha(\{Ax \mid x \in \alpha(X) \wedge A\in \alpha_M(\mathcal{A}) \})} \\
&= \alpha\big(\big\{(\lambda A_1 + (1- \lambda) A_2) x \mid 0 \le \lambda \le 1 \wedge A_1,A_2 \in \mathcal{A} \wedge x \in \alpha(X)\}\\
&= \alpha\big(\big\{\lambda A_1 x + (1- \lambda) A_2 x \mid 0 \le \lambda \le 1 \wedge A_1,A_2 \in \mathcal{A} \wedge x \in \alpha(X)\}\\
&= \alpha(\{A x \mid x \in \alpha(X) \wedge A \in \mathcal{A} \}) \\
&= \alpha\big(\big\{A (\lambda x_1 + (1- \lambda) x_2) \mid 0 \le \lambda \le 1 \wedge A \in \mathcal{A} \wedge x_1,x_2 \in X\}\\
&= \alpha\big(\big\{\lambda A x_1 + (1- \lambda) A x_2 \mid 0 \le \lambda \le 1 \wedge A \in \mathcal{A} \wedge x_1,x_2 \in X\}\\
&= \alpha(\{Ax \mid x \in X \wedge A \in \mathcal{A} \}) \\
&= \alpha(\alpha_{\text{Mat}}(\mathcal{A})(X))
\end{align*}
und 
\begin{align*}
\lefteqn{\alpha_M (\alpha_M (\mathcal{A}_1) \circ \alpha_M (\mathcal{A}_2) )} \\
&= {\alpha_M (\{A_1 A_2 \mid A_1 \in \alpha_M (\mathcal{A}_1),A_2 \in \alpha_M (\mathcal{A}_2)\})} \\ 
&= \alpha_M \big(\big\{(\lambda A_1^1 +(1- \lambda) A_1^2) A_2 \mid 0 \le \lambda \le 1, A_1^1,A_1^2 \in \mathcal{A}_1, A_2 \in \alpha_M (\mathcal{A}_2)\big\}\big)\\ 
&= \alpha_M \big(\big\{\lambda A_1^1 A_2 +(1- \lambda) A_1^2 A_2 \mid 0 \le \lambda \le 1, A_1^1,A_1^2 \in \mathcal{A}_1, A_2 \in \alpha_M (\mathcal{A}_2)\big\}\big)\\ 
&= \alpha_M (\{A_1 A_2 \mid A_1 \in \mathcal{A}_1, A_2 \in \alpha_M (\mathcal{A}_2)\big\}\big)\\ 
&= \alpha_M \big(\big\{A_1 (\lambda A_2^1 + (1-\lambda) A_2^2) \mid 0 \le \lambda \le 1, A_1 \in \mathcal{A}_1, A_2^1,A_2^2 \in \mathcal{A}_2\big\}\big)\\ 
&= \alpha_M \big(\big\{\lambda A_1 A_2^1 + (1-\lambda) A_1 A_2^2 \mid 0 \le \lambda \le 1, A_1 \in \mathcal{A}_1, A_2^1,A_2^2 \in \mathcal{A}_2\big\}\big)\\ 
&= \alpha_M (\{A_1 A_2 \mid A_i \in \mathcal{A}_i\}) \\ 
&= \alpha_M ( \mathcal{A}_1 \circ \mathcal{A}_2).
\end{align*}
Es folgt die Behauptung.
\end{proof}

\begin{kor}\label{zuweisung-prec}
Es gilt $\alpha \circ (f_{\mathtt{x_j := t}} \circ \pr_u) \circ \hat{\alpha} = \alpha \circ (f_{\mathtt{x_j := t}} \circ \pr_u)$.
\end{kor}
\begin{proof}
Sei $S \in L$, dann ist nach Lemma \ref{lemma-mat-correct}
\begin{align*}
\alpha \circ f_{\mathtt{x_j := t}} (S)
&= \alpha (\{M_{\mathtt{x_j := t}} \cdot s \mid s \in S\}) \\
&= \alpha (\{M_{\mathtt{x_j := t}} \cdot s \mid s \in \alpha (S)\}) \\
&= \alpha \circ f_{\mathtt{x_j := t}} \circ \alpha (S).
\end{align*}
Damit und mit der Identität $\alpha \circ \pr_u \circ~\hat{\alpha} = \alpha \circ \pr_u$ ist
\begin{align*}
\alpha \circ f_{\mathtt{x_j := t}} \circ \pr_u \circ~\hat{\alpha}
&= \alpha \circ f_{\mathtt{x_j := t}} \circ \alpha \circ \pr_u \circ~\hat{\alpha} \\
&= \alpha \circ f_{\mathtt{x_j := t}} \circ \alpha \circ \pr_u \\
&= \alpha \circ f_{\mathtt{x_j := t}} \circ \pr_u,
\end{align*}
wobei letzte Gleichheit vermöge $\alpha \circ f_{\mathtt{x_j := t}} \circ \alpha = \alpha \circ f_{\mathtt{x_j := t}}$ gilt.
\end{proof}

\begin{satz}\label{satz:TM+RM-praez}
Die abstrakte Interpretationen ${T_M^\sharp}$ und ${R_M^\sharp}$ von $T_M$ bzw.~$R_M$ sind präzise.
\end{satz}
\begin{proof}
Nach Definition ist
$f_{T_M, \cdot}^\sharp \circ \hat{\alpha}_M = \alpha_M \circ f_{T_M,\cdot}$ und $f_{R_M, \cdot}^\sharp \circ \hat{\alpha}_R = \alpha_R \circ f_{R_M,\cdot}$ 
zu zeigen.

Da $f_{T_M,\id}$ und $f_{R_M,\init}$ konstante Abbildungen sind und $f_{R_M,(u,\mathtt{q()},v),\text{ent}} = \pr_u$ eine Projektion ist, 
folgt hierfür die gewünschte Gleichheit schon nach Korollar \ref{lemma:huell-prec}. 
Für $f_{T_M,\mathtt{x_j:=t}} = f_{\mathtt{x_j := t}} \circ \pr_u$ und $f_{R_M,(u,\mathtt{x_j:=t},v)} = f_{\mathtt{x_j := t}} \circ \pr_u$ erhalten wir die Gleichheit mit Korollar \ref{zuweisung-prec}.
Nach Lemma \ref{lemma-mat-correct} gilt nun
\begin{align*}
f^\sharp_{T_M,\mathtt{q()}} \circ \hat{\alpha}_M (\mathcal{A})
&= \alpha_M(\mathcal{A}_{r_{\q}} \circ \alpha_M(\mathcal{A}_u)) \\
&= \alpha_M(\mathcal{A}_{r_{\q}} \circ \mathcal{A}_u) \\
&= \alpha_M(f_{T_M,\mathtt{q()}} (\mathcal{A})).
\end{align*}
Die Bildung konvexer Hüllen ist also bei der Berechnung von Matrizenmengen zur Bildung von Summary-Informationen präzise. 
Insbesondere gilt $\alpha_M(\underline{T_M}[r_\q]) = \underline{T^\sharp_M}[r_\q]$ für jede Prozedur $\q \in \Proc$. Daraus folgt nun, wiederum mit Lemmma \ref{lemma-mat-correct},
\begin{align*} 
f^\sharp_{R_M,(u,\mathtt{q()},v),\text{ret}}\circ\hat{\alpha}(\mathcal{S})
&= \alpha(\alpha_\text{Mat}( \underline{T^\sharp_M}[r_{\q}]) (\alpha(\mathcal{S}_u)) \\
&= \alpha(\alpha_\text{Mat}( \alpha_M(\underline{T_M}[r_{\q}])) (\alpha(\mathcal{S}_u)) \\
&= \alpha(\alpha_\text{Mat}( \underline{T_M}[r_{\q}]) (\mathcal{S}_u) \\
&= f_{R_M,(u,\mathtt{q()},v),\text{ret}}(\mathcal{S}).
\end{align*}
Somit ist auch die Bildung konvexer Hüllen bei den erreichbaren Zuständen eine präzise abstrakte Interpretation, wenn die Summary-Informationen mithilfe von Matrizenmengen berechnet werden.
\end{proof}

Verwendet man anstelle von Matrizenmengen dagegen Relationen, um Sum\-ma\-ry-In\-for\-ma\-tio\-nen zu berechnen, so ist die Bildung konvexer Hüllen keine präzise abstrakte Interpretation.
\begin{satz}
Die abstrakten Interpretationen ${T_R^\sharp}$ und ${R_R^\sharp}$ von $T_R$ bzw.~$R_R$ sind im Allgemeinen nicht präzise.
\end{satz}
\begin{proof}
Es genügt, ein Beispiel zu finden, bei dem konvexe Hüllenbildung nicht präzise ist.
Betrachte dazu den Flussgraphen aus Abbildung \ref{bild-unpr-rel}.
\begin{figure}[ht]
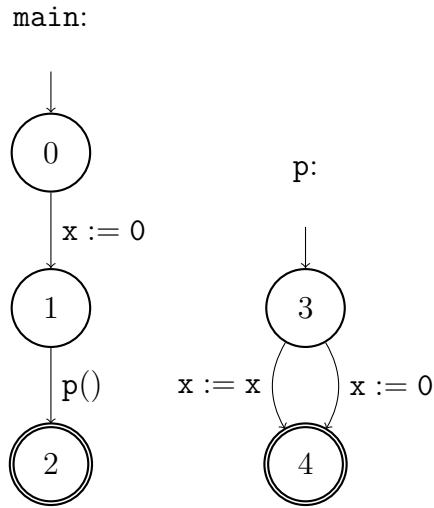

  \begin{flowgraph}
	\node[startnode]	(0) 						{$0$};
	\node				(procmain)	[above=of 0]	{$\main$:};
	\node[stdnode]		(1)			[below=of 0] 	{$1$};
	\node[endnode]		(2)			[below=of 1]	{$2$};
	\node				(inv)		[right=of 1]	{};
	\node[startnode]	(3) 		[right=of inv]	{$3$};
	\node				(procp) 	[above=of 3]	{$\p$:};
	\node[stdnode,accepting](4)		[below=of 3] 	{$4$};
	\path[->]
	  (0) 	edge 					node {$\mathtt{x:=0}$} (1)
	  (1)	edge 					node {$\mathtt{p()}$}  (2)
	  (3)	edge [bend right,swap] 	node {$\mathtt{x:=x}$} (4)
	  (3)	edge [bend left] 		node {$\mathtt{x:=0}$} (4)
	  
	;
  \end{flowgraph}
  \caption{Die Bildung konvexer Hüllen für Relationen als Summary-Information kann unpräzise sein.}	
  \label{bild-unpr-rel}
\end{figure}
Es ist
\begin{align*}
T[4] &= \{(x,x) \mid x \in \rr\} \cup \{(x,0) \mid x \in \rr\} 
\intertext{und}
R[0] &= \rr \\
R[1] &= \{0\} \\
R[2] &= \{0\}\cup\{0\}=\{0\}.
\intertext{
Andererseits haben wir
}
T^\sharp[4] &= \langle \{(x,x) \mid x \in \rr\} \cup \{(x,0) \mid x \in \rr\} \rangle \\
&= \{\lambda (x,x) + (1-\lambda) (y,0) \mid 0 \le \lambda \le 1,x,y \in \rr\} \\
&= \{(z,x)\mid x,z \in \rr\} \\
&= \rr \times \rr 
\intertext{sowie}
R^\sharp[0] &= \rr \\
R^\sharp[1] &= \{0\} \\
R^\sharp[2] &= \rr \supsetneq\{0\}=\langle R[2]\rangle.
\end{align*}
Die abstrakte Interpretation ist also nicht präzise.
\end{proof}

In dem Schritt der Berechnung, in dem konvexe Hüllen gebildet werden, liefert der Ansatz mit Matrizenmengen, im Gegensatz zu dem relationalen Ansatz, ein präzises Ergebnis. 
Zur Berechnung gültiger Ungleichungen in einem Programmpunkt ist demnach die Verwendung von Matrizenmengen als Summary-Information der bessere Ansatz.

Das folgende Beispiel zeigt, dass die Bildung konvexer Hüllen noch keine Terminierung garantiert.
\begin{bsp}
Betrachte das Programm aus Abbildung \ref{bild-nichteff-abstraktion}. 
 \begin{figure}[ht]
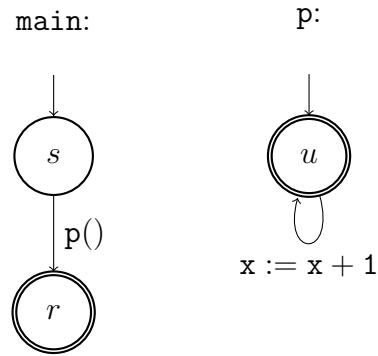

  \begin{flowgraph}
    \node[startnode]			(s) 						{$s$};
    \node						(procmain)	[above=of s]	{$\main$:};
    \node[endnode]				(r)			[below=of s]	{$r$};
    \node						(inv)		[right=of s]	{};
    \node[startnode,accepting]	(u) 		[right=of inv]	{$u$};
    \node						(procp) 	[above=of u]	{$\p$:};
    \path[->]
      (s)	edge 						node {$\mathtt{p()}$}  	(r)
      (u)	edge [out=285,in=255,loop] 	node {$\mathtt{x:=x+1}$}(u)
    ;

  \end{flowgraph}\caption{Konvexe Hüllenbildung garantiert keine Terminierung.}\label{bild-nichteff-abstraktion}
 \end{figure}

Das zugehörige Ungleichungssystem zur Berechnung konvexer Matrizenmengen besteht für die Prozedur $\p$ aus den beiden Ungleichungen
\begin{align*}
T_M^\sharp[u] \supseteq \{E_2\} \text{ und } T_M^\sharp[u] \supseteq \alpha_M( \{M\}\circ T_M^\sharp[u] )
\end{align*}
für die Matrix
\[ M = 
  \begin{pmatrix}
	1 & 0 \\
	1 & 1 
  \end{pmatrix}. 
\]
Da im Workset-Algorithmus die Berechnung einer kleinsten Lösung für den Programmpunkt $u$ unabhängig von den Werten in den Programmpunkten $s$ und $r$ ist, 
können wir die Ungleichungen für diese Punkte vernachlässigen. Für $u$ produziert der Workset-Algorithmus folgende Werte:
Zunächst erhält $T_M^\sharp[u]$ den Wert $\{E_2\}$ und die Workset enthält die zweite Ungleichung. Dann wird
\[
\alpha_M(\{M\} \circ \{E_2\} ) = \alpha_M(\{M\}) = \{M\}
\]
als neuer Wert berechnet, den $T_M^\sharp[u]$ mindestens haben muss. Da $\{E_2\} \not\supseteq\{M\}$, erhält also $T_M^\sharp[u]$ als neuen Wert 
\[\alpha_M(\{E_2\} \cup \{M\})
 = \big\{ 
M_t \mid 0 \le t \le 1
\big\}
\]
für \[M_t :=   \begin{pmatrix}
	1 & 0 \\
	t & 1 
  \end{pmatrix}
\]
und die Workset enthält wiederum die zweite Ungleichung.
Im dritten Schritt wird entsprechend zunächst
\[
\{M\} \circ \{M_t \mid 0 \le t \le 1\}
= \{M_t \mid 1 \le t \le 2\} 
\]
berechnet. Die zweiten Ungleichung ist also weiterhin unerfüllt und es wird
\[
T_M^\sharp[u] 
= \alpha_M( \{M_t \mid 0 \le t \le 1\} \cup \{M\} \circ \{M_t \mid 0 \le t \le 1\} )
= \{M_t \mid 0 \le t \le 2\}
\]
gesetzt.
Analog erhält man im $n$-ten Schritt der Berechnung
\[
T_M^\sharp[u]
= \{M_t \mid 0 \le t \le n-1\}.
\]
Die zweite Ungleichung ist wiederum nicht erfüllt. Also terminiert die Berechnung nicht.
\end{bsp}

Wir haben in diesem Kapitel eingeführt, wie wir die in \autoref{chap:PA} eingeführten Ansätze zur Analyse interprozeduraler Programme abstrakt interpretieren. 
Diese abstrakte Interpretation hängt allein von einer zuvor festgelegten abstrakten Interpretation der Transferfunktionen des monotonen Frameworks ab. 
Wir haben gezeigt, dass die abstrakte Interpretation korrekt, präzise bzw.~kanonisch ist, wenn die abstrakte Interpretation der Transferfunktionen die entsprechende Eigenschaft hat. 
Weiter haben wir gesehen, dass die kanonische Interpretation von allen korrekten die bestmögliche Lösung liefert. 
Falls die zugrundeliegende Galois-Verbindung präzise abstrakte Interpretationen zulässt, so ist die kanonische abstrakte Interpretation sogar präzise. 
Die bestmögliche abstrakte Interpretation erhält man also im funktionalen wie auch im Call-String-Ansatz durch die kanonische abstrakte Interpretation der Transferfunktionen. 
Außerdem haben wir gesehen, dass die beiden Ansätze der Polyederanalyse verschiedene 
Ergebnisse liefern: Der Ansatz mit Matritzenmengen liefert präzise Werte, während der Ansatz mit Relationen zwar korrekte, aber nicht unbedingt präzise Werte berechnet.

Zudem haben wir gesehen, dass das Abstrahieren von Ungleichungssystemen keine Terminierung garantiert. 
Ein weiteres Mittel, das Terminierung erreichen kann, ist die Verwendung eines sogenannten Widening-Operators im Workset-Algorithmus. Dieses werden wir im kommenden Kapitel untersuchen.


\chapter{Widening}\label{chap:widening}
In \autoref{chap:PA} haben wir zwei verschiedene Ansätze zur Analyse interprozeduraler Programme vorgestellt und im Hinblick auf ihre Präzision ihrer Ergebnisse verglichen. 
Eine solches Ergebnis kann mithilfe eines sogenannnten {Workset-Algorithmus}\index{Workset-Algorithmus} bestimmt werden, der aber nicht terminieren muss.

Ein von Cousot und Cousot in \cite{CousotCousot76-1} vorgestellter Ansatz, um Terminierung des Workset-Al\-go\-rith\-mus zu erreichen, 
ist die Verwendung eines \emph{Widening-Operators} im Workset-Algorithmus. 
In diesem Kapitel werden wir zunächst Widening-Operatoren allgemein vorstellen. 
Weiter werden wir feststellen, dass durch die Verwendung eines Widening-Operators im Workset-Algorithmus keine eindeutige Lösung berechnet wird 
und dass - im Gegensatz zu Analysen mit nur einer Prozedur - noch immer keine Terminierung garantiert werden kann. 
Zuletzt werden wir sehen, dass die Lösungen, die der Workset-Algorithmus mit Widening-Operatoren für den funktionalen und den Call-String-Ansatz berechnet, im Allgemeinen unvergleichbar sind.

\section{Workset-Algorithmus mit Widening}
In Abschnitt \ref{sec:eff} haben wir gesehen, dass ein Workset-Algorithmus schon für ein endliches Ungleichungssystem nicht terminieren muss, 
wenn der zugrundeliegende vollständige Verband unendliche echt aufsteigende Ketten enthält: Die Werte, die der Algorithmus für eine Variable berechnet, bilden eine aufsteigende Kette. 
Die Berechnung eines Wertes für die Variable terminiert also nur, wenn die Folge, die aus dieser Kette besteht, stabil wird. 
Enthält der Verband aber unendliche echt aufsteigende Ketten, so ist dies nicht gewährleistet. 
Um dem Abhilfe zu schaffen, wird bei der Berechnung eines neuen Wertes nicht mehr die kleinste obere Schranke der eingehenden Werte bestimmt, sondern ein \emph{Widening-Operator} verwendet.
\begin{dfn}
Eine Abbildung $\widening: L \times L \to L$ heißt \emph{Extrapolationsoperator}\index{Extrapolationsperator}, falls \[x \widening y \sqgeq x \sqcup y\] für alle $x,y \in L$ gilt.

Ein Extrapolationsoperator heißt \emph{Widening-Operator}\index{Widening-Operator}, 
falls für jede Folge $(x_n)_{n \in \nn}$ in $L$ die Folge $(l_n)_{n \in \nn}$ mit $l_0 := x_0$ und $l_{n+1} := l_n \widening x_{n+1}$ stabil wird, 
d.h.~es ein $N \in \nn$ gibt, so dass $l_{N+i} = l_N$ für jedes $i \ge 0$ gilt.
\end{dfn}
Verwendet man nun anstelle der kleinsten oberen Schranke bei der Berechnung der neuen Werte im Workset-Algorithmus einen Extrapolationsoperator, so ist eine so berechnete Lösung noch immer korrekt. 
Dies sieht man analog zum Beweis von Lemma \ref{lem:wla-korr}, in dem gezeigt wurde, dass der Worklist-Algorithmus, sofern er terminiert, die kleinste Lösung des betrachteten Ungleichungssystems liefert.
Betrachten wir dazu eine feste Variable. Mit $x_n$ bezeichnen wir den $n$-ten Wert, der für diese Variable berechnet wird. Das im $n$-ten Schritt berechnete \quotes{$\mathtt{t}$} bezeichnen wir mit $l_n$.
Dann entstehen die $l_n$ aus den $x_n$ durch $l_{n+1} := l_n \sqcup x_{n+1}$. 
Verwendet man anstelle von $\sqcup$ einen Widening-Operator $\widening$, so wird die Folge $(l_n)_{n \in \nn}$ stabil. 
Die Berechnung eines Wertes für diese Variable terminiert also.

\begin{bem}
Ist $(L,\sqleq)$ ein vollständiger Verband ohne unendliche echt aufsteigende Ketten und $\widening: L \times L \to L$ ein Extrapolationsoperator, so ist $\widening$ bereits ein Widening-Operator. 
Dies folgt direkt daraus, dass die Folge $(l_n)_{n \ge 0}$ aus der Definition von Widening-Operatoren eine Kette bildet.

Insbesondere ist also auf endlichen vollständigen Verbänden jeder Extrapolationsoperator bereits ein Widening-Operator.
\end{bem}

Wir betrachten nun ein Beispiel für einen Widening-Operator für die Intervallanalyse. Der Widening-Operator wurde ebenfalls von Cousot und Cousot in \cite{CousotCousot76-1} eingeführt.
\begin{bsp}
\label{bsp-widening-IA-std}
Definiere $\widening: L_\text{Int} \times L_\text{Int} \to L_\text{Int}$ durch $[l_1,u_1] \widening [l_2,u_2] = [l,u]$ mit
\begin{align*}
 l &:= \begin{cases} l_1 & \text{falls } l_1 \le l_2 \\ -\infty &\text{sonst} \end{cases} 
&&\text{und}&
 u &:= \begin{cases} u_1 & \text{falls } u_1 \ge u_2 \\ +\infty &\text{sonst}. \end{cases} 
\end{align*}
{sowie}
\[ \emptyset \widening I := I\]
{und}
\[ I \widening \emptyset := I\]
für jedes $I \in L_\text{Int}$.
Ist die untere Grenze des ersten Intervalles im zweiten enthalten, so bildet dies auch die unteren Grenzen des neuen Intervalles. Ansonsten wird die Grenze auf $-\infty$ gesetzt. 
Entsprechend ist die neue obere Grenze die des ersten Intervalles oder $+\infty$. Dies ist ein Widening-Operator: 

Seien $[l_1,u_1]$ und $[l_2,u_2]$ beliebige Intervalle und $[l,u]=[l_1,u_1] \widening [l_2,u_2]$. Falls $l_1 \le l_2$, so ist $l=l_1\le l_2$. Andernfalls ist $l=-\infty\le l_2 \le l_1$. 
In beiden Fällen erhalten wir also $l_1,l_2 \ge l$. Ebenso zeigt man $u_1,u_2 \le u$. Zusammen folgt $[l_i,u_i] \subseteq [l,u]$ für $i=1,2$. Damit ist $\widening$ ein Extrapolationsoperator.

Sei nun $(I_n)_{n \in \nn}$ eine Folge in $L_\text{Int}$. Definiere $I'_0 := I_0$ und $I'_{n+1} := I'_n \widening I_{n+1}$. 
Da $\widening$ ein Extrapolationsoperator ist, ist die Folge $(I'_n)_{n \in \nn}$ aufsteigend. Angenommen, diese Folge wird nicht stabil. 
Dann gibt es eine echt aufsteigende Teilfolge, die wir wieder mit $(I'_n)_{n \in \nn}$ bezeichnen. 
Schreibe $I_n = [l_n,u_n]$ und $I'_n=[l'_n,u'_n]$.

Angenommen, die Folge $(l'_n)_{n \ge 0}$ ist nicht konstant. Dann gibt es einen Index $m \ge 0$ mit $l'_{m+1} \sqsubset l'_m$. Nach Konstruktion ist dann $l'_{m+1} = - \infty$. 
Insbesondere ist $l'_n = - \infty$ für alle $n \ge m+1$. Also muss $(u'_n)_{n \ge m+1}$ echt aufsteigend sein. 
Insbesondere ist $u'_{m+2} \sqsupset u'_{m+1}$ und nach Konstruktion wiederum $u'_n= +\infty$ für alle $n \ge m+2$. Also ist die Folge $(I'_n)_{n \ge m+2}$ konstant. Dies ist ein Widerspruch. 
Also muss $(l'_n)_{n \ge 0}$ konstant sein. Es gilt demnach $l'_n = l_0$ für alle $n \ge 0$. 
Analog zeigt man $u'_n = u_0$ für alle $n \ge 0$. Damit ist $(I'_n)_{n \in \nn}$ konstant. Dies ist ein Widerspruch. 
Also ist $\widening$ ein Widening-Operator. 
\end{bsp}

Wir geben nun einen Workset-Algorithmus an, der Widening benutzt. Zusätzlich fordern wir, dass die Strategie, nach der Ungleichungen aus der Workset ausgewählt werden, in einem gewissen Sinne fair ist: 
Für jede Ungleichung und jeden Zeitpunkt $t$ soll gelten, dass, wenn sich die Ungleichung zum Zeitpunkt $t$ in der Workset befindet, sie zu einem späteren Zeitpunkt betrachtet wird. 
Dies garantiert, dass keine Ungleichung gewissermaßen \quotes{verhungert}, sondern alle Ungleichungen immer wieder betrachtet werden. 
Diesen Algorithmus nennen wir \quotes{Workset-Algorithmus mit Widening und Fairness}.
\begin{pseudocode}
W := $\mathcal{U}$;
forall (i$\in$I) $\{\mathtt{x_i}$:=$\bot;\}$
while W $\ne ()$ $\{$
  choose u=($\mathtt{x_j}\sqgeq$f(x)) fairly from W; W := W$\setminus\{$u$\}$;
  t := f(x);
  if $\neg$(t $\sqleq$ $\mathtt{x_j}$) $\{$
    //compute new value using $\widening$
    $\mathtt{x_j}$ := $\mathtt{x_j}$ $\widening$ t;
    forall (($\mathtt{x_k}\sqgeq$g(x)) $\in$ $\mathcal{U}$ : g uses $\mathtt{x_j}$) $\{$W := W:($\mathtt{x_k}\sqgeq$g(x));$\}$
  $\}$
$\}$
\end{pseudocode}

\begin{lemma}\label{lem:wsa-mit-widening-und-fairness-terminiert-fuer-jede-var}
Für jede Variable garantiert der Workset-Algorithmus mit Widening und Fairness Terminierung.
\end{lemma}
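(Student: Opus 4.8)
The plan is to fix an arbitrary variable $\mathtt{x_j}$ and to prove that the sequence of values which the algorithm successively assigns to $\mathtt{x_j}$ becomes stable after finitely many steps; this is exactly what it means that the computation of a value for $\mathtt{x_j}$ terminates. The whole argument rests on the defining property of a Widening-Operator, so the first task is to exhibit the values of $\mathtt{x_j}$ as an iterated-widening sequence of precisely the shape prescribed in that definition.

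First I would observe that $\mathtt{x_j}$ is altered \emph{exclusively} by the assignment $\mathtt{x_j} := \mathtt{x_j}\ \widening\ \mathtt{t}$ in the body of the \texttt{if}, and that this assignment is executed only when an inequality with left-hand side $\mathtt{x_j}$ is chosen from the workset and the test $\neg(\mathtt{t} \sqleq \mathtt{x_j})$ succeeds, where $\mathtt{t} = f(\mathtt{x})$ is the value computed at that moment. Hence every change of $\mathtt{x_j}$ is an application of $\widening$ whose second argument is such a value $\mathtt{t}$.

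If $\mathtt{x_j}$ is updated only finitely often there is nothing to prove. Otherwise let $t_1, t_2, \dots$ be, in order, the values of $\mathtt{t}$ at the successive updates of $\mathtt{x_j}$, let $y_0 := \bot$ be the value fixed during initialization, and let $y_k$ be the value of $\mathtt{x_j}$ after its $k$-th update. By the observation above, $y_k = y_{k-1}\ \widening\ t_k$ for all $k \ge 1$. Putting $x_0 := \bot$ and $x_k := t_k$ for $k \ge 1$, the sequence $(y_k)_{k \in \nn}$ coincides with the sequence $(l_k)_{k \in \nn}$ from the definition of a Widening-Operator, where $l_0 = x_0$ and $l_{k+1} = l_k\ \widening\ x_{k+1}$. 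The defining property of $\widening$ then yields an index $N \in \nn$ with $y_{N+i} = y_N$ for all $i \ge 0$. Thus the value of $\mathtt{x_j}$ does not change after the $N$-th update, i.e.\ it is changed only finitely often, which is the assertion.

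I do not expect a genuine mathematical difficulty here; the step that requires the most care is the bookkeeping in the previous paragraph, namely checking that $\mathtt{x_j}$ is never modified by anything other than the widening assignment, so that its successive values really do form an iterated-widening sequence. It is worth stressing that the Widening property is required to hold for \emph{arbitrary} sequences $(x_k)$ in $L$, so no monotonicity or further structure of the $t_k$ enters, and that the stabilization of a single variable already follows from the widening mechanism alone; the fairness of the selection strategy is not needed for this lemma but merely ensures that pending updates are not postponed indefinitely.
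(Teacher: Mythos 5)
Your proof is correct and follows essentially the same route as the paper's: the successive values assigned to $\mathtt{x_j}$ are exhibited as an iterated-widening sequence $l_0 = \bot$, $l_{k+1} = l_k \widening t_{k+1}$, and the defining stabilization property of $\widening$ forces this sequence to become constant after finitely many updates, so the variable's value changes only finitely often. Your closing remark is also accurate: the paper's proof nominally appeals to fairness (to assert that the next step at which $\mathtt{x_j}$ changes exists), but that appeal is not load-bearing — if no such step exists the sequence is already stable — so per-variable stabilization indeed follows from the widening mechanism alone, fairness only ensuring that pending unsatisfied inequalities are eventually processed.
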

\begin{proof}
Sei $i \in I$ ein beliebiger Index. Wir zeigen nun, dass für die Variable $\mathtt{x_i}$ in endlicher Zeit ein Wert berechnet wird.

Wie bei der Betrachtung des Workset-Algorithmus ohne Widening in \autoref{sec:wla} betrachten wir die Folgen $(x^{(n)}_i)_{n \in \nn}$. Es gilt wieder $x^{(n)}_i \sqleq x^{(m)}_i$ für $n \le m$. 
Wir konstruieren daraus eine echt aufsteigende Teilfolge, indem wir alle diejenigen Folgenglieder weglassen, die mit dem vorherigen übereinstimmen:
\begin{align*}
x^{(n_0)}_i &:= x^{(0)}_i \\
x^{(n_{k+1})}_i &:= x_i^{(j)} \text{ mit } j=\min\{l \mid x^{(l)}_i \sqsupset x^{(n_k)}_i \}
\end{align*}
Nach dem $n_k$-ten Schritt ist also der $j$-te Schritt der nächste, in dem der Wert der Variablen $x_i$ verändert wird. 
Dieser Index existiert, da die Ungleichungen fair aus der Workset herausgenommen werden. 
Also wird jede Ungleichung, auf deren linker Seite die Variable $x_i$ steht und die in der Workset enthalten ist, irgendwann nach dem $n_k$-ten Schritt wieder betrachtet. 
Insbesondere gilt also $x_i^{(j-1)} = x_i^{(n_k)}$.

Nach Konstruktion entsteht $x^{(j)}_i$ aus $x^{(j-1)}$ durch Betrachten einer unerfüllten Ungleichung $x_i \sqleq f(x)$. 
Damit ist dann \[x^{(n_{k+1})}_i = x^{(j)}_i = x^{(j-1)}_i \widening f(x^{(j-1)}).\] 
Nach Definition eines Widening-Operators wird diese Teilfolge ab einem Index $m_i$ stabil. Damit wird aber auch $(x^{(n)}_i)$ ab $m_i$ stabil.
\end{proof}

\begin{kor}
Wenn das betrachtete Ungleichungssystem endlich ist, garantiert der Workset-Algorithmus mit Widening und Fairness Terminierung.
\end{kor}
\begin{proof}
Wie in Lemma \ref{lem:wsa-mit-widening-und-fairness-terminiert-fuer-jede-var} betrachten wir die Folgen $(x^{(n)}_i)_{n \in \nn}$ 
der für eine Variable $\mathtt{x_i}$ durch den Workset-Algorithmus mit Widening und Fairness berechnete Werte. 
Wir haben im Beweis von Lemma \ref{lem:wsa-mit-widening-und-fairness-terminiert-fuer-jede-var} gezeigt, dass ein endlicher Indes $m_i$ existiert, ab dem diese Folge stabil wird.

Dies gilt für jedes $i \in I$. Also wird $(x^{(n)})$ stabil ab $m := \max\{m_i \mid i\in I\}$. 
Ab dem Index $m$ ist dann jede Ungleichung erfüllt. Wie für den Workset-Algorithmus ohne Widening gilt, dass jede nichterfüllte Ungleichung auf der Workset enthalten ist. 
Wir zeigen nun $W^{(n+1)} = \text{tail}(W^{(n)})$ für alle $n \ge m$, für die die Workset nicht leer ist. 

Sei also $u \in W^{(n)}$ von der Gestalt $x_j \sqgeq f(x)$. Wäre $u$ nicht erfüllt, so würde als neuer Wert 
\[x^{(n+1)}_j = x^{(n)}_j \widening f(x^{(n)})\] 
berechnet. Da $(x^{(n)}_j)$ ab $m$ stabil ist, gilt aber auch 
\[x^{(n+1)}_j = x^{(n)}_j.\] 
Aus $x^{(n)}_j = x^{(n)}_j \widening f(x^{(n)})$ folgt aber 
\[x^{(n)}_j \sqgeq f(x^{(n)})\]
und dies ist ein Widerspruch. Also ist $u$ erfüllt und die neue Workset entsteht aus der alten durch Herausnehmen von $u$. 
Nach $m$ Schritten ist also $(x^{(n)})$ stabil und nach weitern $|W^{(m)}|$ Schritten die Workset leer. Somit terminiert der Algorithmus.
\end{proof}

Diese Lösung muss aber nicht minimal sein, wie das folgende Beispiel zeigt.
\begin{bsp}
Wir betrachten wiederum eine Intervallanalyse. Das Ungleichungssystem sei
\begin{align*}
x &\sqgeq [0,1] \\ x &\sqgeq [0,2].
\end{align*}
Zunächst wird $x$ mit $\emptyset$ und $W$ mit $\{(x \sqgeq [0,1]),(x \sqgeq [0,2])\}$ initialisiert. 
Nun wird die erste Ungleichung betrachtet. Da $\emptyset \not\sqgeq [0,1]$, wird nun $x:= \emptyset \widening [0,1] = [0,1]$ gesetzt. $W$ wird um die erste Ungleichung verringert. 

Auch die zweite Ungleichung ist nicht erfüllt und $x$ erhält den neuen Wert \[x=[0,1] \widening [0,2] = [0,\infty].\] 
Nun ist $W$ leer und der Algorithmus terminiert mit $x=[0,\infty]$. Die kleinste Lösung des Ungleichungssystems ist dagegen \[x = [0,1] \sqcup [0,2] = [0,2].\]
Also ist der für $x$ berechnete Wert nicht minimal.
\end{bsp}

Ist dagegen $I$ nicht endlich, so terminiert der Algorithmus nicht. Allerdings werden für jede Variable nur endlich viele Schritte benötigt, bis für sie ein endgültiger Wert berechnet ist. 
Die Fairness garantiert nun, dass die Ungleichungen, durch die diese Variable verändert wird, oft genug betrachtet werden. 
So kann zwar nicht in endlicher Zeit eine Lösung für alle Variablen berechnet werden, aber für jede einzelne Variable liefert der Algorithmus nach endlicher Zeit einen richtigen Wert. 
Somit kann der Workset-Algorithmus mit Widening zwar nicht unbedingt zum Lösen von unendlichen Ungleichungssystemen benutzt werden, 
wir werden ihn aber dennoch in den folgenden Abschnitten als Referenzpunkt zum Vergleichen verschiedener Analyseansätze benutzen. 
Zunächst werden wir uns jedoch einige Eigenschaften von Widening-Operatoren genauer ansehen.

\section{Nicht-Monotonie}\label{sec:widening-nicht-monoton}
In \autoref{sec:koinzidenz} haben wir zum Vergleich der kleinsten Fixpunkte der Ungleichungssysteme wesentlich 
die Monotonie der verwendeten Abbildungen und sogar geeignete Distributivitätseigenschaften benutzt. 
Dies ist hier nicht möglich: Wie wir im Folgenden zeigen werden, ist ein Widening-Operator im Allgemeinen nicht monoton und damit auch nicht distributiv. 
Wir werden nun an die Definition von Monotonie erinnern und anschließend monotone Widening-Operatoren untersuchen.
\begin{dfn}
Sei $(L,\sqleq)$ ein vollständiger Verband und $\widening: L \times L \to L$ ein Extrapolationsoperator.
Dieser heißt \emph{monoton}\index{monoton}, falls $x \widening y \sqleq x' \widening y'$ für alle $x \sqleq x'$ und $y \sqleq y'$ gilt.
Weiter heißt $\widening$ \emph{idempotent}\index{idempotent}, falls $x \widening x = x$ für jedes $x \in L$ gilt.
\end{dfn}

\begin{bsp} 
Wir betrachten erneut den Widening-Operator $\widening$ des Intervallverbandes. Dieser Widening-Operator ist nach Konstruktion idempontent. 
Aber er ist nicht monoton: Es ist $[0,1] \sqleq [0,2]$, aber $[0,1] \widening [0,2] = [0,\infty] \not\sqleq [0,2] = [0,2] \widening [0,2]$.
\end{bsp}

\begin{satz}
Sei $(L,\sqleq)$ ein vollständiger Verband und $\widening: L \times L \to L$ ein idempotenter und monotoner Extrapolationsoperator. 
Dann gilt bereits $\widening = \sqcup$.
\end{satz}
\begin{proof}
Seien $x,y \in L$ beliebig. Dann gilt mit der Monotonie und der Idempotenz von $\widening$ und zuletzt mit der Tatsache, dass $\widening$ ein Extrapolationsoperator ist
\begin{align*}
x \widening y 
\sqleq (x \sqcup y) \widening (x \sqcup y)
= x \sqcup y
\sqleq x \widening y.
\end{align*}
Somit gilt in obigen Ungleichungen bereits Gleichheit. Es folgt die Behauptung.
\end{proof}
Im Workset-Algorithmus wird der Widening-Operator innerhalb einer Fallunterscheidung benutzt. 
Diese können wir durch eine Abbildung $\widening_{idp}$ kodieren. Sie ist idempotent, wie der folgendende Satz zeigt. 
Wir sagen auch, der Widening-Operator werde \quotes{in idempotenter Weise verwendet}.
\begin{satz}
Sei $(L,\sqleq)$ ein vollständiger Verband und $\widening: L \times L \to L$ ein Extrapolationsoperator. 
Definiere $\widening_\text{idp}: L \times L \to L$ durch 
\[x \widening_\text{idp} y = \begin{cases} x & \text{falls } y \sqleq x \\ x \widening y & \text{sonst} \end{cases}\]
für $x,y \in L$. Dann ist $\widening_\text{idp}$ idempotent.
\end{satz}
\begin{proof} Dies gilt offenbar nach Definition. \end{proof}
Wir können den Workset-Algorithmus nun äquivalent schreiben als
\begin{pseudocode}
W := $\mathcal{U}$;
forall (i$\in$I) $\{\mathtt{x_i}$:=$\bot;\}$
while W $\ne ()$ $\{$
  choose u=($\mathtt{x_j}\sqgeq$f(x)) fairly from W; W := W$\setminus\{$u$\}$;
  t := f(x);
  if $\neg$(t $\sqleq$ $\mathtt{x_j}$) $\{$
    $\mathtt{x_j}$ := $\mathtt{x_j}$ $\widening_\text{idp}$ t;
    forall (($\mathtt{x_k}\sqgeq$g(x)) $\in$ $\mathcal{U}$ : g uses $\mathtt{x_j}$) $\{$W := W:($\mathtt{x_k}\sqgeq$g(x));$\}$
  $\}$
$\}$
\end{pseudocode}
Der Widening-Operator wird also in idempotenter Weise verwendet.
In einer Situation, in der das Bilden der kleinsten oberen Schranke durch $\sqcup$ keine Terminierung garantiert, kann die Verwendung eines Widening-Operators also nicht monoton sein. 
Da es in anderen Situationen unnötig ist, einen Widening-Operator zu verwenden, gehen wir im Folgenden von einer nicht-monotonen Verwendung von Widening-Operatoren aus.

Dadurch entstehen weitere Probleme. Sind die verwendeten Abbildungen alle monoton, so liefert der \quotes{normale} Workset-Algorithmus eine eindeutige Lösung. 
Wird dagegen ein Widening-Operator im Workset-Algorithmus verwendet, so ist dies nicht mehr der Fall:
\begin{bsp}
Wir betrachten das Programm aus Abbildung \ref{bild-uneind-wla-lsg} und machen wieder eine Intervallanalyse.
\begin{figure}[ht]
 \begin{flowgraph}
    \node[startnode]	(s) {$s$};
    \node[stdnode]	(u) 	[below=of s] 			{$u$};
    \node[stdnode]	(v1) 	[below left=of u] 		{$v_1$};
    \node[stdnode]	(v2) 	[below right=of u] 		{$v_2$};
    \node[endnode]	(r) 	[below right=of v1] 	{$r$};
    
    \path[->]	
	    (s) 	edge node {\texttt{x:=0}} 			(u)
	    (u) 	edge node [swap]{\texttt{x:=17}} 	(v1)
	    (u) 	edge node {\texttt{x:=42}} 			(v2)
	    (v1)	edge node [swap]{\texttt{skip}} 	(r)
	    (v2)	edge node {\texttt{skip}} 			(r)
    ;

  \end{flowgraph}
  \caption{Flussgraph zu \texttt{S} = \texttt{x:=0;(x:=17|x:=42)}.}
 \label{bild-uneind-wla-lsg}
\end{figure}

Es ist $\Var = \{x\}$ und $N = \{s,u,v_1,v_2,r\}$. 
Für eine natürliche Zahl $n$ ordnen wir der Zuweisung \texttt{x:=n} die Transferfunktion $f_{\mathtt{x:=n}} : L_\text{IA} \to L_\text{IA}$ mit $f_{\mathtt{x:=n}}(\rho)(x) = [n,n]$ zu. 
Nach Ausführen der Zuweisung hat die Variable $x$ also genau den ihr soeben zugewiesenen Wert.
Die Basisanweisung \texttt{skip} besagt, dass die Variablenwerte unverändert bleiben. Dazu korrespondiert die Identität $\id: L_\text{IA} \to L_\text{IA}$ als Transferfunktion.
Da zu Beginn der Ausführung noch nichts über den Wert von $x$ bekannt ist, ordnen wir jeder Variable im Startpunkt als Initialwert das größtmögliche Intervall zu. Dieses ist $[-\infty,+\infty]$.
Wir erhalten damit als Ungleichungssystem $\mathcal{U}$
\begin{align*}
\tag{$u_1$} \label{u1} I[s](x)	&\supseteq [-\infty,+\infty] \\
\tag{$u_2$} \label{u2} I[u](x)	&\supseteq f_{\mathtt{x:=0}}(I[s])=[0,0]\\
\tag{$u_3$} \label{u3} I[v_1](x)&\supseteq f_{\mathtt{x:=17}}(I[u])=[17,17] \\
\tag{$u_4$} \label{u4} I[v_2](x)&\supseteq f_{\mathtt{x:=42}}(I[u])=[42,42] \\ 
\tag{$u_5$} \label{u5} I[r](x)	&\supseteq f_{\mathtt{skip}}(I[v_1])=I[v_1] \\
\tag{$u_6$} \label{u6} I[r](x)	&\supseteq f_{\mathtt{skip}}(I[v_2])=I[v_2]
\end{align*}

Als Widening-Operator betrachten wir wieder den Operator aus Beispiel \ref{bsp-widening-IA-std}.

Wir berechnen nun eine Lösung des Ungleichungssystems $\mathcal{U}$, indem wir einen Workset-Algorithmus mithilfe des Widening-Operators $\widening$ ausführen.
Dazu initialisieren wir die Workset $W$ mit dem Ungleichungssystem und die Variablen in jedem Programmpunkt mit dem minimalen Element $\emptyset$ von $L_\text{Int}$. 
Es gilt also $W = \mathcal{U}$ und $I[\node](x) = \emptyset$ für alle $\node \in N$.

Wir betrachten nun zunächst die ersten vier Ungleichungen. 
Zu Ungleichung $\eqref{u1}$ setzen wir zunächst $t:= [-\infty,+\infty]$. Es ist nun $\emptyset \not\supseteq [-\infty,+\infty]$ und die Ungleichung ist nicht erfüllt. 
Wir berechnen also als neuen Wert $I[s](x) = \emptyset \widening [-\infty,+\infty] = [-\infty,+\infty]$ und erhalten
\begin{align*}
W = \{\text{\ref{u2}},\dots,\text{\ref{u6}}\} &\text{ und } 
  I[\node](x) = \begin{cases} 
	[-\infty,+\infty] &\text{für } \node=s \\ 
	\emptyset & \text{sonst} .
  \end{cases} 
\intertext{Ebenso ist Ungleichung $\eqref{u2}$ nicht erfüllt. Somit berechnen wir im Programmpunkt $u$ als neuen Wert $I[u](x) = \emptyset \widening [0,0] = [0,0]$ und erhalten}
W = \{\text{\ref{u3}},\dots,\text{\ref{u6}}\} &\text{ und } 
  I[\node](x) = \begin{cases} 
	[-\infty,+\infty] &\text{für } \node=s \\ 
	[0,0] &\text{für } \node=u \\ 
	\emptyset & \text{sonst} .
  \end{cases} 
\intertext{Die Ungleichungen $\eqref{u3}$ und $\eqref{u4}$ sind ebenfalls nicht erfüllt. Mit analogen Rechnungen wie in den ersten beiden Schritten erhalten wir zunächst}
W = \{\text{\ref{u4}},\text{\ref{u5}},\text{\ref{u6}}\} &\text{ und } 
  I[\node](x) = \begin{cases} 
	[-\infty,+\infty] &\text{für } \node=s \\ 
	[0,0] &\text{für } \node=u \\ 
	[17,17] &\text{für } \node=v_1 \\ 
	\emptyset & \text{sonst} 
  \end{cases}
\intertext{und dann}
W = \{\text{\ref{u5}},\text{\ref{u6}}\} &\text{ und } 
  I[\node](x) = \begin{cases} 
	[-\infty,+\infty] &\text{für } \node=s \\ 
	[0,0] &\text{für } \node=u \\ 
	[17,17] &\text{für } \node=v_1 \\ 
	[42,42] &\text{für } \node=v_2 \\ 
	\emptyset & \text{für } \node=r  .
  \end{cases}
\intertext{Die nächsten beiden Ungleichungen betrachten wir nun in verschiedenen Reihenfolgen. Zunächst beginnen wir mit Ungleichung $\eqref{u5}$. 
Auch diese ist nicht erfüllt und wir erhalten analog zu oben}
W = \{\text{\ref{u6}}\} &\text{ und } 
  I[\node](x) = \begin{cases} 
	[-\infty,+\infty] &\text{für } \node=s \\ 
	[0,0] &\text{für } \node=u \\ 
	[17,17] &\text{für } \node=v_1 \\ 
	[42,42] &\text{für } \node=v_2 \\ 
	[17,17] & \text{für } \node=r  .
  \end{cases}
\intertext{Auch die letzte Ungleichung ist nicht erfüllt und wir berechnen \[I[r](x) = [17,17] \widening [42,42]= [17,\infty].\] Damit erhalten wir}
W = \emptyset &\text{ und } 
  I[\node](x) = \begin{cases} 
	[-\infty,+\infty] &\text{für } \node=s \\ 
	[0,0] &\text{für } \node=u \\ 
	[17,17] &\text{für } \node=v_1 \\ 
	[42,42] &\text{für } \node=v_2 \\ 
	[17,\infty] & \text{für } \node=r  .
  \end{cases}
\intertext{Falls wir umgekehrt zunächst $\eqref{u6}$ betrachten, erhalten wir}
W = \{\text{\ref{u5}}\} &\text{ und } 
  I[\node](x) = \begin{cases} 
	[-\infty,+\infty] &\text{für } \node=s \\ 
	[0,0] &\text{für } \node=u \\ 
	[17,17] &\text{für } \node=v_1 \\ 
	[42,42] &\text{für } \node=v_2 \\ 
	[42,42] & \text{für } \node=r  .
  \end{cases}
\intertext{und nach Betrachtung von Ungleichung $\eqref{u5}$ schließlich}
W = \emptyset &\text{ und } 
  I[\node](x) = \begin{cases} 
	[-\infty,+\infty] &\text{für } \node=s \\ 
	[0,0] &\text{für } \node=u \\ 
	[17,17] &\text{für } \node=v_1 \\ 
	[42,42] &\text{für } \node=v_2 \\ 
	[-\infty,42] & \text{für } \node=r  .
  \end{cases}
\end{align*}
Die berechneten Werte im Programmpunkt $r$ unterscheiden sich also und sind sogar unvergleichbar. 
Somit hängt das Ergebnis des Workset-Algorithmus von der Strategie ab, nach der die Ungleichungen der Workset betrachtet werden.
\end{bsp}
Dieses Problem kann dadurch gelöst werden, dass man anstatt einer Lösung die gesamte Lösungsmenge berechnet 
und anschließend die Lösungsmengen der Ungleichungssysteme des funktionalen und des Call-String-Ansatzes vergleicht. 
Mögliche Ergebnisse dabei könnten sein, dass die eine Menge die andere enthält oder dass zu jedem Element der einen Menge ein kleineres in der anderen existiert.

\section{Nicht-Terminierung}
In \autoref{sec:widening-nicht-monoton} haben wir festgestellt, dass Widening-Operatoren im Workset-Al\-go\-rith\-mus im Allgemeinen in einer nicht-monotonen Weise verwendet werden. 
Dies führte dazu, dass die berechnete Lösung nicht eindeutig sein muss. Deshalb wollen wir die Mengen solcher Lösungen betrachten. 
Beim Berechnen dieser Lösungsmengen gibt es aber weitere Probleme, die wir im Folgenden untersuchen werden. 

Wie in \autoref{sec:wla} bereits gesehen, terminiert der Workset-Algorithmus nicht, wenn das betrachtete Ungleichungssystem unendlich viele Ungleichungen enthält. 
Im Allgemeinen werden für ein Programm unendlich viele Call-Strings berechnet. 
Das Ungleichungssystem \eqref{A-Ugs} aus dem funktionalen Ansatz enthält dann unendlich viele Ungleichungen und kann mit einem Workset-Algorithmus nicht gelöst werden. 

In Lemma \ref{lem:wsa-mit-widening-und-fairness-terminiert-fuer-jede-var} haben wir aber gesehen, dass für jede Variable in endlicher Zeit ein Wert berechnet wird, 
wenn wir im Workset-Algorithmus einen Widening-Operator und Fairness benutzen. 
Somit kann zwar in endlicher Zeit keine Lösung mehr berechnet werden, aber aus theoretischer Sicht ist dennoch vertretbar, 
auch für Call-String einen Workset-Algorithmus mit Widening und Fairness zu benutzten, 
wenn dieser als Referenzpunkt zum Vergleich der Lösungen des funktionalen und des Call-String-Ansatzes benutzt werden soll.

Auch die Ungleichungssysteme, die Transferfunktionen als Summary-Informationen berechnen bzw.~benutzen, sind nicht direkt lösbar. 
Um deren Lösung sinnvoll mit der des Ungleichungssystems aus dem Call-String-Ansatz vergleichen zu können, muss in beiden Fällen derselbe Widening-Operator im Workset-Algorithmus benutzt werden. 
Dazu wählt man zunächst einen Widening-Operator $\widening: L \times L \to L$, um mit diesem die beiden Ungleichungssysteme \eqref{R-Ugs} und \eqref{A-Ugs} zu lösen, 
welche die eigentlichen Informationen berechnen. Zunächst muss jedoch das Ungleichungssystem \eqref{T-Ugs} zur Berechnung der Summary-Informationen gelöst werden. 
Dieses benötigt nun einen Widening-Operator $\widening^\ast: (L \to L) \times (L \to L) \to (L \to L)$. 
Der kanonische Lift von $\widening$ zu einem Operator auf $(L \to L)$ ist aber im Allgemeinen kein Widening-Operator, wie das folgende Lemma zeigt.
\begin{lemma}\label{lem-widening-lift}
Sei $\widening: L \times L \to L$ ein Widening-Operator und 
\[\widening^\ast: (L \to L) \times (L \to L) \to (L \to L) \]
mit
\[(f \widening^\ast g)(l) := f(l) \widening g(l)\]
der natürliche Lift von $\widening$ auf $(L \to L)$.
Dann ist $\widening^\ast$ ein Extrapolationsoperator. Außerdem ist $\widening^\ast$ genau dann ein Widening-Operator, wenn für jede Folge $(f_n)_{n\in\nn}$ gilt:
\[m_0 := \sup\{m_l \mid l \in L\} < \infty.\]
Dabei sei 
\[h_n := \begin{cases}
f_1 & n=1 \\
h_{n-1} \widening^\ast f_n & \text{sonst}
\end{cases}\]
und $m_l := \min\{n \in \nn \mid \forall i\ge 0:h_{n+i}(l) = h_n(l)\}$.
\end{lemma}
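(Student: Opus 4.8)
The plan is to treat the two assertions separately and to reduce both to the \emph{pointwise} behaviour of the sequences involved. For the first assertion, that $\widening^\ast$ is an Extrapolationsoperator, I would argue directly. By Beispiel~\ref{bsp:Abbildungsverband} the join in the complete lattice $((L\to L),\sqleqmap)$ is computed pointwise, so $(f\sqcup g)(l)=f(l)\sqcup g(l)$. Since $\widening$ is an Extrapolationsoperator on $L$, we obtain for every $l\in L$
\[
(f\widening^\ast g)(l)=f(l)\widening g(l)\sqgeq f(l)\sqcup g(l)=(f\sqcup g)(l),
\]
and hence $f\widening^\ast g\sqgeqmap f\sqcup g$. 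The very same inequality shows that the sequence $(h_n)$ attached to any sequence $(f_n)$ is pointwise ascending, i.e.\ $h_{n}\sqleqmap h_{n+1}$; I will use this monotonicity of the chain throughout the second part.

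The heart of the proof is the following observation, which I would isolate first: for each fixed $l\in L$ the sequence $(h_n(l))_n$ in $L$ is precisely the widening sequence obtained by feeding the sequence $(f_n(l))_n$ into $\widening$, since $h_1(l)=f_1(l)$ and $h_n(l)=h_{n-1}(l)\widening f_n(l)$. Because $\widening$ is a \emph{Widening-Operator} on $L$, this sequence becomes stable, so the index set $\{n\in\nn\mid\forall i\ge 0:h_{n+i}(l)=h_n(l)\}$ is nonempty and its minimum $m_l$ is well-defined and finite. Moreover, since $(h_n(l))_n$ is ascending, once it is stable it stays stable; thus this index set is exactly $\{n\in\nn\mid n\ge m_l\}$. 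The point to stress is that \emph{each individual} $m_l$ is automatically finite; the condition in the lemma is only about whether these indices are \emph{uniformly} bounded in $l$.

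With this in hand I would characterize stabilization of $(h_n)$ in $(L\to L)$. Equality of functions being pointwise, the sequence $(h_n)$ is stable from an index $N$ if and only if $h_{N+i}(l)=h_N(l)$ for all $l\in L$ and all $i\ge 0$, which by the previous paragraph holds if and only if $N\ge m_l$ for every $l$, i.e.\ if and only if $N\ge\sup\{m_l\mid l\in L\}=m_0$. Consequently $(h_n)$ becomes stable for \emph{some} $N$ exactly when $m_0<\infty$. Quantifying over all sequences $(f_n)$ then yields the claimed biconditional: $\widening^\ast$ is a Widening-Operator if and only if $m_0<\infty$ for every sequence (for the direction ``$\Leftarrow$'' one takes $N=m_0$; for ``$\Rightarrow$'' any stabilization index $N$ of $(h_n)$ dominates every $m_l$, so $m_0\le N<\infty$).

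The only genuinely delicate point — and the step I expect to be the real obstacle — is the quantifier interchange between ``for each $l$ there exists a stabilization index'' and ``there exists one index valid for all $l$ simultaneously.'' The failure of this interchange is exactly what the finiteness of $m_0=\sup_l m_l$ governs, and getting the argument clean requires the monotonicity of $(h_n)$ established in the first paragraph (so that the per-point stabilization index sets are upward closed and $m_l$ behaves like a genuine threshold). The remaining manipulations are routine and pointwise.
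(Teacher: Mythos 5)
Your proposal is correct and follows essentially the same route as the paper's proof: both reduce everything to the pointwise sequences $(h_n(l))_n$, observe that these are widening sequences in $L$ (so each $m_l$ is finite), and then identify stabilization of $(h_n)$ in $(L\to L)$ with the uniform bound $\sup\{m_l \mid l\in L\}<\infty$ — the paper phrases this as the equality $m_h = m_0$ between the global and pointwise stabilization indices, which is exactly your characterization of the stabilization-index set as $\{N \mid N \ge m_0\}$. The only cosmetic difference is that your appeal to the monotonicity of $(h_n)$ is not actually needed (stability from an index trivially implies stability from any later index), but this does not affect correctness.
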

\begin{proof}
Offenbar ist für jedes $x \in L$
\[ f(x) \sqleq f(x) \widening g(x) = (f \widening^\ast g)(x). \]
Ebenso ist $g(x) \sqleq (f \widening^\ast g)(x)$ für jedes $x \in L$. Also ist $\widening^\ast$ ein Extrapolationsoperator.

Sei nun $(f_n)_{n \in \nn}$ beliebige Folge in $(L \to L)$ und $(h_n)_{n \in \nn}$ wie oben definiert. 
Der geliftete Operator $\widening^\ast$ ist also ein Widening-Operator, wenn $(h_n)_{n \in \nn}$ irgendwann stabil wird. Dies wiederum ist genau dann der Fall, wenn
\[m_h := \inf\{m \in \nn \mid \forall i \ge 0:h_{m+i}=h_m\}<\infty.\] 
Ist $m_h$ unendlich, so wird die Folge nicht stabil. Andernfalls ist $m_h$ der Index, ab dem $(h_n)_{n \in \nn}$ stabil ist.

Nach Definition ist \[h_{n+1}(l) = h_{n}(l) \widening f_{n+1}(l).\] Da $\widening$ ein Widening-Operator ist, wird die so definierte Folge stabil. 
Also existiert \[m_l = \min\{n \in \nn \mid \forall i\ge 0:h_{n+i}(l) = h_n(l)\}\] und $m_l$ ist der Index, ab dem die Folge $(h_n(l))_{l \in L}$ stabil wird. 
Die Idee ist nun, dass die Folge $(h_n)_{n \in \nn}$ stabil wird, wenn $(h_n(l))_{n \in \nn}$ für jedes $l \in L$ stabil ist. 
Letzteres ist der Fall, wenn $m_0 = \sup\{m_l \mid l \in L\}$ endlich ist, ersteres, wenn $m_h$ endlich ist. Dies motiviert die Äquivalenz. Diese zeigen wir nun formal, in dem wir sogar $m_h=m_0$ zeigen.

Nach Definition von $m_0$ gilt $m_l \le m_0$ für jedes $l \in L$. Es gilt also 
\[h_{m_0+i}(l) = h_{m_0}(l)\]
für alle $l \in L$ und $i \ge 0$.
Es folgt $m_0 \ge m_h$.

Andererseits gilt auch $m_l \le m_h$ für alle $l \in L$. Also ist $m_h$ eine obere Schranke der Menge $\{m_l \mid l \in L\}$ und es folgt $m_0 \le m_h$.
\end{proof}
Ist der geliftete Operator $\widening^\ast$ kein Widening-Operator, so muss ein von $\widening$ unabhängiger Operator $\widening^\ast$ gewählt werden, um das Ungleichungssystem zu lösen. 
Dann ist aber im Allgemeinen nicht zu erwarten, die Lösungsmengen vom funktionalen und vom Call-String-Ansatz vergleichen zu können. 
Alternativ könnte zum Lösen dieses Ungleichungssystems kein Widening-Operator verwendet werden. In dem Fall muss allerdings im Allgemeinen auch auf Terminierung verzichtet werden.

\section{Nicht-Koinzidenz}
In Kapitel \ref{chap:PA} haben wir zwei Ansätze zur Analyse interprozeduraler Programme vorgestellt und gezeigt, 
dass diese die gleiche Lösung liefern, wenn die verwendeten Abbildungen gewissen Distributivitätseigenschaften genügen. 
Nun wäre wünschenswert, ein entsprechendes Resultat zu haben, wenn die Ungleichungssysteme mit einem Workset-Algorithmus mit Widening gelöst werden. 
Zur Herleitung des Koinzidenzresultates in Kapitel \ref{chap:PA} haben wir wesentlich die Monotonie der verwendeten Abbildungen benutzt. 
Dies geht nun nicht mehr, da ein Widening-Operator nicht monoton sein muss. 
So könnte beispielsweise im Rahmen der Intervallanalyse für einen dieser beiden Ansätze $[0,1] \widening [0,2] = [0,\infty]$ berechnet werden. 
Der andere Ansatz berechnet aber eventuell an einer entsprechenden Stelle den Wert $[0,2] \widening [0,2] = [0,2]$. 
Es ist also nicht möglich, so zu argumentieren, dass der eine Ansatz für eine Variable eine Kette berechnet, in der die Werte immer oberhalb der entsprechenden Kette im 
anderen Ansatz liegen. Diese \quotes{Sprünge} des Widening-Operators lassen also vermuten, dass es vielleicht unmöglich ist, wieder ein Koinzidenztheorem zu gewinnen. 
Wir werden im Folgenden zeigen, dass diese Vermutung stimmt. Weiterhin werden wir sehen, dass es auch unmöglich ist zu zeigen, 
dass die Lösungsmenge des einen Ansatzes immer in der des anderen enthalten ist. 
Dazu geben wir zwei einfache Beispiele an, die gleichzeitig alle der im Folgenden aufgelisteten Aussagen widerlegen.

Seien $\mathcal{L}_R$ bzw.~$\mathcal{L}_A$ die Lö\-sungs\-men\-gen des funktionalen bzw.~ des Call-String-Ansatzes, 
die vom Workset-Algorithmus mit Widening für ein zugrundeliegendes Framework $(L,\sqleq,\mathcal{F})$, ein Programm $G$ und einen Widening-Operator $\widening$ berechnet werden.

\textit{Koinzidenz:}
\begin{enumerate}
\item[(A)] Für alle monotonen Frameworks $(L,\sqleq,\mathcal{F})$, Widening-Operatoren $\widening$ und Flussgraphsysteme $G$ gilt
	  \[\mathcal{L}_A = \mathcal{L}_R.\]
\end{enumerate}
\textit{Inklusion: Die Lösungen des einen Ansatzes enthalten die des anderen.}
\begin{enumerate}
\item[(B)] Für alle monotonen Frameworks $(L,\sqleq,\mathcal{F})$, Widening-Operatoren $\widening$ und Flussgraphsysteme $G$ gilt
	  \[\mathcal{L}_A \supseteq \mathcal{L}_R.\]
\item[(C)] Für alle monotonen Frameworks $(L,\sqleq,\mathcal{F})$, Widening-Operatoren $\widening$ und Flussgraphsysteme $G$ gilt
	  \[\mathcal{L}_R \supseteq \mathcal{L}_A.\]
\end{enumerate}
\textit{Der eine Ansatz liefert nur bessere oder genauso gute Lösungen wie der andere.}
\begin{enumerate}
\item[(D)] Für alle monotonen Frameworks $(L,\sqleq,\mathcal{F})$, Widening-Operatoren $\widening$ und Flussgraphsysteme $G$ gilt
	  \[\forall x \in \mathcal{L}_A~\forall y \in \mathcal{L}_R:~x \sqleq y.\]
\item[(E)] Für alle monotonen Frameworks $(L,\sqleq,\mathcal{F})$, Widening-Operatoren $\widening$ und Flussgraphsysteme $G$ gilt
	  \[\forall x \in \mathcal{L}_R~\forall y \in \mathcal{L}_A:~x \sqleq y.\]
\end{enumerate}
\textit{Der eine Ansatz liefert eine Lösung, die besser ist oder genauso gut wie alle Lösungen des anderen.}
\begin{enumerate}
\item[(F)] Für alle monotonen Frameworks $(L,\sqleq,\mathcal{F})$, Widening-Operatoren $\widening$ und Flussgraphsysteme $G$ gilt
	  \[\exists x \in \mathcal{L}_A~\forall y \in \mathcal{L}_R:~x \sqleq y.\]
\item[(G)] Für alle monotonen Frameworks $(L,\sqleq,\mathcal{F})$, Widening-Operatoren $\widening$ und Flussgraphsysteme $G$ gilt
	  \[\exists x \in \mathcal{L}_R~\forall y \in \mathcal{L}_A:~x \sqleq y.\]
\end{enumerate}
\textit{Für jede Lösung des anderen findet der eine Ansatz eine Lösung, die besser oder genauso gut ist.}
\begin{enumerate}
\item[(H)] Für alle monotonen Frameworks $(L,\sqleq,\mathcal{F})$, Widening-Operatoren $\widening$ und Flussgraphsysteme $G$ gilt
	  \[\forall y \in \mathcal{L}_A~\exists x \in \mathcal{L}_R:~x \sqleq y.\]	  
\item[(I)] Für alle monotonen Frameworks $(L,\sqleq,\mathcal{F})$, Widening-Operatoren $\widening$ und Flussgraphsysteme $G$ gilt
	  \[\forall y \in \mathcal{L}_R~\exists x \in \mathcal{L}_A:~x \sqleq y.\]	  
\end{enumerate}
\textit{Für jede Lösung des einen Ansatzes gibt es im anderen eine Lösung, so dass die des ersten besser oder genauso gut ist.}
\begin{enumerate}
\item[(J)] Für alle monotonen Frameworks $(L,\sqleq,\mathcal{F})$, Widening-Operatoren $\widening$ und Flussgraphsysteme $G$ gilt
	  \[\forall x \in \mathcal{L}_R~\exists y \in \mathcal{L}_A:~x \sqleq y.\]	  
\item[(K)] Für alle monotonen Frameworks $(L,\sqleq,\mathcal{F})$, Widening-Operatoren $\widening$ und Flussgraphsysteme $G$ gilt
	  \[\forall x \in \mathcal{L}_A~\exists y \in \mathcal{L}_R:~x \sqleq y.\]	  
\end{enumerate}
In jedem dieser Fälle wäre der \quotes{eine} Ansatz dem \quotes{anderen} vorzuziehen.

Das erste Beispiel zeigt, dass es ein Programm, ein universell-distributives Framework und einen Widening-Operator gibt, 
so dass der Call-String Ansatz eine Lösung liefert, die zu allen Lösungen des funktionalen Ansatzes unvergleichbar ist. 
Dabei nutzen wir aus, dass zum Berechnen des Effektes einer Prozedur einmalig entschieden wird, in welcher Reihenfolge die entsprechenden Ungleichungen gelöst werden. 
Diese Ungleichungen korrespondieren, bis auf Prozeduraufrufe, mit denen vom Call-String-Ansatz. 
Gibt es nun mehrere Prozeduraufrufe, so berechnet der Call-String-Ansatz Informationen für die verschiedenen Kopien. Dabei können die Ausführungsreihenfolgen unabhängig voneinander gewählt werden. 
Werden sie in den einzelnen Kopien nun verschieden gewählt, so kann dies durch die Summary-Information der Prozedur nicht nachgebildet werden, 
da diese nur zu einer der möglichen Reihenfolgen korrespondiert.
\begin{bsp}\label{bsp1}
Wir betrachten den vollständigen Verband $(L, \sqleq)$, der durch das Hasse-Diagramm in Abbildung \ref{ggbsp1-verband} gegeben ist.
\begin{figure}[ht]\begin{center}\begin{tikzpicture}[node distance = 0.5cm]
\node	(top) 					{$\top$};
\node	(8) [below right=of top]{$l_8$};
\node	(7) [below left=of top]	{$l_7$};
\node	(6) [below right=of 7]	{$l_6$};
\node	(5) [below right=of 6]	{$l_5$};
\node	(4) [below left=of 6]	{$l_4$};
\node	(3) [below right=of 4]	{$l_3$};
\node	(2) [below =of 3]		{$l_2$};
\node	(1) [below =of 2]		{$l_1$};
\node	(bot) [below =of 1]		{$\bot$};
\node	(inv) [left=of top] 	{};
\node 	(L)	[left=of inv]		{$L=$};
\path[-]
  (top)	edge node {} (8)
  (top)	edge node {} (7)
  (8)	edge node {} (6)
  (7)	edge node {} (6)
  (6)	edge node {} (4)
  (6)	edge node {} (5)
  (5)	edge node {} (3) 
  (4)	edge node {} (3) 
  (3)	edge node {} (2)
  (2)	edge node {} (1)
  (1)	edge node {} (bot)
;
\end{tikzpicture}
\caption{Hasse-Diagramm des zugrundeliegenden Verbandes in Beispiel \ref{bsp1}.}
\label{ggbsp1-verband} 
\end{center}\end{figure}
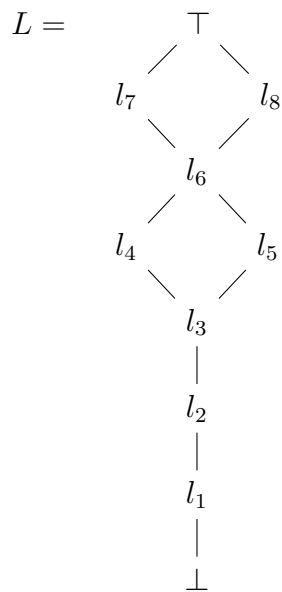

Definiere nun $f,g:L \to L$ durch
\begin{align*}
f(l) &= \begin{cases} 					
  \bot 	& \text{ falls } l=\bot \\
  l_1 	& \text{ falls } l=l_1 \\
  l_4 	& \text{ falls } l=l_2,l_3 \\
  \top 	& \text{ sonst}
\end{cases} &\text{bzw.}&&
g(l) &= \begin{cases} 					
  \bot 	& \text{ falls } l=\bot \\
  l_2 	& \text{ falls } l=l_1 \\
  l_3 	& \text{ falls } l=l_2 \\
  l_5 	& \text{ falls } l=l_3 \\
  \top 	& \text{ sonst}.
\end{cases} 
\end{align*}
Diese Abbildungen sind offenbar universell-distributiv. Mit $\mathcal{F}$ bezeichnen wir das kleinste unter Komposition abschlossene System von Abbildungen, das $\id_L$, $f$ und $g$ enthält. 
Da die Komposition universell-distributiver Abbildungen wieder universell-distributiv ist, ist $(L,\sqleq,\mathcal{F})$ ein universell-distributives monotones Framework.

Zuletzt definiere $\widening: L \times L \to L$ durch
\[
x \widening y = \begin{cases}
  l_3 & \text{ falls } (x,y) = (l_1,l_2) \\
  l_7 & \text{ falls } (x,y) = (l_3,l_4) \\
  l_7 & \text{ falls } (x,y) = (l_4,l_5) \\
  l_8 & \text{ falls } (x,y) = (l_5,l_4) \\
  x \sqcup y & \text{ sonst}.
\end{cases}
\]
Dies ist offenbar ein Extrapolationsoperator. Da $L$ endlich ist, ist $\widening$ sogar ein Widening-Operator. Außerdem ist $x \widening y = x$ für $y \sqleq x$ erfüllt. 
Also stimmt $\widening$ mit seiner idempotenten Variante $\widening_{\text{idp}}$ überein. 
Außerdem ist nach Lemma \ref{lem-widening-lift} der Lift $\widening^\ast$ von $\widening$ auf $(L\to L)$ wieder ein Widening-Operator.

Sei nun $\Var = \{x\}$ und das Programm gegeben als das Flussgraphsystem aus Abbildung \ref{ggbsp1-fgs}.
\begin{figure}[ht]\begin{flowgraph}
\node				(procmain)						{$\main$:};
\node[startnode]	(s) 		[below=of procmain]	{$s$};
\node[stdnode]		(u)			[below=of s] 	{$u$};
\node[endnode]		(r)			[below=of u]	{$r$};
\node				(inv1)		[right=of procmain]	{};
\node				(inv2)		[right=of inv1]	{};
\node				(inv3)		[right=of inv2]	{};
\node				(procp) 	[right=of inv3]		{$\p$:};
\node[startnode]	(sp) 		[below=of procp]	{$s_\p$};
\node[endnode]		(rp)		[below=of sp] 		{$r_\p$};
\path[->]
  (s) 	edge 					node {$\mathtt{p()}$} (u)
  (u)	edge 					node {$\mathtt{p()}$} (r)
  (sp)	edge [bend right,swap] 	node {$\mathtt{x:=f(x)}$} (rp)
  (sp)	edge [bend left] 		node {$\mathtt{x:=g(x)}$} (rp)  
;
\end{flowgraph}
\caption{Flussgraphsystem in Beispiel \ref{bsp1}.}
\label{ggbsp1-fgs}
\end{figure}

Wir stellen die zugehörigen Ungleichungssysteme auf, die den Wert von $x$ in den Programmpunkten bestimmen. Dabei sei $\init = l_1$. 
Weiter setzen wir $e_1 := (s,\mathtt{p()},u)$ und $e_2 := (u,\mathtt{p()},r)$. Das Ungleichungssystem zur Berechnung der Call-Strings hat folgende Gestalt:
\begin{align*}
\CS[\main] 	&\supseteq \{\eps\} \\
\CS[\p] 	&\supseteq \{e_1\} \\
\CS[\p] 	&\supseteq \{e_2\} 
\end{align*}
Dies hat als Lösung
\begin{align*} \underline{\CS}[\main] \supseteq \{\eps\} \text{ und } \underline{\CS}[\p] \supseteq \{e_1,e_2\}. \end{align*}
Das Ungleichungssystem zur Bestimmung des Wertes von $x$ in den Programmpunkten ist nun 
\begin{align*}
A[s,\eps]	&\sqgeq \init			\tag{a1} \label{bsp1:a1}	\\
A[s_\p,e_1]	&\sqgeq A[s,\eps]		\tag{a2} \label{bsp1:a2}	\\
A[r_\p,e_1]	&\sqgeq f(A[s_p,e_1])	\tag{a3} \label{bsp1:a3}	\\
A[r_\p,e_1]	&\sqgeq g(A[s_p,e_1])	\tag{a4} \label{bsp1:a4}	\\
A[u,\eps]	&\sqgeq A[r_\p,e_1]		\tag{a5} \label{bsp1:a5}	\\
A[s_\p,e_2]	&\sqgeq A[u,\eps] 		\tag{a6} \label{bsp1:a6}	\\
A[r_\p,e_2]	&\sqgeq f(A[s_p,e_2])	\tag{a7} \label{bsp1:a7}	\\
A[r_\p,e_2]	&\sqgeq g(A[s_p,e_2])	\tag{a8} \label{bsp1:a8}	\\
A[r,\eps]	&\sqgeq A[r_\p,e_2]		\tag{a9} \label{bsp1:a9}
\end{align*}
Eine mögliche Ausführung des Workset-Algorithmus mit Widening ist die folgende:
\begin{itemize}
\item Zunächst wird jeder Programmpunkt mit $\bot$ initialisiert. Die Workset ist nun $W = \{\eqref{bsp1:a1},\dots,\eqref{bsp1:a9}\}$. 
\item Dann wird Ungleichung \eqref{bsp1:a1} betrachtet. Diese ist nicht erfüllt, also wird $A[s,\eps]$ auf den Wert $\bot \sqcup \init = l_1$ gesetzt. 
Dann ist $W = \{\eqref{bsp1:a2},\dots,\eqref{bsp1:a9}\}$.
\item Nun ist Ungleichung \eqref{bsp1:a2} unerfüllt. Also wird $A[s_\p,e_1]$ auf den Wert $\bot \sqcup l_1 = l_1$ gesetzt. Es ist $W = \{\eqref{bsp1:a3},\dots,\eqref{bsp1:a9}\}$.
\item Als nächstes wird Ungleichung \eqref{bsp1:a3} betrachtet. Also wird $A[r_\p,e_1]$ auf den Wert $\bot \sqcup f(l_1) = l_1$ gesetzt. Es ist $W = \{\eqref{bsp1:a4},\dots,\eqref{bsp1:a9}\}$.
\item Ebenso ist Ungleichung \eqref{bsp1:a4} unerfüllt. Also wird für $A[r_\p,e_1]$ als neuer Wert $l_1 \widening g(l_1) = l_1 \widening l_2 = l_3$ berechnet. 
Damit ist $W = \{\eqref{bsp1:a5},\dots,\eqref{bsp1:a9}\}$.
\item Anschließend wird Ungleichung \eqref{bsp1:a5} betrachtet, die nicht erfüllt ist. Dann wird $A[u,\eps]$ auf $l_3$ gesetzt und es ist $W=\{\eqref{bsp1:a6},\dots,\eqref{bsp1:a9}\}$.
\item Damit ist Ungleichung \eqref{bsp1:a6} unerfüllt und es wird auch $A[s_\p,e_2]$ auf $l_3$ gesetzt. Es ist $W = \{\eqref{bsp1:a7},\eqref{bsp1:a8},\eqref{bsp1:a9}\}$.
\item Nun sind die Ungleichungen \eqref{bsp1:a7} und \eqref{bsp1:a8} unerfüllt. Wir betrachten diese in umgekehrter Reihenfolge wie für die erste Kopie und beginnen also mit der letzteren. 
Damit erhält $A[r_\p,e_2]$ den Wert $g(l_3) = l_5$ und es ist $W = \{\eqref{bsp1:a7},\eqref{bsp1:a9}\}$.
\item Danach wird \eqref{bsp1:a7} betrachtet. Auch diese Ungleichung ist nicht erfüllt. 
Also wird $A[r_\p,e_2]$ zu $l_5 \widening f(l_3) = l_5 \widening l_4 = l_8$ berechnet und es ist $W = \{\eqref{bsp1:a9}\}$.
\item Zuletzt wird also \eqref{bsp1:a9} betrachtet. Da diese Ungleichung wiederum nicht erfüllt ist, wird $A[r,\eps]$ auf den Wert $l_8$ gesetzt und die Workset ist leer.
\end{itemize}
Nun wird die Lösung $\hat{A}$ berechnet. Dazu wird für jeden Programmpunkt die kleinste obere Schranke derjenigen Werte bestimmt, die in den einzelnen Kopien berechnet wurden. 
Dann ist $\hat{A}$ gegeben durch
\begin{align*}
\hat{A}[s]	&= l_1 	& \hat{A}[s_\p]	&= l_1 \sqcup l_3 = l_3\\
\hat{A}[u]	&= l_3 	& \hat{A}[r_\p]	&= l_1 \sqcup l_8 = l_8\\
\hat{A}[r]	&= l_8. 	&&\\
\end{align*}
Als nächstes betrachten wir den funktionalen Ansatz. Die beiden Ungleichungssysteme hierzu sind gegeben durch 
\begin{align*}
T[s]	&\sqgeqmap \id 				\tag{t1} \label{bsp1:t1}	\\
T[u]	&\sqgeqmap T[r_\p] \circ T[s] \tag{t2} \label{bsp1:t2}	\\
T[r]	&\sqgeqmap T[r_\p] \circ T[u] \tag{t3} \label{bsp1:t3}	\\
T[s_p]	&\sqgeqmap \id				\tag{t4} \label{bsp1:t4}	\\
T[r_p]	&\sqgeqmap f \circ T[s_p] 	\tag{t5} \label{bsp1:t5}	\\
T[r_p]	&\sqgeqmap g \circ T[s_p] 	\tag{t6} \label{bsp1:t6}	
\end{align*}
sowie
\begin{align*}
R[s]	&\sqgeq \init 				\tag{r1} \label{bsp1:r1}	\\
R[u]	&\sqgeq T[r_\p](R[s])		\tag{r2} \label{bsp1:r2}	\\
R[r]	&\sqgeq T[r_\p](R[u])		\tag{r3} \label{bsp1:r3}	\\
R[s_p]	&\sqgeq R[s]				\tag{r4} \label{bsp1:r4}	\\
R[s_p]	&\sqgeq R[u]				\tag{r5} \label{bsp1:r5}	\\
R[r_p]	&\sqgeq f(R[s_p])			\tag{r6} \label{bsp1:r6}	\\
R[r_p]	&\sqgeq g(R[s_p])			\tag{r7} \label{bsp1:r7}.	
\end{align*}
Wir lösen zuerst das Ungleichungssystem zur Bestimmung der Transferfunktionen. 
Da genauer nur der Wert $T[r_\p]$ zum Lösen des zweiten Ungleichungssystems relevant ist, beschränken wir uns darauf, die hier möglichen Werte zu berechnen.
\begin{itemize}
\item Zunächst wird jede Variable mit der Abbildung initialisiert, die konstant den Wert $\bot$ liefert. Die Workset enthält alle Ungleichungen, also ist $W = \{\eqref{bsp1:t1},\dots,\eqref{bsp1:t6}\}$.
\item Unabhängig von der Ausführungsreihenfolge wird \eqref{bsp1:t4} nur einmal betrachtet: Da diese Ungleichung zu Beginn nicht erfüllt ist, muss sie mindestens einmal betrachtet werden. 
Dabei wird sie aus der Workset herausgenommen. Da die Ungleichung auf der rechten Seite keine Variable benutzt, wird sie zu keinem Zeitpunkt erneut der Workset hinzugefügt. 
Also wird sie genau einmal betrachtet und erhält den Wert $T[s_\p] = \id$.
\item Anschließend sind die Ungleichungen \eqref{bsp1:t5} und \eqref{bsp1:t6} nicht erfüllt. Diese werden ebenfalls genau einmal betrachtet. 
Im ersten Schritt ist $T[r_\p] = f$ oder $T[r_\p] = g$ und es bleibt, die jeweilige andere Ungleichung zu betrachten. 
\item Somit ist zuletzt $T[r_\p] \in \{f\widening^\ast g, g \widening^\ast f\}$.
\end{itemize}
Wir wenden uns nun dem Ungleichungssystem zur Bestimmung des Wertes von $x$ in den Programmpunkten zu. Sei $\underline{R}$ eine Lösung, die vom Workset-Algorithmus mit Widening berechnet wurde. 
Wir zeigen nun, dass $\underline{R}[r]$ und $\hat{A}[r]$ stets unvergleichbar sind:

Zu Beginn wird wieder jede Variable mit $\bot$ und die Workset mit der Menge aller Ungleichungen initialisiert. 
Nun ist \eqref{bsp1:r1} die einzige nicht erfüllte Ungleichung und $R[s]$ enthält den Wert $\bot \widening \init = l_1$. 
Mit derselben Argumentation wie zuvor folgt, dass diese Ungleichung nie wieder betrachtet wird. Anschließend ist die Ungleichung \eqref{bsp1:r2} unerfüllt. 
Also wird $R[u]= \underline{T}[r_\p](l_1)$ berechnet. 
Da diese Ungleichung nur die Variable $R[s]$ benutzt, diese aber nur in Ungleichung \eqref{bsp1:r1} auf der linken Seite vorkommt 
und letztere Ungleichung nach voriger Argumentation nie wieder betrachtet wird, wird auch \eqref{bsp1:r2} nie wieder betrachtet. 
Ebenso wird \eqref{bsp1:r3} genau einmal betrachtet. Damit ist dann $R[r] = \underline{T}[r_\p](\underline{T}[r_\p](l_1))$. Wir berechnen nun die möglichen Werte:
Falls $\underline{T}[r_\p]=f \widening^\ast g$, so ist 
\[
R[r]
= (f\widening^\ast g) (f(l_1) \widening g(l_1))
= (f\widening^\ast g) (l_1 \widening l_2)
= (f\widening^\ast g) (l_3)
= l_4 \widening l_5 = l_7.
\]
Andernfalls ist $\underline{T}[r_\p]=g \widening^\ast f$. Dann ist 
\[
R[r]
= (g\widening^\ast f) (g(l_1) \widening f(l_1))
= (g\widening^\ast f) (l_2 \widening l_1)
= (g\widening^\ast f) (l_2)
= l_3 \widening l_4 = l_7.
\]
In beiden Fällen ist also $\underline{R}[r]=l_7$. Dies ist unvergleichbar zu $l_8 = \hat{A}[r]$.
\end{bsp}
In dem Beispiel haben wir benutzt, dass der funktionale Ansatz einmalig den Effekt einer Prozedur festlegt und mit diesem dann Werte in Knoten der $\main$-Prozedur berechnet, 
während im Call-String-Ansatz Werte durch Propagation durch die Knoten der entsprechenden Prozedur bestimmt werden und dort die Ausführungsreihenfolgen voneinander verschieden gewählt werden konnten. 
Somit konnten wir unvergleichbare Werte im Endknoten der Prozedur $\main$ erzeugen. 

Das folgende Beispiel zeigt, dass es ein Programm, ein universell-distributives Framework und einen Widening-Operator gibt, 
so dass der funktionale Ansatz eine Lösung liefert, die zu allen Lösungen des Call-String-Ansatzes unvergleichbar ist.
Dazu finden wir eine Lösung des funktionalen Ansatzes, die mit keiner Lösung des Call-String-Ansatzes vergleichbar ist. 
Konkret werden wir dazu unvergleichbare Werte im Endknoten einer Prozedur $\p$ berechnen. 
Dazu müssen wir ausnutzen, dass die Informationen an den Knoten der aufgerufenen Prozedur in beiden Ansätzen auf verschiedene Weisen berechnet wird: 
Der Call-String-Ansatz betrachtet für jeden Aufruf eine eigene Kopie und initialisiert den Startknoten der Prozedur mit dem jeweiligen Wert des Knotens, von dem aus die Prozedur aufgerufen wird. 
Im funktionalen Ansatz dagegen gibt es keine verschiedenen Kopien und der Startknoten dieser Prozedur muss mit den Werten aller aufrufenden Knoten initalisiert werden. 
\begin{bsp}\label{bsp2}
Wir betrachten den vollständigen Verband $(L, \sqleq)$, der durch das Hasse-Diagramm in Abbildung \ref{ggbsp2-verband} definiert ist.
\begin{figure}[ht]\begin{center}\begin{tikzpicture}[node distance = 0.5cm]
\node	(top) 					{$\top$};
\node	(6) [below right=of top]{$l_6$};
\node	(5) [below left=of top]	{$l_5$};
\node	(4) [below right=of 5]	{$l_4$};
\node	(3) [below left=of 4]	{$l_3$};
\node	(2) [below =of 4]		{$l_2$};
\node	(1) [below right=of 4]	{$l_1$};
\node	(bot) [below =of 2]		{$\bot$};
\node	(inv) [left=of top] 	{};
\node 	(L)	[left=of inv]		{$L=$};
\path[-]
  (top)	edge node {} (6)
  (top)	edge node {} (5)
  (6)	edge node {} (4)
  (5)	edge node {} (4)
  (4)	edge node {} (3)
  (4)	edge node {} (2)
  (4)	edge node {} (1)
  (1)	edge node {} (bot)
  (2)	edge node {} (bot)
  (3)	edge node {} (bot)
;
\end{tikzpicture}
\caption{Hasse-Diagramm des zugrundeliegenden Verbandes in Beispiel \ref{bsp2}.}
\label{ggbsp2-verband} 
\end{center}\end{figure}
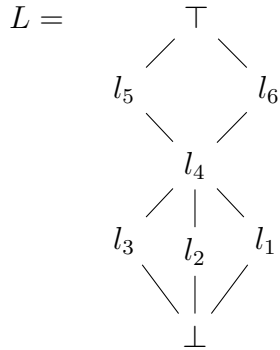

Definiere weiter $f_j: L \to L$ durch
\[ f_j(l) := \begin{cases} \bot & \text{ falls }l=\bot \\ l_j & \text{ sonst} \end{cases} \]
für $1 \le j \le 3$.
Diese Abbildungen sind offenbar universell-distributiv. 
Mit $\mathcal{F}$ bezeichnen wir das kleinste unter Komposition abschlossene System von Abbildungen, das $\id_L$, $h_1$, $h_2$ und $h_3$ enthält. 
Da die Komposition universell-distributiver Abbildungen wieder universell-distributiv ist, ist $(L,\sqleq,\mathcal{F})$ ein universell-distributives monotones Framework.

Ebenso ist die Abbildung $\widening: L \times L \to L$ mit
\begin{align*}
x \widening y = \begin{cases} 
l_5 & \text{ falls } (x,y) \in \{(l_1,l_3), (l_3,l_1), (l_2,l_3), (l_3,l_2)\} \\
l_6 & \text{ falls } (x,y) \in \{(l_1,l_2), (l_2,l_1)\} \\
x \sqcup y &\text{ sonst}
\end{cases}
\end{align*}
offenbar ein Extrapolationsoperator. Da $L$ endlich ist, ist $\widening$ sogar ein Widening-Operator. Weiter ist $\widening$ symmetrisch und erfüllt $x \widening y = x$, falls $y \sqleq x$. 
Also ist $\widening_{\text{idp}} = \widening$. Da $L$ endlich ist, ist nach Lemma \ref{lem-widening-lift} auch der Lift $\widening^\ast$ von $\widening$ auf $(L\to L)$ ein Widening-Operator.

Sei nun $\Var = \{x\}$. Wir betrachten das Flussgraphsystem aus Abbildung \ref{ggbsp2-fgs}.
\begin{figure}[ht]\begin{flowgraph}
\node				(procmain)						{$\main$:};
\node[startnode]	(s) 		[below=of procmain]	{$s$};
\node[stdnode]		(u)			[below left=of s] 	{$u$};
\node[stdnode]		(v)			[below right=of s] 	{$v$};
\node[endnode]		(r)			[below right=of u]	{$r$};

\node				(inv1)		[right=of procmain]	{};
\node				(inv2)		[right=of inv1]	{};
\node				(inv3)		[right=of inv2]	{};
\node				(procp) 	[right=of inv3]		{$\p$:};
\node[startnode]	(sp) 		[below=of procp]	{$s_\p$};
\node[endnode]		(rp)		[below=of sp] 		{$r_\p$};
\path[->]
  (s) 	edge [swap]				node {$\mathtt{x:=f_1(x)}$} (u)
  (s)	edge 					node {$\mathtt{x:=f_2(x)}$}  (v)
  (u)	edge [swap]				node {$\mathtt{p()}$}  (r)
  (v)	edge 					node {$\mathtt{p()}$}  (r)
  (sp)	edge [bend right,swap] 	node {$\mathtt{x:=f_3(x)}$} (rp)
  (sp)	edge [bend left] 		node {$\mathtt{x:=x}$} (rp)  
;
\end{flowgraph}
\caption{Flussgraphsystem in Beispiel \ref{bsp2}.}
\label{ggbsp2-fgs}
\end{figure}

Wir stellen nun die zugehörigen Ungleichungssysteme auf, die den Wert von $x$ in den Programmpunkten bestimmen. Dabei sei $\init = \top$. 
Schreibe im Folgenden kurz $e_1$ für $(u,\mathtt{p()},r)$ und $e_2$ für $(v,\mathtt{p()},r)$.

Die Ungleichungssysteme zum funktionalen Ansatz haben die folgende Gestalt:
\begin{align*}
T[s] &\sqgeqmap \id					& T[s_\p] &\sqgeqmap \id \\
T[u] &\sqgeqmap f_1 \circ T[s]		& T[r_\p] &\sqgeqmap f_3(T[s_\p]) \\
T[v] &\sqgeqmap f_2 \circ T[s]		& T[r_\p] &\sqgeqmap T[s_\p] \\
T[r] &\sqgeqmap T[r_\q] \circ T[u]	& \\
T[r] &\sqgeqmap T[r_\q] \circ T[v]	& 
\end{align*}
und als Lösung des letzteren Ungleichungssystems wählen wir
\begin{align*}
R[s] &\sqgeq \init			& R[s_\p] &\sqgeq R[u] \\
R[u] &\sqgeq f_1(R[s])		& R[s_\p] &\sqgeq R[v] \\
R[v] &\sqgeq f_2(R[s])		& R[r_\p] &\sqgeq f_3(R[s_\p])\\
R[r] &\sqgeq T[r_\q](R[u])	& R[r_\p] &\sqgeq R[s_\p]\\
R[r] &\sqgeq T[r_\q](R[v]).	& 
\end{align*}
Sei $\underline{T}$ eine vom Workset-Algorithmus mit Widening berechnete Lösung des ersteren Ungleichungssystems. Dabei sei 
\begin{align*}
\underline{T}[r_\p] &= f_3 \widening^\ast \id
\intertext{und}
\underline{R}[s] &= \init = \top \\
\underline{R}[u] &= l_1\\
\underline{R}[v] &= l_2\\
\underline{R}[s_\p] &= l_1 \widening l_2 = l_6 \\
\underline{R}[r_\p] &= f_3(l_6) \widening l_6 = l_3 \widening l_6 = l_6 \\
\underline{R}[r] &= ((f_3 \widening^\ast \id) (l_1)) \widening ((f_3 \widening^\ast \id) (l_2))) = (l_3 \widening l_1) \widening (l_3 \widening l_2)= l_5 \widening l_5 = l_5.
\end{align*}
Wir müssen nun die Lösungen des entsprechenden Workset-Algorithmus des Call-String-Ansatzes betrachten.
Das Ungleichungssystem zur Berechnung der Call-Strings einer Prozedur hat folgende Gestalt:
\begin{align*}
\CS[\main] 	&\supseteq \{\eps\}	\\
\CS[\p] 	&\supseteq \{e_1\} \\
\CS[\p] 	&\supseteq \{e_2\}. 
\end{align*}
Dies hat als Lösung 
\[\underline{\CS}[\main] = \{\eps\} \text{ und } \underline{\CS}[\p] = \{e_1,e_2\}.\]
Weiter ist 
\begin{align*}
A[s,\eps] 	&\sqgeq \init			& A[s_\p,e_1] &\sqgeq A[u,\eps] \\
A[u,\eps] 	&\sqgeq f_1(A[s,\eps])	& A[r_\p,e_1] &\sqgeq f_3(A[s_\p,e_1])\\
A[v,\eps] 	&\sqgeq f_2(A[s,\eps])	& A[r_\p,e_1] &\sqgeq A[s_\p,e_1]\\
A[r,\eps] 	&\sqgeq A[r_\p,e_1]		& A[s_\p,e_2] &\sqgeq A[v,\eps] \\
A[r,\eps] 	&\sqgeq A[r_\p,e_2]		& A[r_\p,e_2] &\sqgeq f_3(A[s_\p,e_1])\\
			&						& A[r_\p,e_2] &\sqgeq A[s_\p,e_2].
\end{align*}
Sei $\underline{A}$ eine Lösung, die der Workset-Algorithmus mit Widening berechnet hat. Wir zeigen nun $\underline{A}[r_\p](e_i) = l_5$ für $1 \le i \le 2$.
Jede Ausführung des Workset-Algorithmus liefert zunächst
\begin{align*}
\underline{A}[s,\eps] &= \top \\
\underline{A}[u,\eps] &= l_1 \\
\underline{A}[v,\eps] &= l_2 \\
\underline{A}[s_\p,e_1] &= l_1 \\
\underline{A}[s_\p,e_2] &= l_2.
\intertext{Da $\widening$ symmetrisch ist, ist irrelevant, in welcher Reihenfolge die Information an den Endknoten der Prozedur $\p$ beim jeweiligen Aufruf propagiert wird. Damit ist}
\underline{A}[r_\p,e_1] &= f_3(l_1) \widening l_1 = l_3 \widening l_1 = l_5 \intertext{und}
\underline{A}[r_\p,e_2] &= f_3(l_2) \widening l_2 = l_3 \widening l_2 = l_5.
\end{align*}
Es folgt $\hat{A}[r_\p] = \underline{A}[r_\p,e_1] \sqcup \underline{A}[r_\p,e_2] = l_5 \sqcup l_5 = l_5$, was unvergleichbar ist zu $l_6 = \underline{R}[r_\p]$. 
Also ist $\underline{R}$ unvergleichbar zu allen möglichen Lösungen $\hat{A}$.
\end{bsp}
Diese Beispiele sind einfach, da die betrachteten Programme und vollständigen Verbände sehr klein sind. 
Umso erstaunlicher ist es, dass die hier nicht zwingend notwendige Verwendung eines Widening-Operators schon zu Unvergleichbarkeit der Lösungenmengen des funktionalen bzw. des Call-String-Ansatzes führt. 
Die angegebenen Programme und universell-distributiven Frameworks können auch auf einfache Weise so abgeändert werden, dass der Workset-Algorithmus ohne Widening nicht terminiert. 

Dazu erweitern wir den Verband um Elemente $\lambda_n$ für $n \in \nn_0$. 
Dabei \quotes{ersetzen} wir das Element $\top$ durch $\lambda_0$ und erweitern die Ordnung um $\lambda_n \sqsupset \lambda_{n+1}$ für jedes $n \in \nn$. 
Wir fügen wiederum $\top$ als ein größtes Element hinzu. 
Wir erweitern die Variablenmenge um eine frische Variable $y$ und fügen eine Prozedur $\q$ hinzu, die aus den aufeinanderfolgenden Anweisungen $\mathtt{y:=k(y)}$ und $\mathtt{q()}$ besteht. 
Dabei sei $k(l) = \bot$ für Elemente aus dem \quotes{alten} Verband und sonst $k(\lambda_n) = \lambda_{n+1}$. 
Außerdem verändern wir die Prozedur $\main$, indem wir an den Knoten $r$ eine mit $\mathtt{y:=0}$ beschriftete Kante und daran eine mit dem Aufruf $\mathtt{q()}$ beschriftete Kante anhängen. 
Der Knoten, zu dem diese Kanten führen, ist dann der neue Endknoten der Prozedur $\main$. 
Den Widening-Operator setzen wir folgendermaßen fort: 
Für die Elemente aus dem \quotes{alten} Verband und $\lambda_0$ sei der Operator wie zuvor definiert, wobei $\lambda_0$ die Rolle von $\top$ übernimmt. 
Sei nun $x$ beliebig und $y = \lambda_n$ für ein $n > 0$ oder $y = \top$. Ist $x \sqgeq y$, so sei $x \widening y = x$. Andernfalls sei $x \widening y = \top$. 
Dies ist offenbar ein Widening-Operator.

Will man nun zu einem so erweiterten Programm Informationen mit dem Workset-Algorithmus berechnen, so terminiert die Berechnung für $y$ ohne Widening-Operatoren nicht. 
Um eine Lösung zu berechnen, ist also ein Widening-Operator nötig. Damit wird für $y$ der Wert $\top$ berechnet. Für $x$ wird derselbe Wert wir im ursprünglichen Programm berechnet. 
Damit erhalten wir eine entsprechende Unvergleichbarkeit also auch in Situationen, in denen Widening nötig ist. 

Insbesondere zeigen diese Beispiele also, dass die zu Beginn des Abschnittes formulierten Aussagen (A) bis (K) in dieser Allgemeinheit falsch sind.


\chapter{Fazit}\label{chap:fazit}
In dieser Arbeit wurden zwei Ansätze zur Analyse interprozeduraler Programme vorgestellt, die von Sharir und Pnueli in \cite{sharir-pnueli} entwickelt wurden: 
Dies ist zum einen der \emph{funktionale Ansatz}, der zunächst für jede Prozedur eine Transferfunktion als \emph{Summary-Information} berechnet und damit für jeden Programmknoten eine Information bestimmt. 
Zum anderen haben wir den \emph{Call-String-Ansatz} vorgestellt, der eine Analyse eines Programms mit nur einer Prozedur simuliert, 
indem Aufrufkanten durch die aufgerufenen Prozeduren \quotes{ersetzt} und diese verschiedenen \quotes{Kopien} der Prozeduren durch \emph{Call-Strings} unterschieden werden.

Dazu haben wir zunächst 
\emph{Ungleichungssysteme} definiert und gezeigt, 
dass deren Lösungen sich als Präfixpunkte einer geeigneten Abbildung schreiben lassen und die \quotes{beste} Lösung dem kleinsten Fixpunkt dieser Abbildung entspricht. 
Außerdem haben wir Flussgraphsysteme eingeführt, mit denen wir Programme darstellen, und mit dem Workset-Algorithmus ein Mittel bereit gestellt, mit dem Ungleichungssysteme gelöst werden können. 

Anschließend 
haben wir den \emph{funktionalen} und den \emph{Call-String-Ansatz} vorgestellt und gezeigt, 
dass es sich dabei um \emph{korrekte Analysen} handelt. Wir haben außerdem gezeigt, dass die Ansätze \emph{präzise Analysen} beschreiben, wenn die verwendeten Abbildungen universell-distributiv 
oder zumindest positiv-distributiv sind und in letzterem Fall zusätzlich jeder Programmpunkt erreichbar ist. In diesem Fall liefern beide Ansätze dasselbe Ergebnis und sind in diesem Sinne gleichwertig. 
Weiter haben wir mit der Betrachtung der Polyederanalyse aus \cite{CH78-POPL} begonnen, die den funktionalen Ansatz variiert, 
indem die Summary-Informationen nicht als Transferfunktionen, sondern in einer anderen Darstellung berechnet werden. 
Dabei wurden einerseits Matrizenmengen wie in \cite{seidl07} und andererseits Relationen wie in \cite{CH78-POPL} als Summary-Informationen benutzt. 
Diese wiederum haben wir als Transferfunktionen aufgefasst und damit jeweils ein Ungleichungssystem zur Berechnung die erreichbaren Variablenwerte in jedem Programmpunkt aufgestellt. 
Wir haben gezeigt, dass diese beiden Ungleichungssysteme das gleiche Ergebnis liefern und somit gleichwertig sind. 
Zuletzt haben wir eine Berechnung eines Ungleichungssystems mithilfe des Workset-Algorithmus betrachtet und festgestellt, dass diese nicht unbedingt terminieren muss.

Darum haben wir 
weiter mit der \emph{abstrakten Interpretation} ein von Cousot und Cousot in \cite{CousotCousot76-1} vorgestelltes Mittel betrachtet, 
mit dem neue korrekte oder präzise Analysen konstruiert werden können, bei denen die Berechnung der Lösung durch den Workset-Algorithmus terminieren kann. 
Dazu haben wir zunächst allgemein \emph{Galois-Verbindungen} und abstrakte Interpretationen von Ungleichungssystemen studiert. 
Diese haben wir anschließend auf den funktionalen und den Call-String-Ansatz übertragen, indem wir die in den beiden Ansätzen verwendeten Transferfunktionen für Basisanweisungen abstrakt interpretiert haben.
Wir haben außerdem gezeigt, dass eine korrekt, präzise oder kanonische abstrakte Interpretation dieser Transferfunktionen 
eine korrekte, präzise bzw.~kanonische abstrakte Interpretation der beiden Ansätze induziert. 
Dabei hängt die kanonische abstrakte Interpretation nur von der zugrundeliegenden Galois-Verbindung ab und ist die präziseste korrekte abstrakte Interpretation. Außerdem ist sie genau 
dann präzise, wenn die Galois-Verbindung präzise abstrakte Interpretationen zulässt. Wir haben gesehen, dass dies nicht immer der Fall ist. 
Somit konnten wir abschließend feststellen, dass die kanonische die beste abstrakte Interpretation ist. 
Ist sie präzise und gelten die Anforderungen an die Distributivität der verwendeten Abbildungen wie zuvor, 
so liefern die kanonischen abstrakten Interpretationen des funktionalen und des Call-String-Ansatzes dasselbe Ergebnis und sind wiederum gleichwertig.

Weiter haben wir wieder die Polyederanalyse untersucht. Dabei haben wir zunächst rückblickend festgestellt, 
dass das Auffassen einer Matrizenmenge oder einer Relation als Transferfunktion eine präzise abstrakte Interpretation ist. 
Außerdem haben wir die Polyederanalyse weitergeführt, indem wir zusätzlich \emph{konvexe Hüllen} gebildet haben, 
um anstelle der erreichbaren Variablenwerte konvexe Polyeder zu berechnen, in denen die erreichbaren Variablenwerte enthalten sind. 
Dabei haben wir gesehen, dass dies eine präzise abstrakte Interpretation ist, wenn Matrizenmengen als Summary-Informationen verwendet werden, 
wohingegen die Bildung konvexer Hüllen bei Relationen als Summary-Informationen zwar eine korrekte, aber nicht unbedingt präzise abstrakte Interpretation ist. 
Die Ansätze sind also nicht gleichwertig und der Ansatz mit Matrizenmengen ist dem mit Relationen vorzuziehen. 
Zuletzt haben wir gesehen, dass eine solche abstrakte Interpretation noch keine Terminierung garantieren muss.

Zuletzt 
haben wir darum ein weiteres Mittel untersucht, das Terminierung des Workset-Algorithmus erzwingen soll. 
Dieses Mittel ist die Verwengun der von Cousout und Cousot in \cite{CousotCousot77-1} eingeführten \emph{Widening-Operatoren} im Workset-Algorithmus. 
Dabei wird Terminierung garantiert, wenn das betrachtete Ungleichungssystem nur endlich viele Variablen enthält. 
Werden dagegen unendlich viele Ungleichungen betrachtet, berechnet der Algorithmus unter der zusätzlichen Voraussetzung von \emph{Fairness} für jede Variable in endlicher Zeit ein Ergebnis, 
terminiert aber nicht. 
Somit kann der Workset-Algorithmus für den Call-String-Ansatz, der im Allgemeinen unendlich viele Ungleichungen betrachtet, nicht zum Lösen, sondern nur als Referenzpunkt betrachtet werden. 
Außerdem haben wir gesehen, dass der Workset-Algorithmus mit Widening auch für den funktionalen Ansatz keine Terminierung garantiert, 
da im Allgemeinen aus einem Widening-Operator für den zugrundeliegenden Verband kein kanonischer Widening-Operator für 
die Berechnung der Transferfunktionen als Summary-Informationen gewonnen werden kann. 
Außerdem haben wir gezeigt, dass im Allgemeinen der Widening-Operator im Workset-Algorithmus in einer nicht-monotonen Weise verwendet ist und somit keine eindeutige kleinste Lösung existiert. 
Somit mussten anstelle von kleinsten Lösungen die gesamten Lösungsmengen verglichen werden. 
Wir haben aber gezeigt, dass die Lösungsmengen, die vom Workset-Algorithmus mit Widening für den funktionalen bzw.~den Call-String-Ansatz berechnet werden, im Allgemeinen nicht vergleichbar sind. 
Eine Koinzidenzaussage wie 
zuvor 
ist also nicht möglich. 
Ein Workset-Algorithmus mit Widening kann demnach in manchen Situationen zwar helfen, ein Ungleichungssystem zu lösen, aber es ist nicht möglich, Aussagen über die Präzision dieser Lösungen zu treffen.

Es sind mehrere Möglichkeiten denkbar, auf diese Arbeit aufzubauen. 
Zunächst einmal könnten weitere Anweisungen wie nicht-deterministische Zuweisungen der Gestalt $\mathtt{x := ?}$ und affine bedingte Verzweigungen wie in \cite{seidl07} betrachtet werden. 
Es könnte dann überprüft werden, ob mit diesen zusätzlichen Anweisungen der Ansatz mit Matrizenmengen wieder das präzisere Ergebnis liefert. 
Außerdem wird in \cite{seidl07} ein Widening-Operator vorgestellt, um die berechneten konvexen Matrizenmengen endlich darstellen zu können. Dieser könnte näher untersucht werden.

Aber auch allgemein wäre eine weitere Untersuchung von Widening-Operatoren interessant. 
Die Bedingungen, die in dieser Arbeit an Widening-Operatoren gestellt wurden, sind für eine Aussage über die Vergleichbarkeit des funktionalen und des Call-String-Ansatzes zu allgemein. 
Es könnte also untersucht werden, ob und welche Eigenschaften von Widening-Operatoren es gibt, die eine solche Vergleichbarkeit garantieren. 
Alternativ könnten andere Mittel als der Workset-Algorithmus zur Berechnung der Lösungen eines Ungleichungssystems untersucht 
und dabei wiederum die Ergebnisse des funktionalen und des Call-String-Ansatzes verglichen werden.




\cleardoublepage
\thispagestyle{empty}
\quad 
\vfill

Hiermit erkläre ich, dass ich die vorliegende Diplomarbeit selbständig verfasst und keine anderen als die angegebenen Quellen und Hilfsmittel benutzt sowie alle Stellen, 
die wörtlich oder sinngemäß aus Veröffentlichungen entnommen worden sind, als solche kenntlich gemacht habe.
 
\vspace{0.75cm}

Münster, im Dezember 2010 

\vspace{0.25cm}

Dorothea Jansen

\quad 


\end{document}